\newcommand{\twocolbreak}{\\}
\newcommand{\optcolbreak}{\\}
\DeclareMathAlphabet{\mathpzc}{OT1}{pzc}{m}{it}
\newcommand{\lsbeq}{\stackrel{\textnormal\small{lsb}}{\large{=}}}
\newcommand{\msbeq}{\stackrel{\textnormal\small{msb}}{\large{=}}}
\newtheorem{unreferredaxiom}{Axiom}
\newtheorem{axiom}{}[section]
\newtheorem{theorem}{Theorem}[section]
\newtheorem*{definition}{Definition}
\newcounter{mylabelcounter}
\newcommand{\labelText}[2]{
#1\refstepcounter{mylabelcounter}
\immediate\write\@auxout{
  \string\newlabel{#2}{{1}{\thepage}{{\unexpanded{#1}}}{mylabelcounter.\number\value{mylabelcounter}}{}}
}
}
\renewcommand\subsubsection{\@startsection{subsubsection}{3}{0pt}
  {1ex plus 0.5ex minus .2ex}
  {0.5ex plus .2ex}
  {\normalfont\normalsize\itshape\textbf}} 
\renewcommand\paragraph{\@startsection{paragraph}{4}{\z@}%
  {1.5ex \@plus .2ex \@minus .3ex}
  {-1em}
  {\normalfont\normalsize\bfseries}}
\DeclareMathOperator{\var}{var}
\DeclareMathOperator{\cov}{cov}
\DeclareMathOperator{\precision}{prec}
\DeclareMathOperator{\Diag}{Diag}
\DeclareMathOperator{\Card}{Card}
\DeclareMathAlphabet{\mathdutchcal}{U}{dutchcal}{m}{n}
\SetMathAlphabet{\mathdutchcal}{bold}{U}{dutchcal}{b}{n}
\DeclareMathAlphabet{\mathdutchbcal}{U}{dutchcal}{b}{n}
\definecolor{lightbeige}{rgb}{1,.95,.85}
\definecolor{beige}{rgb}{.8,.53,.1}
\newcommand{\mysubsubsection}[1]{\paragraph{#1}}
\newcommand{\myparagraph}{\paragraph{}}
\begin{document}

\title{Bounded information as a foundation for quantum theory}
\author{Paolo Ferro}
\email{ferropaol@gmail.com}
\orcid{0009-0009-0132-6460}
\maketitle

\begin{abstract}
The purpose of this paper is to formalize the concept that best synthesizes our intuitive
understanding of quantum mechanics---that the information carried by a system is
limited---and, from this principle, to construct the foundations of quantum theory.
In our discussion, we also introduce a second important hypothesis:
if a measurement closely approximates an ideal one in terms of experimental precision, the information it provides
about a physical system is independent of the measurement method and, specifically, of the system's physical 
quantities being measured. 
This principle can be expressed in terms of metric properties of a manifold whose points represent the state of the system.
These and other reasonable hypotheses provide the foundation for a framework of quantum reconstruction. 

The theory presented in this paper is based on a description of physical systems in terms of their statistical properties,
specifically statistical parameters, and focuses on the study of estimators for these parameters. 
To achieve the goal of quantum reconstruction, a divide-and-conquer approach is employed, wherein the space of two discrete
conjugate Hamiltonian variables is partitioned into a binary tree of nested sets. This approach naturally leads to the
reconstruction of the linear and probabilistic structure of quantum mechanics.

\end{abstract}

\tableofcontents

\section{Introduction}

\subsection{General approach and related work}
The belief in a strong relationship between quantum mechanics and the concept of information dates
back to the early days of the development of the theory~\cite{Bohr1}.
However, the idea of founding quantum mechanics on strict informational principles has only been
seriously considered in recent decades, beginning with~\cite{Wheeler1}. 

In this framework, an emerging approach, which is related to the more general program of quantum
reconstruction, is based on the principle that the amount of information a system can
carry is limited. 

Among the most notable works that follow this approach are:
Brukner and Zeilinger~\cite{BruknerZeilinger1,Zeilinger,BruknerZeilinger2},
Daki\'c and Brukner~\cite{BorivojeBrukner}, Rovelli~\cite{Rovelli1}, and
Clifton, Bub, and Halvorson~\cite{Bub3}.
These works postulate an underlying algebraic structure for the mathematical
objects involved in the theory, typically corresponding to a C*-algebra or a Hilbert space.  The approach followed by 
Masanes and M\"uller~\cite{MasanesMuller} does not postulate such mathematical structures, but starts from more physical
hypotheses about preparation, transformations and measurements. 
Among the above works, \cite{BorivojeBrukner}, \cite{MasanesMuller} and \cite{Bub3} take article~\cite{Hardy1} from
Hardy as reference for building their results.

Ariano, Chiribella, and Perinotti~\cite{Chiribella1} suggested that quantum theory can be derived
``from purely informational principles''. Although such informational principles do not exactly match
those considered in the present work (limiting the information that a system can carry),
the strength of their work is that it does not postulate a Hilbert space (or C*-algebra) structure
of the mathematical objects of the theory. 
Rather, such a structure emerges as a consequence of the axioms.

\subsection{Overview}
The present work focuses on the reconstruction of quantum mechanics from informational principles,
and and while it shares some assumptions with the above-cited works, it develops in different directions.
The central idea of our approach can be summarized as follows:
there are mathematical properties commonly used to describe a physical system,  but in certain cases,
a system may lack some of the information regarding these properties.

As an illustration of this idea, consider the following  computational analogy:
imagine a calculator in which numbers are represented with a limited number of digits,
say $N$ digits. This does not mean that the digits beyond the $N$th behave randomly, 
but simply that the calculator has no information about them. This scenario aligns with the
``Universe as a computer'' metaphor, discussed by some authors~\cite{Jaeger}.

We refer to this principle as the {\textit{Limited Information}} axiom,
which can be interpreted as a bound on the amount of information a system can carry, where
``amount of information'' refers to the number of elementary statements, or predicates, about some
property of the system. We will assume that this quantity is measured in {\it bits}.

A postulate expressed in such a vague way may seem difficult to apply in constructing a theory,
since it is unclear how much we must limit the information.
However, by introducing additional axioms related to invariances of the physical model,
the extent of this information constraint can be naturally determined.

One of the fundamental assumptions of this work is that measurements are macroscopic processes				  
and are therefore always affected by random errors---a fact independent of quantum theory. 
Despite that, we can consider the best measurement possible, which minimizes  such errors,
regardless of their origin or characteristics.
The second axiom asserts that the amount of information about a system, provided by
this type of measurement, is independent of the physical quantity measured, the
measurement method, and the system's state.
We refer to this postulate as the {\textit{Precision Invariance}} axiom.

It should be emphasized that the random nature of the measurement process enters for the first time when this postulate is introduced.
Thus, on one hand, we have a principle that addresses merely informational properties and does not imply any intrinsic randomness of physical systems, and on the other hand, the standard random behavior of the measurement process.

These principles are introduced in Sec.~\ref{sec:defctx}.
We also introduce the concepts of states and observables without formal definitions;
we assume, essentially, that there exist certain mathematical properties that allow us to make statistical predictions about measurements' outcomes, and that all such properties can be synthesized in a single object, which we call the {\textit{physical State}}.  
									
In our quantum reconstruction program, we must also include other reasonable axioms.
A third axiom, which is rather a class of axioms, has to do with symmetries: if we  describe the
dynamics of a system on a phase space of conjugate Hamiltonian variables, we must 
require  the bare theory (not including dynamics) to respect some intrinsic symmetry, such as
translational and scale invariance symmetry.

The fourth axiom concerns the fact that, in constructing the theory, we refer to a model that
represents a simple subsystem which is part of a universe. We want our theory to be independent of
the size of our model or its position relative to the surrounding world. For example, if we have two
systems and one is twice the size of the other, the same theory must be applicable to both, up to a
scaling parameter. As previously mentioned, this type of invariance leads to determining the extent
of the information limitation on which the entire theory is based.

Up to this point, the theory presented in this work is fundamentally probabilistic---the predictions of the
measurements are expressed in terms of probability distributions. The appropriate framework for
addressing such problems relies on statistical parameters. The physical state, as we initially defined
it, can be represented by a collection of statistical parameters, and the set of all physically
allowed states can be viewed as a manifold in the space of these parameters.
This kind of approach is discussed in
Sec.~\ref{sec:physstaeest}.

A key point of this work is the formulation of the Precision Invariance axiom in terms 
of the metric properties of the state manifold, which is introduced in Sec.~\ref{sec:precinvappl}.
In short, we can define sets of statistical parameters, 
closely tied to different choices of observables, that uniquely identify a point on the
state manifold. We demonstrate that, when we move from one of these sets to another, there
is a metric, defined on the space of these parameters, which is preserved.
This metric is an extension of the concept of Fisher metric~\cite{FisherClassic}
for some special \textit{estimators} of the statistical parameters.

In Sec.~\ref{sec:elemsys}, all the above principles are applied to some models of 
elementary systems.
We consider, for instance, a system with three two-value
variables. Assuming that this system carries exactly one bit of information, we  demonstrate that the probability distributions,
associated with the measurement of these variables, belong to a two-dimensional
manifold, and we derive the shape of this manifold. 
This shape corresponds to the Bloch sphere representation of a quantum two-state system (qubit).
This establishes a first correspondence with standard quantum mechanics, leading us to the
first important result of this study.

It is important to emphasize that this theory includes the measurement as a constitutive element.
In our framework, the outcomes of measurements of observables are treated as random variables. Since we started from the
Precision Invariance axiom, which concerns the precision (inverse of the variance) of distributions,
the condition we obtain in our model is on probability distributions and not on the squared modulus
of an abstract mathematical object (i.e., the wave function).
This implies that the Born rule is not introduced as an additional axiom.

In Sec.~\ref{sec:nstatesyst}, we extend the model to a higher-dimensional system characterized by two
discrete canonical variables $p$ and $q$ of a phase space.
Our goal is to find constraints on the probability distributions of $p$ and $q$ that generalize those
found for the simpler three-variable system. 
In order to determine such constraints, we follow a divide-and-conquer strategy, decomposing the system into smaller subsystems,
structured as a binary tree of nested subsets in the $p$ or $q$ space.
The set of predicates on the values of $p$ and $q$, which we use to measure the amount of information carried by the 
system, reflects this binary structure. 

In this approach, the symmetry and scalability axioms 
play a central role: requiring the theory to be invariant under certain transformations and
independent of system's size restricts the domain of physically admissible states.
We also demonstrate that a decomposition invariant under scale and translation symmetry is unique.
Thus, we obtain a decomposition in the form of a network of elementary (two-value)
three-variable systems, organized as a ``butterfly diagram'' (see Fig.~\ref{butterfly8}).

On the basis of the binary structure of our model and of certain invariance assumptions, we can derive an expression for the extended  Fisher metric. If we reparametrize this quantity in terms of complex functions, whose squared moduli represent the probabilities, we find  that this metric correspnds to the Fubini-Study metric on complex discrete functions.

This allows us to demonstrate, in Sec.~\ref{sec:lingen}, via a reinterpretation of Wigner's
theorem~\cite{Wigner1}, that the metric conservation property introduced above leads naturally to the linear structure of the theory. That is, in the representation of statistical parameters in terms of complex functions, transformations (i.e., changes of statistical parameters) take the form of unitary linear operators.
This result is closely related to what was demonstrated---albeit in a different context and under different assumptions---by Wootters~\cite{Wootters_1981} and other authors~\cite{Braunstein_1994}.

The binary tree decomposition scheme and the Precision Invariance axiom (expressed as a metric-preserving
condition), lead us to identify the set of admissible probability distributions 
of $p$ and $q$. The relation we obtain for these functions has the form of a pattern
of increasing resolution transformations, which replicates the scheme of the Cooley-Tukey
algorithm~\cite{CooleyTukey} for the Fast Fourier Transform (FFT).

These results are shown in sections \ref{sec:fourptsys} and \ref{sec:traslinv}, where we arrive at the second key
result of this study: if we use complex functions---now identifiable as wave functions---to represent the
state, the function associated with one of the conjugate variables, $p$ or $q$, is the discrete
Fourier transform of the function associated with the other.
This result aligns perfectly with the basic formulation of standard quantum theory.

\subsection{Summary of results}

The main contributions of this work can be synthesized as follows:
\begin{itemize}
\item A  new reconstruction scheme for quantum mechanics is proposed, based on the definition of the state of a system as
the mathematical object that determines all the statistical properties of its observables. 
Among the axioms of the theory proposed in this study, two of them constitute the backbone: (i) the existence of a bound on
the information a system can carry, and (ii) the invariance of the information about the state of a system we can obtain
from different types of measurements. This is referred to as the Precision Invariance axiom.

\item As a first result,  the structure of an elementary one-bit quantum system (qubit) is reconstructed within this framework.

\item This result is then extended to an elementary system with two conjugate variables defined over a discrete phase space, which carries a larger but finite number of bits.
    
\item In this context, we show that the Precision Invariance axiom can be formalized as a
metric-preserving property. 
More precisely, it can be shown that the Fubini-Study metric is preserved under different parameterizations of the quantum
state, corresponding to different choices of the observable used to define the associated Fisher information. This directly
leads to the linearity of quantum theory.
    
\item The reconstruction scheme for such a discrete system can be represented as a binary tree, providing a visual
interpretation of how we go from the axioms of the theory to a
formulation in terms of two complex functions defined on the discrete conjugate variables $p$ and $q$. Actually, what we do
is to reconstruct the FFT algorithm, and the special form of binary tree that occurs in the demonstration is also known as
{\textit{butterfly diagram}}. This explains why these complex functions are related by a Fourier transform, as required in
standard quantum mechanics.
\end{itemize}

\section{The framework of the theory} \label{sec:defctx}

\subsection{Basic definitions} \label{sec:funddef}

In this section we state the foundations of the theory.
In order to establish a connection with our understanding of reality, we introduce
the basic definitions and hypotheses by following an informal approach.
This especially applies to the definitions that can be considered primitive notions,
such as {\textit{Measurement, Observable}} and  {\textit{Information}}.
Later in our analysis, most of these concepts will be
presented in a formal way to provide a consistent basis for the theory.

Several of the definitions introduced in this section are basic and well-established
concepts from general physics or quantum mechanics. However, they will be presented herein to
establish a conceptual framework relevant to the context of this article.

\mysubsubsection{Measurement}
Measurement is the process that transfers information \allowbreak about
a physical system to the observer. This information is usually in the form of a \textit{number}. 

\mysubsubsection{Observables}
An observable is a measurable property of a system. Even the smallest system may admit one or more
observables. Every observable has a set of possible measurement outcomes.

\mysubsubsection{Measurement randomness}
A measurement's outcome is always affected by a random error.
This leads us to consider the measurement as a random process that can be described
in terms of probability distributions. 

\mysubsubsection{Parametrizations}
In the current work, the probability distributions of the observables 
will often be defined in terms of statistical parameters~\cite{Lehmann}. Namely,
the distribution of an observable $o$ takes the form $\rho(\xi, \boldsymbol{\theta})$,
where $\xi$ is a possible measurement outcome of $o$, and $\boldsymbol{\theta}$ is
a vector of parameters (in this paper vectors are written in bold). 
A parametrization defines a \textit{statistical model}, namely, a
set of probability distributions 
$\mathcal{M} =
\{ \rho_o(\xi ; \boldsymbol{\theta}) : \boldsymbol{\theta} \in \Phi_{\boldsymbol{\theta}}\}$,
with $\Phi_{\boldsymbol{\theta}}$ representing the space where the parameters are defined.
A parametrization $\boldsymbol{\theta}$ is said to be \textit{full} if
it can represent all possible probability distributions of a given observable $o$, namely if:
\begin{equation} \nonumber
\forall \rho(\xi) \exists \boldsymbol{\theta} :  \rho(\xi)  = \rho_o(\xi ; \boldsymbol{\theta}).
\end{equation}
\noindent We can also require a parametrization to  be \textit{identifiable}, namely that:
\begin{equation}\nonumber
\rho_o(\xi ; \boldsymbol{\theta}_1) =
\rho_o(\xi ; \boldsymbol{\theta}_2)  \, \Rightarrow \,\boldsymbol{\theta}_1 = \boldsymbol{\theta}_2.
\end{equation}
\noindent If a parametrization is full and identifiable, namely, if there is a one-to-one
correspondence between the space of all possible probability distributions and the
parameter space, we can study the properties of the probability distributions
by studying their parameters.
It is worth noting that in this work, the terms \textit{statistical model},
\textit{full}, and \textit{identifiable} are used according to the meaning commonly used 
in the framework of statistics/estimation theory~\cite{Lehmann}.

\mysubsubsection{State} In a system with a set of observables, we use the term 
\textit{state} to mean a mathematical object that uniquely defines the 
statistical properties (i.e., the distributions) of all the observables of the system.

\mysubsubsection{Information} By  \textit{information carried by a system} we mean the maximum
amount of information a system can transfer to the observer during a measurement. It
is measured in bits. 
As will be discussed in the following sections, this definition is linked to
the following axiom:

\begin{unreferredaxiom} \nonumber {\bf{(Limited Information)}} 
The amount of information that can be carried by a system is limited. 
\end{unreferredaxiom}

\noindent Let us consider a set of $D$ observables  $\mathcal{C} = \{o_1, o_2, ... o_D\}$ of a physical
system. Suppose also that said observables are independent, namely that the measurement of one
of them does not imply the knowledge of another one or, in other words, that there are no relations
in the form $o_i = f(o_j)$ that connect them.
In this work, we will always consider the simplified case of observables that can take a finite set
of values. We denote $N_{o_i}$ as the number of values the observable $o_i$ can take.

Under these hypotheses, the amount of information obtained from the measurement of
an observable is $n_{o_i} = \log_2 N_{o_i}$. If we could measure all the observables of the system,
we would obtain an amount of information $\sum n_{o_i}$. We can reformulate the above axiom
by saying that the amount of information we can obtain from the measurements on a physical system 
is, in general, less than $\sum n_{o_i}$. In the cases discussed in this work, we have that
$N_{o_i}$ is equal for all the observables, so we can say that  $\sum n_{o_i} = D n_o$.

The obvious consequence of this axiom is that we can never measure all the observables of 
$\mathcal{C}$ but only a subset.

\mysubsubsection{Quantum Systems} A quantum system is a system where:
\begin{enumerate}
 \item all the observables may have a random behavior described by probability distributions, and
 \item the system obeys the Limited Information axiom.
\end{enumerate}

\mysubsubsection{Preparation - one-point distributions} The preparation is a special type of measurement.
Suppose we measure an observable $o$, and its measurement produces the value $\nu_o$.
A measurement is a {\it preparation} if it leaves the observable
defined with the value $\nu_o$, namely, if immediately after the measurement takes place, $o$ has,
\textit{with certainty}, the value $\nu_o$.
In the framework of a theory in which the values an observable can take are always associated with a probability
distribution, we can regard the state of the observable $o$ as a distribution that takes the value 1 at $\nu_o$ and 0 at all other values---that is, as a one-point distribution.

Throughout this work, we will often refer to this correspondence between determined
states and one-point distributions.

Furthermore, a preparation is \textit{complete} if it not only leaves an observable
defined, but if, from the moment it takes place on, the statistical behavior of
the whole system is defined, namely, if the distributions of all other observables
are uniquely determined by the value $\nu_o$. Of course, the distribution
of the measured observable can be considered determined as well, and it has the form of a one-point 
distribution centered on the measured value $\nu_o$.

Hence, we can say that a preparation is complete if it leaves the state of the
system well-defined.
In this case, the observable of the preparation is also considered complete.

It is reasonable to say that, if a complete preparation is possible, then each 
measurement outcome of a complete observable must correspond to a state.

It is worth noting that there is an underlying hypothesis 
in this definition, namely that there are measurements that are not affected by random errors, 
because we want to precisely determine the value taken by the measured observable $o$.
This is, of course, an abstraction, but we can think of getting as close as possible to this
condition.
See the definition of an ideal measurement that is provided in the pages that follow. 

\mysubsubsection{Complete observables and Information} 
In general, we define a \textit{set} of observables as complete\footnote{Not to be confused with a
complete set of commuting observables of the standard quantum mechanics.} if (i) their joint
measurement is consistent, namely: their measurement's outcome reflects the actual properties of
the system, they are independent of each other (there are no relations like $o_i = f(o_j)$), 
and the amount of information we obtain does not exceed the prescriptions
of the Limited Information axiom, 
and if (ii) after their measurement, the system is left in a defined state, 
in which all the probability distributions are determined by the measurement 
results.
When we refer to the joint measurement of two observables, we simply 
mean the measurement of an observable $o_1$ and  an observable $o_2$.

We can also express condition (ii) in a different way: after the observables of
the set are measured, their distributions become one-point distributions, and the distributions of the other
observables of the system take a form that uniquely depend on the center of the one-point
distributions (that is, the measured value). Let us now introduce a simple theorem:
\begin{theorem}\label{th:compset}
A complete set of observables is maximal, i.e., it cannot 
be extended to a larger complete set by adding one or more observables.
Furthermore, the measurements of its observables reach the maximum amount 
of information available to the system.
\end{theorem}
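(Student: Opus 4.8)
The plan is to argue directly from the two defining conditions of a complete set---consistency together with the information bound (condition (i)), and full state determination (condition (ii))---and to treat the two assertions separately. I would establish the information-maximality statement first, because the structural maximality then follows from it by a short counting contradiction.

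For the maximal-information claim, I would observe that a joint measurement of a complete set $\mathcal{S} = \{o_1,\dots,o_k\}$ produces one of $\prod_i N_{o_i} = 2^{\sum_i n_{o_i}}$ distinct outcome tuples, and that by condition (ii) each tuple leaves the system in a uniquely determined state. Using the identifiability notion introduced for parametrizations---distinct determined states correspond to distinct outcome tuples---this yields a bijection between the outcome tuples and the reachable post-measurement states, so the number of states a complete measurement can resolve is exactly $2^{\sum_i n_{o_i}}$. Since the state is by definition the object that fixes the statistics of \emph{every} observable, no measurement can distinguish more states than this; hence the quantity $\sum_i n_{o_i}$ extracted by $\mathcal{S}$ equals the maximum information available to the system.

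For maximality, I would suppose toward a contradiction that some observable $o'\notin\mathcal{S}$ can be adjoined so that $\mathcal{S}' = \mathcal{S}\cup\{o'\}$ is again complete. Because $\mathcal{S}$ is already complete, measuring it fixes the state, and therefore the distribution of $o'$ is determined. I would then split into two cases. If that distribution is a one-point distribution, then $o'$ is a deterministic function of the outcomes of $\mathcal{S}$, i.e.\ a relation $o' = f(o_1,\dots,o_k)$ holds, violating the independence that condition (i) demands of $\mathcal{S}'$. If instead it is not a one-point distribution, then $o'$ carries genuine additional information $n_{o'}>0$, so a joint measurement of $\mathcal{S}'$ would distinguish $2^{\sum_i n_{o_i} + n_{o'}}$ states, strictly more than the maximum just established---contradicting the Limited Information bound, again in violation of condition (i). In either case $\mathcal{S}'$ fails to be complete, so $\mathcal{S}$ is maximal.

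The delicate step---the one I expect to require the most care---is the claim underpinning the bijection: that distinct outcome tuples genuinely yield distinct states and that the state exhausts all resolvable information. This rests on reading condition (ii) together with the identifiability notion and on the primitive role of the state as the carrier of all statistical content. Since these are given only informally in the present section, the argument is necessarily at the level of this informal framework; a fully rigorous version would have to wait until the state manifold is made precise in the later sections.
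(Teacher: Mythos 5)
Your proof has a genuine gap, and it sits exactly where you predicted: in the claim underpinning the bijection argument. The step ``no measurement can distinguish more states than this; hence $\sum_i n_{o_i}$ equals the maximum information available'' does not follow. The bijection you construct concerns only the post-measurement states of the particular complete set $\mathcal{S}$; it places no bound on the number of outcomes---and hence on the information---of any \emph{other} measurement one might perform. Indeed, the state space is not exhausted by these $2^{\sum_i n_{o_i}}$ determined states: the paper's next result (Theorem~\ref{th:statenotobservable}) explicitly argues that the number of states $N$ is strictly greater than $N_C = 2^{\sum_i n_{o_i}}$, so ``states distinguishable by measurement'' is not capped by that count. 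Moreover, condition (i) of completeness only gives $\sum_i n_{o_i} \le n_I$, not equality, so at this stage nothing pins the complete set's information to the Limited Information bound. Since your maximality argument invokes this unproven result in its second case (``strictly more than the maximum just established''), the gap propagates to the first assertion as well: your proof of maximality is only as strong as case 1.

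The repair is to run the logic in the opposite direction, which is what the paper does. For maximality no case split is needed: if $\mathcal{S}' = \mathcal{S} \cup \{o'\}$ were complete, then after jointly measuring $\mathcal{S}'$ the distribution of $o'$ is a one-point distribution (measurement acts as preparation), while completeness of $\mathcal{S}$ forces that same distribution to be whatever is determined by the outcomes of $o_1, \dots, o_k$ alone. If the determined distribution is not one-point, this is an outright inconsistency; if it is one-point, the measured value of $o'$ is a function $f(o_1, \dots, o_k)$ of the other outcomes, contradicting the independence demanded by condition (i). Either way $\mathcal{S}'$ cannot be complete---this is your case 1, which is sound and coincides with the paper's argument, now covering all cases. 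Information-maximality then follows as a corollary, as in the paper: since $\mathcal{S}$ admits no complete extension, no super-set of observables can extract more than $n_{\mathcal{C}} = \sum_i n_{o_i}$ bits, so the measurement of $\mathcal{S}$ already attains the maximum available to the system.
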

\begin{proof}
Consider a complete set $\mathcal{C}$ of $D$ observables $o_1, o_2, ... o_D$. 
When their measurement is performed, according to the definition of completeness,
there is only one choice of the distributions of all the other observables of
the system that is compatible with the measurement results for 
$o_1, o_2, ... o_D$. Now suppose that we enlarge $\mathcal{C}$ with
an observable $o_{D+1}$; after its measurement, its distribution
becomes a one-point distribution, which must also be the only one
compatible with the measurement results (and with the one-point distributions)
of $o_1, o_2, ... o_D$.
This is not acceptable, because one-point distributions are equivalent to having a single,
deterministic value, and this would imply that for defined values
of $o_1, o_2, ... o_D$ we get a value of $o_{D+1}$. In other words, we would have
a relation in the form $o_{D+1} = f(o_1...o_D)$, which contradicts the definition
of complete set of observables. 
The same conditions occur if we try to enlarge $\mathcal{C}$ by more than one
observable, leading us to the conclusion that $\mathcal{C}$ is the largest complete
set of observables. Note that this is also the largest set of observables for which we
can identify states with one-point distribution only. 

Since the amount of information given by the measurement of all the observables of 
$\mathcal{C}$ is $n_{\mathcal{C}} = \sum n_{o_i}$, where $n_{o_i}$ is the information
for each $o_i$, the above statement leads us to another consequence:
starting from $\mathcal{C}$, we cannot build super-sets of observables that carry an
amount of information greater than $n_{\mathcal{C}}$, which proves the second part of
the theorem.
\end{proof}

\mysubsubsection{Measuring the state} Is it possible to \textit{measure} the state of a system?
As we stated above, the state is the mathematical object that describes the 
statistical properties of \textit{all} the observables of a system.

Hence, the state is a property that a physical system is equipped with. The question is whether
the state can be measured like an ordinary observable.
If we try to define the observable that represents the state, 
and we suppose that the measurement of this ``state'' observable undergoes the random behavior of
regular observables, it can be shown that this leads to self-referential or paradoxical scenarios.
This is because we assume that the measurement of an observable is governed by a probability
distribution, which is represented by a point in the state space; on the other hand, a point in the state space
corresponds, by definition, to a value of the state, not to a distribution of its values.

The resolution for this problem is to assume that the state can have only one-point distributions, 
but this means that the state has no random behavior and no distributions.
It should be noted that this has to do with  the possibility of preparing a system in a
well-defined state, namely, with the existence of a special kind of measurement that we
identified as preparation.

\noindent We can now introduce the following theorem:
\begin{theorem} \label{th:statenotobservable}
The state of a quantum system cannot be measured
\end{theorem}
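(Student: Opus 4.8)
The plan is to argue by contradiction, building directly on the self-referential obstruction already sketched in the text immediately preceding the theorem, and on the correspondence between determined states and one-point distributions established in the discussion of preparation. First I would suppose, for contradiction, that there exists an observable---call it $o_S$---whose measurement outcome is the state of the system, and that this ``state observable'' behaves like any ordinary observable. By the \emph{Measurement randomness} hypothesis, every ordinary observable is subject to random error, so the outcome of measuring $o_S$ must be described by a probability distribution $\rho(\xi)$ over its possible values.

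\begin{proof}
Suppose, for the sake of contradiction, that the state $S$ of the system can be measured by an observable $o_S$ that undergoes the same random behavior as any other observable. Then the outcome of measuring $o_S$ is governed by a probability distribution $\rho_{o_S}(\xi ; \boldsymbol{\theta})$, which, by definition of the state, is itself encoded by a point in the state space. However, each possible value $\xi$ of $o_S$ is, by assumption, a value of the state $S$, and hence corresponds to a single point in the state space. Thus $\rho_{o_S}$ is simultaneously required to be a nondegenerate distribution over the possible values of $o_S$ (because $o_S$ is an ordinary, randomly-behaving observable) and to assign, to each such value, a point in the state space that is by definition \emph{not} a distribution of values but a single determined value. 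This is the self-referential inconsistency: the state space parametrizes distributions of observables, yet a measurement of $o_S$ would have to place within that same space a distribution \emph{over} states, which the space is not built to contain.

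The only way to escape this inconsistency is to demand that $\rho_{o_S}$ be a one-point distribution, so that the measurement returns a single determined value $\nu_{o_S}$ with certainty. But a one-point distribution carries no random behavior and no genuine statistical content; by the correspondence between determined states and one-point distributions, this amounts to saying that the state is already fixed and is not being probed by a random measurement process at all. Such a process is precisely a \emph{preparation} in the sense defined above, not a measurement of a pre-existing random quantity. Therefore $o_S$ cannot be an ordinary observable subject to random error, contradicting our assumption. Hence no such state observable exists, and the state of a quantum system cannot be measured.
\end{proof}

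The step I expect to be the main obstacle is making the self-referential clash fully rigorous rather than merely suggestive: one must pin down precisely why a distribution \emph{over points of the state manifold} cannot itself be consistently identified with a point of that same manifold. Since the excerpt treats \emph{State} and \emph{Measurement} as primitive notions and only informally characterizes the state space as the set of observable-distributions, the argument necessarily inherits that informality, and the cleanest route is to lean on the established one-point-distribution correspondence to force the collapse from ``measurement'' to ``preparation,'' thereby exhibiting the contradiction without requiring a formal model of the manifold's self-embedding.
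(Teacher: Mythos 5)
Your proposal takes a genuinely different route from the paper's proof, and it contains a genuine gap. The paper does \emph{not} prove the theorem via the self-referential argument; that argument appears only as informal motivation in the discussion preceding the theorem (``it can be shown that this leads to self-referential or paradoxical scenarios''), and the actual proof is an information-counting argument: if the state were measurable, its measurement would yield $\log_2 N$ bits, where $N$ is the number of distinct states, so the Limited Information axiom forces $\log_2 N \le n_I$; on the other hand, since a quantum system must admit non-degenerate (not merely one-point) distributions, the number of states strictly exceeds the number $N_C$ of values of a complete set of observables, giving $\log_2 N_C < \log_2 N \le n_I$, which contradicts Theorem~\ref{th:compset}, by which the measurement of a complete set attains the maximal information available to the system. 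Your argument never invokes the Limited Information axiom or the maximality property of complete observables, which are precisely the ingredients the paper's proof rests on; an argument relying only on measurement randomness cannot distinguish quantum systems from arbitrary noisy ones, even though the theorem is stated, and only holds, for quantum systems.

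The gap itself is in your first paragraph: the claimed ``self-referential inconsistency'' is not a contradiction. A probability distribution $\rho_{o_S}(\xi ; x)$ over the points $\xi$ of the state space, whose form is determined by the true state $x$, is a perfectly consistent mathematical object---it is not required to \emph{be} a point of the state space, only to be a \emph{function} of one, which is exactly what the definition of state demands. This is the standard situation of a noisy estimator of an unknown parameter, and the paper itself adopts this viewpoint a few paragraphs later in the quantum-tomography discussion, where the state is explicitly ``handled as a statistical parameter'' and indirectly estimated from repeated measurements. If your inconsistency claim were valid as stated, it would equally forbid noisy estimation of any classical parameter, which is false. Consequently the dichotomy in your second paragraph (``the only way to escape this inconsistency is to demand that $\rho_{o_S}$ be a one-point distribution'') is unsupported, and the contradiction you derive from it collapses; even granting it, the step from ``deterministic outcome'' to ``hence a preparation, hence not a measurement'' clashes only with the informal primitive of measurement randomness, not with the axioms the theorem is about. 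To repair the proof you would need to replace the self-reference claim with a quantitative obstruction---which is exactly what the paper's bit-counting argument supplies.
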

\begin{proof}
If the system can be described by $N$ distinct 
states then, when measuring the state, we get $\log_2 N$ bits of information.
Since we are dealing with a quantum system, 
the Limited Information axiom must be obeyed, and we have that $\log_2 N \le n_I$, where
$n_I$ is the maximum information allowed by this axiom.

Let us now consider a complete set of observables, which can take
$N_C$ different values. If we consider only the one-point distributions for
these observables we can have exactly $N_C$ different distributions.
But this is a degenerate case that leads us to a zero-indeterminacy, which does not match 
the definition of a quantum system provided earlier in this section. 
Therefore, the number of possible distributions and, consequently, the
number of state space points, must be greater than $N_C$, so we have 
$N > N_C$. Since $\log_2 N \le n_I$, we can also say that $\log_2 N_C < n_I$, 
which contradicts the fact that, for a complete set of observables,
the information obtained by a measurement is maximal.
\end{proof}

\mysubsubsection{Repeated measurements - Quantum tomography} 
Although  the above arguments are unquestionable, there is still a way to indirectly measure 
the state of a system.
As we know, this problem has been widely studied in the framework of what is 
commonly referred to as quantum tomography~\cite{DArianoParisSacchi1,FanoQuantTom,VogelRisken1}. 
In this study, we focus on this problem from a statistical and estimation-theoretical point of view.

Suppose we can build an ensemble of identical replicas of a system.
Here ``identical'' means that they are prepared exactly in the same state.
If this preparation is possible, we can hypothetically build that ensemble.

If we perform a set of measurements on a given observable of a system, we can 
reconstruct its probability distribution. If we parametrize all 
the distributions of the observables with full and identifiable parametrizations, 
we will be estimating the statistical parameters. These estimates are calculated through some
\textit{estimators} of the parameters that depend on the sample data.
Typically, a maximum likelihood estimation method can be used to
define such estimators. 

Nevertheless, this can still be insufficient to determine the state: 
by asserting that the probability distribution of an observable
is defined by the state, we mean that there is a mapping from the set
of states to the set of distributions for that observable.
And this mapping is not necessarily invertible.
In order to determine the state, in general, we must consider many observables and 
estimate their probability distributions; then we should be able to define a mapping from the space
of the probability distributions to the state space, which gives us the best estimation for the
state itself. 
Again, we can use a maximum likelihood estimation scheme to find the function that
makes the observed data the most probable.

\mysubsubsection{Ideal measurements}
Roughly speaking, an ideal measurement is one that uses the best measurement
device, which introduces minimal random noise and
has the lowest systematic error. 
When dealing with the measurement of the quantum state,
in which we employ statistical estimators to infer the measured value of the state, 
the choice of each estimator impacts the final precision of the measurement.
If we consider the estimation as part of the measurement process of the state,
an ideal measurement is a process that uses the best 
(i.e., minimum-variance and unbiased) estimators for the quantum state.
Now we can introduce the following axiom:

\begin{unreferredaxiom} \nonumber
{\bf{(Precision Invariance)}} The amount of information (in terms of precision) about the state given by an ideal measurement of
some complete observables is independent of the observables being measured.
\end{unreferredaxiom}

\noindent This axiom can be better understood if we imagine having an ensemble of replicas
of a physical system that are assumed to be in the same state.
We ideally estimate the state of the system by measuring the observables for each system of the
ensemble. 
The above axiom asserts that the gain of information (expressed, for example, in terms of increased
precision) about the state provided by a single measurement is independent of the choice of the
observable we are going to measure. 

The premise of this axiom is that we choose among observables that carry the same amount of
information, namely that their measurement has the form of a statement that involves the same
number of bits.
For example in a system where the variable $x$ takes the values from $0$ to $15$, the observable 
``value of $x$'' involves 4 bits, while the observable ``$x$ is even'' can take two values and involves
one bit.

The Precision Invariance axiom can be justified in terms of a ``principle of indifference'':
if a system has a set of observables that are equivalent in their underlying
statistical and informational properties, it is reasonable to assume that the observables
contribute equally to the knowledge of the state.
This must hold independently of the unknown physical state of the system.

Consider, for example, a pair of canonical phase-space variables 
$p$ and $q$. In the ``bare'' system, with no dynamics assigned to it, the variables 
$p$ and $q$ are assumed to be interchangeable. This implies that they 
have, \textit{a priori},  the same statistical and informational properties. If these variables
are discrete, we also assume that they can take the same number, $N$, of
values and, consequently, the amount of information provided by a measurement
on any of them is the same ($\log_2 N$ bits).
Thus, there is no reason why the contribution to the knowledge of the state should be 
different if we measure, for example, $p$ instead of $q$.

As stated before, the application of an estimator is part of the measurement process of the state
and, since the above axiom refers to ideal measurements, we can assume that only minimum-variance
unbiased estimators are involved. In classical statistics, the Cram\'er-Rao bound provides a lower bound on the variance
of an estimator of a given statistical parameter, which implies an upper bound on its precision (with precision defined as
the inverse of the variance).
This upper limit corresponds to the Fisher information~\cite{FisherClassic}.
In many sections of this article, we refer to maximum likelihood estimators, which reach
the Cram\'er-Rao bound when the sample size tends to infinity, and the connection with the Fisher metric is the basis for
exploring other formulations of the Precision Invariance axiom. 

It is important to clarify the practical applicability of the axiom we have just introduced: the physical
state is a theoretical construct that is not directly accessible and meaningful, and it is used only to derive other quantities (the probability
distributions of observables). Nonetheless, the conditions we impose on the state indirectly affect the precision with which we
estimate the statistical parameters of these distributions.
Therefore, the axiom of Precision Invariance should be interpreted as a condition on the
parameters, which do have a direct physical significance.

A question may arise concerning the interpretation of this axiom and, in general, the concept of measurement precision.
It is evident that our theory incorporates measurement as a constituent element. For this reason, it is unnecessary to 
introduce the Born rule as an additional axiom.
One might ask whether the Precision Invariance axiom can be understood independently of the measurement process.
Rather than interpreting it
in terms of the precision with which the state is estimated through experimental measurements, we might instead consider
an intrinsic precision associated with the state itself.

Alternatively, we might assert that the amount of information (in terms of precision) that can potentially be
transferred---through measurement or any other interaction---is fixed. That is, it does not depend on the particular
process or on how the system interacts with others.

Clearly, this places us in the realm of interpretations. Framing the discussion in terms of measurement remains the most
straightforward way to treat issues related to estimation theory and statistical parameters.

\mysubsubsection{Aleatory measurements -- Interpretation}
As stated before, we can improve the quality of our measurements
by using the best measurement setups, which minimize measurement errors or,
in other words, which maximize the precision. This condition applies to
the pair system+measurement setup, and we require it to be compatible with the 
Limited Information axiom.
What we will discover is that this requirement, and other postulates of the theory,
lead us to non-vanishing values of measurement uncertainty.
We will not analyze the mechanism that generates such an aleatory behavior
but, however, we must observe that we started by thinking of a system that 
has less information than what we expect by counting its observables. 
This kind of system is not necessarily an ``aleatory'' system. 
The only aleatory element of the chain is the measurement apparatus.

A possible interpretation is that the measurement apparatus introduces some
aleatory behavior when the actual system does not have such information.
This mechanism must be compatible with the axioms of the theory.

\mysubsubsection{Quantization} 

Consider a set of observables $\mathcal{C} = \{o_1, o_2, ... o_D\}$ with 
distributions $\rho_{o_i}(\xi)$, where $\xi$ is a possible measurement outcome of the observable
$o_i \in \mathcal{C}$.
According to our definition of state, the probability distribution of any
observable depends on the system's state. This means that it has the form $\rho_{o_i}(\xi ; x)$,
where $x$ is a mathematical object that represents the state.
The ``state space'' is the set of all physical states, and it will usually be denoted by $\Phi$.

Now consider the $D$-dimensional vector $\boldsymbol{\rho}$, whose components
are the probability functions $\rho_{o_i}(\xi)$, for $i = 1 ... D$.
We can suppose that there is a set of all the physically admissible 
values of the vector function $\boldsymbol{\rho}$. Let us denote this set by
$\Phi_{\boldsymbol{\rho}}$. This set is the image of the set $\Phi$ under
the vector function with components $\rho_{o_i}(\xi ; x)$.

According to the definition given above,
the state \textit{uniquely} defines the probability distributions of 
the observables of a system. Consequently, it must be possible
to define a one-to-one mapping between the state space $\Phi$ and the set 
$\Phi_{\boldsymbol{\rho}}$. From a physical point of view, what matters 
is the set of physically admissible distributions of the observables, i.e.,  
$\Phi_{\boldsymbol{\rho}}$. 
This implies that the state space can be expressed in any form (numbers or any other object),
provided that we can state a one-to-one relationship between this set and
each possible value of $\Phi_{\boldsymbol{\rho}}$.

The purpose of this work is to define a quantization, namely, 
for a given set of observables, to understand how the set of
physically admissible distributions is made. 
Or, alternatively, to define a state space and a bijective mapping between it and the set
$\Phi_{\boldsymbol{\rho}}$.

Now consider that, for a given set of observables  $\mathcal{C}$, we have
a set of statistical parameters, or a \textit{parametrization} for that set of observables: 
\begin{equation}  \label{eq:pardef}
\boldsymbol{\theta}_{\mathcal{C}} =
\{\boldsymbol{\theta}_{o_1}, \boldsymbol{\theta}_{o_2} ... \boldsymbol{\theta}_{o_D}\}.
\end{equation}
\noindent As stated before, if the parametrization of an observable is full and identifiable, 
we can state a one-to-one correspondence between any distribution of the probability
of each observable and a value of its distribution's parameters.
This means that, if we have a space of full and identifiable parameters, we can describe a
system both in terms of physically admissible distributions and in terms of 
sets of values for the parameters that correspond to such admissible distributions.

In the above discussion, we have expressed the probability distributions of an observable $o$
both as dependent on the point $x$ of the state space $\rho_o(\xi, x)$ and in terms of the
observable's parameters $\rho_o^{\theta}(\xi, \boldsymbol{\theta}_{o})$.
If the parametrization $\boldsymbol{\theta}_{o}$ is full and identifiable, then we can define 
the functions $\boldsymbol{\Theta}_{o}(x)$ by writing the identity  
$\rho_o(\xi, x) = \rho_o^{\theta}(\xi, \boldsymbol{\theta}_{o})$, where:
\begin{equation}  \label{eq:xtotheta}
    \boldsymbol{\theta}_{o_i} = \boldsymbol{\Theta}_{o_i} (x), \, i \in [1,D].
\end{equation}
\noindent If an observable $o$ is defined on an $N_o$-point discrete
space, the distributions defined on this space have $N_o-1$ degrees of freedom (because of the
normalization condition on the distributions).
Consequently, if a parametrization of an observable $\boldsymbol{\theta}_{o}$ is full and
identifiable, it can be defined in an $N_o - 1$-dimensional space.

\subsection{Principles of the Theory} \label{sec:principlesoftheory}

In the previous section we introduced the first two axioms of the theory.
In addition to those axioms,
which play a key role in the development of the theory, we are going to introduce other general and
reasonable axioms. The full list of the axioms of the theory is presented here:

\mysubsubsection{Axioms of the Theory} \label{sec:axioms}
\begin{axiom} \label{ax:bitprecisionequiv}
{\bf{Precision Invariance}}
The measurements of the observables that carry the same amount of information
provide the same contribution, in terms of precision, to the estimation of the physical state.
\end{axiom}

\begin{axiom} \label{ax:limitedinfo}
{\bf{Limited Information} }
The amount of information that can be carried by a system is limited 
\end{axiom}

\begin{axiom}  \label{ax:symmetries}
{\bf{Model Symmetries}}
The mathematical model of a quantized physical system must be invariant under all the natural
symmetries of that system.
\end{axiom}

\begin{axiom} \label{ax:scalability}
{\bf{Model Independence with respect to the physical space} }
We require our theory to be independent of the system's position and size relative to the physical space.
 \end{axiom}

\begin{axiom} \label{ax:infostatequivalence}
{\bf{Information/Statistics Equivalence} }
Given a system and an observable with a finite number $N$ of values, a one-point distribution
represents the maximum amount of information the observable can carry.
By measuring information in bits, this maximum is $\log_2 N$.
Conversely, if all the values of the observable are {\textnormal{a priori}} equivalent, that is, if there are no privileged values, the uniform distribution corresponds to the case in which the observable carries zero information.
\end{axiom}

\mysubsubsection{Some remarks about the Axioms}

The three additional axioms introduced above, in some regards, are not true axioms,
because they are very generic in scope.
For example, axiom \ref{ax:symmetries}  does not specify which 
symmetries must be satisfied by the mathematical model of quantization that we are going to build. Instead, it can be regarded as
a class of axioms.
Nonetheless, this axiom is necessary if we want to build a theory that describes a system with
two canonical variables $p$ and $q$ defined in a phase space. In this case, the 
``bare'' system, with no dynamics, is expected to be invariant under some transformations, e.g., 
translation, scaling, and reflections. Thus, we must require that an underlying quantization 
procedure does not weaken these symmetries.

Axiom \ref{ax:scalability} is similar to the previous axiom on symmetries, but differs
from it conceptually.
The model of a physical system on which we base our theory must be viewed as a subsystem embedded
within a larger system (the universe).
Describing physics in terms of a subsystem is just a way of constructing our theory, and the choice of one subsystem over another should have no effect on the predictions of that theory.

For this reason we require the behavior of such a subsystem  to be independent of transformations (relative to the larger system) such
as translations, rescaling etc..
To be more precise, this axiom concerns symmetries of the model with respect to the external world
while the previous one concerns symmetries of the states within the model itself.

Axiom \ref{ax:limitedinfo} tells us that there is a limitation on the amount of information
a system can carry, but it does not specify the strength of that limitation.
In section \ref{sec:nstatesyst} we show that there is only a possible
choice on the value of the limitation that is compatible
with the scalability of the model required by axiom \ref{ax:scalability}.
That value will depend on the physical dimensions of the model and states 
a connection with the standard quantum theory because it can be expressed 
in terms of the Planck's constant $h$.

Axiom  \ref{ax:infostatequivalence} represents a very strong hypothesis: we began our analysis
without any assumption on the non-deterministic nature of a physical system, except for the basic
one, namely, from the existence of a probability distribution.
In this framework, a connection between the form the probability functions can take and
the amount of information carried by the system is not defined. This connection is given by this
last axiom in a very reasonable way. 

Axiom \ref{ax:bitprecisionequiv} was introduced in the first part of this paper.
There is a clarification to be made about this axiom: the {\textit{precision}} is
a notion that must not depend on the particular scale we use to represent the state.
The comparison among estimation uncertainties on the state coming from the measurements of different
observables only makes sense if we use the same scale/range on the state space. 

In what follows, we will use a very basic definition of the ``amount of
information'': it is simply the number of bits we need to define the state of a system.
More precisely, when we define the state of a system by a set of assertions (which we will 
refer to as ``predicates''), each assertion counts as a bit of information.
An assertion whose trueness is left undefined counts as zero bits of information.

\mysubsubsection{Free Varying Distributions}\label{sec:freevardist}
Let us consider a complete (in the meaning defined in section \ref{sec:funddef}) observable $o_0$.
After its measurement, which is actually a preparation, the system enters a state 
that corresponds to a one-point distribution of $o_0$.  
According to the definition introduced in section \ref{sec:funddef}, the amount of 
information provided by the measurement of a complete observable is maximal, and no extra
information is left to the measurement of other observables.
Thus, we cannot perform a joint measurement on $o_0$ and another observable $o_1$.

Moreover, suppose that $o_0$ and $o_1$ are independent, i.e., by measuring $o_0$, we have no information about the value of $o_1$; 
in other words, once the value of $o_0$ is determined, all the values of the other observable $o_1$
are equivalent.
According to axiom \ref{ax:symmetries} this implies the state created by measuring  $o_0$ identifies
a probability distribution for $o_1$ that is a uniform distribution.

In a hypothetical world in which all of the measurements/preparations behave in this way,
there would be only one-point and uniform distributions, and  quantum theory would be a very simple
(and perhaps less interesting) theory.
But, as we will show, a richer theory can be built and we will find  
that the complexity of quantum mechanics arises from what takes place between these two cases.

Now suppose that, beyond a set of independent observables we have some observables that are not
completely independent. In other words, suppose that the measurement of an observable provides some
(but not all) information about
another observable. From a classical point of view, this makes sense:
for example, if we measure the energy of a classical harmonic oscillator, we know that its
conjugate Hamiltonian variables $p$ and $q$ must be confined in a region of the phase space, and
this represents some information about $p$ and $q$. 
If we consider this kind of ``partially independent'' observables, we can imagine that there are some
states of a system where an observable is determined and  others where it is not, but it does not
have a uniform distribution.

At this stage, we can neither state which of these types of distributions are allowed, nor under
what conditions.
However, we can require our theory to take these distributions into account.
Thus, we can state the following guideline in the construction of our theory:
\begin{axiom} \label{ax:freevaryingdistr}
{\bf{Free Varying Distributions}}
Except for the cases that are not compatible with all the axioms of the theory,
the distribution of a complete observable may take any shape between
the one-point and the uniform case.
\end{axiom}

\section{Estimation of the Physical State}\label{sec:physstaeest}
\subsection{Statistical Models for the state} 
Let us now consider a repeated measurement experiment for an ensemble of identical replicas
of a system: suppose that the system has $D$ observables that we collect in the
set $\mathcal{C} = \{o_1, o_2, ... o_D\}$.
For each observable $o_i$ we perform $M_{o_i}$ measurements, and the outcome of a single
measurement is denoted by $\xi_{o_i}^{(h)}$, where $h$, which identifies the measurement, 
ranges in the interval $[1, M_{o_i}]$. This implies that we have a total of $M = \sum_i M_{o_i}$ 
measurements on the ensemble.
The set of measurements for the single observable $o_i$  is denoted by 
$\Xi_{o_i} = \{\xi_{o_i}^{(1)} ... \xi_{o_i}^{(M_{o_i})}\}$, and the set of all measurements for
all the observables of $\mathcal{C}$ is denoted by $\Xi_{\mathcal{C}}$.

If we consider a single observable $o_i$ and a parametrization $\boldsymbol{\theta}_{o_i}$
of its probability distribution, we can define an estimator for
$\boldsymbol{\theta}_{o_i}$, namely, a function of the sample data of the measurement
on $o_i$:  $\hat{ \boldsymbol{\theta}}_{o_i}(\Xi_{o_i})$.

\mysubsubsection{State estimation through a set of observables}
The aim of the repeated measurement experiment described above is to provide an estimate of the
state. Let us consider a statistical model $\mathcal{M} = \{ \rho_o(\xi_o ; x) : x \in \Phi\}$ 
for a set $\mathcal{C}$ of observables. Note that, by writing distribution in the form
$\rho_o(\xi_o ; x)$ we are simply applying the definition of {\it state} $x$, 
namely, the mathematical object all probability distributions depend on.
This point of view implies that the state $x$ can be handled as a statistical parameter.

The state is  \textit{estimable} through the set of observables $\mathcal{C}$ if, 
for an ensemble of $M$ replicas of the system, we can choose
a set of measurements $\Xi_{\mathcal{C}}$ for the observables of
$\mathcal{C}$ and build a function  $\hat{x}(\Xi_{\mathcal{C}})$ of these measurements 
such that $E(\hat{x}(\Xi_{\mathcal{C}})) = x$. In other words, the state is estimable  through
$\mathcal{C}$  if it has an unbiased estimator that is a function of the measurements 
$\Xi_{\mathcal{C}}$. 
In the context of quantum tomography,  the measurements of sets of observables
that make a state estimable are called tomographically complete, and the minimal set of observables is
sometimes called the ``quorum''~\cite{FanoQuantTom}.

\subsection{Local independence of observables} \label{sec:locindep}

Let us consider the functions $\boldsymbol{\Theta}_{o_i}$, defined in
\eqref{eq:xtotheta}, that return the values of the parameters $\boldsymbol{\theta}_{o_i}$ 
starting from the state. 
We can introduce the following definition: the observables of a set $\mathcal{L}$ are
said to be \textit{locally independent} if, for every state $x_T$, there exists a neighborhood  
$T$ of the point $\boldsymbol{\Theta}_{o_i}(x_T)$ in the parameter space, for which the system
\begin{equation}  \label{eq:xtothetal}
    \boldsymbol{\theta}_{o_i} = \boldsymbol{\Theta}_{o_i} (x), \, i \in [1,d]
\end{equation}
\noindent has a single solution in $x$ for all values of the tuple 
$(\boldsymbol{\theta}_{o_1} ... \boldsymbol{\theta}_{o_d})$ that belong to $T$. 
Roughly speaking, the observables of the collection $\mathcal{L}$ are locally independent if
the system \eqref{eq:xtothetal} is locally solvable in $x$.

In this definition, the term ``independent'' refers to the fact that the values 
of the parameters can be chosen independently of each other, and for every 
choice we can find a corresponding state $x$.

Note that the definition just introduced does not make any assumptions about the nature (i.e., the topology) of the state space. At this stage, a state $x$ can be considered as an element of an arbitrary set, whether numeric or not. The collection of all such points $x$, on which the $\rho$ distributions depend, defines the state space. To develop a theory based on these foundations, we assume that the state space possesses a ``manageable'' structure, i.e. that the $\Theta$ functions are well-behaved. In particular, we assume they are continuous and, if necessary, differentiable.

The concept of local independence leads us to introduce the following theorem: 
\begin{theorem} \label{th:locindobs}
If a physical system has a set of $d$ locally independent observables $\{o_1, o_2 ... o_d\}$, which provides a single
local solution of the system
$\boldsymbol{\theta}_{o_i} = \boldsymbol{\Theta}_{o_i} (x) $ (where $i \in [1,d]$)
within a neighborhood of every state $x$, then this is the highest-$d$ set of locally independent
observables.
\end{theorem}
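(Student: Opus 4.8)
The plan is to adapt the maximality argument used in the proof of Theorem~\ref{th:compset}: I would use the hypothesis of unique local solvability to express the state locally as a function of the $d$ parameter tuples, and then show that the parameters of any further observable are forced to be functions of those, which is incompatible with local independence of an enlarged set.

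First I would translate the hypothesis into the existence of a local inverse. By assumption the system $\boldsymbol{\theta}_{o_i} = \boldsymbol{\Theta}_{o_i}(x)$, $i \in [1,d]$, has a single solution $x$ for every tuple $(\boldsymbol{\theta}_{o_1},\dots,\boldsymbol{\theta}_{o_d})$ in a neighborhood $T$ of $\boldsymbol{\Theta}_{\mathcal{L}}(x_T)$, where $\boldsymbol{\Theta}_{\mathcal{L}} = (\boldsymbol{\Theta}_{o_1},\dots,\boldsymbol{\Theta}_{o_d})$. Using the continuity (and, where needed, differentiability) of the $\boldsymbol{\Theta}$ functions assumed just before the statement, this says that $\boldsymbol{\Theta}_{\mathcal{L}}$ is a local homeomorphism of a neighborhood of $x_T$ onto the open set $T$; by the inverse function theorem (equivalently, invariance of domain) there is a continuous local inverse $\Psi$ with $x = \Psi(\boldsymbol{\theta}_{o_1},\dots,\boldsymbol{\theta}_{o_d})$ on $T$. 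In particular the local dimension of the state space $\Phi$ equals the total number of scalar parameters carried by the $d$ observables.

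Next I would assume, for contradiction, that a locally independent set $\mathcal{L}'$ containing more than $d$ observables exists. For each $o'_j \in \mathcal{L}'$ the parameters obey $\boldsymbol{\theta}_{o'_j} = \boldsymbol{\Theta}_{o'_j}(x)$, and substituting $x = \Psi(\boldsymbol{\theta}_{o_1},\dots,\boldsymbol{\theta}_{o_d})$ expresses every parameter of $\mathcal{L}'$ as a composite function of the original $d$ tuples. Thus the whole parameter tuple of $\mathcal{L}'$ lies in the image of a continuous map whose domain has dimension $\dim \Phi$. But local independence of $\mathcal{L}'$ demands that this tuple range over an open neighborhood $T'$ in the product parameter space of $\mathcal{L}'$, whose dimension strictly exceeds $\dim \Phi$, since each additional observable contributes at least one more scalar parameter (an observable with $N_o$ values has $N_o-1 \geq 1$ parameters). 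A continuous map cannot send a set onto one of strictly larger dimension with nonempty interior, so the image has empty interior and no such $T'$ can exist---contradicting the local independence of $\mathcal{L}'$. Hence no locally independent set larger than $d$ observables exists, and the given set is of highest $d$.

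The step I expect to be the main obstacle is the topological/dimensional one. It requires invoking invariance of domain (or a rank argument under differentiability) both to equate $\dim \Phi$ with the number of scalar parameters and to rule out a continuous surjection onto an open set of higher dimension. It also quietly uses that each observable contributes a fixed, positive number of scalar parameters---automatic in the models considered here, where all observables take the same number of values $N_o$ and hence carry $N_o-1$ parameters each---so that ``more observables'' genuinely means ``strictly more parameters.'' If observables of different parameter dimension were allowed, ``highest $d$'' should be read as the largest admissible count of observables, and the same composite-dependence argument still forbids adding any observable once the state is already pinned down locally.
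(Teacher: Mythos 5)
Your proposal is correct, and its core mechanism is exactly the paper's: unique local solvability of $\boldsymbol{\theta}_{o_i} = \boldsymbol{\Theta}_{o_i}(x)$ pins down $x$, so the parameters of any further observable become composite functions of the $d$ tuples and cannot vary freely, contradicting local independence of the enlarged set. The paper's proof consists of precisely this observation and stops there: it fixes the $\mathcal{L}$-parameters, obtains the unique solution $x'$, notes that $\boldsymbol{\theta}_{o_{d+1}}$ is frozen at $\boldsymbol{\Theta}_{o_{d+1}}(x')$, and concludes---but only for sets obtained by \emph{extending} $\mathcal{L}$. Where you genuinely diverge is the final step: you convert the functional dependence into a dimension count (the $\mathcal{L}'$-parameter tuples lie in the continuous image of a set of dimension $N_{\theta}d$, yet would have to fill an open set of strictly larger dimension), which buys you the stronger reading of ``highest-$d$'': no locally independent set of cardinality exceeding $d$ exists at all, even one disjoint from $\mathcal{L}$. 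The paper's direct argument is more elementary and needs no topology beyond the definition; yours covers more ground but imports dimension theory.

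Two points in that topological layer need tightening. First, the claim ``a continuous map cannot send a set onto one of strictly larger dimension with nonempty interior'' is false as stated---Peano curves are continuous surjections from $[0,1]$ onto $[0,1]^2$. What is true, and what invariance of domain (or the compact-exhaustion argument for topological dimension) actually delivers, requires \emph{injectivity} of the map. Fortunately injectivity is available here for free: local independence of $\mathcal{L}'$, the hypothesis you are contradicting, forces uniqueness of solutions for tuples in $T'$, hence $\boldsymbol{\Theta}_{\mathcal{L}'}$ is injective on the relevant region, and so is the composite with $\Psi$; you should say this explicitly, since it is the hypothesis doing the work. Second, local independence of $\mathcal{L}'$ guarantees that every tuple in $T'$ has a solution $x$, but under the definition in section \ref{sec:locindep} that solution is not automatically inside the chart where $\Psi$ parametrizes the state; you need the reading---supported by the theorem's own wording ``a single local solution \dots within a neighborhood of every state $x$,'' or alternatively a global argument using Theorem \ref{th:manifolddim} to bound the dimension of the whole state space---to ensure the composite map's image really must contain $T'$. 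With those two clarifications your argument is sound.
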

\begin{proof}
Suppose we try to extend a set $\mathcal{L}$  of $d$ locally independent observables
by adding one more observable $o_{d+1}$.
It can be shown that, with this new observable, the local independence property decays: 
since the first $d$ observables are locally independent by hypothesis, once we choose their
parameters (within the image of the neighborhood of a point $x$),
we can solve the system  \eqref{eq:xtothetal} and find a unique solution $x'$.
Consequently, the new observable $o_{d+1}$ must take the value
$\boldsymbol{\Theta}_{o_{d+1}}(x')$ of its parameters. This implies that, for fixed
$\boldsymbol{\theta}_{o_1}...\boldsymbol{\theta}_{o_{d}}$, any value of
$\boldsymbol{\theta}_{o_{d+1}}$ different from $\boldsymbol{\Theta}_{o_{d+1}}(x')$ does not 
correspond to a solution of the state $x$, and we can conclude that the extended
collection of $d+1$ is not a set of independent observables.
The same applies if we try to extend  $\mathcal{L}$ to any number $d' > d$ of
independent observables.
\end{proof}

It should be noted that, in the above theorem, we specified that the observables are
{\it locally} independent.
The reason for this is that, if a function is locally invertible, it is not necessarily fully
invertible (i.e., for all its domain of definition). 
In the theory discussed in this paper, we will always consider systems in which the 
statistical parameters, in general, depend on the state through a multi-valued function 
$\boldsymbol{\Theta}_{o_i} (x)$. 

The concept of locally independent observables must not be confused 
with the concept of state estimability through a set of observables.
If a set of observables were {\it globally} independent, namely, if the function  
$\boldsymbol{\Theta}(x)$ were invertible  in all the state space, once we had estimated the
observables' parameters, we would be able to get the state $x$.
This implies that we could consider the state estimable through that set.
However, in general, we do not have sets of globally independent observables, and we expect the
state to be estimable through a set of observables that is larger than a set of independent ones.

The following theorem provides some information about the structure of the state space:

\begin{theorem}  \label{th:manifolddim}
Suppose that: (i) for each point of the state space, a system has a set $\mathcal{L}$
of $d$  locally independent observables, and
(ii) the parameter vectors $\boldsymbol{\theta}_{o}$ of its observables have equal 
dimensions $N_{\theta}$.
Then the state space is an $N_{\theta} d$-dimensional topological manifold.
\end{theorem}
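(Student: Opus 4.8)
The plan is to use the parameter functions of the $d$ locally independent observables as coordinate charts and to show that these charts cover the state space and are mutually compatible. Define the combined map
\[
\Psi_{\mathcal{L}}(x) = \bigl(\boldsymbol{\Theta}_{o_1}(x), \ldots, \boldsymbol{\Theta}_{o_d}(x)\bigr),
\]
which sends $\Phi$ into the product parameter space $\Phi_{\boldsymbol{\theta}_{o_1}} \times \cdots \times \Phi_{\boldsymbol{\theta}_{o_d}}$. By hypothesis (ii) each factor is $N_\theta$-dimensional, so this product is naturally an open subset of $\mathbb{R}^{N_\theta d}$, and since the $\boldsymbol{\Theta}_{o_i}$ are assumed continuous (Section~\ref{sec:locindep}), the map $\Psi_{\mathcal{L}}$ is continuous.

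First I would translate hypothesis (i) into a local-bijectivity statement for $\Psi_{\mathcal{L}}$. Local independence guarantees that for every state $x_T$ there is a neighborhood $T$ of $\Psi_{\mathcal{L}}(x_T)$ in the product space such that the system $\boldsymbol{\theta}_{o_i} = \boldsymbol{\Theta}_{o_i}(x)$, $i \in [1,d]$, has exactly one solution $x$ for each tuple in $T$. Setting $U = \Psi_{\mathcal{L}}^{-1}(T)$, which is open by continuity, the uniqueness of the solution says precisely that $\Psi_{\mathcal{L}}|_U$ is injective, while the fact that a solution exists for \emph{every} tuple in $T$ says it is surjective onto $T$. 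Thus $\Psi_{\mathcal{L}}|_U : U \to T$ is a continuous bijection onto an open subset of $\mathbb{R}^{N_\theta d}$, and this is the candidate chart at $x_T$.

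Next I would assemble these candidate charts into a manifold structure. They cover $\Phi$, since each $x_T$ lies in its own $U$; their images are open in $\mathbb{R}^{N_\theta d}$; and on overlaps the transition maps are restrictions of $\Psi_{\mathcal{L}}$ composed with a local inverse, hence continuous once the charts are established to be homeomorphisms. The dimension count is then immediate: each of the $d$ parameter vectors contributes $N_\theta$ coordinates, giving $N_\theta d$ in total, which is the claimed dimension.

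The hard part will be upgrading each continuous bijection $\Psi_{\mathcal{L}}|_U$ to a homeomorphism, that is, proving continuity of its inverse. This cannot follow from injectivity and continuity alone on an arbitrary state space, and one cannot simply invoke Brouwer's invariance of domain, because $\Phi$ is not yet known to be locally Euclidean --- which is exactly the conclusion sought, so a naive appeal would be circular. I would resolve this in one of two ways. Option (a): adopt the differentiability assumption flagged in Section~\ref{sec:locindep} and require the Jacobian of $\Psi_{\mathcal{L}}$ to be nonsingular at each point, whence the inverse function theorem yields a local diffeomorphism and in particular a homeomorphism; the nonsingularity is precisely the infinitesimal content of the ``single local solution'' hypothesis. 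Option (b): declare the topology on the abstract state space $\Phi$ to be the coarsest one making every $\boldsymbol{\Theta}_{o_i}$ continuous, so that $\Psi_{\mathcal{L}}$ is by construction a topological embedding and each $\Psi_{\mathcal{L}}|_U$ is automatically a homeomorphism onto its image. Route (a) is the cleaner choice, since the paper already anticipates well-behaved, differentiable $\boldsymbol{\Theta}$ functions and it makes the non-degeneracy hypothesis explicit.
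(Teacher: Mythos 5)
Your proposal takes essentially the same route as the paper's proof: both use the Cartesian product $\Phi_{\mathcal{L}} = \times_{o \in \mathcal{L}} \Phi_{\boldsymbol{\theta}_o}$ as the $N_{\theta}d$-dimensional Euclidean model and invoke local independence to turn the combined map $\Psi_{\mathcal{L}}$ into local charts. The one place you go beyond the paper is in flagging that a continuous local bijection is not automatically a homeomorphism --- the paper's proof simply asserts ``we can locally establish a homeomorphism'' --- and your proposed fixes (nonsingular-Jacobian/inverse-function-theorem, or giving $\Phi$ the initial topology induced by the $\boldsymbol{\Theta}_{o_i}$) are legitimate ways to close that step within the paper's stated assumptions.
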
  
\begin{proof}
Let $\Phi_{\boldsymbol{\theta}_o}$ be the space of parameters of each observable, and 
let $N_{\theta}$ be its dimensionality. The space given by the Cartesian product of the
parameter spaces of each observable of the set $\mathcal{L}$, which we indicate by
$\Phi_{\mathcal{L}} = \times_{o \in \mathcal{L}} \Phi_{\boldsymbol{\theta}_o}$, 
has dimension $N_{\theta} \cdot d$.

Since the vectors of this product space $\Phi_{\mathcal{L}}$ are nothing but
$N_{\theta} d$-dimensional tuples of real numbers, we can assert that $\Phi_{\mathcal{L}}$ has the
structure of an Euclidean space. 
According to the definition of local independence, the functions
$\boldsymbol{\Theta}_{o_{i}}(x)$ are invertible in the images of the neighborhood $X$.
Thus, we can locally establish a homeomorphism between an $N_{\theta} d$-dimensional Euclidean
space and $\Phi$.
This is consistent with the definition of an $N_{\theta} d$-dimensional {{topological}} manifold.
\end{proof}

It is worth noting that local independence is a rather strong condition for a set of observables.
A consequence of the above theorem is that if, for every state of a system, we can define a set of 
locally independent observables, then the state space is locally compact because local compactness
is a property of topological manifolds.

We can make more assumptions to strengthen this last assertion. The first assumption is that we can
always choose state parameters whose values range in bounded domains
(each component of the probability functions range in the interval $[0, 1]$). 
The second  assumption is that, when we estimate some parameters through a repeated-measurement
experiment, the values we obtain for those parameters are equipped with experimental uncertainties. 
We can choose a mapping from the parameter space to the state space that is 
``well behaved'' enough to get a state estimate from the estimate of the parameters, and to allow
us to derive the uncertainty of the system state from the uncertainty of the parameters.
Actually, the experimental uncertainty defines a closed (or open) neighborhood of a point in the
parameter space and, consequently, on the state space. 

Note that every point in the space of parameters is a potential result of an experiment,
which is always equipped 
with an experimental uncertainty. This implies that, for every point in the 
parameter space (or in the state space), we can have a finite-sized closed neighborhood.
By the definition of compactness, we can say that, being closed and bounded, this neighborhood
is a compact subset of the state space.
This implies that the state space is locally compact.

\begin{figure}[!ht]
\centering
\includegraphics[scale=0.96]{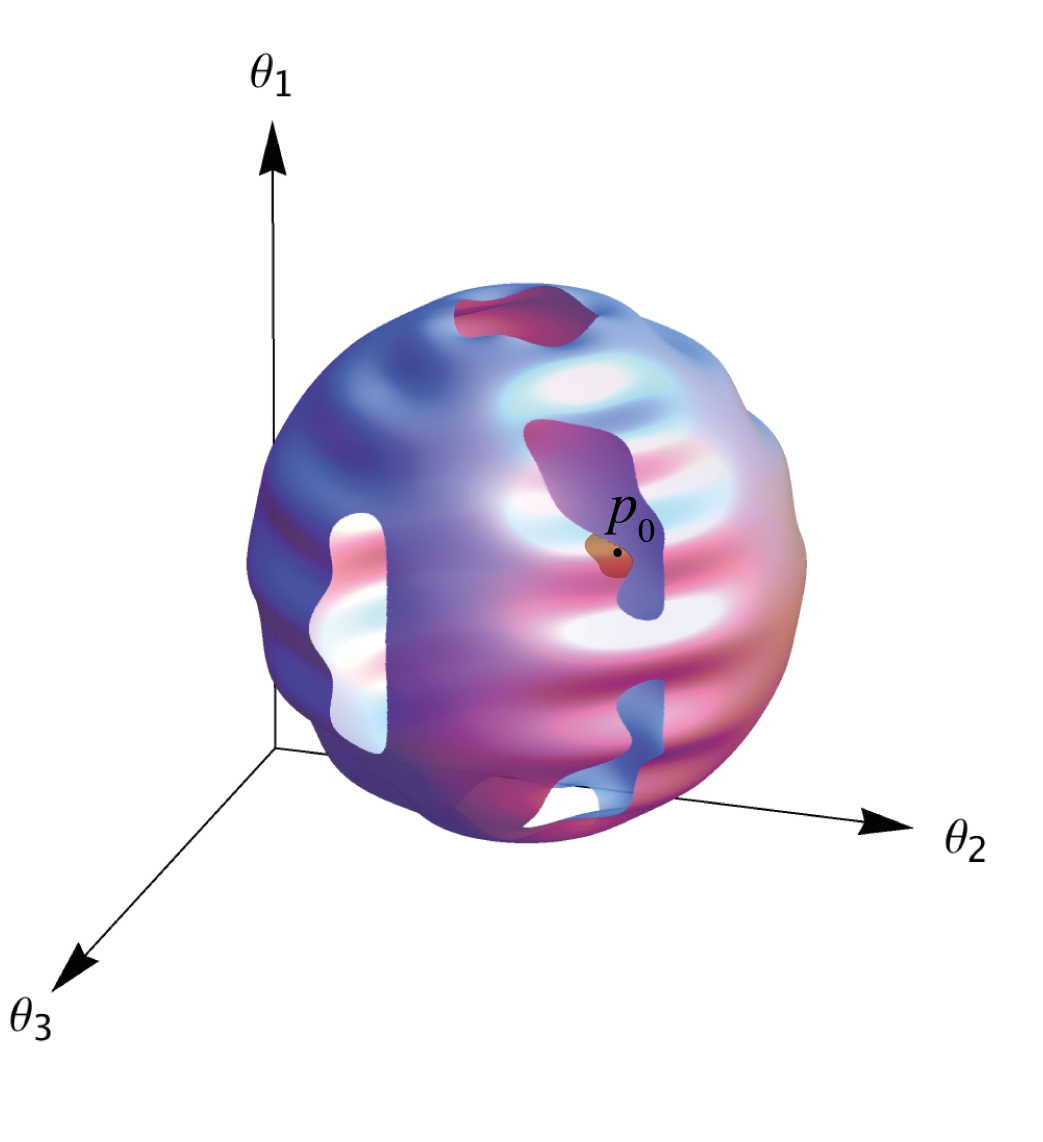}
\caption
{
Example of a non-compact two-dimensional state manifold, immersed in a three-dimensional
parameter space. Any state point $P_0$, resulting from an experiment, must have
a finite neighborhood (represented by a brown spot around $P_0$) belonging to the state manifold, 
which represents the experimental uncertainty; 
we cannot place $P_0$ arbitrarily close to an open boundary, such as
the one that encloses the ``hole'' on the manifold. This excludes the existence of open boundaries, which typically occur in non-compact sets.
}
\label{noncompactstatemanif}
\end{figure}

Since the parameter space is bounded, and the neighborhood  we can define around 
each point is finite-sized, we can imagine covering all the parameter space with
a finite set of neighborhoods of corresponding points.
The local independence allows us to (locally) define the inverse of the mapping
$\boldsymbol{\Theta}_{o_{i}}(x)$ and obtain the corresponding neighborhoods on the state space to
cover all of this space.
Since a finite union of compact sets is compact, we can conclude that, under the same 
hypotheses of theorem \ref{th:manifolddim}, the state space is a {\it{compact}} manifold.
An equivalent argument to demonstrate the compactness of the state space is
depicted in Fig.\ref{noncompactstatemanif}.

\section{The Fisher metric formulation of the Precision Invariance axiom}\label{sec:precinvappl}

This section presents the framework that allows the application of the Precision Invariance axiom to physical systems. The central result introduced here is the Fisher metric conservation theorem, which follows directly from this axiom. 

The Fisher metric conservation theorem will be the key tool for establishing the structure of the state space. Before introducing the theorem, we demonstrate how, in the case of an elementary system with a locally independent observable, the Precision Invariance axiom may provide information about the structure of the state space.

We introduce a formalism based on the Fisher information matrix, which allows us to express the Precision 
Invariance axiom in a form suitable for more complex problems.
Furthermore,  we study the problem  
of deriving the maximum likelihood estimators 
for a set of statistical parameters with the constraint that such parameters
represent a physically admissible point in the state space.
After that, we identify a form of the Fisher information that represents
the precision of the estimate of the state for this constrained problem and allows to  
insulate the contribution to the state estimation 
given by each observable.
On the basis of this contribution, we investigate a way to express the 
Precision Invariance axiom in a formal way.

In the last part of the section, we introduce the {\textit{orthogonal extension}}
of an observable's statistical parameter as a system of local (tangent to the state
space) orthogonal 
coordinates.
In this framework, the Precision Invariance axiom takes the form of a (Fisher) metric-preserving
property among the extended parameters corresponding of the observables.
The special case of a subset of the state space, for this metric-preserving
property, is also considered.

In what follows,  we will adopt the notations introduced in the previous sections:
$\mathcal{C} = \{o_1, o_2, ... o_D\}$ is a set of $D$ observables, where $D$ is large enough to make 
the state estimable through such a set.
Moreover, we suppose that the amount of information (measured in bits) carried by a measurement's
result is the same for all the observables in $\mathcal{C}$.
The distribution of each observable of $\mathcal{C}$ depends on a set of $N_{\theta}$ parameters,
and only $d$ of $D$ observables can be locally independent.
In other words, we assume that the state space  $\Phi_{\mathbf{x}}$ is $N_{\theta} d$-dimensional.
A set of $d$ locally independent observables will be denoted by $\mathcal{L}$.

\subsection{Consistency issues in the Precision Invariance axiom}\label{sec:consistencyissues}
The state space $\Phi_{\mathbf{x}}$ is a construct that we introduced primarily to map the probability
distributions of sets of observables, and there are no special requirements on its structure.
In pursuing our goal of deriving the structure of the state space from the precision invariance axiom,
this freedom of choice will lead us to a first problem, which can be synthesized as follows.
Assume that we can define a system of local coordinates $\mathbf{x}$ on
a chart of the manifold $\Phi_{\mathbf{x}}$ (we denote these coordinates in the same way 
as in the state).
The mathematical structure of $\Phi_{\mathbf{x}}$ is not defined {\it{a priori}}: 
we can choose among different representations of the coordinates $\mathbf{x}$ that lead to 
the same set of physically valid values for the statistical parameters of the observables of
$\mathcal{C}$.
Suppose, for example, that we have two observables $a$ and $b$ and, as a result of a repeated
measurement experiment on an ensemble of identical replicas of the system, we  derive
an estimate of the state $\mathbf{x}$ from their measurements.
If $a$ and $b$ are locally independent, then, in the neighborhood of a state space point,
the function $\Theta(\mathbf{x})$ can be inverted, and this allows us to obtain an estimate of
the state.
This estimate has an uncertainty that depends on the estimate precision of the statistical
parameters of $a$ and $b$. The estimate of the parameters of each observable contributes to the
precision along a particular direction on the manifold $\Phi_{\mathbf{x}}$.

The problem with the Precision Invariance axiom is that we do not know how to compare the 
precision contribution of the measurement of $a$ with that  of another observable $b$.
We can imagine stretching or contracting $\Phi_{\mathbf{x}}$
along the direction associated with the observable $a$, and making the precision contribution
of the observable $a$ smaller or larger and even equal to that of the other observable $b$.
This would make the Precision Invariance requirement worthless.
In other words, when we talk about precision in estimating the state, we must define the scale
that the precision refers to, for each observable. 

This problem can be overcome through symmetry-related arguments: 
in many physical systems, we can assume that there is a symmetry that makes the observables of the
set $\mathcal{C}$ interchangeable. 
In order to have a theory that is manifestly invariant under this kind of symmetry, we must choose
the parametrizations that make the dependence of $\rho(\zeta^o, \boldsymbol{\theta}_o)$ on
each  parameter equivalent for all $o \in \mathcal{C}$.
If that symmetry holds, we also expect the range of the parameters
$\boldsymbol{\theta}_o$ to be the same for all the observables in $\mathcal{C}$.
In other words, if there is a requirement for the exchange symmetry
of the observables, we cannot arbitrarily redefine the scale of  $\Phi_{\mathbf{x}}$
along a particular direction to fulfill the conditions for Precision Invariance.
We must take these conditions on the parametrization into account when investigating the implications of the Precision Invariance axiom.

\subsection{A basic example: the two-observable elementary system}\label{sec:twoobsbasic}

Let us examine the case of an elementary system described by two one-bit
variables, which can carry at most one bit of information.
The most interesting case is when only one observable is locally independent.
By following the notation used in the previous section, in this case, we have $d = 1$  locally
independent observables and the state is estimable through a set of two observables (thus $D =2$). 
Furthermore, we have $N_{\theta}=1$, because each observable can take two values, which implies that the probability distribution has 
one degree of freedom and can be defined by a single one-dimensional statistical
parameter.
Since there is only one independent observable, according to theorem
\ref{th:manifolddim}, the state space must be a one-dimensional manifold.

Let $p$ and $q$ be our observables and $0$ and $1$ the values they can take.
Their probability distributions will be denoted by $\rho_p$ and $\rho_q$.

The state is, by definition, the mathematical object that uniquely defines the 
statistical properties (i.e., the distributions) of all the observables of a system.
Thus, once we have found the set of the allowed probability distributions, we can say that we have found how the state space is made.

Let us apply the hypothesis that the system carries one bit of information, in conjunction with
axiom \ref{ax:infostatequivalence} on the information/statistics equivalence: if, for instance, 
$q$ is determined and has the value 0, its distribution is
$\rho_q(0) = 1, \rho_q(1) = 0$. According to axiom \ref{ax:infostatequivalence}, the amount of
information used for $q$, is 1. Since the total amount of information we can use is,
by hypothesis, 1, we have no more bits available for $p$. Thus, $p$ is totally
undetermined and, again, according to axiom \ref{ax:infostatequivalence}, its probability
distribution is uniform: $\rho_p(0) = 1/2, \,\, \rho_p(1) = 1/2$.

We can apply the same arguments for the value 1 of  $q$ and consider the case in which $p$ is
determined and $q$ is undetermined. We have four cases:
\begin{equation}\label{eq:knownpqvals}
\begin{array} {ll}
 \rho_q = (1, 0),     & \,  \rho_p = (1/2, 1/2) \\
 \rho_q = (0, 1),     & \,  \rho_p = (1/2, 1/2)  \\
 \rho_q = (1/2, 1/2), & \,  \rho_p = (1, 0)     \\
 \rho_q = (1/2, 1/2), & \,  \rho_p = (0, 1). 
\end{array}
\end{equation}
\noindent From this point on, we refer to these special sets of probability distributions as {\it{cardinal points}}. 
We can define the parametrizations $\theta_{p}$ and $\theta_{p}$ for the distributions of $p$ and
$q$ through the equations
\begin{equation} \label{eq:rhovstheta}
\begin{array} {llc}
\rho_{\nu}(0; \theta_{\nu}) & = & \cos^2 \frac {\theta_{\nu}}{2}  \\
\rho_{\nu}(1; \theta_{\nu}) & = & \sin^2 \frac {\theta_{\nu}}{2}  
\end{array}
\end{equation}
\noindent where $\nu = q, p$. The above expressions fulfill the normalization conditions and make
the parametrizations  full and identifiable because they cover all the possible values of the
distributions $\rho$.
Furthermore, for $\theta_{\nu} = 0,\pi$ we obtain the cardinal points \eqref{eq:knownpqvals}.

The measurements of each observable of the system behave as a Bernoulli trial process whose
probabilities  are $\rho_{\nu}(0)$ and $\rho_{\nu}(1)$, where $\nu = p,q$.
The variance of the estimator of $\theta$ for a repeated trial scheme
(namely, an ensemble of $N$ replicas of our elementary system) is expected to be
\footnote
{
The identity \eqref{eq:vartheta} can be demonstrated by treating, for instance,
$\rho_{\nu}(0; \theta_{\nu})$ as an ordinary statistical parameter.
By simplifying the notation with $\rho = \rho_{\nu}(0; \theta_{\nu})$, it can be shown that
the maximum likelihood estimator of $\rho$ for a Bernoulli process is $\hat{\rho} = N_{\xi=0}/N$ 
(where $N_{\xi=0}$ is the number of measurements whose result is 0). 
Its variance is, up to a factor of $1/N^2$, the variance of $ N_{\xi=0}$ itself, which is simply
the variance of the binomial distribution, $N \rho (1-\rho)$. 
Therefore, we have $\var \hat{\rho}  = 1/N \rho (1-\rho)$. 
\\
If we apply definition \eqref{eq:rhovstheta}, this can be expressed in terms of $\theta$:
$ \var \hat{\rho} = \frac {1}{N}  \cos^2 \frac {\theta}{2}  \sin^2 \frac {\theta}{2}$
(the subscript $\nu$ has been omitted).
By substituting this variance and the derivative 
$\frac{\partial \rho}{\partial \theta} = \cos \frac {\theta}{2}  \sin \frac {\theta}{2}$
into the variance propagation formula 
$\var (\rho) = \left| \frac{\partial \rho}{\partial \theta} \right|^{2} \var(\theta)$, 
we can obtain the variance of the estimator of $\theta$ given by equation
\eqref{eq:vartheta}.
}:
\begin{equation} \label{eq:vartheta}
\var \hat{\theta}_{\nu} = \frac{1}{N}, \,\, \nu = p,q.
\end{equation}
\noindent Both $\theta_p$ and $\theta_q$ depend on the state $x$ through the functions 
$\Theta_p(x)$ and $\Theta_q(x)$ introduced by \eqref{eq:xtotheta}.
Since, in our system, the maximum number of locally independent observables is 1, the state space
is expected to be one-dimensional.
If we consider only {\it{local}} independence of the observables, the functions 
$\Theta_p(x)$ and $\Theta_q(x)$ are expected to be invertible only within the neighborhoods of
state space points.

In order to get the form of the functions $\Theta_{\nu}(x)$ and understand the structure of the
state space, we can establish more conditions on the state estimation obtained from the
measurements. 
This can be done with the help of the Precision Invariance axiom. Let us build a
``thought experiment'' for the two-observable system: 
 we perform a first set of measurements on $p$ and $q$, which allows us to approximately 
estimate their values. If we knew the form of the functions $\Theta_{\nu}(x)$, we could derive the
estimated value of the state $x$ and its estimation error $\Delta x$. We will assume that our
sample is large enough to make $\Delta x$ so small that $\Theta_p(x)$ and $\Theta_q(x)$ are
invertible in the range $\Delta x$, as required by the definition of local independence.

After this, we complete our thought experiment by performing a second set of measurements, and
obtain a further estimate of the state.
If we assume that the functions $\Theta_{\nu}(x)$ are well-behaved and differentiable, 
the experimental uncertainty of this estimate is given by the standard rule
for the propagation of experimental errors:
\begin{equation} \label{eq:stderrprop}
\var(x)\big|_{ \nu}  = \left| \frac{\partial \Theta_{\nu}(x)}{\partial x} \right|^{-2} 
      \var (\theta_{\nu}), \,\, \nu = p,q,
\end{equation}
\noindent where $\var(x)\big|_{ \nu}$ is the variance of the state obtained by measuring $p$ or
$q$. According to the axiom of Precision Invariance, the contribution to the precision 
of the estimation of $x$, given by a single measurement, must be the same for $p$ and $q$.
From expression \eqref{eq:vartheta}
we know that the estimation precision of $\theta_{\nu}$ for  the Bernoulli trial process is
$1/\var \hat{\theta}_{\nu} = N$, which implies that the contribution of a single measurement is 1.
Suppose that we can express this condition in the form 
$\precision_{\mathit{sing.}} (\hat{\theta}_{\nu}) = 1$.

Since the precision is the inverse of the variance, we can apply  \eqref{eq:stderrprop} to this
condition, and obtain the contribution to the precision \textit{on the state} given by a single
measurement on $p$ or $q$:  ${\precision_{\mathit{sing.}}} (x)\big|_{\nu} =
\left| \frac{\partial \Theta_{\nu}(x)}{\partial x} \right|^{2}\,(\nu = p,q) $.

With the Precision Invariance axiom, we assert that the measurement of different
observables contributes in the same way to the estimation
precision of the state. This can be expressed in terms of the notation we have just introduced:
${\precision_{\mathit{sing.}}}(x)\big|_{p} = {\precision_{\mathit{sing.}}}(x)\big|_{q}$.
Therefore, we can write the equation:
\begin{equation} \label{eq:errpropcomp}
\left| \frac{\partial \Theta_{p}(x)}{\partial x} \right| =  
\left| \frac{\partial \Theta_{q}(x)}{\partial x} \right|, 
\end{equation}
\noindent which leads to the condition:
\begin{equation} \label{eq:thetapvsthetaq}
 \theta_p =  \pm \theta_q + {\textnormal Const.}
 \end{equation}
\noindent With the parametrization \eqref{eq:rhovstheta},
this solution is consistent with the cardinal points  \eqref{eq:knownpqvals} only if:
\begin{equation}\label{eq:solutionfor2d}
\theta_p = \pi/2 - \theta_q.
\end{equation}
It should be noted that we have chosen the parametrization \eqref{eq:rhovstheta} 
just because it leads to a simpler expression of the metric tensor $g$. 
It can be shown that the mutual dependence of $\rho_p$ and $\rho_q$, which defines the 
structure of the state space and represents our final result,
does not depend on what parametrization we have chosen. 

Since $\rho_p$ and $\rho_q$ are the probabilities of two full and identifiable observables,
they can be used to represent the system's state. We can also use a different system state
representation  through the variables
\begin{equation}\label{eq:spqdef}
\begin{array} {ll}
S_q =  \rho_q(0) - \rho_q(1) = \cos \theta_q \\
S_p =  \rho_p(0) - \rho_p(1) =  \cos \theta_p.
\end{array}
\end{equation}
\noindent Since $\theta_p = \pi/2 - \theta_q$, we can also write $ S_p = \sin \theta_q $.
Thus, in the  $S_q$, $S_p$ vector space, the system state manifold is represented by a circle:
\begin{equation} \nonumber
S_q^2 + S_p^2 = 1.
\end{equation}
\noindent In fact, the system we have just described is not novel. It is easy to recognize in it a
standard quantum-mechanical two-state system with real coefficients,
usually referred to as \textit{rebit} or ``real qubit''~\cite{Qbit1} because it carries one bit of
information.
The amplitudes associated with the eigenstates of $p$ and $q$ are $\pm\sqrt{ \rho_{\nu}(0)}$ and
$\pm\sqrt{ \rho_{\nu}(1)}$, with $\nu=p,q$.  Such amplitudes are confined on a unit circle.

In conclusion, we have shown how to apply the Precision Invariance axiom in a simple case,
in which we directly compare the contribution to the 
estimation of the state coming from two observables. This allowed us to derive how the observables' distributions depend on the state. 

If we go to higher-dimensional state spaces, the sets of locally independent
observables are not single-element sets. Consequently, when we apply the Precision Invariance
axiom, we must compare the measurement precision on {\it sets} of observables. 
This implies that we cannot obtain equations in the simple form
of \eqref{eq:errpropcomp} and \eqref{eq:stderrprop}.
In the following subsections, we introduce a formalism suitable for expressing the
Precision Invariance axiom, even in the general multi-dimensional case.

\subsection{The connection with the Fisher information}
\label{sec:formalepiaxiom}

In order to express the precision invariance axiom in a more general form,
which can be applied to other cases than the one illustrated in the previous subsection,
we should establish how the ``contribution in terms of precision'',
referred to in that axiom, can be measured. The Cram\'er-Rao bound, in the form 
\begin{equation} 
    \frac{1}{\var \hat{\theta}} \le \mathcal{I}(\theta),
\end{equation}
\noindent indicates an upper limit for 
the precision $\frac{1}{\var \hat{\theta}}$ on the estimation of a statistical parameter $\theta$
obtained from a set of measurements. 
This upper limit is represented by the Fisher information $\mathcal{I}(\theta)$ of the set of 
measurements with respect to $\theta$, and corresponds to the case where the most
efficient (maximum likelihood) estimator for that statistical parameter is being used.

The state of a system, by definition, determines the values of all its statistical parameters.
Vice versa, a set of statistical parameters, in general, does not determine the state, unless we
are dealing with a set of locally independent observables, which allows (locally) to define a map
from the parameter space to the state space. 

In short, our idea is to apply the Cram\'er-Rao bound to determine how much 
a measurement improves the estimation precision of a set of parameters.
Furthermore, since our goal is to estimate the state with the help of a 
mapping from the parameter space to the state space, we must understand how the estimation precision
for the parameters impacts the estimation precision of the state.
Of course, since we are dealing with more than one statistical parameter and many observables,
we are interested in the multivariate version of the Cram\'er-Rao bound, which involves the
covariance matrix and the Fisher information matrix.

Now suppose that we have a set $\mathcal{L}$ of $d$ observables whose parameters are locally
independent in the neighborhood of a point $\mathbf{x}_0$ of the state space. 
Furthermore, suppose that the state is fully estimable through a larger set of observables
$\mathcal{C}$, and that an extremely large set of measurements on $\mathcal{C}$ can  provide
a first estimate $\mathbf{x}_0$ on the state. In this context, the 
Precision Invariance axiom asserts that the improvement (in terms of precision)
around the point $\mathbf{x}_0$, on the estimation of the state, which we obtain from further
measurements, is the same for all the observables of the collection  $\mathcal{C}$. 

Let us consider a continuous and differentiable map $\boldsymbol{\Theta}(x)$ from the state space to the parameter space of the set
of observables  $\mathcal{L}$.
According to the definition of local independence, this map is locally invertible. This allows us
to define some ``directions'' on the state space in the following way: we vary only one component of
the parameter vector $\boldsymbol{\theta}$ and keep the other components constant.
This identifies a curve in the state space $\Phi_{\mathbf{x}}$ and a direction given by the vector
tangent to this curve. We can express this in a more formal way: let
$\mathbf{X}(\boldsymbol{\theta})$ be the inverse of $\boldsymbol{\Theta}(\mathbf{x})$ and
$X^i(\boldsymbol{\theta})$ the $i$-th component of a local system of coordinates in
$\Phi_{\mathbf{x}}$;
the tangent vector is $\frac{\partial X^i(\boldsymbol{\theta})}{\partial \theta^j}$,
where $i$ indicates the vector component and $j$ indicates the parameter component with which it is
associated.
Note that the property of being tangent is reflexive---namely, the above vectors are tangent to the 
trajectories traced by the parameters  $\boldsymbol{\theta}$ in terms of the coordinates
$\mathbf{x}$, but, later we will identify $\frac{\partial \Theta^i(\mathbf{x})}{\partial x^j}$
as tangent (to the manifold $\Phi_{\mathbf{x}}$) vectors in terms of the coordinates
$\boldsymbol{\theta}$.

A measurement of a given observable $o$ is expected to improve the precision 
of the state estimation only along the direction that corresponds to the parameter $\boldsymbol{\theta}_o$ on
the space $\Phi_{\mathbf{x}}$. 
Thus, if we want to apply the Precision Invariance, we must compare the precision in different
directions on the manifold $\Phi_{\mathbf{x}}$.

\subsection{Constrained maximum likelihood estimation and \allowbreak\hspace{0pt} Fisher information} \label{sec:constrmle}

\mysubsubsection{The log-Likelihood function}
For large samples of data, the most efficient unbiased estimators, which make the
Cram\'er-Rao inequality be fulfilled on its limiting value, are the maximum likelihood estimators.
In our case, if the measurements of observables are statistically independent, the likelihood
function of the whole set of measurements can be factorized into the product of the likelihood
functions of each measurement. Thus, by following the notation introduced in the previous section,
if the state is estimable through a set $\mathcal{C}$  of $D$ observables, the log-likelihood
function can be expanded into the sum
\begin{equation}  \label{eq:mle}
\sum_{o \in \mathcal{C}} \sum_{i = 1}^{M_o}  \ell (\boldsymbol{\theta}^o, \zeta^o_i),
\end{equation}
\noindent where $M_o$ indicates how many measurements are performed on $o$, and
$\ell (\boldsymbol{\theta}^o, \zeta^o_i)$ is the log-likelihood of the single measurement
$\zeta^o_i$ on the observable $o$, which is defined as the
logarithm of the probability of the measurement outcome $\zeta^o_i$ for a given parameter value
$\boldsymbol{\theta}^o(\zeta^o_i)$:  $\ell (\boldsymbol{\theta}^o, \zeta^o_i) =
\log \rho (\zeta^o_i, \boldsymbol{\theta}^o) $.
Note that, in  \eqref{eq:mle}, the sum over the whole set of measurements has been divided into a
sum over the observables and a sum over the measurements of each observable.

Let us denote by $\Psi_{\boldsymbol{\theta}}$ the space of the parameters of the observables of
$\mathcal{C}$. 
$\Psi_{\boldsymbol{\theta}}$ is $N_{\theta} D$-dimensional because there are
$N_{\theta}$ parameters for each observable.
Now consider that, by hypothesis, we can define the map $\boldsymbol{\Theta}(x)$ from the state
space to the space $\Psi_{\boldsymbol{\theta}}$, and it is reasonable to suppose this map to be
differentiable. Thus, if we have at most $d$ out of $D$ locally independent observables,
we can view the map $\boldsymbol{\Theta}(x)$ as an immersion of the $N_{\theta} d$-dimensional state
space $\Phi_{\mathbf{x}}$ into the $N_{\theta} D$-dimensional space $\Psi_{\boldsymbol{\theta}}$. 

It must be pointed out that the differentiability of the map $\boldsymbol{\Theta}(x)$ (and of its inverse) is the underlying assumption behind all the derivations presented in this subsection.

\mysubsubsection{The State Space constraint}
Suppose we can express the property that a point of the parameter space
belongs to the set of physically valid points, or, in other words, that it corresponds to a point 
of the state space, through a condition of the form 
\begin{equation} \label{eq:phicond}
\varphi_{a, k}(\boldsymbol{\theta}^{o_1}, ... \boldsymbol{\theta}^{o_D}) = 0.
\end{equation}
\noindent Actually, we must have $N_{\theta}(D-d)$ conditions like this, because they reduce the
space from $N_{\theta} D$ to $N d$ dimensions. This is why
$\varphi$ depends on two indices, whose ranges are $1...D-d$ and $1...N_{\theta}$.

We can introduce this constraint in the derivation of the maximum likelihood estimator by adding,
to the expression  \eqref{eq:mle}, a term of the form
$\sum_k \lambda_{a, k} \varphi_{a, k}(\boldsymbol{\theta}^{o_1}, ... \boldsymbol{\theta}^{o_D})$, 
where $\lambda_{a, k}$ represents a Lagrange multiplier. Thus, the constrained log-likelihood reads:
\begin{equation}  \label{eq:mlec}
\sum_o \sum_{i = 1}^{M_o}  \ell (\boldsymbol{\theta}^o, \zeta^o_i) + 
\sum_{a, k} \lambda_{a, k} \varphi_{a, k}(\boldsymbol{\theta}^{o_1}, ... \boldsymbol{\theta}^{o_D}).
\end{equation}
\noindent In its standard definition, the Fisher information can be expressed in terms of the
log-likelihood function. Thus, the above discussion provides the ground for the introduction of a
Fisher information-based approach to the problem, which takes into account the state space constraint.

\mysubsubsection{Extension of the state space}
Suppose that we can extend the $N_{\theta} d$-dimensional space $\Phi_{\mathbf{x}}$ into an
$N_{\theta} D$-dimensional one, by adding $N_{\theta} (D-d))$ arbitrary components.
In other words, we extend $\Phi_{\mathbf{x}}$ to match the dimensionality of the space in which it is immersed.
We indicate by $\mathbf{x}^{\perp}$ these additional components and  by
$\overline{\mathbf{x}} = (\mathbf{x}, \mathbf{x}^{\perp})$ a vector of the 
space generated by this extension. There are just two conditions that we require for this extended
space:
(i) it admits  an invertible  map onto the $N_{\theta} D$-dimensional
parameter space $\Psi_{\boldsymbol{\theta}}$, and (ii) the manifold  $\Phi_{\mathbf{x}}$
must be identified by a constant value of the extension component, say  
$\mathbf{x}^{\perp}_{\Phi}$.
We denote the map required by condition (i) as $\overline{\boldsymbol{\Theta}}(\overline{\mathbf{x}})$, or equivalently as $\overline{\boldsymbol{\Theta}}(\mathbf{x}, \mathbf{x}^{\perp}) $, and its inverse as $\overline{\mathbf{X}}(\boldsymbol{\theta})$. 
Requirement (ii) can be expressed in terms of the mapping
$\Phi_{\mathbf{x}} \rightarrow \Psi_{\boldsymbol{\theta}}$ we just introduced:
$\overline{\boldsymbol{\Theta}}(\mathbf{x}, \mathbf{x}^{\perp}_{\Phi}) =
\boldsymbol{\Theta}(\mathbf{x})$. 

It can be shown that, if we consider only points within a region close to the manifold $\Phi_{\mathbf{x}}$ and can
identify the constraint functions $\varphi_{a, k}(\boldsymbol{\theta})$, then the inverse mapping $\overline{\mathbf{X}}$ can be derived;
note that we have $N_{\theta} (D-d)$ functions $\varphi_{a, k}(\boldsymbol{\theta})$.
We can define the following $\Phi_{\mathbf{x}} \rightarrow \Psi_{\boldsymbol{\theta}}$ mapping,
which involves the extension part in the neighborhood of a point $\mathbf{x}_0$:
\begin{equation} \label{eq:xtndxtothetamap}
    \theta^{o\, j} = \Theta^{o\, j}(\mathbf{x})  +    
    \frac{\partial \varphi_{a, k}}{\partial \theta^{o\, j}}\bigg|_{\mathbf{x}_0}
    ({x^{\perp}}^{a, k} - {x^{\perp}_{\Phi}}^{a, k}).
\end{equation}
\noindent This definition is consistent with requirement (ii) because, for
${x^{\perp}}^{a, k} = {x^{\perp}_{\Phi}}^{a, k}$, we get
$\theta^{o\, j} = \Theta^{o\, j}(\mathbf{x})$, which are the values of the parameters corresponding
to state space points.  
By assuming functions $\Theta^{o\, j}(\mathbf{x})$ to be differentiable, we can replace the first term of the RHS of Eq.~\eqref{eq:xtndxtothetamap} with the power series expansion 
in the neighborhood of  $\mathbf{x}_0$. Thus, we obtain 
\begin{multline} \label{eq:xtndxtothetamap2}
    \theta^{o\, j} =
    \frac{\partial \Theta^{o\, j}}{\partial x^i}\bigg|_{\mathbf{x}_0} (x^i - x^i_0) + 
    \Theta^{o\, j}(\mathbf{x}_0) 
     \twocolbreak +
    \frac{\partial \varphi_{a, k}}{\partial \theta^{o\, j}}\bigg|_{\mathbf{x}_0}
    ({x^{\perp}}^{a, k} - {x^{\perp}_{\Phi}}^{a, k}),
\end{multline}
\noindent where, in the RHS, we can recognize an extended state-space vector
$\Delta \overline{\mathbf{x}} = (\Delta \mathbf{x}, \Delta \mathbf{x}^{\perp})$,
whose components are $\Delta x^i =  x^i - x^i_0$ and
$\Delta {x^{\perp}}^{a, k} = {x^{\perp}}^{a, k} - {x^{\perp}_{\Phi}}^{a, k}$. 
We can also define an $N_{\theta} D \times N_{\theta} D$ matrix 
$\overline{\boldsymbol{\Theta}}$, which operates on the space of $\Delta \overline{\mathbf{x}}$,
whose first $N_{\theta} d$ columns are
$\frac{\partial \Theta^{o\, j}}{\partial x^i}\big|_{\mathbf{x}_0}$ (being $o, j$ the row indices),
and $\frac{\partial \varphi_{a, k}}{\partial \theta^{o\, j}}\big|_{\mathbf{x}_0}$
are the remaining $N_{\theta} (D-d)$ columns.
With this notation, equation \eqref{eq:xtndxtothetamap2} can be synthesized as 
\begin{equation} \label{eq:xtndxtothetamapsynt}
\boldsymbol{\theta} = 
\overline{\boldsymbol{\Theta}}\cdot\Delta \overline{\mathbf{x}} + 
\boldsymbol{\Theta}(\mathbf{x}_0). 
\end{equation}
\noindent It can be shown that the matrix $\overline{\boldsymbol{\Theta}}$ is non-singular:
its first $N_{\theta} d$ columns $\frac{\partial \Theta^{o\, j}}{\partial x^i}\big|_{\mathbf{x}_0}$ 
are clearly $\Psi_{\boldsymbol{\theta}}$-space vectors  tangent to the state manifold
$\Phi_{\mathbf{x}}$, while the remaining columns 
$\frac{\partial \varphi_{a, k}}{\partial \theta^{o\, j}}\big|_{\mathbf{x}_0}$
are the gradient of the constraint that defines the $\Psi_{\boldsymbol{\theta}}$
space, which is supposed to be normal to the surface $\Phi_{\mathbf{x}}$.
This implies that these two sets of vectors are linearly independent and that the matrix
$\overline{\boldsymbol{\Theta}}$ is non-singular.
Thus we can define its inverse and rewrite equation \eqref{eq:xtndxtothetamapsynt} as
\begin{equation} \label{eq:thetatoxtndxmap}
\Delta \overline{\mathbf{x}} =
{\overline{\boldsymbol{\Theta}}}^{-1} \cdot
(\boldsymbol{\theta} -
\boldsymbol{\Theta}(\mathbf{x}_0)). 
\end{equation}
\noindent In other words, we have obtained, for points sufficiently close to the manifold 
$\Phi_{\mathbf{x}}$, the inverse of the
mapping $\overline{\boldsymbol{\Theta}}(\overline{\mathbf{x}})$, as required by condition (i).  

This new representation can be used to handle the constrained problem \eqref{eq:mlec}.
By relying on the existence of the map $\overline{\boldsymbol{\Theta}}(\overline{\mathbf{x}})$,
we can think of determining a maximum of  
the likelihood \eqref{eq:mlec} in terms of the extended coordinates $\overline{\mathbf{x}}$.
We just need to calculate the partial derivatives with respect to $\overline{\mathbf{x}}$
and find the values of $\overline{\mathbf{X}}$ that make such derivatives equal to zero. 
The partial derivative with respect to $\overline{\mathbf{x}}$ can be put in the form 
$\frac{\partial }{\partial \overline{x}^j } = 
\sum_i \frac{\partial  \overline{\Theta}^{o \, i}}{\partial \overline{x}^j} 
\frac{\partial}{\partial {\theta}^{o \, i}}$.

This discussion leads us to the conclusion that, if we assume that the state space constraint
can be expressed in the form \eqref{eq:mlec}, a maximum likelihood  estimation scheme can be used 
for a model defined in terms of sets of locally independent observables.

\mysubsubsection{The Fisher information for the constrained problem}
The great benefit of expressing the maximum likelihood condition in terms of the coordinates introduced above
is that we know for sure that, for a solution that satisfies the constraint \eqref{eq:phicond},  the 
value of the coordinate extension is $\mathbf{x}^{\perp} = \mathbf{x}^{\perp}_{\Phi}$.
Furthermore, if we fix $\mathbf{x}^{\perp} = \mathbf{x}^{\perp}_{\Phi}$, we always have
$\phi_{a,k}(\boldsymbol{\Theta}(\mathbf{x})) = 0$, and the maximum likelihood condition for the
state space  component $\mathbf{x}$ of the coordinates $\overline{\mathbf{x}}$ reads
\begin{equation}  \label{eq:mlxcond}
 \frac{\partial  \Theta^{o \, i}}{\partial x^j} 
\frac{\partial}{\partial {\theta}^{o \, i}}
\sum_o \sum_{m = 1}^{M_o}  \ell (\boldsymbol{\theta}^o, \zeta^o_m) = 0, 
\end{equation}
\noindent where we have dropped the overline from $\overline{\Theta}$ and the summation in $i$ is
implied. The solutions of the above equation are the maximum likelihood estimators for the state
$\mathbf{x}$, and the variance of these estimators 
is related to the Fisher information containing the same partial derivatives occurring in
\eqref{eq:mlxcond}, namely, by expressing the log-likelihood function in terms of probabilities:
\begin{multline} \label{eq:xfisher}
         \mathcal{I}_{i j} (\mathbf{x}) = - \sum_o
         \textnormal{E} \left(
      \frac{\partial^2}{\partial x^i \partial x^j} 
      \log \rho(\zeta^{(o)};\boldsymbol{\Theta}_{o}(\mathbf{x}) ) 
      \bigg|\mathbf{x}
      \right)\\ = 
   - \sum_o \frac{\partial \Theta_o^{k}}{\partial x^i} 
      \textnormal{E} \left(
      \frac{\partial^2}{\partial \theta_o^k \partial \theta_o^l} 
      \log \rho(\zeta^{(o)};\boldsymbol{\theta}_o  )
      \bigg|\boldsymbol{\theta}_o
    \right)
    \frac{\partial \Theta_o^{l}}{\partial x^j} \twocolbreak  = 
    \sum_o \frac{\partial \Theta_o^{k}}{\partial x^i} 
    \mathcal{I}_{kl}^{(o)} (\boldsymbol{\theta}_o)
    \frac{\partial \Theta_o^{l}}{\partial x^j},
\end{multline}
\noindent where
 \begin{equation} \label{eq:thetafisheri}
         \mathcal{I}_{kl}^{(o)} (\boldsymbol{\theta}_o) = 
     - \textnormal{E} \left(
      \frac{\partial^2}{\partial \theta^k_{o} \partial \theta^l_{o}} 
      \log \rho(\zeta^{(o)};\boldsymbol{\theta}_o)
      \bigg|\boldsymbol{\theta}_o
    \right)
\end{equation}
\noindent
is the Fisher information for a single measurement of an observable $o$
with statistical estimators  $\hat{\theta}^k_{o}$, and  $\zeta^{(o)}$ represents the random variable
for which we compute the expectation value. Since the range of the index $k$ is $1 ... N_{\theta}$, 
the above equation defines an $N_{\theta} \times N_{\theta}$ matrix. 
The Fisher information matrix defined in \eqref{eq:xfisher} is related to the estimation precision
of the state, obtained from the measurement of all the observables of $\mathcal{C}$, but we can
insulate the contribution to that 
estimation given by a single observable, which is given by
\begin{equation}\label{eq:fisheriterm}
         \mathcal{I}_{i j}^{(o)} (\mathbf{x})  = 
    \frac{\partial \Theta_o^{k}}{\partial x^i} 
    \mathcal{I}_{kl}^{(o)} (\boldsymbol{\theta}_o)
    \frac{\partial \Theta_o^{l}}{\partial x^j}.
\end{equation} 
\noindent This quantity is what we expect to be involved by the Precision Invariance axiom, 
but refers to a single measurement on an observable $o$, which is not sufficient to provide an
estimate of the state.
We have already encountered this kind of limitation in \ref{sec:twoobsbasic},
when we discussed the elementary two-observable system.
In that case, we built a thought experiment with a large number of measurements. Now we can 
follow a similar idea and think of the state estimation as a process that
works by further refinements, as a result of a large number of measurements:
once we have obtained a first estimate of the state, say $\mathbf{x}_0 $, we make
further measurements to improve this estimate, 
which is expected to be in a neighborhood of $\mathbf{x}_0$. Thus we have to move the estimated
point on the manifold $\Phi_{\mathbf{x}}$ to reach this new value. We can use a system of local
coordinates on $\Phi_{\mathbf{x}}$, centered in $\mathbf{x}_0 $
to represent that refinement. By keeping in mind the Cram\'er-Rao inequality,
we can state the following: 

\noindent{\textbf{In case of (maximum likelihood) state estimations performed with a large set of
measurements on some observables, the Fisher information defined in  \eqref{eq:fisheriterm}
is the contribution, to the estimation precision, of a single measurement on an observable $o$. }\labelText{\ddag}{plhold:fishifestconnection}

It should be noted that, although $\mathcal{I}_{i j}^{(o)} (\mathbf{x})$ is an
$N_{\theta} d \times N_{\theta} d$ matrix, its actual rank is 
$N_{\theta} \times N_{\theta}$ because, according to expression \eqref{eq:fisheriterm}, 
$\mathcal{I}_{i j}^{(o)} (\mathbf{x})$ is the result of a product 
of three matrices whose dimensions are
$N_{\theta} d \times N_{\theta}$, $N_{\theta} \times N_{\theta}$ and 
$N_{\theta} \times N_{\theta} d$, respectively.
This implies that there is at most an $N_{\theta}$-dimensional subspace where the matrix
$\mathcal{I}_{i j}^{(o)} (\mathbf{x})$ is non-singular.
Actually, this subspace is the one generated by the vectors
$\frac{\partial \Theta_o^{k}}{\partial x^i}$ occurring in expression \eqref{eq:fisheriterm},
and it is expected to be different for each observable $o \in \mathcal{C}$.

\subsection{The Precision Invariance axiom in terms of Fisher information} 
\mysubsubsection{Sufficient conditions on the state space structure vs.  Precision Invariance axiom}
As stated before, our purpose is to derive the structure of the state space $\Phi_{\mathbf{x}}$,
which, in our framework, can be described by functions $\boldsymbol{\Theta}(\mathbf{x})$.

The Precision Invariance axiom can provide some constraints on the form of the functions
$\boldsymbol{\Theta}(\mathbf{x})$.
This leads us to the question of whether we have enough conditions to derive the functions
$\boldsymbol{\Theta}(\mathbf{x})$.

As anticipated in the previous section, if we examine this problem in a local 
context, i.e., in the neighborhood of a point of the state space, the functions
$\boldsymbol{\Theta}(\mathbf{x})$ can be approximated by linear functions plus a constant term.
These linear functions are related to the matrix
$\frac{\partial \Theta_o^{k}}{\partial x^i}$, which has $N_{\theta} d \times N_{\theta}$ elements (see the discussion that follows equation
\eqref{eq:fisheriterm}). If we multiply this by the number of observables of the set $\mathcal{C}$,
we obtain that, in order to identify the functions
$\boldsymbol{\Theta}(\mathbf{x})$, the number of degrees of freedom to be determined is $N_{\theta}^2 D d$.

On the other hand, when we try to identify the structure of the state space,
our approach is to establish a relation among different parametrizations of the observables
of $\mathcal{C}$; the state coordinate, which we denote by $x$, is just a support.
In other words, our purpose is to identify the set of all physically valid parameters of the
observables of $\mathcal{C}$, up to a remapping of the state space, i.e., a transformation of the
form $x \rightarrow x'$. A linear approximation of this mapping
is represented by an $N_{\theta} d \times N_{\theta} d$ matrix.
If we want to neglect the degrees of freedom related to this arbitrary remapping of (the linear approximation of) functions 
$\boldsymbol{\Theta}(\mathbf{x})$, we must decrease the effective
number of degrees of freedom to be determined by $N_{\theta}^2 d^2$.
This leads to an amount of $N_{\theta}^2 (D-d) d$ degrees of freedom.

If we consider the contribution of a single observable to the estimation of the state,
the knowledge of its statistical parameters fixes $N_{\theta}$ degrees of freedom.
This also applies if we act in a local context (in the neighborhood of a state space point).
When we apply the Precision Invariance axiom to a set of $D$ observables, we can
require the contribution to the state estimation precision to be equal for all the possible
pairs of observables, which are $D(D-1)/2$. Since the ``precision'' refers to the inverse 
of the variance, which in the multivariate ($N_{\theta} > 1$) case is an
$N_{\theta} \times N_{\theta}$ matrix, we can conclude that the Precision Invariance axiom can
provide at most $N_{\theta}^2 D(D-1)/2$ constraints, which, for specific values of $d$ and $D$,
may be insufficient to fix $N_{\theta}^2 (D-d) d$ degrees
of freedom of the family of functions $\boldsymbol{\Theta}_{o}(\mathbf{x})$. 

When this problem arises, we can nevertheless apply strategies that decrease 
the degrees of freedom of functions $\boldsymbol{\Theta}_{o}(\mathbf{x})$.
See, for instance, the system described in $\eqref{sec:threeobs}$, where we will 
consider the Precision Invariance axiom in a subspace of the state space.

\mysubsubsection{A consistent formulation of the Precision Invariance axiom}
As anticipated in section \ref{sec:consistencyissues}, by arbitrarily stretching or contracting the
state space, we can change how much each observable contributes to the system
state's estimation precision.

These deformations of the state space are represented by a change
of the functions $\boldsymbol{\Theta}_{o}(\mathbf{x})$. 
On the other hand, as we will see in the sections that follow, there can be some exchange symmetries
between the observables; this implies that we cannot arbitrarily reshape the state space in any
direction, because the functions
$\boldsymbol{\Theta}_{o}(\mathbf{x})$ must be defined in a way that does not invalidate such
symmetries.
A way to formulate the Precision Invariance axiom independently from deformations of the state space
is to assume that the spaces of the parameters have equal size.

If we assume that the above requirements are satisfied and recall the connection \nameref{plhold:fishifestconnection}
between the Fisher information and the contribution to the state estimation precision of a single
measurement on an observable, we can express the Precision Invariance axiom in the following way:

\noindent {\textbf{The components  of the Fisher information \eqref{eq:fisheriterm} 
on the subspaces of each $o \in \mathcal{C}$, take the same value for any choice of $o$.}}

\noindent Here, the term {\textit{components}} refers to the representation, for different choices of the observable $o$,
of the matrix  $\mathcal{I}_{i j}^{(o)} (\mathbf{x})$
in the subspace generated by the vectors $\frac{\partial \Theta_o^{k}}{\partial x^i}$.
The choice of the state representation, on which the partial derivatives
$\frac{\partial \Theta_o^{k}}{\partial x^i}$ depend, must obey all the observable exchange
symmetries required by the physical system. In the following paragraphs, we will provide a formal
representation of the above statement.
Actually, we first need to: (i) Introduce a formalism to handle the above-mentioned components and
state how to represent the Fisher information on such components; 
(ii) Apply the Cram\'er-Rao bound correctly and insulate  
the contribution to the state estimation precision given by a single measurement.  

\mysubsubsection{Cotangent and tangent spaces}
Equation \eqref{eq:fisheriterm} contains matrix terms of the form
$\frac{\partial \Theta_o^{k}}{\partial x^i}$, which actually are the first-order coefficients of the
power expansion of the functions $\boldsymbol{\Theta}_{o}(\mathbf{x})$.
If we study these functions in a local domain, i.e., in a small neighborhood of a state space point,
we can approximate them by linear maps with coefficients
$\frac{\partial \Theta_o^{k}}{\partial x^i}$.
Such a coefficient can be regarded as an element of a set of
$N_{\theta} d$-dimensional vectors, where $i$ denotes the component index of a vector and $k$ identifies the vector within the set.
We identify such vectors as ${\mathbf{h}^*}_{\boldsymbol{\theta}_{o}}^k$, where
${{h}_{\boldsymbol{\theta}_{o}}^k}_i = \frac{\partial \Theta_o^{k}}{\partial x^i}$ is their
component, and define them  as {\textit{cotangent vectors}} at the point $\mathbf{x}$ of
$\Phi_{\mathbf{x}}$.
We indicate the space they belong to as the cotangent space $T^*_x$, but in 
${{h}_{\boldsymbol{\theta}_{o}}^k}_i$ we drop the asterisk ``*'' and put the index $i$ in lower
position; this agrees with the standard convention adopted for cotangent vectors (covariant components).
The normalized version of the vectors ${\mathbf{h}^*}_{\boldsymbol{\theta}_{o}}^k$ is 
${\mathbf{b}^*}_{\boldsymbol{\theta}_{o}}^k =
{\iota^*}_{\boldsymbol{\theta}_{o}}^k {\mathbf{h}^*}_{\boldsymbol{\theta}_{o}}^k$, where
$\left( {\iota^*}_{\boldsymbol{\theta}_{o}}^k \right)^{-2} =
| {{\mathbf{h}^*}_{\boldsymbol{\theta}_{o}}}^k |^2$.

If we consider a set of $d$ locally independent observables, we can assert that the vectors
${\mathbf{h}^*}_{\boldsymbol{\theta}_{o}}^k$ form a basis because we have $N_{\theta} d$ vectors
and the dimensionality of $\Phi_{\mathbf{x}}$ is  $N_{\theta} d$ by definition.
Note that this basis is not necessarily orthogonal. Furthermore, under the hypothesis of local 
independence, we can define the inverse $\mathbf{X}(\boldsymbol{\theta})$ of the mapping
$\boldsymbol{\Theta}_{o}(\mathbf{x})$. 
We can express this definition in differential form: 
\begin{equation}\label{eq:tspacecompletenes}
   \frac{\partial X^i}{\partial \theta_{o}{}^{k}} 
\frac{\partial \Theta_{o'}^{l}}{\partial x^i} = \delta^{o\, o'} \delta^{k l};\,\,\, o,o' \in \mathcal{C}.
\end{equation}
\noindent  The space generated by the vectors
${h_{\boldsymbol{\theta}_{o}}^k}{}^{i} = \frac{\partial X^i}{\partial \theta_0^{k}}$ is the 
{\textit{dual}} of the space $T^*_x$ (now the index $i$ is in upper position). Thus, we define it as the tangent space 
$T_x$ of $\Phi_{\mathbf{x}}$ at $x$; its elements are tangent  (or contravariant) vectors.
In formula \eqref{eq:tspacecompletenes} and in the rest of the paper, we assume
that repeated indices occurring in covariant and contravariant components are implicitly summed.
Equation \eqref{eq:tspacecompletenes} can be viewed as a completeness condition for the basis 
${h_{\boldsymbol{\theta}_{o}}^k}{}_{i}$ (or ${h_{\boldsymbol{\theta}_{o}}^k}{}^{i}$)
and can be rewritten as 
\begin{equation}\label{eq:tspacecompletenessynt}
{h_{\boldsymbol{\theta}_{o}}^k}^{i}\,
{h_{\boldsymbol{\theta}_{o'}}^l}{}_{i} =
\delta^{o\, o'} \delta^{k l};\,\,\, o,o'  \in \mathcal{C}. 
\end{equation}
\noindent We can also define a normalized version of the tangent vectors: 
${\mathbf{b}}_{\boldsymbol{\theta}_{o}}^k =
\iota_{\boldsymbol{\theta}_{o}}^k {\mathbf{h}}_{\boldsymbol{\theta}_{o}}^k$, where
$\left( \iota_{\boldsymbol{\theta}_{o}}^k \right)^{-2} =
| {{\mathbf{h}}_{\boldsymbol{\theta}_{o}}}^k |^2$.

\mysubsubsection{Interpretation of the Cram\'er-Rao bound in the multivariate case}
The expression in the RHS of Eq.~\eqref{eq:fisheriterm} is the Fisher information of the state space obtained
from the measurement of a single observable $o$ and represents its contribution to the
precision on the knowledge of the state.
By taking the Precision Invariance axiom seriously, we have to compare the ``contribution to the
precision'' and, consequently, the Fisher information, obtained from different observables.

If we consider the simple case of a diagonal Fisher matrix, each element on the diagonal can be
viewed as an independent Fisher information. 
Each of these quantities, through the Cram\'er-Rao inequality, bounds the precision of the
unbiased estimation of statistical parameters for different observables.
This may allow us to compare the contributions of different observables to the
estimation precision---a necessary step if we want to impose the Precision Invariance axiom.

When the Fisher matrix is not diagonal, it can nevertheless be diagonalized because 
it is symmetric.
Hence, we can consider its diagonalized form  $\mathcal{I}^{D}$, defined by the equation
\begin{equation} \label{eq:fishmateigenvec}
    \mathcal{I} = \mathbf{v}\, \mathcal{I}^{D} \, \mathbf{v}^{\mathrm{T}},
\end{equation}
\noindent where $\mathbf{v}$ is the matrix that has 
the (normalized) eigenvectors of $\mathcal{I}$ as columns. In the multivariate case,
the Cram\'er-Rao bound  takes the form of an inequality between matrices: 
$\cov_{\hat{\boldsymbol{\theta}}} \ge  \mathcal{I}^{-1}$,
where $\hat{\boldsymbol{\theta}}$ are the unbiased estimators
of a set of parameters $\boldsymbol{\theta}$.
If we suppose to right- and left-multiply both sides of this inequality by $\mathbf{v}$ and by its
transposed $\mathbf{v}^{\mathrm{T}}$, we get:
\begin{equation} \label{eq:reparcramrao}
\mathbf{v}^{\mathrm{T}} \, \cov_{\boldsymbol{\theta}} \, \mathbf{v} \ge 
\mathbf{v}^{\mathrm{T}} \, \mathcal{I}^{-1} \, \mathbf{v} = 
\left( \mathbf{v} \, \mathcal{I} \, \mathbf{v}^{\mathrm{T}} \right)^{-1} = 
\left( \mathcal{I}^{D} \right)^{-1}.
\end{equation}
\noindent This is a multivariate version of Cram\'er-Rao bound, which has a diagonal matrix
on its RHS. In the limiting case---maximum likelihood estimators---of this inequality, the equal
sign holds and, since $\mathcal{I}^{D}$ is diagonal, the LHS must be diagonal as well. 

The inequality \eqref{eq:reparcramrao}  has a geometric interpretation: suppose  that the matrix
$\mathbf{v}$ can be put in the differential form  
$\iota^{-1}_i \frac{\partial Y^i}{\partial \theta_o^{k}}$, where $ Y^i(\boldsymbol{\theta})$ is a
differentiable mapping and $\iota^k$ is a normalization factor; according to this assumption, 
$\mathbf{v}$ can be viewed, up to a normalization factor, as the Jacobian of a reparametrization
$\boldsymbol{\theta} \rightarrow \mathbf{y}$,
namely, it represents a change of local coordinates from $d \boldsymbol{\theta}$ to  
$d \mathbf{y}$. 

The factors $\mathbf{v}^{\mathrm{T}}$ and $\mathbf{v}$ occurring in the LHS of the
equation, which actually can be rewritten as
$\iota^{-1}_k \frac{\partial Y^i}{\partial \theta_o^{k}}$,
can be interpreted as an axis rotation in the space of the parameters, which transforms the covariance matrix of
$\boldsymbol{\theta}$ into a diagonal matrix.
Therefore, we can regard inequality \eqref{eq:reparcramrao}
as a set of bounds on the precision along different axes of the parameter space.
These axes are obtained by rotating the original parameter coordinates through the
differentiable mapping $ Y^i(\boldsymbol{\theta})$.

The Cram\'er-Rao inequality, in the form \eqref{eq:reparcramrao}, provides a formal framework to
express the Precision Invariance axiom: the precision of the state estimation is given by the inverse of the covariance matrix (represented in the basis of eigenvectors 
$\mathbf{v}$) that occurs in the LHS of \eqref{eq:reparcramrao}, 
which, in case of a diagonal matrix, is nothing but its diagonal element.
In other words, if we invert both sides of Eq.~\eqref{eq:reparcramrao}, we obtain a set of decoupled
inequalities, where each of them establishes, on a direction, an upper bound for the precision of the state estimation. If we assume that the best (unbiased-maximum likelihood)
estimators are always being used, we obtain an equality that fixes the precision estimation for the
state along the axes identified by $\mathbf{v}$, each one associated with an observable's parameter
$\boldsymbol{\theta}_o$. 
According to the Precision Invariance axiom, these precision estimations must not depend on the 
observable we are measuring.

\mysubsubsection{The Precision Invariance axiom as a condition on the Fisher matrix}
In what follows, we show how, using the Cram\'er-Rao inequality, the Precision Invariance
axiom can be expressed in terms of the Fisher information matrix. We start by applying this inequality to
the expression \eqref{eq:fisheriterm} of the Fisher matrix for a single observable $o$.

In order to simplify our derivations, we require $\mathcal{I}_{kl}^{(o)}(\boldsymbol{\theta}_o)$
to be a diagonal matrix. This can 
be considered acceptable for all the cases presented in this paper and corresponds to the condition
that the probability can be factorized into  terms that depend on different components of the
parameter vector $\boldsymbol{\theta}_o$. 
In section \ref{sec:nstatesyst} we will consider the multidimensional ($N_{\theta} > 1$) case,
where we obtain a factorization (see Eq.~\eqref{eq:partitionedrhofull}) that
contains probabilities on two-element sets, which can be expressed in terms of independent
one-dimensional parameters.

If the matrix
$\mathcal{I}_{kl}^{(o)} (\boldsymbol{\theta}_o)$
is diagonal, then the terms
$\frac{\partial \Theta_o^{k}}{\partial x^i}$ in \eqref{eq:fisheriterm} form a matrix whose columns are the eigenvectors of $\mathcal{I}_{i j}^{(o)} (\mathbf{x})$, and the diagonal elements $\mathcal{I}_{kk}^{(o)} (\boldsymbol{\theta}_o)$ represent the corresponding eigenvalues.

Note that, as stated in the discussion after equation \eqref{eq:fisheriterm}, the terms 
$\frac{\partial \Theta_o^{k}}{\partial x^i}$ represent  the cotangent vectors
${b_{\boldsymbol{\theta}_{o}}^k}{}^{i}$
(up to a factor ${\iota^*}_{\boldsymbol{\theta}_{o}}^k$). 

As stated before, $\mathcal{I}_{i j}^{(o)} (\mathbf{x})$ is an $N_{\theta}d \times N_{\theta}d$
matrix whose actual rank is $N_{\theta}$.
This implies that we have just $N_{\theta}$ eigenvectors associated with non-zero eigenvalues.
The remaining  $N_{\theta}(d-1)$ eigenvectors can be chosen in the subspace orthogonal to that
generated by the eigenvectors $\frac{\partial \Theta_o^{k}}{\partial x^i}$ and correspond to a
null eigenvalue, which is $N_{\theta} (d-1)$-times degenerate.

In order to extract the contribution to the precision coming from
an observable $o$, we have to project the matrix $\mathcal{I}_{i j}^{(o)} (\mathbf{x})$ 
onto the unit vectors in the directions of the ${{b}_{\boldsymbol{\theta}_{o}}^k}_i$, 
which are, by definition, the ${{b}_{\boldsymbol{\theta}_{o}}^k}_i$'s themselves.
As we have seen in the case of equation \eqref{eq:reparcramrao},
this projection is obtained by multiplying on the left and on the right by the contravariant 
vectors ${{b}_{\boldsymbol{\theta}_{o}}^k}^i$.
In conclusion, the sentence ``the contribution to the estimation precision is constant'' can be
expressed by the form
\begin{equation} \label{eq:fishinfomtxdiag}
       {{b}_{\boldsymbol{\theta}_{o}}^k}^i \mathcal{I}_{i j}^{(o)} (\mathbf{x})
       {{b}_{\boldsymbol{\theta}_{o}}^l}^j =  
       \gamma^{k l}  \, \, \forall o \in \mathcal{C},
\end{equation}

\noindent where  $\gamma^{k l}$ is required to be constant for any $o \in \mathcal{C}$. In the above equantion, and in the other equations of this subsection, the indices $i,j$ are supposed to run only on the $N_{\theta}$ components for which $\mathcal{I}_{i j}^{(o)} (\mathbf{x})$ is non-zero.

Since we derived the above equation under the hypothesis that the matrix
${{b}_{\boldsymbol{\theta}_{o}}^k}_i$ represents the eigenvectors, we can conclude that $\gamma^{k l}$ is the diagonal matrix formed by the corresponding eigenvalues.

\mysubsubsection{A simplified form for the condition on the Fisher matrix}
It can be shown that, under certain assumptions, the condition that the (diagonalized) Fisher matrix is constant, as expressed by Eq.~\eqref{eq:fishinfomtxdiag}, can be simplified by requiring that the constant term  
$\gamma^{k l}$ equals the identity matrix.
Following the approach outlined in the previous subsections, this simplification can be achieved through a local stretching or contraction of the state space along specific directions, effectively compensating for components of $\gamma^{kl}$ that differ from unity.

We can express this idea more formally: since the Fisher matrix is positive semi-definite,
we can take the square root $\tau^k$ of the eigenvalue, namely,
$\tau^k \delta^{k l} = \sqrt{\gamma^{k l}}$ and rescale the state space by this factor.
The  ``stretch'' $\mathbf{x} \rightarrow \mathbf{x}'$ of the space $\Phi_{\mathbf{x}}$ is represented by the partial derivative $\frac{\partial  {x'}^{i}}{\partial {x}^{i'}}$. 
We require this derivative to assume the value $\tau^k$ (the stretch factor) along a direction identified by the vector ${{b}_{\boldsymbol{\theta}_{o}}^k}^i$, or in other words to assume the value  $\tau^k$ in the subspace generated by ${{b}_{\boldsymbol{\theta}_{o}}^k}^i$.
We can use the projector operator ${{b}_{\boldsymbol{\theta}_{o}}^k}^i {{b}_{\boldsymbol{\theta}_{o}}^{k}}_{i'}$ to select a component in a subspace and multiply by the stretch factor $ \tau^k$.
Summing over all $k$, we obtain the operator that represents the stretch in all the directions, which is expected to be equal to the space deformation $\frac{\partial  {x'}^{i}}{\partial {x}^{i'}}$. We can therefore write the identity 
\begin{equation}\label{eq:stretchformula}
   \sum_k {{b}_{\boldsymbol{\theta}_{o}}^k}^i \tau^k
{{b}_{\boldsymbol{\theta}_{o}}^{k}}_{i'} = \frac{\partial  {x'}^{i}}{\partial {x}^{i'}}.
\end{equation}
\noindent If we express the Fisher information occurring in Eq.~\eqref{eq:fishinfomtxdiag} in terms of $x'$
and transform it in terms of the coordinates $x$ from which we started, we get two partial derivative factors 
in the form $\frac{\partial x}{\partial x'}$.
Thus, we apply to the LHS of  \eqref{eq:fishinfomtxdiag}, in turn (i) this coordinate 
transformation, (ii) the covariant version of the  ``stretching'' formula \eqref{eq:stretchformula}, i.e.
$\frac{\partial {x}_{i}}{\partial  {x'}_{i'}} = \sum_k {{b}_{\boldsymbol{\theta}_{o}}^k}_i
\tau^k {{b}_{\boldsymbol{\theta}_{o}}^{k}}^{i'} = \sum_{k k'} {{b}_{\boldsymbol{\theta}_{o}}^k}_i
\tau^{k'} \delta^{k k'} {{b}_{\boldsymbol{\theta}_{o}}^{k'}}^{i'}$
(being $\frac{\partial {x'}^{i}}{\partial  {x}^{i'}} \equiv
\frac{\partial {x}_{i}}{\partial  {x'}_{i'}}$), and 3) the completeness condition
${{b}_{\boldsymbol{\theta}_{o}}^k}^i {{b}_{\boldsymbol{\theta}_{o}}^{k'}}_i = \delta^{k k'}$: 
\begin{multline}
 \label{eq:fishinfomtxdiagb}
       {{b}_{\boldsymbol{\theta}_{o}}^k}^{i} 
       \mathcal{I}_{i j}^{(o)} (\mathbf{x'})  
       {{b}_{\boldsymbol{\theta}_{o}}^l}^{j} = \, 
       {{b}_{\boldsymbol{\theta}_{o}}^k}^{i} 
       \frac{\partial {x}_{i}}{\partial {x'}_{i'}}
       \mathcal{I}_{i' j'}^{(o)} (\mathbf{x})  
       \frac{\partial {x}_{j}}{\partial {x'}_{j'}}
       {{b}_{\boldsymbol{\theta}_{o}}^l}^j   = 
      \\   \sum_{k'k''l'l''} \hspace{-0.2cm}
       {{b}_{\boldsymbol{\theta}_{o}}^{k}}^i
       {{b}_{\boldsymbol{\theta}_{o}}^{k'}}_i 
       \tau^k   \delta^{k' k''}
       {{b}_{\boldsymbol{\theta}_{o}}^{k''}}^{i'}
       \mathcal{I}_{i' j'}^{(o)} (\mathbf{x}) 
        {{b}_{\boldsymbol{\theta}_{o}}^{l''}}^{j'}    
        \hspace{-0.1cm}\tau^l \delta^{l'' l'} 
        {{b}_{\boldsymbol{\theta}_{o}}^{l'}}_j 
          {{b}_{\boldsymbol{\theta}_{o}}^{l}}^{\hspace{-0.1cm}j} \\
      =  \, {{b}_{\boldsymbol{\theta}_{o}}^k}^i
       \tau^k
       \mathcal{I}_{i j}^{(o)} (\mathbf{x})    
        \tau^l
        {{b}_{\boldsymbol{\theta}_{o}}^l}^j,
\end{multline}
\noindent where, unless differently specified, only the lower/higher index pairs are implicitily summed. As stated by \eqref{eq:fishinfomtxdiag}, the matrix obtained in the above equation must be diagonal in the indices $k$ and $l$. As a consequence, we can consider only the term with $k=l$, for which the identity $\tau^k \tau^l = \gamma^{kl}$ holds, and simplify the condition \eqref{eq:fishinfomtxdiag} into
\begin{equation}  \label{eq:precindax1}
 {{b}_{\boldsymbol{\theta}_{o}}^k}^i
 \mathcal{I}_{i j}^{(o)} (\mathbf{x})
 {{b}_{\boldsymbol{\theta}_{o}}^l}^j =  
       \delta^{k l} \, \, \forall o \in \mathcal{C}.
\end{equation}

\subsection{Orthogonal parameter extensions} \label{sec:ortoparext}

\mysubsubsection{Decomposition of the tangent space}
For each observable $o$ of $\mathcal{C}$, we can define 
$\Phi_{\mathbf{x}}^{\bot \boldsymbol{\theta}_{o}}$ as the sub-manifold of $\Phi_{\mathbf{x}}$
obtained by  keeping  $\boldsymbol{\theta}_{o}$ fixed.
In other words, $\Phi_{\mathbf{x}}^{\bot \boldsymbol{\theta}_{o}}$ is the set of $\mathbf{x}$
such that $\boldsymbol{\Theta}_{o}(\mathbf{x}) = \boldsymbol{\theta}_{o}$. The dimension of
these sub-manifolds is $N_{\theta}(d-1)$ because $N_{\theta} d$ is the dimension of
$\Phi_{\mathbf{x}}$ and $N_{\theta}$ is the dimension of the vector $\boldsymbol{\theta}_{o}$.
If we consider the tangent space in a point $\mathbf{x}$, we can write the following decomposition: 
\begin{equation} \label{eq:ststespacepart}
T_x \Phi_{\mathbf{x}} = T_x\Phi_{\mathbf{x}}^{\boldsymbol{\theta}_{o}} \otimes
T_x \Phi_{\mathbf{x}}^{\bot \boldsymbol{\theta}_{o}}.
\end{equation}

\noindent where $T_x\Phi_{\mathbf{x}}^{\boldsymbol{\theta}_{o}}$ is, by definition,
the subspace complementary to $ T_x \Phi_{\mathbf{x}}^{\bot \boldsymbol{\theta}_{o}}$,
with respect to $T_x \Phi_{\mathbf{x}}$.

Now suppose that, for each observable $o$ with parameter vector $\boldsymbol{\theta}_{o}$,
we can define (at least on a chart of a $\Phi_{\mathbf{x}}$-covering atlas) a system of orthogonal
coordinates in the sub-manifolds $\Phi_{\mathbf{x}}^{\bot \boldsymbol{\theta}_{o}}$.
Clearly, the concept of orthogonality is always referred to a support system of coordinates,
which in our case is by definition $\mathbf{x}$.
We call $\boldsymbol{\alpha}_o (\boldsymbol{\theta}_{o}) =
(\alpha_{o}^1, ...,  \alpha_{o}^{N_{\theta}(d-1)})$ these coordinates, which depend on the value
$\boldsymbol{\theta}_{o}$ that fixes the sub-manifold.
We can also observe that these coordinates allow us 
to define a system of local coordinates on the corresponding tangent space
$T_x \Phi_{\mathbf{x}}^{\bot \boldsymbol{\theta}_{o}}$.

The orthogonality of $\boldsymbol{\alpha}_o$ can be defined by the condition 
\noindent $\frac{\partial X_i}{\partial \alpha_{o}^k}\frac{\partial X^i}{\partial \alpha_{o}^l} = 0$ 
for $k \ne l$.

\mysubsubsection{Orthogonal extensions} \label{sec:ortparext}
Now let us introduce the following:
\begin{definition} {\bf{Orthogonal Extension}}\newline
In the system of coordinates
\begin{equation}\nonumber
\omega_{\boldsymbol{\theta}_{o}} = (\theta_o^1, ..., \theta_o^{N_{\theta}},
\alpha_{\boldsymbol{\theta}}^1, ..., \alpha_{\boldsymbol{\theta}}^{N_{\theta}(d-1)}),
\end{equation}
\noindent the components
$\alpha_{\boldsymbol{\theta}}^k$ are defined as the {\bf {orthogonal extension}} of the parameters 
$\boldsymbol{\theta}_{o}$, and the whole tuple $\omega_{\boldsymbol{\theta}_{o}}$
is referred to as {\bf {extended coordinates}}.
\end{definition}

\noindent By following the  notation introduced above, the local basis for the coordinate system
$\omega_{\boldsymbol{\theta}_{o}}$ is given by the Jacobian  
${h|_{\omega_{\boldsymbol{\theta}_{o}}}}_k^i = \frac{\partial X^i}{\partial \omega^k}$, where $i$ is
the component index and $k$ indicates the element of the basis.
To avoid stacking verbose notations, in what follows, we will omit the 
subscripts $|_{\omega_{\boldsymbol{\theta}_{o}}}$ when we refer to the space generated by the
extended coordinates.
The normalized version of $\mathbf{h}_k$ is  $\mathbf{b}_k = \iota_k \mathbf{h}_k$, where
$(\iota_k)^{-2} = |\mathbf{h}_k|^2$.
We will distinguish the first $N_{\theta}$ components of the vectors of the basis from the others,
and the corresponding coefficients $ \iota_k$, by the subscripts $\boldsymbol{\theta}_o$
and  $\boldsymbol{\alpha}_o$. We can therefore identify the components
${\mathbf{b}_{\boldsymbol{\theta}_o}}_k$,  ${\mathbf{b}_{\boldsymbol{\alpha}_o}}_k$  and the normalizations
${\iota_{\boldsymbol{\theta}_o}}_k$,  ${\iota_{\boldsymbol{\alpha}_o}}_k$.
Of course, we can also define the cotangent space 
$T^*_x\Phi_{\mathbf{x}}^{\boldsymbol{\theta}_{o}}$, the corresponding cotangent basis
${h^*}_i^k = \frac{\partial \omega^i}{\partial x^k}$, and its normalized version
${\mathbf{b}^*}^k = {\mathbf{h}^*}^k/|{\mathbf{h}^*}^k|$ ($i$ is still the component index and $k$ is
indicates the element).   
Actually, since both $\mathbf{b}_k$ and ${\mathbf{b}^*}^k$ are normalized orthogonal bases,
they have the same components.

\subsection{The Fisher metric preservation theorem} \label{sec:fishinfxt}

\mysubsubsection{The extended Fisher metric}
Let us define the ``extended'' Fisher information matrix $g^{(o)}_{kl}$, that operates 
in the space of the extended coordinates $\omega_{\boldsymbol{\theta}_{o}}$, with the condition of
being  equal to the Fisher matrix \eqref{eq:thetafisheri} for the $\boldsymbol{\theta}_o$
components and to the normalization
${\iota_{\alpha_o}}_k^{-2}  = | {{{\mathbf{h}}_{\alpha_{o}}}}_k |^2$
for the orthogonal extension $\boldsymbol{\alpha}_o$:
\begin{equation} \label{eq:extfishinfo}
\begin{array}{ll}
g^{(o)}_{kl} = \mathcal{I}_{kl}^{(o)} (\boldsymbol{\theta}_o) &
\textnormal{if}\,\, l  \le N_{\theta} \,  \textnormal{and} \,  k  \le N_{\theta} \\
g^{(o)}_{kl} =  \left( {\iota_{\alpha_o}}_k^{-2} \right) \delta_{kl} &
\textnormal{if}\,\,  k  > N_{\theta} \, \textnormal{and} \,  l  >N_{\theta} \\
g^{(o)}_{kl} =  0   &  \textnormal{otherwise}.\\
\end{array}
\end{equation}

\noindent If we define the following matrix:
\begin{equation} \label{eq:extfishinfonormx}
\mathcal{I}_{i j}^{\omega_o} (\mathbf{x})  =
\frac{\partial \omega_o^k}{\partial x^i}
g^{(o)}_{kl} 
\frac{\partial \omega_o^l}{\partial x^j},
\end{equation}
\noindent it can be shown that condition \eqref{eq:precindax1}, which represents
the Precision Invariance, can be expressed in the form
\begin{equation} \label{eq:constfishermatrixcond}
\mathcal{I}_{i j}^{\omega_o} (\mathbf{x}) = \delta_{i j}\,\, \forall o\in \mathcal{C}.
\end{equation}
\noindent See appendix \ref{apx:fisherinformationidmtx} for the demonstration of this equation. In Fig.~\ref{visualmetrprestheor},
a graphic representation of how this theorem works is provided, which starts from
the comparison among measurement uncertainties. 

Let $d \mathbf{x}$ be an arbitrary vector of the tangent space $T_x \Phi_{\mathbf{x}}$.
By multiplying left and right both sides of equation \eqref{eq:constfishermatrixcond} by $d \mathbf{x}$ we obtain
\begin{equation} \label{eq:precinfotgspace}
d x^i \mathcal{I}_{i j}^{\omega_o} (\mathbf{x}) d x^j = d \mathbf{x}^2 \,\, \forall o\in \mathcal{C},
\end{equation}

\noindent which, like Eq.~\eqref{eq:constfishermatrixcond}, is an alternative form of the Precision 
Invariance axiom. 
In the bilinear differential form \eqref{eq:precinfotgspace}, the matrix $\mathcal{I}_{i j}^{\omega_o}$
(or the corresponding $g^{(o)}_{kl}$), can play the role of a metric tensor.

\mysubsubsection{Preservation of the  Fisher metric}
Equation \eqref{eq:precinfotgspace} and the definition \eqref{eq:extfishinfonormx}, imply that, since 
$d \omega_{o}^k = \frac{\partial \omega_o^k}{\partial x^i} d x^i$, 
the Precision Invariance axiom can be put in terms of a condition on the metric:

\begin{theorem}[metric-preserving property of the orthogonal parameter extension]\label{th:fmprese}
If the measurements of the observables of a complete set \allowbreak\hspace{0pt} $\mathcal{C} = \{o_1, o_2, ... o_D\}$ 
provide the same amount of information in terms of precision and if, for their parameters
$\boldsymbol{\theta}_{o_i}$, we can define the orthogonal parameter extensions
$\boldsymbol{\omega}_{o_i}$ then, for any two observables $a$ and $b$ of $\mathcal{C}$,
the following condition holds:
\begin{equation} \label{eq:metricpreservationtheorem}
d \omega_a^k {{g^{(a)}}}_{kl} d \omega_a^l = d \omega_b^k {{g^{(b)}}}_{kl} d \omega_b^l
\end{equation}
\noindent where $d \boldsymbol{\omega}_a$ and $d \boldsymbol{\omega}_b$ are the
representation of the same vector of the state tangent space $T_x \Phi_{\mathbf{x}}$ in the
parameter extensions $\boldsymbol{\omega}_a$ and $\boldsymbol{\omega}_b$ respectively, and
${g^{(a)}}$ and ${g^{(b)}}$ are the extended Fisher metrics of $\boldsymbol{\omega}_a$
and $\boldsymbol{\omega}_b$.
\end{theorem}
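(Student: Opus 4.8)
The plan is to read the theorem off directly from the metric-form restatement of Precision Invariance already established in Eq.~\eqref{eq:constfishermatrixcond} and Eq.~\eqref{eq:precinfotgspace}, so that the core of the argument reduces to a single change-of-coordinates identity applied to one invariant quantity. First I would fix an arbitrary tangent vector $d\mathbf{x} \in T_x \Phi_{\mathbf{x}}$ and recall that, by the chain rule, its representation in the extended coordinates attached to any observable $o$ is $d\omega_o^k = \frac{\partial \omega_o^k}{\partial x^i}\, dx^i$. Substituting the definition~\eqref{eq:extfishinfonormx} of $\mathcal{I}_{ij}^{\omega_o}(\mathbf{x})$ into the bilinear form $dx^i\, \mathcal{I}_{ij}^{\omega_o}(\mathbf{x})\, dx^j$ and grouping the Jacobian factors with the coordinate differentials yields the identity $dx^i\, \mathcal{I}_{ij}^{\omega_o}(\mathbf{x})\, dx^j = d\omega_o^k\, g^{(o)}_{kl}\, d\omega_o^l$ for every $o \in \mathcal{C}$.

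Next I would invoke Precision Invariance in its metric form, Eq.~\eqref{eq:constfishermatrixcond}, namely $\mathcal{I}_{ij}^{\omega_o}(\mathbf{x}) = \delta_{ij}$ for all $o$, which gives $dx^i\, \mathcal{I}_{ij}^{\omega_o}(\mathbf{x})\, dx^j = dx^i\, \delta_{ij}\, dx^j = d\mathbf{x}^2$ independently of the choice of observable; this is exactly Eq.~\eqref{eq:precinfotgspace}. Combining this with the identity from the previous step shows that $d\omega_o^k\, g^{(o)}_{kl}\, d\omega_o^l = d\mathbf{x}^2$ for every $o \in \mathcal{C}$. Specializing to the two observables $a$ and $b$, both sides equal the common value $d\mathbf{x}^2$, and eliminating it produces Eq.~\eqref{eq:metricpreservationtheorem}, which is the claim.

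The computation is routine once the setup is in place; the only point requiring genuine care is the interpretation of the symbols rather than any manipulation. The crucial subtlety is that $d\boldsymbol{\omega}_a$ and $d\boldsymbol{\omega}_b$ are two coordinate representations of one and the \emph{same} abstract tangent vector $d\mathbf{x}$, so the asserted equality is a true invariance of the length of $d\mathbf{x}$ under passing from the $\boldsymbol{\omega}_a$-chart to the $\boldsymbol{\omega}_b$-chart, not an accidental agreement between two unrelated vectors. I would therefore make the chain-rule substitution $d\omega_o^k = \frac{\partial \omega_o^k}{\partial x^i}\, dx^i$ explicit for both $a$ and $b$ before eliminating $d\mathbf{x}^2$, to keep transparent that the same $dx^i$ feeds both sides.

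All the substantive content accordingly lives in Eq.~\eqref{eq:constfishermatrixcond}: the fact that each $\mathcal{I}_{ij}^{\omega_o}$ equals the \emph{same} Euclidean $\delta_{ij}$ is precisely where the Precision Invariance axiom enters, and it is that shared right-hand side---established earlier through the Cram\'er-Rao analysis and the orthogonal-extension normalization proved in Appendix~\ref{apx:fisherinformationidmtx}---that forces the two extended Fisher metrics $g^{(a)}$ and $g^{(b)}$ to agree on every tangent vector. I do not anticipate a serious obstacle in the derivation itself; if anything is delicate, it is confirming that the grouping of Jacobian factors in the first step is legitimate, i.e.\ that $\mathbf{x}$ is a well-behaved support coordinate system and the maps $\boldsymbol{\omega}_o(\mathbf{x})$ are differentiable with invertible Jacobians, which is exactly the local-independence and differentiability hypothesis assumed throughout Sec.~\ref{sec:constrmle}.
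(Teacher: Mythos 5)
Your proposal is correct and takes essentially the same route as the paper: the paper likewise contracts Eq.~\eqref{eq:constfishermatrixcond} with an arbitrary tangent vector $d\mathbf{x}$ to obtain Eq.~\eqref{eq:precinfotgspace}, then uses $d\omega_o^k = \frac{\partial \omega_o^k}{\partial x^i}\,dx^i$ together with the definition \eqref{eq:extfishinfonormx} to rewrite the invariant quantity $d\mathbf{x}^2$ as $d\omega_o^k\, g^{(o)}_{kl}\, d\omega_o^l$ and eliminates the common value between the two observables. As you correctly note, the substantive content resides in establishing Eq.~\eqref{eq:constfishermatrixcond}, which the paper does in Appendix~\ref{apx:fisherinformationidmtx}.
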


An intuitive explanation of the above theorem can be provided by thinking in terms of measurement
uncertainties. Roughly speaking, the Fisher metric represents a notion of distance that takes, as unit of measurement, the estimation uncertainty (assuming we are using maximum likelihood estimators).
Suppose we have two uncertainties, coming from two different parameter estimations;
instead of requiring the invariance of uncertainty (i.e., the inverse of precision), we require that the
measurement of distances in the state space remains unchanged regardless of whether the uncertainty comes
from one parameter or another.
In simple terms, if we have two rulers, we can check whether they are of equal length by measuring the
same object with both and comparing the results.

However, this explanation works well in the single-parameter case (scalar Fisher information).
In the case of a multi-dimensional state space, the uncertainty of the estimate can be defined along
different directions of the state space manifold, making the problem more complex and leading to the
introduction of the concept of parameter extension.

Now that we have equipped the state space with a metric, the following point should be emphasized: the state space manifold is not a statistical manifold in the strict sense, because it is
not equipped with the Fisher metric, but rather with what we refer to as the extended Fisher metric.
Moreover, each point of the state space does not represent a single probability distribution but a
family of distributions, each associated with a different parametrization and observable.

\begin{figure}[!ht]
\centering
\includegraphics[scale=0.37]{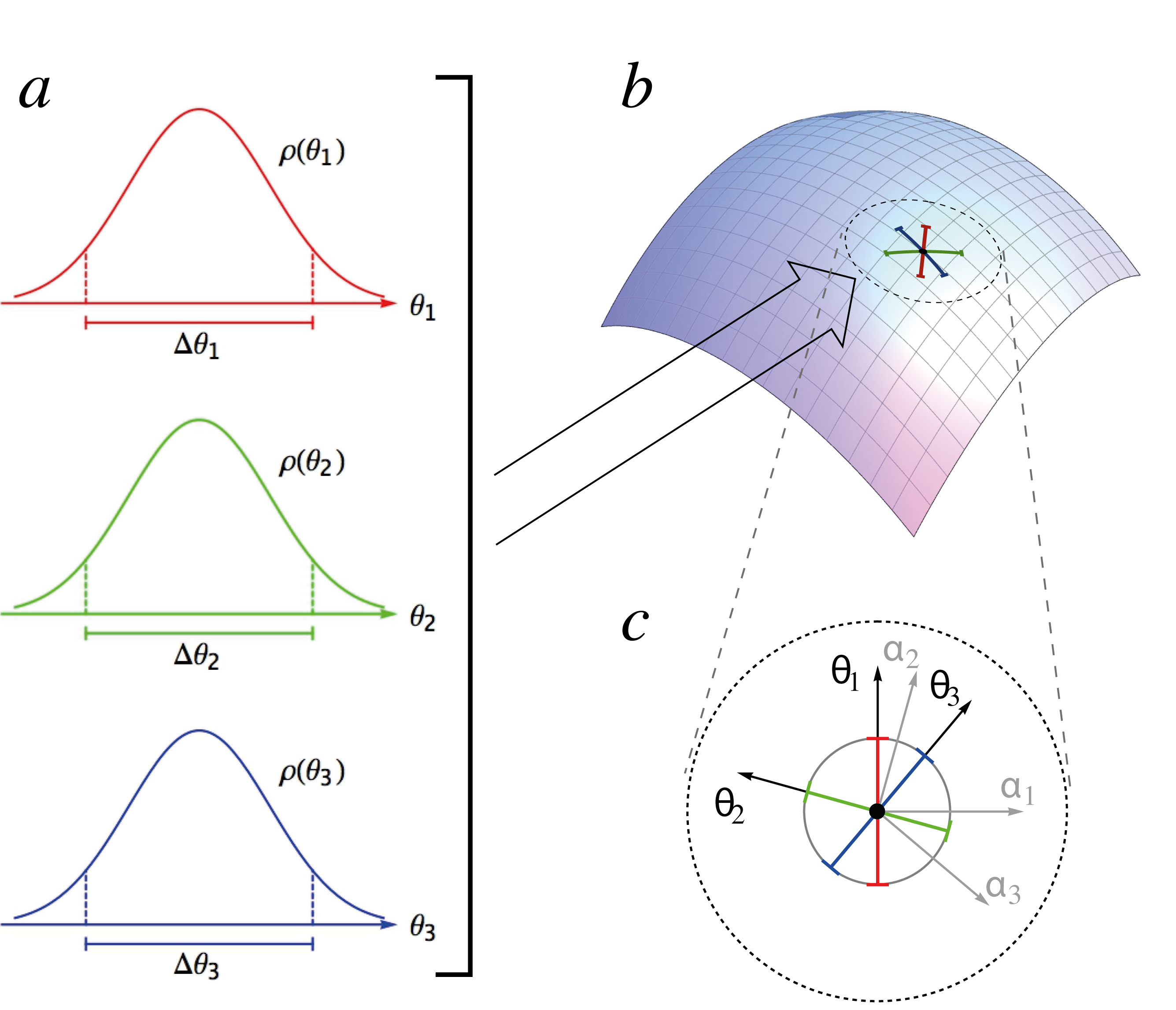}
\begin{subcaptiongroup}
\captionlistentry{a}
\label{visualmetrprestheora}
\captionlistentry{b}
\label{visualmetrprestheorb}
\captionlistentry{c}
\label{visualmetrprestheorc}
\end{subcaptiongroup}
\caption{
A graphic representation of the Precision Invariance axiom in terms of measurement uncertainties.
\subref*{visualmetrprestheora}
Parameters $\theta_1, \theta_2, \theta_3$ are  estimated with experimental uncertainties
$\Delta\theta_1, \Delta\theta_2, \Delta\theta_3$.
\subref*{visualmetrprestheorb}
Their uncertainties lead to an estimation error of the state in the direction of 
the red, green, and blue segments on the state manifold $\Phi_s$.
\subref*{visualmetrprestheorc}
According to the Precision Invariance axiom, since precision is the inverse of uncertainty, 
these segments must have the same magnitude.
The comparison among their magnitudes can be made with the help of the ``extended'' orthogonal 
coordinate systems $(\alpha_1 \theta_1), (\alpha_2 \theta_2), (\alpha_3 \theta_3)$.
The Fisher metric allows us to measure the distances in parameter space by considering the uncertainty
of the estimate as the unit of measurement (assuming the use of maximum likelihood estimators).
Thus, we can use the Fisher metric to compare uncertainties across different parameters.
This can be done, for example, by defining a contour (represented by a grey circle)
around a point that maintains a constant Fisher distance from it. }
\label{visualmetrprestheor}
\end{figure}

\mysubsubsection{Precision Invariance axiom in a subspace}\label{sec:estprecsubsp}
Now we are going to discuss the case when the number of measurements of one of the observables
of the set $\mathcal{C}$ is much larger than the number of measurements of the others.

In section \ref{sec:constrmle} we introduced the constrained log-likelihood function  \eqref{eq:mlec}.
The maximum likelihood estimators of the statistical parameters are given by 
the stationary points of that function. 
The first term of expression \eqref{eq:mlec}
is the sum of the log-likelihood functions of the measurement on a single observable.
If we had to maximize only this term, we would obtain, by applying the first 
order derivative, a system of $D$ decoupled equations because each  term of that 
sum depends only on the parameters of a single observable $o$. 

The derivation of the second term of \eqref{eq:mlec} leads to a term that breaks
the independence of the equations of that system. In what follows we will show that,
if one of the observables of the set $\mathcal{C}$ is being measured for a large number of times,
the expression \eqref{eq:mlec}  leads to a condition of maximum likelihood, relative to that
observable, which is decoupled with respect to the other observables' one.

Suppose that the observable $o_1$ is being measured for $M_1$ times, where $M_1 >> M_k$ for
$ k>1 $. 
The terms of the constrained log-likelihood function  \eqref{eq:mlec}, which depend on 
$o_1$, are:
\begin{equation}  \label{eq:mlecsubspace}
\sum_{i = 1}^{M_1}  \log  \rho(\zeta^{o_1}_i;\hat{\boldsymbol{\theta}}_{o_1} ) + 
\sum_{a, k} \lambda_{a, k} \varphi_{a, k}(\hat{\boldsymbol{\theta}}^{o_1}, ... \hat{\boldsymbol{\theta}}^{o_D}).
\end{equation}
\noindent If the estimator of a statistical parameter is unbiased, its expected value is the value of
the parameter $\boldsymbol{\theta}$ itself. By the (weak) law of large numbers, as $M_1$ gets larger,
the estimator  parameters 
tends {\it{in probability}} to its expected value. Namely, for any $\epsilon > 0$: 
\begin{equation}\nonumber
    \lim_{M_1 \rightarrow \infty} \rho (|\hat{\boldsymbol{\theta}}_{o_1} - \boldsymbol{\theta}_{o_1}| < \epsilon)  = 1. 
\end{equation}
\noindent This allows us to consider $\hat{\boldsymbol{\theta}}_{o_1}$ and 
$\boldsymbol{\theta}_{o_1}$ close enough to replace the first sum in  \eqref{eq:mlecsubspace} by 
its second order power series expansion in the neighborhood of $\boldsymbol{\theta}_{o_1}$.
This term can be therefore rewritten as:
\begin{multline} \label{eq:mletermsubsopace1}
\sum_{i = 1}^{M_1}  \log  \rho(\zeta^{o_1}_i;    \boldsymbol{\theta}_{o_1} )  \twocolbreak + 
\frac{\partial}{\partial {\theta}_{o_1}^k} \sum_{i = 1}^{M_1} 
\log  \rho(\zeta^{o_1}_i; \boldsymbol{\theta}_{o_1} ) 
(\hat{{\theta}}_{o_1}^k - {\theta}_{o_1}^k)\\  + 
\frac{1}{2} \frac{\partial^2}{\partial {\theta}_{o_1}^k \partial {\theta}_{o_1}^l} \sum_{i = 1}^{M_1} 
\log  \rho(\zeta^{o_1}_i; \boldsymbol{\theta}_{o_1} ) 
(\hat{{\theta}}_{o_1}^k - {\theta}_{o_1}^k)(\hat{{\theta}}_{o_1}^l - {\theta}_{o_1}^l).
\end{multline}
\noindent In order to find the maximum in $\hat{\boldsymbol{\theta}}_{o_1}$ of the above expression, 
we have to calculate  its first derivative with respect to $\hat{\boldsymbol{\theta}}_{o_1}$.
The first term can be neglected because it does not depend on $\hat{\boldsymbol{\theta}}_{o_1}$. 
The second and third terms of Eq.~\eqref{eq:mletermsubsopace1} contain a summation,
in the range $i \in (1 ... M_1)$, of quantities that depend on the measurements' outcomes 
$\zeta^{o_1}_i$. 
Consequently, the terms of the summations can be treated as ordinary random variables. 
The sum in the range $i \in (1 ... M_1)$, up to a factor $M_1$, is actually equal to the mean
value of its terms.
Again, by the law or large numbers, in the large $M_1$ limit, we can replace in
\eqref{eq:mletermsubsopace1} the mean value of the term in the summation by its expected
value. Hence, the expression  \eqref{eq:mletermsubsopace1} yields:
\begin{multline} \label{eq:mletermsubsopace2}
 M_1 \frac{\partial}{\partial {\theta}_{o_1}^k} \textnormal{E} \left(
\log  \rho(\zeta^{o_1}_i; \boldsymbol{\theta}_{o_1} ) 
      \big|\boldsymbol{\theta}_{o_1} \right) 
(\hat{{\theta}}_{o_1}^k - {\theta}_{o_1}^k) \\
+ \frac{1}{2}  M_1
\frac{\partial^2}{\partial {\theta}_{o_1}^k \partial {\theta}_{o_1}^l}\textnormal{E} \left( 
\log  \rho(\zeta^{o_1}_i; \boldsymbol{\theta}_{o_1} )   \big|\boldsymbol{\theta}_{o_1}  \right) \twocolbreak \cdot
(\hat{{\theta}}_{o_1}^k - {\theta}_{o_1}^k)
(\hat{{\theta}}_{o_1}^l - {\theta}_{o_1}^l) .
\end{multline}
\noindent In this equation the first term is zero because, in shorthand notation, we have that
$ \textnormal{E} (\frac{\partial}{\partial \theta} \log \rho(\zeta) ) = 
\frac{\partial}{\partial \theta}   \textnormal{E}( 1/ \rho(\zeta)) = 
\frac{\partial}{\partial \theta}  \int(  \rho(\zeta)/ \rho(\zeta)) = 0$.
The second term, up to a factor
$\frac{M_1}{2}(\hat{{\theta}}_{o_1}^k -
{\theta}_{o_1}^k)(\hat{{\theta}}_{o_1}^l - {\theta}_{o_1}^l)$, 
is nothing but the single-measurement Fisher information defined in Eq.~\eqref{eq:thetafisheri}.

Thus, we obtain a term proportional to $M_1$, which, in order to get the
maximum likelihood estimator, must be added to the constraint, i.e., the second term
of equation  \eqref{eq:mlecsubspace}. Since the constraint term does not depend on $M_1$, in the
large $M_1$ limit, it will be negligible with respect to the others.
Consequently, we can  
maximize the likelihood for the $\boldsymbol{\theta}_{o_1}$ component of the state separately, 
independently of the constraint term.

This allows us to calculate the other components of the maximum likelihood estimator $\hat{\boldsymbol{\theta}}^{o_k}$ with $k>1$, by taking the first component, $\hat{\boldsymbol{\theta}}^{o_1}$, as a fixed parameter. This introduces a simplification because we have reduced the dimensionality of the state space.

\section{Applications of the metric-preserving property} \label{sec:elemsys}

\subsection{Two-observable elementary system}\label{sec:twoobs}
In what follows we will study the two-variable one-bit elementary system, anticipated in  
\ref{sec:twoobsbasic}, in terms of the Fisher-metric formalism introduced in the previous section.
We will show that the metric-preserving property of Theorem \ref{th:fmprese} leads to the same results obtained in section 
\ref{sec:twoobsbasic}.

Our program is to impose the Precision Invariance  axiom \ref{ax:bitprecisionequiv}
in the form of the metric-preserving property  \eqref{eq:metricpreservationtheorem}. 
As a first step, we must build the orthogonal extension for all the parameters of the system.
In the case of the two-observable one-bit elementary system, we can identify two sets of locally
independent observables, each represented by a single observable ($p$ or $q$). If we
consider the definition of orthogonal extension given in the previous section, we can see that in the
case $d=1$, the extended set of coordinates corresponds to the
parameter itself because its dimensionality corresponds to the dimensionality of a set of locally
independent observables, which is $d$. Therefore, we have
$\boldsymbol{\omega}_p = \boldsymbol{\theta}_p$ and
$\boldsymbol{\omega}_q = \boldsymbol{\theta}_q$.
Since each statistical parameter consists of a single
component, condition  \eqref{eq:metricpreservationtheorem} simplifies to the form:
\begin{equation} \label{eq:gpreservationtwodim}
g_{p} d \theta_p^2 = g_{q} d \theta_q^2,
\end{equation}
\noindent where $g_{p}$ and $g_{q}$ are the metric tensors obtained by extending 
the Fisher information. In the case $d = 1$, since there are no $\alpha$ components in the
orthogonal extension, according to definition \eqref{eq:extfishinfo}, the
metric $g$ is nothing but the Fisher information metric
of the distributions $\rho_{\nu}(i; \theta_{\nu})$, which is defined as:
\begin{equation}\label{eq:fishermetrictensor}
g_{\nu} = 
\sum_{i=0}^1 \left( \frac{\partial \log \rho_{\nu}(i; \theta_{\nu})}{\partial \theta}  \right)^2
\rho_{\nu}(i; \theta_{\nu}).
\end{equation}
\noindent By applying the logarithm to both sides of the definitions \eqref{eq:rhovstheta}, and
calculating the first order derivative with respect to  $\theta$, we obtain:
\begin{eqnarray}
\frac {\partial \log \rho_{\nu}(0; \theta_{\nu})}{\partial \theta_{\nu}} & = &
- \frac {\sin \theta_{\nu}}{\cos \theta_{\nu} }\nonumber \\
\frac {\partial \log \rho_{\nu}(1; \theta_{\nu})}{\partial \theta_{\nu}} & = &
\frac {\cos \theta_{\nu}}{\sin \theta_{\nu} },\nonumber
\end{eqnarray}
\noindent which can be substituted in \eqref{eq:fishermetrictensor}, giving 
the expression
\begin{equation}  \label{eq:gderivation}
g_{\nu} = 
 \left[ \left(\frac {\sin \theta_{\nu}}{\cos \theta_{\nu}} \right)^2 \cos^2\theta_{\nu} +
           \left(\frac {\cos \theta_{\nu} }{\sin \theta_{\nu}} \right)^2 \sin^2\theta_{\nu}  \right] 
 = 1
\end{equation}
\noindent If the lower bound of the Cram\'er-Rao  inequality  is reached, the Fisher metric
is equal to the inverse of the variance of the estimator  $\hat{\theta}$.
Thus, from \eqref{eq:vartheta}, we can conclude that the Fisher metric of a single system is 1.
With $g_{\nu} = 1$, equation \eqref{eq:gpreservationtwodim} now reads:

\begin{equation} \label{eq:fishermetrpres2d}
 d \theta_p^2 = d \theta_q^2.
\end{equation}

\noindent The solution of this equation is the same as the one we obtained in section 
\ref{sec:twoobsbasic}, (see equation \eqref{eq:thetapvsthetaq}).

\subsection{Three-observable elementary system} \label{sec:threeobs}

In this section, we study a system with three one-bit observables. As the system studied in the
previous section, the total amount of information carried by this system is 1.
Let  $q$, $p$, and $r$ be the three observables, which can take the values 0 and 1.
We hypothesize that only two of these observables can be locally independent, thus, by theorem
\ref{th:manifolddim}, we expect a two-dimensional system state.
Thus the ``signature'' of the problem is $d=2$, $D=3$, $N_{\theta}=1$.

As we will see, this model system is closer to a real quantum system than the previous one:
suppose we identify the variables $q$, $p$ and $r$ with the spin components 
along the $x$-, $y$- and $z$-axes, respectively, of an elementary spin-$1/2$ particle.
Here, for instance, $q = 0$ means  $s_x = -1/2$  and  $q = 1$ is $s_x = 1/2$.
It will be shown that, with the above hypotheses and by applying axioms 
\ref{ax:bitprecisionequiv} (in its metric-preserving condition form
\eqref{eq:metricpreservationtheorem}),
\ref{ax:limitedinfo},  \ref{ax:symmetries}  and \ref{ax:infostatequivalence}, 
the three-observable system exactly matches a quantum spin $1/2$ particle system.
 
The three-observable system studied in this section is an  extension of the 
previous section's system: 
we can express the distributions $\rho_p, \rho_q, \rho_r$, in terms of three 
paraneters similar to those defined in  \eqref{eq:rhovstheta}.
At the end of the calculations, which, unfortunately, are much more complex than the two-observable
case, we will discover that these variables lay on a spherical surface.

Our approach is based on the definition of orthogonal parameter 
extensions introduced in section \ref{sec:ortparext}. This will allow us to
express the Precision Invariance in terms of the metric property \ref{th:fmprese}
on these parameter extensions.

\mysubsubsection{Parametrizations and Orthogonal Extensions}

Let us introduce the parameters  $\theta_{\nu} \,\, (\nu = p, q, r)$ by using the same
definition we adopted for the two-observable system  case \ref{sec:twoobsbasic}. 
Now we have three distributions $\rho_{\nu}$, which  
depend on the parameters $\theta_{\nu}$ by means of the same equations \eqref{eq:rhovstheta},
with $\nu = p, q, r$.

From these parametrizations, we can define the orthogonal
extensions $\alpha_{\nu}$ and the extended parameters $\omega_{\nu}$:
\begin{equation}\label{eq:threeparamtzn}
\boldsymbol{\omega}_q = \binom{\theta_q}{\alpha_q} \,\, \,
\boldsymbol{\omega}_p = \binom{\theta_p}{\alpha_p} \,\, \,
\boldsymbol{\omega}_r = \binom{\theta_r}{\alpha_r}.
\end{equation}

\noindent 
Suppose we choose a pair of parameters, for instance $\theta_q$ and $\theta_r$, 
to represent the state space. Of course, this choice allows us to identify a state
space point only in a neighborhood of a reference point,
because $q$ and $r$ are supposed to be just {\textit{locally}} independent.
This implies that, globally, the state $x$ can depend on these parameters through 
a multi-valued function; therefore, in general, any other parameter
is expected to depend on $\theta_q$ and $\theta_r$ through a multi-valued function.
In what follows, we will suppose that the coordinates
defined in \eqref{eq:threeparamtzn}, 
specifically the extensions $\alpha_{\nu}$, will be built to incorporate the information  
about the branch of the multi-valued function that maps the parameters onto the state space.
This means that extended parameters $\boldsymbol{\omega}_{\nu}$
can identify each point of the {\textit{whole}} state space.

In order to identify the parameter extension $\alpha_{\nu}$,
we need to know the relationship that connects $\alpha_{\nu}$ to the actual parameters
$\theta_{\nu}$. This relationship cannot be in the form $\alpha_q(\theta_q, \theta_r)$, because 
$\theta_q$ and $\theta_r$ cannot identify  globally a point on the state space
(while $\boldsymbol{\omega}_{\nu}$ is required to identify each point of the state space).
Thus, we can put
that relationship in an implicit form like
\begin{equation}\label{eq:threeparamfun}
\begin{array} {l}
 \theta_r =  \Theta_{q \rightarrow r}(\theta_q, \alpha_q)\\
 \theta_p =  \Theta_{r \rightarrow p}(\theta_r, \alpha_r).
\end{array}
\end{equation}

\noindent In our derivations, it will be useful to extend the definition
\eqref{eq:spqdef} to the three-observable case and introduce the following variables:
\begin{equation}\label{eq:nonisoparm}
\begin{array} {l}
 S_q =  \rho_q(0) - \rho_q(1) = \cos \theta_q ,\\
 S_p =  \rho_p(0) - \rho_p(1) = \cos \theta_p ,\\
 S_r =  \rho_r(0) - \rho_r(1) = \cos \theta_r.
\end{array}
\end{equation}

\mysubsubsection{Outline of the Demonstration}
 The demonstration, whose details are provided in appendix \ref{apx:statman},
 can be synthesized in the following points:
\begin{enumerate}
\item Following the approach adopted for the two-observable system (see conditions \eqref{eq:knownpqvals}), we define the cardinal points
as those values of the parameters for which the observables $q$, $p$ and $r$
are either completely determined or completely undetermined.
These points are shown in Fig.~\ref{3Dsystem1a}.

\item \label{itm:subsp} We also identify some special subspaces of the state space,
in which the system can be reduced to the two-observable system studied in
\ref{sec:twoobs}; these subspaces are depicted in Fig.~\ref{3Dsystem1}.
Note that this simplification comes from what we have stated in \ref{sec:estprecsubsp}, 
when we discussed the application of the Precision Invariance axiom in a subspace.

\begin{figure}[!ht]
\centering
\includegraphics[scale=0.55]{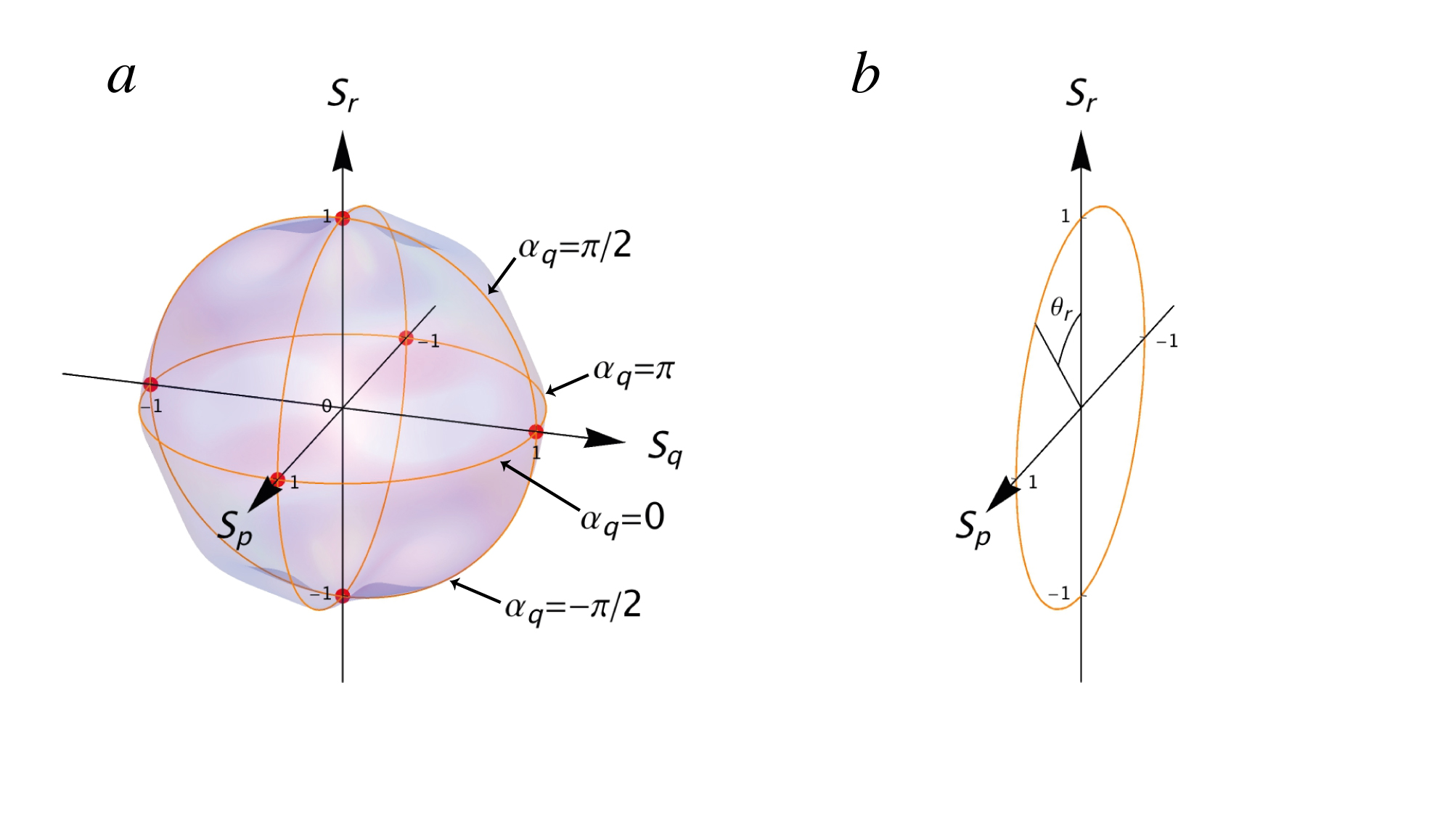}
\begin{subcaptiongroup}
\captionlistentry{a}
\label{3Dsystem1a}
\captionlistentry{b}
\label{3Dsystem1b}
\end{subcaptiongroup}
\caption{
\subref*{3Dsystem1a} A generic state-space manifold, which passes through the cardinal points
(red dots) and through the $S_p, S_q, S_r = 0$ subspaces (orange circles).
\subref*{3Dsystem1b} 
The parametrization $\theta_r$ in the $S_q = 0$ space.
}
\label{3Dsystem1}
\end{figure}

\item There is an even larger class of subspaces where our system behaves like a two-observable
system, which corresponds to the cases where a parameter $\theta_{\mu}$ (or the corresponding
$S_{\mu}$), of any of the  $q$, $p$ and $r$ observables, is constant. See Fig.  \ref{3Dsystem2a}. 

\item We build the solution for the larger three-observable case by joining 
all the two-observable system sub-manifolds. Furthermore, we require such submanifolds to pass
by a special set of points (which is the subspace identified in step in \ref{itm:subsp} and shown in Fig.~\ref{3Dsystem1b}); by varying the value of the parameter
$S_{\mu}$, we can derive the shape of the state manifold.
This process is shown in Fig.~\ref{3Dsystem2b}.
\end{enumerate}

\begin{figure}[!ht]
\centering
\includegraphics[scale=0.48]{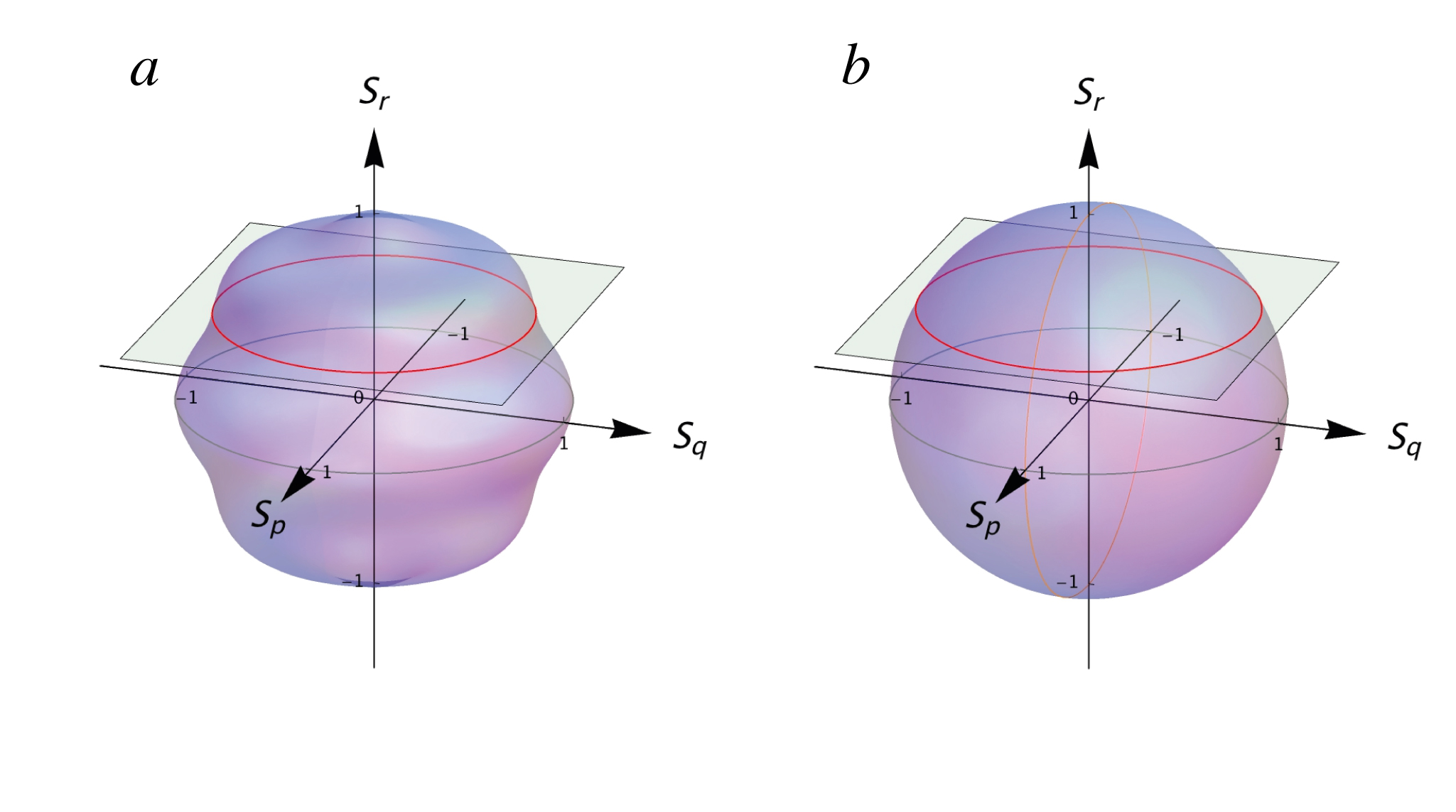}
\begin{subcaptiongroup}
\captionlistentry{a}
\label{3Dsystem2a}
\captionlistentry{b}
\label{3Dsystem2b}
\end{subcaptiongroup}
\caption
{
A visual explanation of how the state space manifold takes a spherical shape. 
\subref*{3Dsystem2a}
The state space manifold and its intersection with the 
$\theta_r = {\textnormal{Const}}$ plane, which gives rise to the subspace 
identified by the red curve. By applying the same arguments of the two-observable
elementary system, it can be shown that this curve is a circle.
\subref*{3Dsystem2b}
The $\theta_r = {\textnormal{Const}}$ curve must pass through 
the $S_q = 0$ subspace, which is a circle as well (identified by the orange curve). 
This implies that, in the 
$(S_p,S_q,S_r)$ space, the state space manifold takes the form of a sphere.
}
\label{3Dsystem2}
\end{figure}

\mysubsubsection{The solution}\label{par:thesolution}

It should be noted that, in our system, there is no preferred observable
among $p, q, r$. This implies that the solution is expected to be invariant 
under exchange of these observables.
Furthermore, as explained at the end of appendix \ref{apx:statman}
(Eq.~\eqref{eq:alphafredofch}), the parameter extensions
$\alpha_{\mu}$, occurring in the definitions \eqref{eq:threeparamtzn},
are not uniquely defined because they can be redefined through to
the transformation $\alpha_{\mu}  \rightarrow \pi/2 - \alpha_{\mu}$. 
By following the notation introduced in that appendix (starting from Eq.~\eqref{eq:fishmprespq}),
we can define $\alpha_r =\alpha_{r p}$,  $\alpha_q =\alpha_{q r}$,
and for $\alpha_p$ we have two options:
(i) if we put $\alpha_{p} =\alpha_{p q}$, the set of parametrizations will be invariant
under rotations of the triplet $p, q, r$, namely, transformations like 
$(p, q, r) \rightarrow  (r, p, q)  \rightarrow  (q, r, p) $;
(ii) if, instead, we put   $\alpha_{p} =\alpha_{p r}$, 
the set of parametrizations will be invariant under the exchange
of observables  $p \leftrightarrow q$.
These invariances can be demonstrated by applying the corresponding index
permutations to the definitions of parameters $\alpha_{\mu}$ in terms of  $\alpha_{\mu \nu}$.
The extension of the parameter $\theta_{p}$ can be identified
as $\boldsymbol{\omega}_p^{(1)}$ in the first case and 
as $\boldsymbol{\omega}_p^{(2)}$ in the second one.

Now we can rewrite the parametric form \eqref{eq:parstatemanifold} of the state manifold obtained 
in the appendix \ref{apx:statman} for all the parametrizations \eqref{eq:threeparamtzn}:
\begin{eqnarray} \label{eq:euclimmersion} 
\begin{array}{l}
S_{\mu} = \cos \theta_{\mu}  \\
S_{\nu} = \sin \theta_{\mu} \cos \alpha_{\mu}  \\
S_{\xi} = \sin \theta_{\mu} \sin \alpha_{\mu},
\end{array}
\end{eqnarray}

\noindent where the triplet $\mu, \nu, \xi$ can take the values
 $(q,p,r)$, $(r,q,p)$ and $(p,r,q)$ (in the $\boldsymbol{\omega}_p^{(1)}$ case) or $(p,r,q)$ 
 (in the $\boldsymbol{\omega}_p^{(2)}$ case).
Regardless of how we chose the triplet  $\mu, \nu, \xi$, the mapping \eqref{eq:euclimmersion}
represents the transformation from the spherical coordinates
$\theta_{\mu},  \alpha_{\mu}$, with radius 1, to the Cartesian coordinates $S_{\mu} S_{\nu} S_{\xi}$. 

This is the result we were looking for: we have found the shape of the space of the physically
valid states. 
In this case, this set is expressed in terms of the coordinates $S_p, S_q$ and $S_r$,
which can represent the space of the parameters of the set $\mathcal{C}$.

We have a common system of coordinates $S_{p} S_{q} S_{r}$
and three systems $\boldsymbol{\omega}_{i}$ of curvilinear coordinates, which are orthogonal with
respect to the system $S_{p} S_{q} S_{r}$ 
(in the radius 1 manifold). We can therefore conclude that the systems
$\boldsymbol{\omega}_{i}$ are (extended) orthogonal coordinates because they are orthogonal
with respect to a common state coordinate system.

\mysubsubsection{The Fisher metric}
Now we can write the Fisher metric for the extended coordinate $\boldsymbol{\omega}_{i}$
by using the definition \eqref{eq:extfishinfo}.
Let us consider, for instance, the $\boldsymbol{\omega}_{r}$ coordinates: according to
the definition \eqref{eq:extfishinfo}, the component $g_{(r)}^{11}$ is given by
$\mathcal{I}^{(r)}(\theta_r)$ and, since the parametrizations $\theta_{\nu}$ have the same form \eqref{eq:nonisoparm}  of the two-dimensional case, we can apply the same steps from equation \eqref{eq:fishermetrictensor} to   \eqref{eq:gderivation}, and demonstrate that $g_{(r)}^{11}=1$.

The component $g_{(r)}^{22}$, as described in section \ref{sec:ortoparext} 
is given by the normalization factor
$(\iota_{\alpha_r})^{-2}$ of the second vector $\mathbf{h}_2$ of the local basis 
(i.e., in the tangent space) of the system $\boldsymbol{\omega}_{i}$.
As stated in section \ref{sec:ortoparext} such vectors are defined as
$h_2 = \frac{\partial S_i}{\partial \alpha_r}$. The derivatives of $S_i$ with respect to $\alpha_r$
can be calculated directly from  \eqref{eq:parstatemanifold} and we obtain 
$(\iota_{\alpha_r})^{-2} = |\mathbf{h}_2|^2 = \sin^2 \theta_r$.
We can use the same arguments for the other extensions
$\boldsymbol{\omega}_{p}$ and  $\boldsymbol{\omega}_{q}$, leading us to the result:
\begin{equation} \label{eq:gmudef}
g_{(\mu)} = 
\begin{pmatrix}
1 & 0 \\
0 & \sin^2(\theta_{\mu}),
\end{pmatrix}
\end{equation}
\noindent It is worth noting  that the above Fisher metric tensor is based
on the definition provided in section \ref{sec:fishinfxt},
where only the first $N_{\theta}$ components
(the first, in our case) correspond to a true Fisher information, while the others are defined 
in a way that permits us to obtain the condition on the metric \eqref{eq:metricpreservationtheorem}.
The expression of the extended Fisher metric in differential form is
\begin{equation} \label{eq:finalfishermetric}
d s^2 = d \theta_{\mu}^2 + \sin^2(\theta_{\mu})  d \alpha_{\mu}^2, \,\,\, \mu = p,q,r.
\end{equation}

\noexpand It can easily be shown that the expression we have obtained is equal to the
Cartesian metric $d s^2 = d S_{p}^2 + d S_{q}^2 + d S_{r}^2$, calculated on the unit sphere
and expressed in spherical coordinates $\theta, \alpha$ 
(the radius coordinate $\rho$ is constant and equal to 1).
In Eq.~\eqref{eq:finalfishermetric}, different  values of the triplet $\mu, \nu, \xi$ 
correspond to different choices of the polar axis of the spherical coordinate system.
Since the value of the Cartesian metric is independent of how we chose this axis,
we can conclude that the metric  \eqref{eq:finalfishermetric} 
is left unchanged when we go from a parametrization $\omega$ to another.
This satisfies what the metric-preserving theorem states.

\mysubsubsection{The Bloch sphere}\label{sec:blochsphere}

As stated before, the coordinates $S_q$, $S_p$, and $S_r$ are required to belong to the unit sphere.
This representation is commonly referred to as the Bloch sphere.
There is a well-known homomorphism between the groups SU(2) and SO(3), which permits us to state a 
relation between these coordinates and the two-dimensional space of complex vectors $ |\psi^{q}>$:
\begin{subequations}\label{eq:so3vssu2}
\begin{align}
S_q & = < \psi^{q} | \sigma_{3} | \psi^{q} >
      = | \psi_1^q |^2 - | \psi_0^q |^2 \label{eq:so3vssu2a}  \\
S_p & = < \psi^{q} | \sigma_{1} | \psi^{q} >
      = {\psi_1^q}^* \psi_0^q + {\psi_0^q}^* \psi_1^q \nonumber \\
    & = 2  | \psi_1^q |  | \psi_0^q |  \cos (\phi_1^q - \phi_0^q) \label{eq:so3vssu2b} \\
S_r & = < \psi^{q} | \sigma_{2} | \psi^{q} >   
      = i ({\psi_1^q}^* \psi_0^q - {\psi_0^q}^* \psi_1^q ) \nonumber \\
    & =  2  | \psi_1^q |  | \psi_0^q |  \sin (\phi_1^q - \phi_0^q) \label{eq:so3vssu2c}
\end{align}
\end{subequations}
\noindent where  $\sigma _{1}$, $\sigma _{2}$ and $\sigma _{3}$ are the Pauli matrices.
Note that, in the above equations,  $\sigma_{1}$, $\sigma_{2}$ and $\sigma_{3}$ are used
to obtain the coordinates $S_p$, $S_r$ and $S_q$ respectively. 
This correspondence, and the fact that $\psi^q$ is represented in the basis of the eigenvectors of
$\sigma_{3}$, justify the superscript $q$ in $\psi^q$.
$\phi_i$ is the phase of the $i$-th component of $\psi$.

We can define the vectors  $\psi^p$ and  $\psi^r$ in a similar way:
$\psi^p$ are the components of the state in the basis of the $\sigma_{1}$'s eigenvectors and
$\psi^r$ are components of the state in the basis of the $\sigma_{2}$'s eigenvectors.
It is easy to demonstrate that, in these representations, the probabilities $\rho$ satisfy  
the following relations
\begin{equation} \label{eq:psi2expr}
  | \psi_i^{\nu} |^2 = {\rho_i}^{\nu} , \, \, \nu = q,p,r
\end{equation}
and  that the phases $\phi_i$ are related to the $\alpha$'s through the identity
\begin{equation}  \label{eq:phi2expr}
   \alpha_{\nu} = \phi_1^{\nu} - \phi_0^{\nu}.
\end{equation}
In this representation, a  transformation from the parameters $\boldsymbol{\omega}_q$ to
$\boldsymbol{\omega}_p$ can be viewed as a change of basis from $\sigma_3$'s to $\sigma_1$'s
eigenvectors and, in the space of the two-dimensional $\psi$ vectors, this transformation takes
a simple linear form given by the matrix
\begin{equation} \label{eq:linpsi}
\frac{1}{\sqrt{2}} \begin{pmatrix}1&1\\1&-1\end{pmatrix}.
\end{equation}
\noindent The system we have just described can be recognized as a
standard quantum-mechanical two-state system that carries one bit of information,
usually referred to as \textit{qubit}, and the above Bloch sphere representation is commonly 
used to visualize its states.
This result is highly significant because it demonstrates, in an elementary case, the equivalence between 
the system of axioms proposed in this study and the standard quantum mechanics.

\section{$N$-bit systems}\label{sec:nstatesyst}

The purpose of this section is to provide the basis for the model of a system with two
canonically conjugate variables that obeys the  axioms \ref{sec:axioms}---above all, the requirement that the amount of information we can obtain about the system is limited. 

There are some strong but reasonable assumptions that are also required to the model
proposed in this paper: 
\begin{itemize}
    \item Physical quantities (like canonical variables) can be approximated by discrete variables
    that can take a finite number of values. 
    \item The number of values a physical quantity can take is assumed to be a power of 2. 
    For example, we assume that the position variable $q$ can take $N_q = 2^{n_q}$ values.
      In the $n_q \rightarrow \infty$ limit, $q$ will approximate its corresponding continuous
physical quantity.
\end{itemize}

In the first part of this section, we will not worry about how many locally independent observables
are in the model and, consequently, about the dimensionality of the state space.
We will discuss this topic in section \ref{sec:phstatevscondprob}

\mysubsubsection{Notation and conventions}

The binary representation of a number $x$ is indicated by the square brackets: 
$x = [b_1 ... b_n]$, where $b_1$ is the most significant bit and  
$b_n$ is the least significant one.

We will also introduce some notation to indicate that some bits of two numbers are equal:
$x \lsbeq [ b_1 \dots b_n]$  indicates that the least significant bits of $x$, in its binary 
representation, take the values $b_1 \dots b_n$. Likewise, the notation $x \msbeq [ b_1 \dots b_n]$ 
indicates that the most significant bits of $x$ take the values $b_1 \dots b_n$.
Furthermore,  $x \stackrel{i}{=}  [ b_i ]$ indicates that the $i$-th bit of $x$ is $ b_i$,
and $x \stackrel{l\dots l'}{=}  [ b_l \dots b_{l'}]$ indicates that the bits that range from the 
$l$-th to the $l'$-th take the values $ b_l \dots b_{l'}$.
If the symbols in the square brackets explicitly refer to the bit position, we can also omit
the range $l\dots l'$ over the $=$ sign. For instance, we can write:
$x \stackrel{...}{=}  [ b_l \dots b_{l'}]$. 

We call this kind of relations {\it{digit-subset}} equality relations.
Sometimes they will be used to define sets; for example $\phi_{q, x} = \{ q \lsbeq [x_1 ... x_l]\}$
is the set of all the values of $q$ whose least significant bits are  $x_1 ... x_l$.

In this section, probability distributions will often be expressed in terms of the
bit-based notation introduced above. For example, the notation $\rho(q  \lsbeq [x_1 ... x_l])$
denotes the probability that the least significant bits of $q$ take the values $x_1 ... x_l$.
In the sections that follow, we will make extensive use of conditional probabilities that 
will be often expressed in terms of bit-based relations. Thus, for instance, the conditional
probability of the $l$-th bit of $q$ with respect to the rightmost ones
(less significant than the $l$-th) is
\begin{equation} \nonumber
\rho_q (q \stackrel{l}{=} [k] | q \lsbeq  [x_{l+1} ... x_n]).
\end{equation}

\subsection{Partitions of the phase space}\label{sec:setpartitions}

Let us consider a generic observable $o$ defined on a set $\Omega$, which can take $N = 2^{n}$ values.
We can define a partition on the set $\Omega$, namely, a family of $N_{\mathcal{A}}$
non-intersecting sets $\mathcal{A} = \{ A_1, ... A_{N_{\mathcal{A}}} \}$
whose union is $\Omega$.  The largest (highest resolution) partition we can define
is made by $N$ single-element sets.
In this paper, we consider only observables defined on sets whose cardinality
is a power of 2 and partitions that contain a number of sets that is a power of 2.

In general, a partition can be expressed as the {\textit{product}}
of lower-resolution partitions, where by product of two partitions, we mean the family of sets made
by all non-empty intersections of the sets taken from the two partitions. We will refer to these
lower-resolution partitions as {\textit{subpartitions}}.

For example, if we consider the partition 
$\mathcal{A} = \{ A_1, ... A_{N_{\mathcal{A}}} \}$, we can say that each set $A_i$ can be expressed
as an intersection of larger sets: by denoting by  $a_i^l$ these larger sets, we can write each set
of $\mathcal{A}$ as
\begin{equation} \label{eq:gendecomp}
 A_{\mu} = a_{\mu_1}^1 \cap a_{\mu_2}^2 \cap ... \cap a_{\mu_m}^m,
\end{equation}
\noindent where the tuple of indices $\mu_j$ depends on the index $i$ of the set $A_i$.

As an example, let us consider the following partition: we divide the set 
$\Omega$ into $N_{\mathcal{A}} = 2^{n_{\mathcal{A}}}$ adjacent sets $A_{\mu}$, which
have the same cardinality $N/N_{\mathcal{A}}$. With the help
of the binary representation of its elements, $A_{\mu}$ can be expressed as: 
\begin{equation} \label{eq:partitionsetexample}
  A_{\mu} = \{ o \msbeq [\mu_1 ... \mu_{n_{\mathcal{A}}}] \},
\end{equation}
\noindent where $[\mu_1 ... \mu_{n_{\mathcal{A}}}]$ is the binary representation of the
index $\mu$ of $A_{\mu}$.
The first set $A_0$ of $\mathcal{A}$ contains the elements that range
from $[0...0] = 0$ to $[0 ...0 \underbrace{1 ... 1}_{n-n_{\mathcal{A}}}] = N/N_{\mathcal{A}} -1$. 

We can also define the smallest (most coarse) partitions of $\Omega$, which are made just by two
elements: we divide the set $\Omega$ into the subsets $a_0^l$ and $a_1^l$, whose elements have their
$l$-th bit equal to 0 and 1, respectively.  We indicate as $\mathdutchcal{a}^l = ( a_0^l,  a_1^l )$
this binary
partition and its sets can be written in the form 
$a_0^l = \{ o \stackrel{l}{=} [0] \}$ and $a_1^l = \{ o \stackrel{l}{=} [1] \}$.
A set $A_{\mu}$ can be expressed as an intersection of $n_{\mathcal{A}}$ of these sets:
\begin{equation} \label{eq:binsetpartition}
  A_{\mu} =  \bigcap_{l=1}^{n_{\mathcal{A}}}  a_{\mu_l}^l,
\end{equation}
\noindent being $[\mu_1 ... \mu_{n_{\mathcal{A}}}]$ the binary representation of $\mu$.
Figure~\ref{partitions4} shows an elementary example of a product of partitions for $N = 4$.
Although we have decomposed a partition whose sets have the form \eqref{eq:partitionsetexample}, 
the same kind of decomposition can be applied to more complex sets, defined
through any digit-subset equality relation. For example, we can combine set partitions that
refer to two or more observables.

\begin{figure}[!ht]
\centering
\includegraphics[scale=0.5]{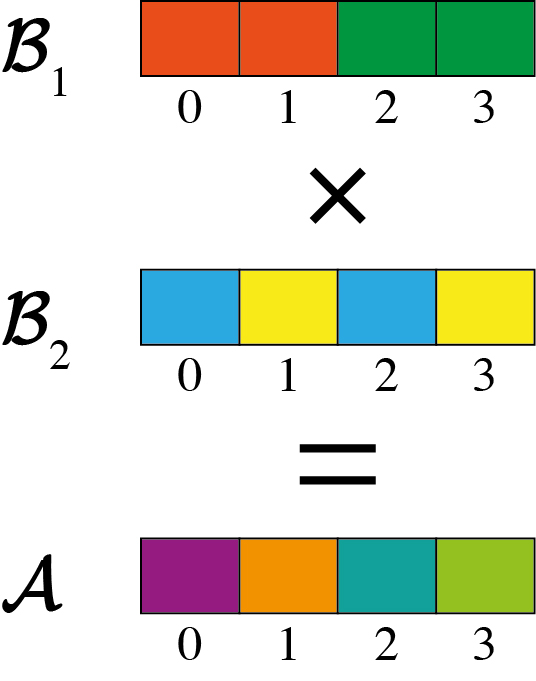}
\caption
{
The four-set partition $\mathcal{A}$ can be expressed in terms of the 
product of two coarser partitions $\mathcal{B}_1$ and 
$\mathcal{B}_2$. The rectangles represent the finest grain sets $A_0, A_1, A_2, A_3$.
Different colors mean different sets of the three partitions $\mathcal{A}$,  $\mathcal{B}_1$
and  $\mathcal{B}_2$.
} 
\label{partitions4}
\end{figure}

\mysubsubsection{Conditional probabilities with respect to set partitions}\label{sec:condprobpart}

The decomposition  of a partition as a product of coarser partitions allows us to define the
conditional probability with respect to a coarser family of sets.
Suppose that each set $A_{\mu}$ of a partition $\mathcal{A}$ can be decomposed into the intersections 
of the sets of two coarser partitions $\mathcal{B}_1$ and 
$\mathcal{B}_2$: $A_i =  B_{i_1 1} \cap B_{i_2 2}$. 
According to the definition of conditional probability, we can write the probability of a set $A_i$
of the partition $\mathcal{A}$ as 
\begin{equation} \label{eq:twopartitiondec}
\rho_{\mathcal{A}}(A_i) = \rho_{\mathcal{B}_1|\mathcal{B}_2}(B_{i_1 1} | B_{i_2 2}) 
                        \rho_{\mathcal{B}_2}(B_{i_2 2}).
\end{equation}

\noindent In this study, we follow a divide-and-conquer approach, which is based on dividing a
system into simple binary sets. These sets belong to a family of binary partitions that forms a
binary tree (hereafter, the terms ``family'' and ``tree'' are used interchangeably to refer
to this structure).

In developing our model, we will impose certain conditions on the partitioning of the space of some observables. By using the binary partitions introduced in the previous paragraphs, which reflect the binary representation of the value of an 
observable, we would make a strong hypothesis on how the system of conditions is structured. 
Instead, we start from the most general family of binary partitions. We split the space 
$\Omega$ into two arbitrary subsets, which are not necessarily defined as
$a_{\mu}^l = \{ o \stackrel{l}{=} [\mu] \}$, and then we split again. After splitting the
sets $l$ times, we obtain a family of increasing resolution partitions 
$\mathcal{A}^l = \{ A_0^l, ... A_{2^l-1}^l \}$. 
Again, we can define the binary partitions $\mathdutchcal{a}^l = ( a_0^l,  a_1^l )$ in order to 
express the sets $A_k^l$ in a form similar to that of Eq.~\eqref{eq:gendecomp}:
\begin{equation} \label{eq:binsetpartitionarb}
  A_k^l =  \bigcap_{i=1}^{l}  a_{k_{i}}^i,
\end{equation}
\noindent where $k = 0 ... 2^{l-1}$ and each of the indices $k_1 ... k_l$ can the values 0,1.
Note that each value of the tuple $k_1 ... k_l$ must correspond to a value of $k$,
for instance $k = [k_1 ... k_l]$; nonetheless,  we must keep in mind that at this stage the partitions 
$\mathcal{A}_l$ and $\mathdutchcal{a}_{k_l, l}$ are arbitrary and do not reflect any particular
order in the space $\Omega$.
The last expression can also be put in the recursive form
\begin{equation} \label{eq:binsetpartitionarbrec}
  A_k^l =  a_{k_{l}}^l \cap A_{[k_1...k_{l-1}]}^{l-1}
\end{equation}
\noindent This decomposition allows  us to write the conditional 
probability on a binary partition $\mathdutchcal{a}^{l}$ with respect to the lower
resolution partition $\mathcal{A}^{l-1}$, in a form that resembles Eq.  
 \eqref{eq:twopartitiondec}:
\begin{multline}
\label{eq:partitionedrho}
\rho_{\mathcal{A}^l}(A_k^l) = 
\rho_{\mathdutchcal{a}^l|\mathcal{A}^{l-1}} \left(  a_{k_{l}}^l \bigg| 
A_{[k_1...k_{l-1}]}^{l-1} \right)\twocolbreak \cdot 
\rho_{\mathcal{A}^{l-1}}\left(  A_{[k_1...k_{l-1}]}^{l-1} \right). 
\end{multline}
\noindent We can decompose the partitions  $\mathcal{A}^{l-1}$ 
into a product of coarser partitions and reiterate the process until  
we obtain the lowest resolution binary partition. By keeping in mind
that, in the last step of the iteration (i.e., $l = 1$), equation \eqref{eq:binsetpartitionarb} reduces 
to the form $A_k^l = A_{k_l}^l$, the last expression
can be expanded into the following form:
\begin{equation}\label{eq:partitionedrhofull}
\rho_{\mathcal{A}^l}(A_k^l) =  \prod_{i=1}^{l-1}
\rho_{\mathdutchcal{a}^i|\mathcal{A}^{i-1}} \left(  a_{k_{i}}^i \bigg| 
A_{[k_1...k_{i-1}]}^{i-1} \right). 
\end{equation}
\noindent The separation \eqref{eq:gendecomp} of the set of an observable's values into a
family of increasing resolution partitions can be viewed as the decomposition of the predicate ``the observable $x$ takes the value $v$'' into a
family of elementary predicates;
clearly, each of these predicates carries less information than the starting predicate.

\noindent Each elementary predicate can be viewed as the outcome of a coarser measurement
of an observable, whose probability distribution is given by one of the conditional probabilities
occurring in the factorization \eqref{eq:partitionedrho} of $\rho$.
These conditional probabilities will be referred to as
{\it{components}} of the highest resolution probability $\rho_{\mathcal{A}}$.

\mysubsubsection{Generalization of free varying distributions}
\label{sec:freevardistgen}
The idea of considering the partitions of the observables' values domains as sets of predicates
suggests a way to count the amount of information carried by the observables:
each binary subpartition counts as a bit of information. In other words, for a partition
$\mathcal{A}$, the amount of information depends on its cardinality through an expression like
$I = \log_2 \Card (\mathcal{A})$.

If the values of a set of complete observables can be fully determined by their measurement,
we can say that the amount of information calculated from the cardinality of their domains'
partitions fixes the amount of information carried by the whole system.
In section \ref{sec:freevardist} we also stated that the distributions of these complete 
observables are free-varying, because they allow one-point distributions.

Now suppose we have the following scenario: we have many choices to form a complete set of observables. We identify these possible sets by $\mathcal{C}_1, \mathcal{C}_2$, etc.;
we take some observables from one set, another observable from another set an so on---we
just require all these observables to be independent of each other, i.e., we require that the
measurement of an observable does not provide any information about another.
We have therefore generated a new set of observables; if the amount of information corresponding to the partitions generated by this set of observables does not exceed the limit we
imposed to the system, we can can assert that this new set of observables is free varying because it
has been built from free varying observables. 

The point is that, when we deal with partitions and families of subpartitions, the concepts of
observable and of set of observables are not fully distinguishable: for instance, if we have a binary partition, the information about the set of the partition in which the value of an observable falls is itself another observable.
As a consequence, the above scenario applies also to the case in which we consider a complete
observable as complete set of two-value observables; thus we can ``build'' new (complete and free
varying) observables by mixing other complete observables.

In the following sections, this way of building complete sets of observables, which have the propery
of being free-varying, will be the foundation for a divide-and-conquer approach to study some
properties of the state space.

\mysubsubsection{The state space constraint $\chi$}

Suppose we have a system made by the set of observables $o_1 ... o_D$ and a collection of partitions
$\mathcal{A}^{o_i}$ of the domains of these observables. The state space is, by definition, the set
of physically admissible distributions on such partitions, and we will indicate the constraint that
identifies this set by the symbol $\chi$.
The constraint $\chi$ can be viewed as a boolean-valued function depending on the probability 
distributions, $\chi(\rho_{\mathcal{A}^{o_1}}, ..., \rho_{\mathcal{A}^{o_D}})$,
or on their parameters $\chi(\boldsymbol{\theta}^{o_1}, ..., \boldsymbol{\theta}^{o_D})$.
Its value is {\textit{true}} only if the tuple
$\rho_{\mathcal{A}^{o_1}}, ..., \rho_{\mathcal{A}^{o_D}}$
corresponds to a physically ammissible state.

At this stage, we should consider the most general system of conditions $\chi$ that 
bounds all the components of the probability functions of phase space partitions.
Since, by Eq.~\eqref{eq:partitionedrho}, the probability function on
a partition can be expressed in terms of probability functions on lower-resolution
partitions, the function $\chi$ can be expressed in terms of
elementary probability components. 

In the following sections, we will discover how to rearrange the dependence of $\chi$ on these
components in a way that reflects the organization of the phase space into families of
nested partitions, such as those defined by \eqref{eq:binsetpartitionarb}.

\subsection{Binding two partitions} \label{sec:partitionsbinding}

\mysubsubsection{Finest common partitions}
Let us consider two partitions $\mathcal{A} = \{A_1, ... A_{N_{\mathcal{A}}} \}$ and
$\mathcal{A}' = \{A'_1, ... A'_{N_{\mathcal{A}'}} \}$, which
can be decomposed into products of coarser partitions in the form of expression \eqref{eq:gendecomp}:
$ A_i = a_{i_1}^1 \cap a_{i_2}^2 \cap ... \cap a_{i_m}^m$ and
$ A'_{\mu} = {a'}_{{\mu}'_1}^1 \cap a_{{\mu}'_2}^2 \cap ... \cap a_{{\mu}'_{m'}}^{m'}$.
We are still considering the simpler case where the sets $a_{...}$ and $a'_{...}$ 
form binary partitions like the ones introduced in section \ref{sec:setpartitions},
which are defined as $\mathdutchcal{a}^j = ( a_{0}^j,  a_{1}^j  )$
and ${\mathdutchcal{a}'}^j =  ( {a'}_{0}^j,  {a'}_{1}^j  )$. 

Now suppose that some of the partitions in the family $\mathdutchcal{a}^j$ are actually the same as in
the family ${\mathdutchcal{a}'}_j$. In other words, there are some values $j_1 ... j_{m_c}$ and 
$ j'_1 ... j'_{m_c}$ of the indices of $\mathdutchcal{a}^j$ and ${\mathdutchcal{a}'}^{j'}$, such that 
$\mathdutchcal{a}^{j_1} = {\mathdutchcal{a}'}_{j'_1}, ... , \mathdutchcal{a}_{j_{m_c}} =
\mathdutchcal{a}'_{j'_{m_c}}$.
We identify these common partitions as $\mathdutchcal{c}^1 = \mathdutchcal{a}^{j_1} =
{\mathdutchcal{a}'}^{j'_1}, ... , \mathdutchcal{c}^{m_c} = \mathdutchcal{a}^{j_{m_c}} =
{\mathdutchcal{a}'}^{j'_{m_c}}$.
We can denote by $\mathcal{C}$ the partition generated by the product of the binary partitions
$\mathdutchcal{c}^j$:
\begin{equation} \nonumber
   \mathcal{C} = \mathlarger{\mathlarger{\times}}_j  \mathdutchcal{c}^j = 
   \left\{  \bigcap_{k=1}^{m_c}  a_{\mu_k}^{j_k},\, \forall \mu_k = 0,1 \right\}.
\end{equation}

\noindent In other words, after decomposing  $\mathcal{A}$ and $\mathcal{A}'$ into binary
partitions, we take the binary partitions that $\mathcal{A}$ and $\mathcal{A}'$, have in common.
The new family $\mathcal{C}$ obtained from the intersections of all possible choices of such
sets is defined as the {\textit{finest common partition}} of $\mathcal{A}$ an $\mathcal{A}'$.

This concept is referred to in the literature as the  {\textit{meet}},
a term used in the framework of order-theoretic lattice structures;
actually, the families of nested partitions can be seen as partially ordered sets. 
In this context, the meet denotes the greatest lower bound of two partitions,
while the {\textit{join}} denotes the least upper bound.
Fig.~\ref{finestcommonpart} shows a visual representation of 
how a common partition is identified.

\begin{figure}[!ht]
\centering
\includegraphics[scale=0.47]{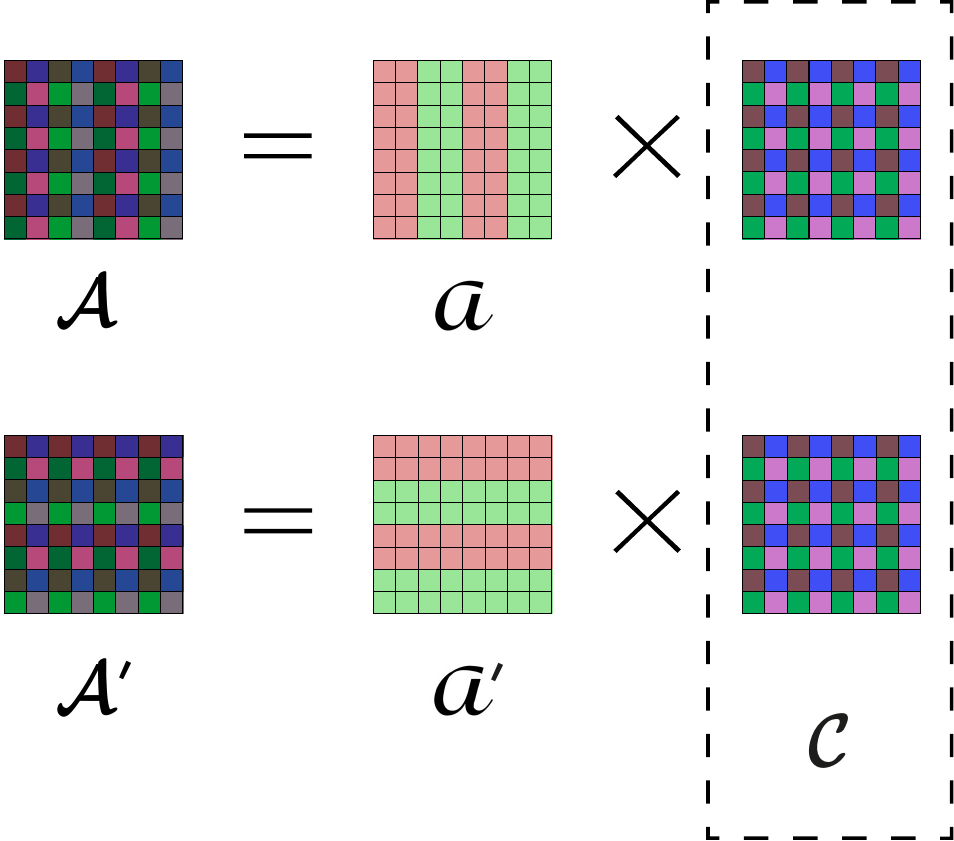}
\caption{ 
A visual representation of how the finest common partition between two 
partitions is defined. The sets of a decomposition are indicated with squares of different colors, 
and the same color indicates the same set.
The partitions $\mathcal{A}$ and $\mathcal{A}'$ can be decomposed into two lower-resolution
(higher number of sets) sub-partitions. We discover that, in this decomposition, 
$\mathcal{A}$ and $\mathcal{A}'$ share the same sub-partitions $\mathcal{C}$.
}
\label{finestcommonpart}
\end{figure}

Why is the finest common partition useful to us? 
Suppose we have a system of one or more observables, with partitions $\mathcal{A}$
and $\mathcal{A}'$ defined on their respective domains. Both $\mathcal{A}$ and 
$\mathcal{A}'$ can be decomposed into families of lower-resolution (for instance, binary) partitions.
If we define probability distributions on these partitions,
it is easy to show that the probability on a lower-resolution partition can always
be derived from the probability on the higher-resolution one. 
Consequently, if $\mathcal{A}$ and $\mathcal{A}'$ share a common subpartition 
$\mathcal{C}$, the probability on $\mathcal{C}$ can be derived from the probabilities on
$\mathcal{A}$ and $\mathcal{A}'$.
In other words, we are considering a case where the probability distributions on 
$\mathcal{A}$ and $\mathcal{A}'$ are not fully independent.

Our strategy to simplify the problem of identifying the state space is to decompose the probabilities
on $\mathcal{A}$ and $\mathcal{A}'$ into components, determining  which are mutually dependent and 
which are independent (and, therefore, related to $\mathcal{C}$).

This can be expressed in terms of conditional probabilities, where the decomposition into 
probability components takes the form outlined by equation \eqref{eq:partitionedrho}. We denote by
$\rho_{\mathcal{A}}$ and $\rho_{\mathcal{A}'}$ the probability distributions on
$\mathcal{A}$ and $\mathcal{A}'$ and by $\rho_{\mathdutchcal{a}_j}$ and  
$\rho_{\mathdutchcal{a}'_j}$ their probability components (conditional probabilities)
on the binary partitions. The probability components on the common binary partitions 
$\mathdutchcal{c}_j$ must be the same, and can be identified as $\rho_{\mathdutchcal{c}_j}$. 
The largest number of common probability components we can pick is given by the finest common
partition.

According to what stated in subsection \ref{sec:freevardistgen}, there is a correspondence between partitioning of the domain of observables and the identification of sets of complete observables: each binary subpartition counts as a bit of information in identifying the value of an observable. If we take the partition $\mathcal{A}$ to be sufficiently small (i.e., coarse) so that the corresponding observables do not exceed the maximum number of locally independent observables, then we can say that such a set of observables is free to vary. 
We have denoted by  $\chi(\rho_{\mathcal{A}}, \rho_{\mathcal{A}'})$ the state space constraint that bounds $\rho_{\mathcal{A}'}$ and $\rho_{\mathcal{A}}$.
If we require that $\rho_{\mathcal{A}}$ can be free to vary, this constraint must involve only the
components of $ \rho_{\mathcal{A}'}$ that are actually independent of $\rho_{\mathcal{A}}$.
Consequently, the condition $\chi(\rho_{\mathcal{A}}, \rho_{\mathcal{A}'})$ will not constrain the
probability distribution on the finest common partition ${\mathcal{C}}$. If we factorize out the
finest common partition-dependent component from the probabilities on
$\mathcal{A}$ and $\mathcal{A}'$, we can express this concept in the form:
\begin{subequations}\label{eq:partbounds}
\begin{align}
 \rho_{\mathcal{A}}(A_{\mu})   & = 
\rho_{\mathcal{A}|{\mathcal{C}}}( A_{\mu} \big | C_{\mu''} ) \, 
\rho_{\mathcal{C}}( C_{\mu''} )
 \label{eq:partbounds1} \\
 \rho_{\mathcal{A}'}(A'_{\mu'}) &  =
\rho_{\mathcal{A}'|{\mathcal{C}}}( A'_{\mu'} \big | C_{\mu''} ) \, 
\rho_{\mathcal{C}}( C_{\mu''} )
\label{eq:partbounds2}  \\
\chi 
(\rho_{\mathcal{A}|{\mathcal{C}}} & ( A_{\mu} \big | C_{\mu''} ),
\rho_{\mathcal{A}'|{\mathcal{C}}}( A'_{\mu'} \big | C_{\mu''} ) 
).
\label{eq:partbounds3}
\end{align}
\end{subequations}
\noindent where $A_{\mu}$ and $A'_{\mu'}$ are, respectively, sets of the partitions 
$\mathcal{A}$ and $\mathcal{A}'$, and $C_{\mu''}$ is a set
from the finest common partition $\mathcal{C}$ that contains both $A_{\mu}$ and $A'_{\mu'}$.

If we have a family of 
partitions, in some cases we can identify a sequence of partitions that can 
be connected by a chain of conditions in the form \eqref{eq:partbounds}.
The advantage of this type of condition is that it may involve fewer probability components than
would be required if conditions were imposed directly on  partitions $\mathcal{A}$ and $\mathcal{A}'$, and is therefore expected to take a simpler form.

\subsection{State Space conditions and symmetries} \label{sec:statespacesymm}

A central concept of this study is that, in order to describe the behavior of a physical system, we
just have to identify which states are physically valid.
As shown in the previous subsection, we can define the  probability distributions on the partitions of
the space of the observables,
and then require some conditions on these probability distributions.
Such conditions will be identified with the help of the axioms introduced in section
\eqref{ax:bitprecisionequiv}. 

The requirement that a physical system obeys certain symmetries is one of the hypotheses that can help us
determine the conditions $\chi$.
The invariance of the theory under a given symmetry can be put in  the form
$\chi(\rho'_1 ... \rho'_n) = \chi(\rho_1 ... \rho_n)$, where $\rho'_i $ are the transformed
probabilities, and the conditions  $\chi(\rho_1 ... \rho_n)$ are treated as boolean-valued  functions.
A fundamental tool for building a theory that is invariant under a given transformation
comes from the following definition:

\myparagraph{\textbf{Invariant partitions}}
A partition is said {\textit{invariant}} with respect to a transformation if
each of its sets is transformed into itself or into another set of the same partition.

\myparagraph{} In what follows we will show how this definition applies to the system of binary partitions introduced 
in the previous sections.
   
Let us start from a family of binary partitions $\mathdutchcal{a}^1 ... \mathdutchcal{a}^n$, whose
sets generate the family of increasing resolution
partitions $\mathcal{A}^1 ... \mathcal{A}^n$ (through the formula \eqref{eq:binsetpartition}).
Suppose we have a generic transformation $T$ of the system and, for any $i$, the partitions
$\mathcal{A}^i$ are invariant under $T$; namely, any set of $\mathcal{A}_{\mu}^i$
is transformed by $T$ into 
another set of the same partition: $A_{\mu}^i \stackrel{T}{\rightarrow} A_{\mu'}^i$.
This hypothesis is supposed to hold for any $i$; for example, we identify two invariant partitions
$\mathcal{A}^i$ and $\mathcal{A}^{i+1}$, which differ for the degree of resolution on the space of
an observable.
The invariance of a partition allows us to say that, both $A_{\mu}^i$ and the transformed 
set  $A_{\mu'}^i$ belong to $\mathcal{A}^i$. This implies that both can be decomposed into higher-resolution subsets that belong to the same partition $\mathcal{A}^{i+1}$:
$A_{\mu}^i = A_{0 \mu}^{i+1} \cup A_{1 \mu}^{i+1}$ and
$A_{\mu'}^i = A_{0 \mu'}^{i+1} \cup A_{1 \mu'}^{i+1}$.
Note that, in these formulas and in some of the formulas in this section, we  
use notations like $0\mu$ or $1\mu$, which must be read as the ``the number obtained by putting 0 (or 1) 
in front of the binary representation of $\mu$''. Hence, we can write:
\begin{equation}\label{eq:invpartmultires}
A_{0 \mu'}^{i+1} \cup A_{1 \mu'}^{i+1} = A_{\mu'}^i = T A_{\mu}^i = T A_{0 \mu}^{i+1} \cup T A_{1 \mu}^{i+1}.
\end{equation}
\noindent Since, as stated above, the higher-resolution decomposition $\mathcal{A}^{i+1}$ is invariant under $T$ as well,
both $T A_{0 \mu}^{i+1} $ and $ T A_{1 \mu}^{i+1}$ are expected to be sets of the partition $\mathcal{A}^{i+1}$.
This implies that a set $A_{\nu}^{i+1}$ is transformed by $T$ into itself or into another set
of  $\mathcal{A}^{i+1}$. Namely, in the last equation, we have that either $T A_{0 \mu}^{i+1} = A_{0 \mu'}^{i+1}$ and $T A_{1 \mu}^{i+1} = A_{1 \mu' }^{i+1}$, or
 $T A_{0 \mu}^{i+1} = A_{1 \mu'}^{i+1} $ and $ T A_{1 \mu}^{i+1} = A_{0 \mu'}^{i+1}$. 
 
If we group $A_{0 \mu}^{i+1}$ and $A_{1 \mu}^{i+1}$ in an ordered pair, we can write
$(A_{0 \mu}^{i+1}, A_{1 \mu}^{i+1}) \stackrel{T}{\rightarrow}
 {\large{\pi}}_T (A_{0 \mu'}^{i+1}, A_{1 \mu'}^{i+1})$, where  
${\large{\pi}}_T$ is an optional permutation of the ordered pair's elements. 

\mysubsubsection{State Space conditions on invariant partitions}
The basic idea is to break the state space condition into a family of weaker
conditions, each depending on a subset of a set taken from a partition
$\mathcal{A}$, which is invariant under a transformation $T$.
When we apply $T$, the sets of $\mathcal{A}$ are not broken into
subsets, which implies that the conditions $\chi$ on such sets are not mixed.
Rather, the transformation $T$ leads to an exchange among different $\chi$s
on different sets of $\mathcal{A}$. This makes our goal of building a theory invariant
under some symmetries easier.

We will start from the scheme outlined by Eqs.\eqref{eq:partbounds}, in which we consider 
only some components of the state, represented by conditional probabilities of partitions 
related to two observables. In this scheme, the conditions $\chi$ are mainly related to 
such conditional probabilities.

In the framework of a nested binary partitions scheme, we can express the state
space conditions in a form that emphasizes the dependencies on the partitions related to a
single observable (the full two-observable dependence will be reintroduced later):
\begin{multline}
\label{eq:chiinvariancegen}
\chi \left( \rho(A_{0 0...0}^{i+1}| A_{0...0}^i), \rho(A_{1 0...0}^{i+1}| A_{0...0}^i), \right. \twocolbreak
 \left. ... , \rho(A_{1 1...1}^{i+1}| A_{1...1}^i); \boldsymbol{\phi}(A ...) \right),
\end{multline}
\noindent where the sets $A_{\nu}^{i+1}$ refer to the domain of one observable and belong
to the $i+1$-th level of binary partitions;
the term $\boldsymbol{\phi}(A ...)$ encloses the dependencies on other observables or on
other partition levels and, in general, can be multi-dimensional.
As anticipated above, we suppose that it is possible to rewrite the 
state space condition \eqref{eq:chiinvariancegen} as a family of 
distinct $\chi$ conditions, each operating on a different set of a partition:
\begin{equation}
\label{eq:chidecoupled}
 \bigwedge_{\mu} \chi^{\mu} \left( \rho(A_{0 \mu}^{i+1}| A_{\mu}^i),
 \rho(A_{1 \mu}^{i+1}| A_{\mu}^i); \boldsymbol{\phi}^{\mu}(A ...) \right),
\end{equation}
\noindent where the symbol $\wedge$ indicates the logical conjunction between different 
conditions $\chi$ (we assume that they are boolean-valued functions). 
This way of writing the state space condition may seem
somewhat tricky because it can be the result of a different grouping 
of the parameters: the probabilities $\rho(A_{0 \mu}| A_{\mu}), \rho(A_{1 \mu}| A_{\mu})$ 
first, and then the other probabilities, embedded within the generic function
$\phi^{\mu}$. But, as we will see in the last part of this study, 
we can identify a decomposition of the state space condition
into simpler $\chi$ conditions, each depending on a few components
of the probability distributions.
We will still have a dependence on other components that, however, is weak and manageable.

It should be noted that, in equations  \eqref{eq:chiinvariancegen} and \eqref{eq:chidecoupled},
we are free to express the conditional probabilities either in terms of the
intersection $A_{k \mu}^{i+1} = a_{k \mu}^{i+1} \cap A_{\mu}^i$ or in terms
of the binary sets $a_{k \mu}^{i+1}$ because, according to the definition of
conditional probability, for any two sets $A$ and $B$, we can formally write
$\rho(A | B) = \rho(A \cap B | B)$.
Therefore we can make $\chi$ depend either on the probabilities
$\rho(A_{k \mu}^{i+1}| A_{\mu}^i)$ or $\rho(a_{k \mu}^{i+1}| A_{\mu}^i)$ .
Furthermore, in expression \eqref{eq:chidecoupled}, we can group the first two $\rho$'s to form a
two-component probability vector $\boldsymbol{\rho}(A_{\bullet \mu}^{i+1} |A_{\mu}^i) = 
(\rho(A_{0 \mu}^{i+1}| A_{\mu}^i), \rho(A_{1 \mu}^{i+1} | A_{\mu}^i))$.
Thus, the expression \eqref{eq:chidecoupled} can be rewritten in the shorter form:
\begin{equation}\label{eq:chielembin}
\bigwedge_{\mu} \chi^{\mu} \big( \boldsymbol{\rho}(A_{\bullet \mu}^{i+1} |A_{\mu}^i) ;
\boldsymbol{\phi}^{\mu} \big).
\end{equation}
\noindent Now we are going to see what happens to the above expression of
the state space conditions, when we apply a transformation $T$ 
on the sets $A_{k \mu}^{i+1}$. Since we suppose the $i$th-level partition $\mathcal{A}^{i}$
to be invariant under $T$, we can assert that a set  $A_{\mu}^{i}$ of 
$\mathcal{A}^{i}$ is transformed into another set $A_{\mu'}^{i}$,
and use the same symbol $T$ for the operator that transforms 
the indices of the sets: $\mu' = T \mu$. Thus, by applying $T$ to \eqref{eq:chielembin},
we obtain:
\begin{equation}\label{eq:chielembintransf}
\bigwedge_{\mu} \chi^{\mu}
\big( {\large{\pi}}_T  \boldsymbol{\rho}(A_{\bullet \mu'}^{i+1} |A_{\mu'}^i) ;
\boldsymbol{\phi}^{\mu} \big),
\end{equation}
\noindent where the operator ${\large{\pi}}_T$ takes into account the fact that the
transformation $T$ may produce an exchange between the higher-resolution
sets  $A_{0 \mu}^{i+1}$ and $A_{1 \mu}^{i+1}$ and, consequently, 
between the two components of $\boldsymbol{\rho}$. This is an {\it{optional}} exchange operator because transformation $T$ may or may not involve an exchange between sets.
In the last expression, the index $\mu$ ranges within all its allowed values. 
Therefore, we can replace $\mu$ by $T^{-1} \mu'$ and evaluate the conjunction
$\wedge$ for all the $\mu'$s:
\begin{equation}\label{eq:chielembintransf2}
\bigwedge_{\mu'} \chi^{T^{-1} \mu'} 
\big( {\large{\pi}}_T  \boldsymbol{\rho}(A_{\bullet \mu'}^{i+1} |A_{\mu'}^i) ;
\boldsymbol{\phi}^{T^{-1} \mu'} \big).
\end{equation}
\noindent We can now introduce the hypothesis that the theory is invariant under $T$ by
requiring the state space condition \eqref{eq:chielembin} and its transformed
version \eqref{eq:chielembintransf2} to be equal. In order to fulfill this equality,
it is sufficient to require that the terms with equal index $\mu$ of both expressions
are equal. Therefore, up to a renaming of the index $\mu' \rightarrow \mu$, we can express 
the $T$-invariance condition as
\begin{multline}\label{eq:chiinvariance}
\chi^{\mu} \big( \boldsymbol{\rho}(A_{\bullet \mu}^{i+1} |A_{\mu}^i) ;
\boldsymbol{\phi}^{\mu} \big) \\ = \chi^{T^{-1} \mu}
\big( {\large{\pi}}_T  \boldsymbol{\rho}(A_{\bullet \mu}^{i+1} |A_{\mu}^i) ;
\boldsymbol{\phi}^{T^{-1}\mu} \big).
\end{multline}
It is worth noting that the choice of a particular family of partition does not 
imply any hypothesis on the expression of the physical state conditions $\chi$.
Consequently, when we choose a partition that is invariant under $T$, we are not
implicitly requiring the theory to be $T$-invariant.
But the invariance of the family of partitions  $\mathcal{A}^1 ... \mathcal{A}^n$ under $T$
allows us to write the invariance of the theory in the form \eqref{eq:chiinvariance}, which involves each pair of sets $(A_{0 \mu}^{i+1}, A_{1 \mu}^{i+1})$ independently.

\mysubsubsection{Translational symmetry} \label{sec:translsymm}

In a discrete one-dimensional space, the group of translations is generated by
the increment operator $S : x \rightarrow x+1$. In our case, we have a finite size 
space and we can simplify the model by assuming a periodicity of this space.
Under this hypothesis, the translation behaves as a circular shift:
\begin{equation} \label{th:circshift}
S: x \rightarrow x+1 \mod{N},
\end{equation}
\noindent where $N$ is the size of the space.
If we assume that $N = 2^n$, all partitions with equipotent sets have a cardinality equal to a power of two, and
we can demonstrate the following theorem: 

\begin{theorem}\label{th:lsbgen}
Let $X$ be the set of the $n$-bit integers and let $S: x \rightarrow x + 1 \mod{N}$ denote
the increment (circular shift) of $x \in X$.
Then, a  partition $\mathcal{A}$ with cardinality $2^c$ is invariant under the
transformation $S$ if and only if its sets are $A_{\mu} = \{ x \stackrel{l...n}{=} \mu \}$,
being $l = n - c + 1$.
\end{theorem}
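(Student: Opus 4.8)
The plan is to recognise that an $S$-invariant partition is nothing but a partition whose associated equivalence relation is invariant under translation by $1$ on the cyclic group $X \cong \mathbb{Z}/2^n\mathbb{Z}$, and then to classify such relations using the subgroup structure of $\mathbb{Z}/2^n\mathbb{Z}$. The ``if'' direction is a direct check, while the ``only if'' direction is where the group structure does the work.

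For the ``if'' direction, I would take $A_\mu = \{x \stackrel{l\dots n}{=} \mu\}$ with $l = n-c+1$; this set collects exactly the integers whose $c$ least significant bits spell $\mu$, i.e. $A_\mu = \{x : x \equiv \mu \ (\mathrm{mod}\ 2^c)\}$. Since $2^c$ divides $N = 2^n$, incrementing $x$ by $1$ modulo $2^n$ sends a residue class mod $2^c$ to the next one, so $S$ maps $A_\mu$ onto $A_{(\mu+1)\ \mathrm{mod}\ 2^c}$. Each set of $\mathcal{A}$ is therefore carried onto another set of $\mathcal{A}$, which is precisely the definition of an invariant partition given above.

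For the ``only if'' direction, suppose $\mathcal{A}$ is $S$-invariant with $\Card(\mathcal{A}) = 2^c$. Write $x \sim x'$ when $x$ and $x'$ lie in the same block; invariance of the partition means $x \sim x' \iff x+1 \sim x'+1$ in $\mathbb{Z}/2^n\mathbb{Z}$, so whether two elements share a block depends only on their difference $x-x'$. Let $D$ be the block containing $0$. Using translation invariance, the reflexivity, symmetry and transitivity of $\sim$ become $0 \in D$, $D = -D$ and $D+D \subseteq D$, so $D$ is a subgroup of $\mathbb{Z}/2^n\mathbb{Z}$ and the blocks of $\mathcal{A}$ are exactly its cosets. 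The number of cosets is $\Card(\mathcal{A}) = 2^c$, whence $|D| = 2^{n-c}$ (so the blocks are automatically equipotent, as the remark preceding the theorem anticipates). Since a cyclic group of order $2^n$ possesses a unique subgroup of each order dividing $2^n$, we must have $D = 2^c\mathbb{Z}/2^n\mathbb{Z} = \{0, 2^c, 2\cdot 2^c, \dots\}$, i.e. the integers whose $c$ least significant bits vanish, $D = \{x \stackrel{l\dots n}{=} 0\}$ with $l = n-c+1$. Translating by $\mu$ then gives the block $A_\mu = \mu + D = \{x \stackrel{l\dots n}{=} \mu\}$, as claimed.

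I expect the main obstacle to be the last link in the chain: proving that $\mathbb{Z}/2^n\mathbb{Z}$ has a unique subgroup of order $2^{n-c}$ (which is what pins the partition down to the least-significant-bit form rather than some other coset structure), and then faithfully translating the subgroup/coset description back into the digit-subset notation with $l = n-c+1$. The verification that a translation-invariant equivalence relation yields a genuine subgroup --- the three equivalence-relation axioms giving the three group axioms --- is routine but should be stated explicitly, since it is the pivot of the whole argument.
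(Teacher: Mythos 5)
Your proof is correct, but it follows a genuinely different route from the paper's. The paper argues directly on the orbit structure of the $S$-action on blocks: it first shows no block satisfies $SA = A$, then that the $S$-orbit of any single block sweeps out the whole partition, deduces that the minimal period $m_0$ with $S^{m_0}A = A$ equals the number of blocks $2^c$, and finally uses a cardinality count ($|A| = 2^{n-c}$) to conclude that each block is exactly one orbit of $S^{2^c}$, i.e.\ an arithmetic progression of step $2^c$, which is precisely the digit-subset form. You instead recast invariance as translation-invariance of the induced equivalence relation, identify the block of $0$ as a subgroup $D \le \mathbb{Z}/2^n\mathbb{Z}$, recognize the blocks as its cosets, and invoke the uniqueness of the subgroup of order $2^{n-c}$ in a cyclic $2$-group. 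Your version is shorter and more structural: equipotence of the blocks comes for free from the coset description, and the crux is isolated in one standard fact about cyclic groups; the paper's version is more elementary, needing only counting and bijectivity of $S$, with no group-theoretic machinery. One step you should make explicit (the paper faces the same issue implicitly): the definition of invariance only says each block is carried into a block, and your biconditional $x \sim x' \iff x+1 \sim x'+1$ needs the reverse implication, i.e.\ that $S^{-1}$ also preserves the partition. This follows because $S$ is injective on the finite collection of blocks and hence permutes them, so it is routine---but it is the hinge on which the subgroup construction (and the paper's orbit argument) turns, and deserves a sentence.
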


The proof is presented in Appendix~\ref{sec:trasltheor}.
From this theorem, it follows that one can identify a family of $S$-invariant partitions with 
increasing resolution---the higher the value of $l$, the finer the resolution of the partition.
If we consider state space conditions $\chi$ in the form \eqref{eq:chiinvariancegen}
and \eqref{eq:chidecoupled}, we can observe that they deal with sets, denoted by 
$A^i_{\mu}$, belonging to partitions whose resolution depends on the index $i$. 
In order to have a translational invariance condition on the $\chi$'s in the simple form
\eqref{eq:chiinvariance}, which applies to decoupled $\chi$ conditions,
we can suppose that the sets  $A^i_{\mu}$ are grouped into families of partitions that
correspond to the families of translational-invariant partitions identified by the above theorem.

It should be noted that, in theorem \ref{th:lsbgen}, the requirement that the partitions have
cardinalities that are powers of 2 is justified by the assumptions made at the beginning of this
section---namely, that the number of values a physical quantity can take is a power of 2. 
The only way to divide the space into a family of partitions with increasing resolution is to require
that the cardinality of these partitions be a power of 2.

\mysubsubsection{The discrete phase space} \label{sec:discphspace}

As stated before, the purpose of this section is to study a one-dimensional system
with two canonically conjugate variables $p$ and $q$, defined on the 
spaces $\Omega_p$ and $\Omega_q$. We require these spaces to be discretized into 
$N_p$ and $N_q$ equally spaced intervals.
In order to simplify our model, we also assume that $N_p$ and $N_q$ are powers of 2:
$N_q = 2^{n_q}$ and $N_p = 2^{n_p}$. If this system were a (discretized) classical one-dimensional
system, then the number of bits required to define the state would be
$I = \log_2 N_q + \log_2 N_p = n_q + n_p$.
Of course, the real physical system is given by the limit
$N_q \rightarrow \infty, N_p \rightarrow \infty$,
and $\Omega_p$ and $\Omega_q$ are enclosed in sub-volumes of the real space.

To clarify how this simplification into smaller volumes and
discrete spaces works, we can assume the following: we start from larger physical spaces 
$\Omega_p^{\mathit{phys}}, \Omega_q^{\mathit{phys}}$ and from an indefinitely high resolution 
discretization whose intervals are  $\epsilon_p^{\mathit{phys}}$ and $\epsilon_q^{\mathit{phys}}$.
We denote by $V_p$, $V_q$, $V_p^{\mathit{phys}}$ and $V_q^{\mathit{phys}}$ the volumes of the
spaces $\Omega_p$, $\Omega_q$, $\Omega_p^{\mathit{phys}}$ and $\Omega_q^{\mathit{phys}}$
respectively.
Then, we assume that our model describes the physics within the scales $\epsilon_q ... V_q$
and $\epsilon_p ... V_p$ of $q$ and $p$, where $\epsilon_{\nu} > \epsilon_{\nu}^{\mathit{phys}}$
and $V_{\nu} < V_{\nu}^{\mathit{phys}}$. We can express these ranges on a logarithmic scale:
$l_{\mathit{min}}^q = \log_2 \frac{V_q^{\mathit{phys}}}{V_q} +1$
and $l_{\mathit{min}}^p = \log_2 \frac{V_p^{\mathit{phys}}}{V_p} +1$ 
are the lowest resolution scales, and 
$l_{\mathit{max}}^q = \log_2 \frac{V_q^{\mathit{phys}}}{\epsilon_q}$
and $l_{\mathit{max}}^p = \log_2 \frac{V_p^{\mathit{phys}}}{\epsilon_p}$
are the highest-resolution scales.
According to these definitions, if our system occupied all the available volume (namely, 
 $V_{p,q} = V_{p,q}^{\mathit{phys}}$), then we would have
 $l_{\mathit{min}} =1$.

A classical system, represented in terms of the canonically conjugate variables $p$ and $q$,
in the absence of dynamics, has some intrinsic symmetries, such as
translational and the scale symmetry.
If we want our quantization program to be consistent with the underlying classical phase-space
representation of a system, we must require our theory to be intrinsically
invariant under these symmetries. 

By following the approach introduced in sections \ref{sec:translsymm} and \ref{sec:statespacesymm}, 
we handle the symmetries of a system by dividing the phase space $\Omega_p \otimes\Omega_q$
into families of set partitions, which are invariant with respect to translational and scale
transformations. To understand how these invariant partitions should be made, we will use the theorem
\ref{th:lsbgen}

The binary representation of an observable $x$ can be seen as a sequence of nested binary partitions,
where the $i$-th bit tells us in which set of the $i$-th partition $x$ is located.
The last interval of this subdivision, the one with the highest resolution, has a certain width,
which we previously denoted by $\epsilon$; ideally, the resolution on $x$ can be pushed
further, and we can define finer-grained binary partitions and add bits to the binary representation.
In theorem \ref{th:lsbgen}, the bit corresponding to the highest resolution is in the $n$-th position.
The statement of that theorem can be reformulated by assuming that the bit sequence continues, so we can
rename $n$ as $l_{\mathit{max}}$, which corresponds to the precision $\epsilon$, related to $l_{\mathit{max}}$ through
$l_{\mathit{max}} = \log_2 \frac{V^{\mathit{phys}}}{\epsilon}$.
Clearly, to keep the essence of the theorem unchanged, we never make statements that define $x$
at a resolution higher than $\epsilon$ (corresponding to bits beyond the $l_{\mathit{max}}$-th one),
and the shift involved in the theorem is of $2^{n - l_{\mathit{max}}}$ bits. 
We can therefore conclude that sets of the translation-invariant partitions involved by theorem
\ref{th:lsbgen} can be put in the form $A_{\mu} = \{ x \stackrel{l...l_{\mathit{max}}}{=} \mu \}$.

In a two-dimensional phase space $\Omega_p \otimes \Omega_q$, we must require  
 the translational invariance along both the $p$ and $q$ axes.  
By extending the result of theorem \ref{th:lsbgen} to the two-dimensional $p,q$ case, we find that, in order for a partition to be invariant under translations along both the $p$ and $q$ axes, it must have the form:
\begin{equation} \label{eq:translinvpartpq}
\mathcal{A}^{l, l'} = \{ A_{x,y}^{l, l'}, \,  \forall x,y\},
\end{equation}
\noindent where
\begin{equation}\label{eq:decomppq}
\small{A_{x,y}^{l, l'} =
    \{ q \stackrel{l... l_{\mathit{max}}^q}{=} [ x_{l} ... x_{l_{\mathit{max}}^q}] \}  \cap
    \{ p \stackrel{l'...l_{\mathit{max}}^p}{=} [ y_{l'} ... y_{l_{\mathit{max}}^p}] \}.}
\end{equation}

\mysubsubsection{Scale symmetry}\label{sec:scalesymm}
The second phase-space intrinsic symmetry  we are going to consider 
is the scale symmetry, which is defined by transformations like 
$q \to \alpha q$,  \  $p \to \frac{1}{\alpha} p$.

Since we are dealing with the discrete spaces $\Omega_p$ and $\Omega_q$, which have a number of points
equal to a power of 2, we consider  only scale transformations of the form   
$q \rightarrow 2 q$ or $q \rightarrow \frac{1}{2} q$.  
Let us define a scale operator $S_{\textnormal{sc}}$ that acts on
 probability distributions in the following way:
\begin{equation} \label{eq:scaletransdef}
\begin{split}
    S_{\textnormal{sc}} \cdot \rho(q)  &\stackrel{\textnormal{def}}{=} \rho(2 q) \\
    S_{\textnormal{sc}} \cdot \rho(p)  &\stackrel{\textnormal{def}}{=} \rho(p/2)
\end{split}
\end{equation}
\noindent
Let us now examine how the sets defined by \eqref{eq:decomppq} change under scale transformations. 
An expansion by a factor of 2 on a set like
$\{ q \lsbeq [ x_{l_{\mathit{min}}^q} ... x_{l_{\mathit{max}}^q}] \}$
is represented by a bitwise left shift.
Vice versa, a contraction by a factor of 2 is equivalent to a right shift. 
Thus, when we apply the transformation \eqref{eq:scaletransdef} to the set \eqref{eq:decomppq} we
obtain:
\begin{multline} \label{eq:decomppqtrasf}
  S_{\textnormal{sc}} \cdot A_{x,y}^{l, l'} =
    \{ q \stackrel{...}{=} [ x_{l-1} ... x_{l_{\mathit{max}}^q-1}] \} \optcolbreak \cap 
    \{ p \stackrel{...}{=} [ y_{l'+1} ... y_{l_{\mathit{max}}^p+1}] \},
\end{multline}
\noindent where $\stackrel{...}{=}$ is a shorthand notation for the digit-subset equality relation,
whose bit range is specified by the bit indices of the RHS.
From the above relation, we can see that the scale transformation changes a
set of the form \eqref{eq:decomppq} into another set, by keeping the 
sum $l+l'$ constant.
We can conclude that, in the case of scale transformations, the partitions
defined by \eqref{eq:translinvpartpq} and \eqref{eq:decomppq} are not invariant, 
but the effect of $S_{\textnormal{sc}}$ is to move each partition $\mathcal{A}^{l, l'}$
into the partition $\mathcal{A}^{l-1, l'+1} $. 

Thus, if we consider the family of partitions obtained by varying $l$ and $l'$ with the 
condition $l+l'=  {\mathit{Const}.}$ and neglect the cases
$l,l' = l_{\mathit{min}}, l_{\mathit{max}}$ (which would make
$l$ and $l'$ go out of range), we can assert that the scale transformation
exchanges the partitions within a family, by keeping the family unchanged.

This may help us understand the conditions that make the theory invariant under scale transformations.
More specifically, it is reasonable to accept the following statement: when we apply
$S_{\textnormal{sc}}$ to a state space condition of the form \eqref{eq:partbounds3} or
\eqref{eq:chiinvariancegen}, we transform each pair of sets occurring in a condition into a pair that
occurs in another condition  $\chi$ of a different level $l$. This is equivalent to moving each $\chi$
condition from a level $l$ to another. If the whole family of $\chi$'s is left unchanged by this
transformation, then the invariance of the theory under scale transformation is guaranteed.

The choice of neglecting the cases $l,l' = l_{\mathit{min}}, l_{\mathit{max}}$ can be justified by the
following arguments: 

\noindent (i) In a realistic model of a physical system, the probability distributions can
not be fast varying on an arbitrarily small scale. There must be a scale that
is small enough to make a probability distribution function behave like a constant
function. This means that, by varying the values of the least significant (i.e., 
closest to $l_{\mathit{max}}$) bits of a physical quantity, the probability does 
not change of a relevant amount, and we can neglect the cases where $l$ and $l'$ are close to  
$l_{\mathit{max}}$.

\noindent (ii) We can assume that the values of $p$ (or $q$) whose probability
is non-zero are confined within a region smaller than the size of the available  space 
 $V_p = V_p^{\mathit{phys}} / 2^{l_{\mathit{min}}-1}$ (or  $V_q = V_q^{\mathit{phys}} / 2^{l_{\mathit{min}}-1}$). This allows us to 
 neglect the cases where $l$ and $l'$ are close to  
$l_{\mathit{min}}$.

\mysubsubsection{$p \leftrightarrow q$ exchange symmetry }

There is another phase-space intrinsic symmetry, which will help us build our theory.
In the absence of dynamics, the variables $p$ and $q$ are expected to be interchangeable and,
as a consequence, the spaces $\Omega_p$ and $\Omega_q$ are supposed to have the same 
property. This provides more  information on how the range 
$l_{\mathit{min}}^q ... l_{\mathit{max}}^q, l_{\mathit{min}}^p ... l_{\mathit{max}}^p$ can be chosen. 

We can reasonably require the spaces $\Omega_p$ and $\Omega_q$ to have the 
same range of resolution, namely:
\begin{equation}\label{eq:pqexcsymm}
 l_{\mathit{max}}^q - l_{\mathit{min}}^q = 
l_{\mathit{max}}^p - l_{\mathit{min}}^p .  
\end{equation}

 \subsection{The reconstruction scheme and the butterfly diagram} \label{sec:recscheme}

\mysubsubsection{Intermediate partition chain}
With equations \eqref{eq:partbounds}, we introduced the idea of connecting
probability functions defined on different partitions of the state space
through a chain of intermediate partitions. Each partition shares
a common sub-partition with the next partition of the chain. 
This leads to a family of state space conditions $\chi$, which depend on a 
smaller number of probability components.

In this section we will apply this idea to the translation- and scale-invariant partition scheme
of the phase space $\Omega_p \otimes \Omega_q$, introduced in the previous sections. 
In other words, we have to rewrite equations \eqref{eq:partbounds} for the phase space partitions 
whose sets are given by the expression \eqref{eq:decomppq}. 
We will make the strong assumption that the $\chi$'s can be put in an explicit 
form that permits us to derive one of the conditional probabilities 
from the other (for instance, in equation  \eqref{eq:partboundsxpc}, $ \rho_{q|qp}$ from $ \rho_{p|qp}$). 
 
Our next goal is to show how to build a chain of intermediate probability functions that connect 
the probability $\rho(q)$ and $\rho(p)$.
The probabilities of a set $A_{x,y}$ in the form \eqref{eq:decomppq},
for the lowest $l$ and highest $l'$ (or vice-versa), are nothing but the full resolution 
probabilities $\rho(q)$ (or $\rho(p)$). In other words, 
if we consider the set $A_{x,y}^{l, l'+1}$, the case $\rho(q)$ corresponds to the values
$l=l^q_{\mathit{min}}, l'=l^p_{\mathit{max}}$, while $\rho(p)$ corresponds to the 
probability of $A_{x,y}^{l+1, l'}$, with $l=l^q_{\mathit{max}}, l'=l^p_{\mathit{min}}$.

If we had a formula that permits to obtain the probability of a set $A_{x,y}^{l+1, l'}$ from the
probability of a set $A_{x,y}^{l, l'+1}$, we could build the chain of intermediate probabilities:
we start from $\rho_q = \rho(A^{l^q_{\mathit{min}},l^p_{\mathit{max}}+1})$, 
then we get $\rho(A^{l^q_{\mathit{min}}+1,l^p_{\mathit{max}}})$, then
$\rho(A^{l^q_{\mathit{min}}+2,l^p_{\mathit{max}}-1})$ and so on,
until we obtain $\rho(A^{l^q_{\mathit{max}},1}) = \rho_p$.
At each step we have increased $l$ and decreased $l'$, starting from
$l=l^q_{\mathit{min}}, l'=l^p_{\mathit{max}}$.
Consequently, along the chain, $l$ and $l'$ are related by the equation: 
\begin{equation} \label{eq:llpbound}
    l_{\mathit{max}}^q - l
      = l'  - l_{\mathit{min}}^p.
\end{equation}
\noindent Let us rewrite the sets \eqref{eq:decomppq} for the cases $(l+1,l')$ and $(l,l'+1)$:
\begin{subequations}\label{eq:decomppqllp}
\begin{alignat}{1}
A_{x,y}^{l+1, l'} & =
    \{ q \stackrel{...}{=} [ x_{l+1} ... x_{l_{\mathit{max}}^q}] \} \cap
    \{ p \stackrel{...}{=} [ y_{l'} ... y_{l_{\mathit{max}}^p}] \}  \label{eq:decomppqllp1}\\
A_{x,y}^{l, l'+1} & =
    \{ q \stackrel{...}{=} [ x_{l} ... x_{l_{\mathit{max}}^q}] \} \cap
    \{ p \stackrel{...}{=} [ y_{l'+1} ... y_{l_{\mathit{max}}^p}] \}  \label{eq:decomppqllp2},
\end{alignat}
\end{subequations}
\noindent where $l_{\mathit{min}}^q \le l \le l_{\mathit{max}}^q$ 
and $l_{\mathit{min}}^p \le l' \le l_{\mathit{max}}^p$.
The finest common partitions we can build from the above families of sets are defined by the sets: 
\begin{multline}
\label{eq:decomppqcommon}
C_{x,y}^{l,l'} = A_{x,y}^{l+1,l'+1} =  \twocolbreak 
    \{ q \stackrel{...}{=} [ x_{l+1} ... x_{l_{\mathit{max}}^q}] \} \cap
    \{ p \stackrel{...}{=} [ y_{l'+1} ... y_{l_{\mathit{max}}^p}] \}. \hfill
\end{multline}
\noindent Equations  \eqref{eq:partbounds} can now be specialized as follows:
the expression \eqref{eq:partbounds1} applies to the partition whose sets are 
\eqref{eq:decomppqllp1}  and \eqref{eq:partbounds2} applies to the partition whose sets are
\eqref{eq:decomppqllp2}. Thus we have
\begin{subequations}\label{eq:partboundsxp}
\begin{alignat}{1}
& \rho(A^{l, l'+1} )  =  \rho_{q|qp}^{l l'} \, \rho(A^{l+1, l'+1} ) \label{eq:partboundsxpa} \\
& \rho(A^{l+1, l'} )   =  \rho_{p|qp}^{l l'} \, \rho(A^{l+1, l'+1} ) \label{eq:partboundsxpb}\\
& \chi(\rho_{q|qp}^{l l'} , \rho_{p|qp}^{l l'}),  \label{eq:partboundsxpc}
 \end{alignat}
\end{subequations}
\noindent where, in the probability functions $\rho(A)$,  the subscripts have been omitted
and the conditional probabilities are defined as 
\begin{equation} \label{eq:rhocondllp}
\begin{split}
\rho_{q|qp}^{l l'}  = & \, \rho(q \stackrel{l}{=}[x_l] \big |
 q \stackrel{...}{=} [ x_{l+1} ... x_{l_{\mathit{max}}^q}] \\ & \phantom{aaaaa} \wedge
 p \stackrel{...}{=} [ y_{l'+1} ... y_{l_{\mathit{max}}^p}]) \\
\rho_{p|qp}^{l l'}  = & \, \rho(p \stackrel{l'}{=}[y_{l'}] \big |
 q \stackrel{...}{=} [ x_{l+1} ... x_{l_{\mathit{max}}^q}] \\  & \phantom{aaaaa} \wedge
 p \stackrel{...}{=} [ y_{l'+1} ... y_{l_{\mathit{max}}^p}]).
\end{split}
\end{equation}
\noindent Note that the conditions $\chi$ in \eqref{eq:partboundsxpc} have been put in short form too.
Actually, they depend on the level identifiers $l\, l'$ and on the lowest significant bits of $p$ and
$q$ in the following way: 
\begin{equation}
\label{eq:chiindexing}
\chi^{l,l'}_{[ x_{l+1} ... x_{l_{\mathit{max}}^q}]
[ y_{l'+1} ... y_{l_{\mathit{max}}^p}]}
\end{equation}
\noindent Now we have everything we need to build a recursive process for the derivation of probability distributions that goes 
$\rho(q)$  to  $\rho(p)$, by passing through a sequence of intermediate probability distributions defined
on the partitions of the form $\mathcal{A}^{l, l'}$.
This process can be divided into three steps.

\mysubsubsection{Step 1: from partition  $\mathcal{A}^{l, l'+1}$ to partition $\mathcal{A}^{l+1, l'+1}$}
\noindent By expressing the $x$ index of the sets $A$ in binary form, the sets of the partition 
$\mathcal{A}^{l+1, l'+1}$ can be written as a union of two smaller sets of the higher-resolution
partition $\mathcal{A}^{l, l'+1}$: 
\begin{equation} \nonumber
    A^{l+1, l'+1}_{x, y}  = 
    A^{l, l'+1}_{[ x_{l'} x_{l'+1} ... x_{l_{\mathit{max}}^p}] \, y}   \cup 
    A^{l, l'+1}_{[\sim  x_{l'} x_{l'+1} ... x_{l_{\mathit{max}}^p}] \, y}  .
\end{equation}
\noindent Since the sets in the RHS of this equation are disjoint, we have:
\begin{multline}
\label{eq:rhotadisj}
     \rho(A^{l+1, l'+1}_{x, y} )  = 
    \rho(A^{l, l'+1}_{[ x_{l'} x_{l'+1} ... x_{l_{\mathit{max}}^p}] \, y} ) \twocolbreak  + 
    \rho(A^{l, l'+1}_{[\sim  x_{l'} x_{l'+1} ... x_{l_{\mathit{max}}^p}] \, y} ) . 
\end{multline}

\noindent By substituting the above expression of $\rho(A^{l+1, l'+1}_{x, y} )$ in
the RHS of equation \eqref{eq:partboundsxpa}, we get:
\begin{equation} \label{eq:rhocondfromrho}
\rho_{q|qp}^{l l'} \hspace{-0.1cm} = \hspace{-0.1cm}\frac{\rho(A^{l, l'+1} )}
{\rho(A^{l, l'+1}_{[ x_{l'} x_{l'+1} ... x_{l_{\mathit{max}}^p}] \, y} )\hspace{-0.1cm} + \hspace{-0.1cm}
    \rho(A^{l, l'+1}_{[\sim  x_{l'} x_{l'+1} ... x_{l_{\mathit{max}}^p}] \, y} )}
\end{equation} 
\noindent where, in case of a zero denominator (and numerator) we can put 
$\rho_{q|qp}^{l l'} = 1/2$.
The above equations permit us to derive the conditional probability $\rho_{q|qp}^{l l'} $ 
from the $l, l'+1$-level probability. 

\mysubsubsection{Step 2: from $\rho_{p|qp}^{l l'}$ to  $\rho_{q|qp}^{l l'}$} \label{par:step2}
By anticipating what will be described in the last part of this paper  
(sections \ref{sec:fourptsys} and \ref{sec:traslinv}),
we can introduce more hypotheses, which strengthen our condition $\chi$ to such an 
extent that it can be used as a transformation from  $\rho_{p|qp}^{l l'}$ 
(and other parameters) to  $\rho_{q|qp}^{l l'}$. 
Actually this implies that the $\boldsymbol{\phi}$ parameters in the condition $\chi$ 
of equation \eqref{eq:partboundsxpc} depend only on lower $l$ and higher $l'$-level probabilities.
Once we have  $\chi$ in this explicit form, we can obtain the conditional probability
$\rho_{q|qp}^{l l'}$.

\mysubsubsection{Step 3: from $\rho_{q|qp}^{l l'}$ to  $\rho(A^{l+1, l'}_{x, y} )$}
This third step can be performed by simply substituting the expression of the 
conditional probability $\rho_{q|qp}^{l l'}$ in the RHS of equation \eqref{eq:partboundsxpb}. 
We obtain an expression for the probability $\rho(A^{l+1, l'}_{x, y} )$, which can be used as input
for the next recursion level.

\mysubsubsection{The Butterfly Diagram}
The recursive method we have just shown can be represented in a friendly graphical 
form, shown in Fig.~\ref{butterfly8}. The rules of this representation are the following:
\begin{itemize}
  \item The probabilities  $\rho(A^{l+1, l'}_{x, y} )$ and the $\chi$ conditions
     are represented as the nodes of a graph.
  \item The dependencies of the  $\chi$'s from the probabilities and the values
     of the probabilities obtained from the  $\chi$'s are represented as lines that join nodes.
  \item The probabilities  $\rho(A^{l+1, l'}_{x, y} )$ ($\rho$ nodes, represented in
   Fig.~\ref{butterfly8} by black dots) are placed on a Cartesian coordinate plane in the following way: 
   the vertical axis represents the level identifier $l + 1$ (or $l_{\mathit{max}} - l' + 
   l_{\mathit{min}} + 1$) and the horizontal ($\mu$) axis represents the indices $x$ and $y$ of
   $A^{l+1, l'}_{x, y}$, combined according to the binary form 
\end{itemize}
\begin{equation} \label{eq:butterflydiagx}
\phantom{aa} \mu = [y_{l_{\mathit{max}}} y_{l_{\mathit{max}}-1} ... y_{l_{\mathit{max}}+l_{\mathit{min}}-l}  x_{l+1}x_{l+2} ... x_{l_{\mathit{max}}}]
\end{equation}
\begin{itemize}
  \item The $\chi$ conditions are represented by nodes (depicted in the figure as white squares)
  with four incoming lines, which are connected to four $\rho$ nodes.
  A node corresponding to the condition \allowbreak\hspace{0pt} 
  $\chi^{l,l'}_{[ x_{l+1} ... x_{l_{\mathit{max}}^q}]  [ y_{l'+1} ... y_{l_{\mathit{max}}^p}]}$
  (where $l$ runs from $l_{\mathit{min}}^q$ to $l_{\mathit{max}}^q$ and 
  $l' = l_{\mathit{min}}^p + l_{\mathit{max}}^q - l$, see \eqref{eq:llpbound}) is connected to the
  $\rho$ nodes
\begin{equation} \label{eq:nodesonetwo}
\rho(A^{l, l'+1}_{[0 x_{l+1} ... x_{l_{\mathit{max}}^q}] \, y} ), \,\,\,
\rho(A^{l, l'+1}_{[1 x_{l+1} ... x_{l_{\mathit{max}}^q}] \, y} )
\end{equation}
\noindent and
\begin{equation} \label{eq:nodesthreefour}
\rho(A^{l+1, l'}_{x  \,[0 y_{l'+1} ... y_{l_{\mathit{max}}^p}]} ), \,\
\rho(A^{l+1, l'}_{x  \,[1 y_{l'+1} ... y_{l_{\mathit{max}}^p}]} ).
\end{equation}
\noindent These connections are justified by the fact that the conditional probabilities
occurring in the condition $\chi(\rho_{q|qp}^{l l'} , \rho_{p|qp}^{l l'})$ can be related to the
probabilities $\rho(A^{l+1, l'}_{x, y} )$: according to Eq.~\eqref{eq:rhocondfromrho},
the conditional probability $\rho_{q|qp}^{l l'}$ can be expressed in terms of the probabilities
\eqref{eq:nodesonetwo}, while the probabilities \eqref{eq:nodesthreefour} can be obtained from
$\rho_{p|qp}^{l l'}$ by applying \eqref{eq:partboundsxpa}.
\end{itemize}

The expression \eqref{eq:butterflydiagx} of the horizontal position $\mu$ of the $\rho$ nodes can be explained as follows: 
we take the least significant bits of the indices $x$ and $y$ ($l_{\mathit{max}} - l$ bits from $x$
and $l - l_{\mathit{min}} + 1$ bits from $y$), then we concatenate the bits from $y$ in reversed order
with the bits from $x$.
Thus, if $l = l^q_{\mathit{min}} -1$, we have $l_{\mathit{max}}-l^q_{\mathit{min}}+1$ bits for $x$ and
zero bits for $y$, which implies that $\mu = x$. Conversely, if $l = l^q_{\mathit{max}}$, $\mu$ will
be given by the bit-reversal permutation of $y$.

The type of graphs we just introduced is known in the literature as butterfly diagram, and the case
$l_{\mathit{min}} = 1\,l_{\mathit{max}} = 3\,$ is shown in figure \ref{butterfly8}.

\begin{figure}[!ht]
\centering
\includegraphics[scale=0.72]{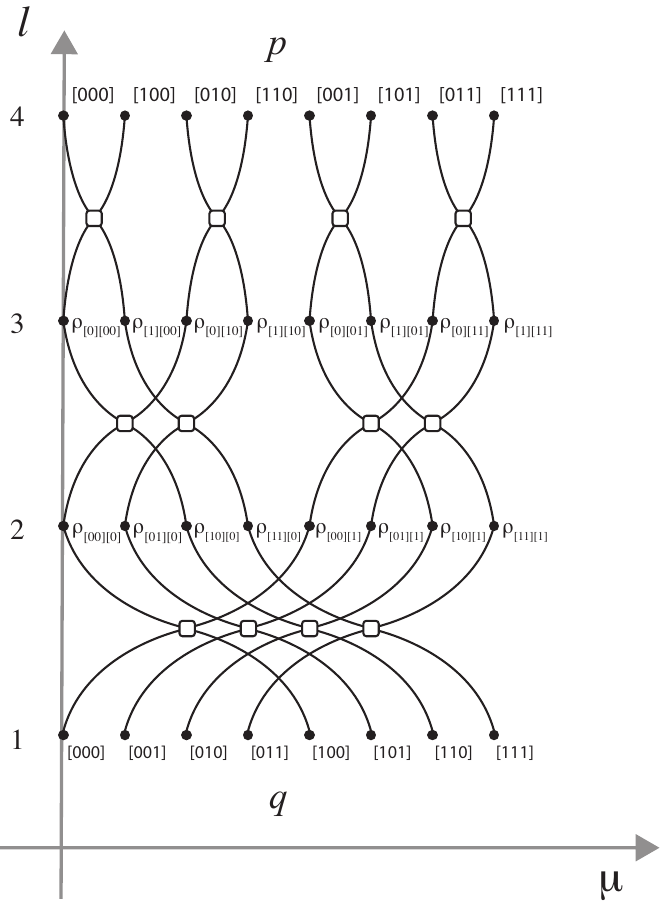}
\caption{The $l_{\mathit{max}} = 3,\, l_{\mathit{min}} = 1$ butterfly diagram.
The probabilities are represented
by black dots and the conditions $\chi$ by white squares. The 8 lowest dots  
represent the probabilities $\rho(q = 0) ... \rho(q=7)$ and the 8 highest dots
represent $\rho(p = 0) ... \rho(p=7)$, arranged in a bit-reversal order.
The second line (from the bottom)
of black dots represents the probabilities of the sets $A^{2, 3}_{x,y}$: 
$\rho_{[00][0]}, \rho_{[01][0]}, \rho_{[10][0]}, \rho_{[11][0]}, \rho_{[00][1]} ... \rho_{[11][1]}$ 
(where $\rho_{[x_2 x_3][y_3]} = \rho(q \stackrel{...}{=} [x_2 x_3] \land p \stackrel{...}{=} [y_3])$).
}
\label{butterfly8}
\end{figure}

\mysubsubsection{System ranges and information} \label{sec:separab}

The Limited Information axiom \ref{ax:limitedinfo} plays a key role in 
the development of our theory. 
A difficulty in applying this axiom to the probability-based approach introduced in this
section is that none of the axioms in \ref{sec:principlesoftheory} establishes 
an explicit relation between the quantity ``amount of information'' and the probability
distributions occurring in the theory (except for the special cases of one-point and uniform
distributions involved by axiom \ref{ax:infostatequivalence}). The amount of information is
simply the number of bits needed to define some property of the system.

Nonetheless, by following qualitative but reasonable arguments, we can state how the 
amount of information carried by a system state depends on some parameters of the model.
As anticipated in section \ref{sec:discphspace}, in a discretized classical system 
whose state can take $N_p N_q$ values, the amount of information carried by the state is 
$n_p + n_q$, where $n_{p,q} = \log_2 N_{p,q}$. In the model we are going to consider,
the amount of information is smaller than or equal to $n_p + n_q$, and it will be denoted by $n_I$

In section \ref{sec:scalesymm}, we introduced  the hypothesis that the 
probability distributions are confined within a region whose size is 
$V^{\mathit{phys}}_q/2^{l_{\mathit{min}}^q}$ and 
$V^{\mathit{phys}}_p/ 2^{l_{\mathit{min}}^p}$, for $q$ and $p$ respectively.
Starting from that hypothesis, if, for example, we double the volume  $V_q$
(which implies a decrement of $l_{\mathit{min}}^q$ by 1) and keep the distributions 
unchanged, the amount of information we need to describe the system is incremented by 1.
This scenario is depicted in Fig.\ref{infoscalinga}. 
Conversely, under the hypothesis that the distributions are smooth on a scale greater than
$V^{\mathit{phys}}_q/2^{l_{\mathit{max}}^q}$, it can be shown that, by doubling the model's
resolution (which implies an increment of $l_{\mathit{max}}^q$ by 1), we need no extra bits to define
the state, and the amount of information carried by the state does not change.
The reason is the following: suppose that the distribution is confined within a region 
of the $q$ space whose size is $\Delta_q$. The amount of information we need to identify this state is
approximately $-\log ( \Delta_q / V_q)$. If we change $l_{\mathit{max}}^q$, this quantity does
not change, because neither $\Delta_q$ nor $V_q$ depend on  $l_{\mathit{max}}^q$
(see Fig.\ref{infoscalingb}).  

\begin{figure}[h]
\centering
\includegraphics[scale=0.44]{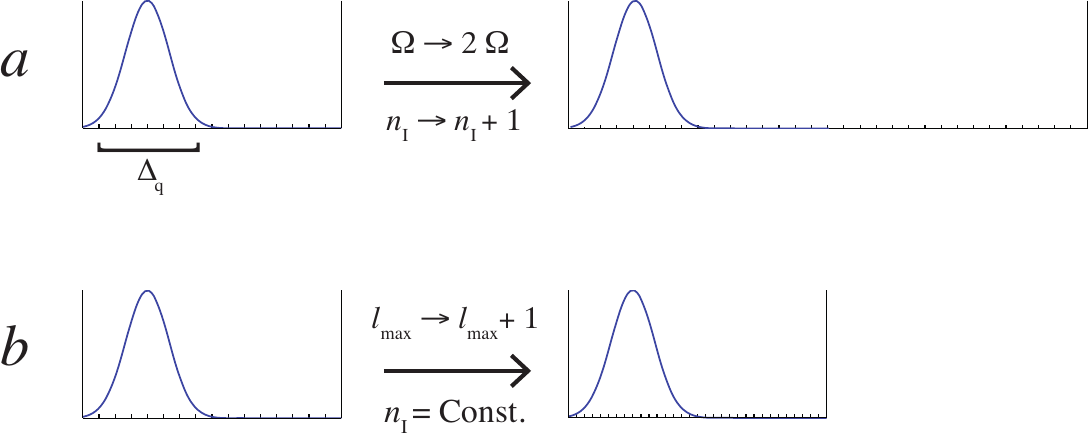}
\begin{subcaptiongroup}
\captionlistentry{a}
\label{infoscalinga}
\captionlistentry{b}
\label{infoscalingb}
\end{subcaptiongroup}

\caption{A qualitative argument to show how the information $n_I$ changes when the system's
boundaries are resized. \subref*{infoscalinga} The size of the space $\Omega_q$ doubles and the
information $n_I$ gains one bit.
\subref*{infoscalingb} The resolution doubles, but we keep the uncertainty of the distribution the
same with respect to the space size. In this case, the information $n_I$ does not change.}
\label{infoscaling}
\end{figure}

If we use the same arguments for the space $p$, we can synthesize the dependence of the information on
the system
ranges with the relation:
\begin{equation}  \label{eq:llpvsni}
n_I = {\mathit{Const}} - l_{\mathit{min}}^p - l_{\mathit{min}}^q.
\end{equation}
\noindent Let us recall the type of observables introduced in section \ref{sec:freevardist},
namely, complete observables whose distributions are free to vary between the 
cases of a one-point distribution and a uniform distribution.
If the distribution of an observable, defined on a partition with a scale range
$l_{\mathit{min}}... l_{\mathit{max}}$, is free to vary, its contribution 
to the amount of information carried by the state will range between
0 (uniform probability case) and $l_{\mathit{max}} - l_{\mathit{min}} +1$
(one-point distribution). This implies that our construction satisfies axiom
\ref{ax:infostatequivalence}.

If an observable is complete, then the information provided by its measurement
must be equal to the total amount of information that can be carried by the system.
Since a measurement leads to a one-point distribution for the measured observable, we 
get the following expression:
\begin{equation} \label{eq:llptobounds}
    n_I = l_{\mathit{max}}^p - l_{\mathit{min}}^p + 1
      = l_{\mathit{max}}^q - l_{\mathit{min}}^q + 1.
\end{equation}
\noindent If we want this expression to be consistent with the dependence
of $n_I$ on the model's ranges $l_{\mathit{min}} ... l_{\mathit{max}}$, given by equation 
\eqref{eq:llpvsni}, we must require the sums  $l_{\mathit{max}}^p - l_{\mathit{min}}^q + 1$ and 
$l_{\mathit{max}}^q - l_{\mathit{min}}^p + 1$ to be equal to the ${\mathit{Const}}$
term in equation \eqref{eq:llpvsni}.

It can be shown that the intermediate partitions $\mathcal{A}^{l, l'+1}$ and  $\mathcal{A}^{l+1, l'+1}$
we used to build the chain described in the previous subsection have the same cardinality as the
partitions used for the spaces of $p$ and $q$: 
if we count the number of bits involved in identifying the set \eqref{eq:decomppqllp1}
or \eqref{eq:decomppqllp2}, we obtain that the corresponding partitions $\mathcal{A}^{l, l'+1}$ and 
$\mathcal{A}^{l+1, l'}$ have the cardinality
$2^{l_{\mathit{max}}^q - l + l_{\mathit{max}}^p -l' + 1}$. 
Since we build the chain by increasing $l$ and decreasing $l'$ for each step, we can assume that 
this cardinality is constant along the chain and equal to the value taken at the first step
($l = l_{\mathit{min}}^q$ $l' = l_{\mathit{max}}^p$) of the chain, namely
$2^{l_{\mathit{max}}^q - l_{\mathit{min}}^p + 1}$, which, by Eq.~\eqref{eq:llptobounds},
is equal to $2^{n_I}$. We can therefore conclude  that, for any level of the chain,
\begin{equation} \label{eq:infobound1}
    l_{\mathit{max}}^q - l + l_{\mathit{max}}^p -l' + 1 = n_I.
\end{equation}

\subsection{Implications of the Model Independence axiom}\label{sec:modindimpl}

\mysubsubsection{Invariant partitions}
There is a connection between the idea of partitioning the space of the observables and the
measurement process. When e building a measurement setup, we must divide the range
of values that one or more observables can take into intervals. This division is tied to the choice of the
measurement units and to the precision of the measuring instrument.

On the other hand, our approach is to derive the properties of the state space by imposing conditions
whose structure reflects that of a family of partitions of the space of certain observables.
Thus, the formulation of our theory may depend on how this family of partitions is defined.

Furthermore, our model of a physical system can be viewed as a subsystem embedded within a
larger system, which, in turn, can be associated with a family of partitions. 
We can imagine that our subsystem lies on a structure represented by the larger system's family of
partitions. Keeping this family of partitions fixed, and redefining the position or the scale of the
subsystem with respect to the larger system, has the same effect of applying such transformations
(in the opposite direction) to the family of partitions.
In this context, axiom \ref{ax:scalability} becomes relevant: we expect our model to be independent
of this kind of transformations. 
The necessity of this invariance becomes even more evident when we consider that our theoretical
model is composed of both the system under study and the measuring apparatus, which is inherently
linked to the notion of partition.
In subsection \ref{sec:sepsubs}, we will discuss how to isolate an elementary subsystem through an
ideal experiment involving a large set of measurements (and, consequently, a measurement apparatus).

We are thus led to conclude that the partitions used to divide the space of observables must be
invariant under translations and rescalings.
Theorem \ref{th:lsbgen} states that the only translation-invariant partitions, compatible with the
requirement that their cardinality is a power of two, are composed of sets of the form $A_{\mu} = \{ x \stackrel{l...n}{=} \mu \}$.

Thus, the choice of dividing the space of observables into partitions that are invariant under
specific transformations is not merely a convenient way to express the invariance of the state
space conditions in the compact form \eqref{eq:chiinvariance}. Rather, it becomes essential for
ensuring the internal consistency of the theory and its compliance with axiom \ref{ax:scalability}.

\mysubsubsection{Model Scalability}
Now we investigate how axiom \ref{ax:scalability}---specifically, the scale invariance of the model---may impact the values of some  parameters
of the theory, such as $V_{p,q}$, $N_{p,q}$, $l_{\mathit{min}}^{p,q}$,
$l_{\mathit{max}}^{p,q}$, etc.

By keeping in mind that the model approximates the real physical system in the large
$N_p\,N_q\,V_p\,V_q$ limit, for large but finite values of these parameters, 
axiom \ref{ax:scalability} tells us that the predictions of the theory must be independent of how large
and fine-grained our model is. In what follows, we will understand how the values 
of the range parameters $l^{p,q}_{\mathit{min}}$ and $l^{p,q}_{\mathit{max}}$
must be chosen to make this happen.

As a first step, we are going to investigate the conditions that allow two models of different sizes
(or resolutions) to be equivalent, namely, when these models describe the same real physical system. 
Consider a model  $M_1$, whose ranges are $l_{\mathit{min}}^{p,q}$ and
$l_{\mathit{max}}^{p,q}$ and a smaller model  $M_2$,
that has the $q$-axis space reduced by a factor of $2$
with respect to that of $M_1$.
Since the range parameters $l_{\mathit{min}}$ $l_{\mathit{max}}$ indicate the size and the 
resolution of the model (on a logarithmic scale) with respect to the whole space size 
$V^{\mathit{phys}}$, the size of $M_1$ is
$V_q^{M_1} = V^{\mathit{phys}}_q 2^{-l_{\mathit{min}}^q+1}$, and the size of $M_2$ is 
$V_q^{M_2} = V^{\mathit{phys}}_q 2^{-l_{\mathit{min}}^q}$.
According to axiom \ref{ax:scalability}, we require  $M_1$ and $M_2$ to 
produce the same results, which implies that we have to adjust their parameters
so that, if a physical state is allowed in $M_1$, then it must be allowed in $M_2$ too.
If the size of the $q$ space of $M_1$ is twice that of $M_2$, then the same ``physical'' distribution
must appear shrunken by a factor of two in $M_1$, compared to how it appears in $M_2$.
Note that we are requiring $M_2$ to be halved with respect to $M_1$ on the $q$ axis only, which means
that $M_1$ and $M_2$ share the same $p$-axis lower range boundary $l_{\mathit{min}}^p$.

The condition that a smaller-size model, like $M_2$, leads to the same results
as a larger one can be expressed in another form: each state 
(family of distributions of the observables) that is valid in $M_1$ must have
a corresponding state, which is valid in $M_2$. This means that, up to a loss of 
resolution and to a contraction of the volume $V$, the probabilities 
$\rho_p,\, \rho_p$ that fulfill the state space condition in $M_1$ must be the same as in $M_2$.

We now apply the above condition to a distribution in $M_1$ that takes non-zero values only
in the lower half of the $V_q$ volume, namely that, roughly, has
$\rho(q \stackrel{l_{\mathit{min}}^q}{=}[0]) = 1$
and $\rho(q \stackrel{l_{\mathit{min}}^q}{=}[1]) = 0$.
Since we have defined the $q$ space of $M_2$ as the lower half of the $q$ space of $M_1$,
we can say that this type of distribution corresponds to a uniform distribution in $M_2$,
which is spread along the $q$ axis.

We can view this in terms of amount of information: in the model $M_1$, asserting
that the probability distribution confined on half $V_q^{M_1}$ volume,
corresponds to one bit of information; this is a consequence of the
information/statistics equivalence axiom \ref{ax:infostatequivalence},
because the predicates $q \stackrel{l_{\mathit{min}}^q}{=}[0]$ or 
$q \stackrel{l_{\mathit{min}}^q}{=}[1]$ represent one bit of information.
On the other hand, in $M_2$, a uniform distribution corresponds to zero bits,
as stated by axiom \ref{ax:infostatequivalence}.
As for the $p$ axis, we assumed that $M_1$ and $M_2$ have
the same size. Consequently, if $M_1$ and $M_2$ have the same distribution $\rho(p)$,
the amount of information related to $p$ is the same in both models. We can therefore conclude that
the maximum amount of information carried by $M_2$ is
$l_{\mathit{max}}^q - l_{\mathit{min}}^q$, one bit less than that carried by $M_1$
(see Eq.~\eqref{eq:llptobounds}). 

We are also assuming that $M_1$ and $M_2$ are built by following the same 
scheme, in which the total amount of information corresponds to the number of
points on each axis, as stated by \eqref{eq:llptobounds}. 
According to that equation, if $n_I$ and $l_{\mathit{min}}^p$ are respectively 
decremented and left unchanged in $M_2$ with respect to $M_1$, then $l_{\mathit{max}}^p$ 
is expected to be decremented. This corresponds to a loss in resolution in the 
$p$ axis, which implies that, up to a different discretization of the $p$
axis, the distributions $\rho(p)$ can be almost equal
in $M_1$ and $M_2$.
In summary, we can say that, in  $M_2$, the ranges of the $q$ and $p$ are
$l_{\mathit{min}}^q+1 ... l_{\mathit{max}}^q$ and 
$l_{\mathit{min}}^p ...  l_{\mathit{max}}^p-1$, respectively.

We can apply the above scheme again and obtain a model $M_3$ whose range boundaries are
$l_{\mathit{min}}^q+2, ... l_{\mathit{max}}^q$ and 
$l_{\mathit{min}}^p ... l_{\mathit{max}}^p-2$, and so on. Thus, we can identify
a class of equivalent (i.e., leading to the same physics) models, which have the property
that the sum $l_{\mathit{min}}^q + l_{\mathit{max}}^p$ is a constant. 
If we recall the $p \leftrightarrow q$ exchange symmetry,  we can get the same property 
(with the same constant) for $l_{\mathit{min}}^p$ and $l_{\mathit{max}}^q$.
Hence, we can define the constant $l_0$:
\begin{equation} \label{eq:lzerodefinition}  
l_{\mathit{min}}^q + l_{\mathit{max}}^p - 1 =
l_{\mathit{min}}^p + l_{\mathit{max}}^q - 1  \stackrel{\mathit{def}}{=}l^0
\end{equation}
\noindent Now that we know some of the conditions that parameters like $V_{p,q}$, $N_{p,q}$, 
$l_{\mathit{min}}$, $l_{\mathit{max}}$ must obey, we can try to 
understand how these parameters are related to real physical constants. 
Let us consider again the case of a model with ranges $l_{\mathit{max}}$, 
$l_{\mathit{min}}$, and a state where the $q$ variable is completely undetermined
and the $p$ variable is maximally determined. This implies that $q$ is undetermined
within a box of size $V^{\mathit{phys}}_q 2^{-l_{\mathit{min}}^q+1}$ and $p$
is confined within the smallest size the model can identify on the $p$ axis,
i.e. $V^{\mathit{phys}}_p 2^{-l_{\mathit{max}}^p}$.
The product of these uncertainties is
$\Delta q \Delta p =V^{\mathit{phys}}_p V^{\mathit{phys}}_q 
2^{-l_{\mathit{min}}^q - l_{\mathit{max}}^p + 1}$,
which, by the definition \eqref{eq:lzerodefinition} of the constant $l_0$, is equal to
${V^{\mathit{phys}}_q V^{\mathit{phys}}_p} 2^{-l_0}$.
This expression is independent of the size of the model (because we have proven that the
parameter $l^0$ obeys this condition) and, 
in order to satisfy the scale invariance of the theory, it must be independent of how we choose the
uncertainties of $p$ and $q$. In other words, it must be a physical constant.
If we want our theory to
lead to the same results as standard quantum theory, then, for consistency with the uncertainty principle, this quantity must be 
equal to one-half of the reduced Planck constant: 
\begin{equation}
\frac{V^{\mathit{phys}}_q V^{\mathit{phys}}_p}{2^{l^0}} = \frac{\hbar}{2}.
\end{equation}
\noindent This connection with  real, physically 
meaningful, quantity clearly reinforces our theory.
In the sections that follow, we will forget about all the constants like $V_{p,q}$,
$N_{p,q}$, $l_{\mathit{min}}^{p,q}$, $n_I$, $n_p$, $n_q$, $l_0$ 
$l_{\mathit{max}}^{p,q}$, etc., and we will simplify the model with the 
assumptions:
$l_{\mathit{max}}^p = l_{\mathit{max}}^q = n_p = n_q \stackrel{\mathit{def}}{=} n = \log_2 N$,
$l_{\mathit{min}}^p = l_{\mathit{min}}^q = 1$ and, consequently, 
$l_{\mathit{max}} - l_{\mathit{min}} + 1 = n$. According to \eqref{eq:llptobounds},
the maximum amount of information must obey the identity $n_I = n$. 
Furthermore, equation \eqref{eq:llpbound} can be simplified as:
\begin{equation} \label{eq:llpbounds}
   n - l  = l'  - 1.
\end{equation}

\subsection{The degrees of freedom of the state space}\label{sec:phstatevscondprob}

In this subsection, we will examine the problem of the dimensionality of the state space, and how the
degrees of freedom associated with the state of an $N$-dimensional system can be distributed
on the subparts of the modular structure described in the previous paragraphs, which is represented by
the butterfly diagram shown in Fig.~\ref{butterfly8}.

\mysubsubsection{Additional observables} 
If our model were limited to only two independent observables,
its applicability would be severely restricted: having only, for example, $p$
and $q$ would mean that we can measure either $p$ or $q$, in cases where measuring both would exceed the limit we
have imposed on the amount of information. Since preparation is a kind of measurement, we can conclude that
the system admits only those states in which either $p$ or $q$ is completely determined.
The preparation of the system in a state in which $p$ and $q$ are partially determined would imply
the measurement of a third observable.

We certainly do not desire a model with only two observables whose measurements are mutually exclusive, because we need to describe a wider
range of systems, with more physical quantities. For example, we need to 
introduce a dynamics governed by a Hamiltonian function. 

This point is connected to the issue of the dimensionality of the state
space, which equals the maximum size of a set of locally 
independent observables.
One of the guidelines we follow in this work is to start from a general model and then
introduce hypotheses that restrict the extent of the model; in this perspective, we can start 
from the largest state space that is compatible with the probability distributions
of all the observables and then restrict it by introducing hypotheses.

Consider, for instance, an elementary one-bit system like those we studied in section 
\ref{sec:elemsys}. If we have only $p$ and $q$, then the state space obtained by applying
the axioms \ref{sec:axioms} has the shape of a circle in the space of the variables $S_p\,S_q$
(see Eqs.~\ref{eq:errpropcomp} and \ref{eq:spqdef}).
If we suppose the existence of a third independent observable $r$ and increase by one the number of
locally independent observables, then the state space takes the form of a 2-sphere in the
$S_p\,S_q\,S_r$ space. As a consequence, if we consider all the values of $S_r$, the 
physically valid values of $S_p$ and $S_q$, compatible with the axioms  \ref{sec:axioms}, 
are represented by {\textit{filled}} circles. Actually, the interior of the circle implies the 
existence of states where both $p$ and $q$ have the maximum of indeterminacy; this is consistent with
the limit of one bit we assigned to the system, because it just fixes a maximum for the amount of
information.

Note that we can also add a fourth observable $s$, which leads to a 3-sphere
(we will not demonstrate this, but we can think of extending the three-observable system
\ref{sec:threeobs} to obtain this result). 
By varying the parameter $S_s$, we obtain a filled 2-sphere in the $S_p\,S_q\,S_r$ subspace and,
again, by varying $S_r$ we obtain a filled circle in the $S_p\,S_q$ subspace; this tells us that
introducing more than one extra observable does not improve the
generality of our theory to any significant extent.

The above ideas can be extended to cases where $N > 1$. We assume that introducing just one
independent observable is sufficient to give rise to states where both $p$ and $q$ are undetermined,
without introducing unnecessary complexity to our model.

In conclusion, we add a third observable $r$ to the pair $p$ $q$,
and assume that a set of locally independent observables is made
by two observables.
If, instead,  we had 3 locally independent observables, we would have freedom of choice for all the distributions
$\rho(p)$, $\rho(q)$ and $\rho(r)$; whereas, for dimensionality 1, for example, the probability
$\rho(p)$ would determine (locally) $\rho(q)$, and $\rho(r)$ would be irrelevant.

If the number of values that the third observable can take, $N_r$, is related to the
maximum information of the system by an equation like $n_I = \log_2 N_r$
(which implies $N_r = N_p = N_q = N$, being $n_I = \log_2 N_{p,q}$), then, by
the Limited Information axiom \ref{ax:limitedinfo}, there must be at least one
state where $r$ is fully determined while $p$ and $q$ are completely undetermined.

It must be pointed out that we will never explicitly express this additional $N$-dimensional observable,
but, in what follows, especially in section \ref{sec:sepsubs}, we will consider three-observable
one-bit subsystems, which have two locally independent observables. Therefore, we will have
many one-bit additional observables that play the role of the observable $r$.

\mysubsubsection{Additional observables and symmetries}\label{sec:addobssym}
The introduction of additional observables is linked to the need to enlarge the dimensionality of the state space manifold. Moreover, the introduction of additional observables is inherently connected to
the introduction of parameter extensions, since a local (orthogonal) coordinate system with the same
dimensionality as the state space manifold is required. Consequently, in our derivations, we define
additional observables starting from the extensions of each component of the parameters of the base
physical observables (i.e., $p$ or $q$). More specifically, when we introduced the concept of
parameter extension in subsection~\ref{sec:ortoparext}, each base parameter was associated with a
corresponding extension. As a result, in our model, these extensions follow the modular structure of
the parameters of $p$ and $q$, as represented by the butterfly diagram.

This premise directly affects how a symmetry transformation acts on additional observables. 
A transformation can be interpreted as a change in the structure of the model. 
For instance, we can define a transformation that exchanges points along the $q$ axis; this would lead
to a reorganization of the branches of the butterfly diagram (or, equivalently, the partitions of the
set $\Omega_q$). If we maintain our framework, in which each parameter extension is associated with a
base parameter, then these extensions---and the additional observables derived from them---are
expected to transform together with the base observables, thereby preserving the model's structural
consistency.

In conclusion, in order to comply with the Model Independence axiom, we require that, if our model is
invariant under a given transformation defined on the base observables, then this transformation must
also involve the additional observables derived from parameter extensions. Although this condition and
the above considerations may seem abstract, they will become fundamental when we analyze how
permutations and translations, defined on $p$ and $q$, act on the parameter extensions.
In some cases, there is an argument that may help us establish how the symmetries impact the
additional observables, which refers to the idea that all the observables of a system are expected to
share the same properties. We have applied this argument to the two- and three-observable ($N=2$)
elementary system. For example, in the three-observable elementary system, the observables $p$, $q$,
and $r$ are absolutely interchangeable.

It is therefore reasonable to suppose that, if a symmetry holds for the two base observables, $p$ and $q$, it must also be applicable to the third observable $r$---under the hypothesis that $r$ has the same structure and properties as $p$ and $q$, as occurs in the $N=2$ case.

If these conditions hold, we have more instances of the same kind of symmetry. In the following
sections, it will be useful to hypothesize the invariance of our model under transformations that are
the result of a composition of two transformations acting on two different observables. For example,
we can define the simultaneous shift/rotation of $q$ and $r$, which, in the space of the
two-dimensional complex vectors $\psi_q$ defined in subsection~\ref{sec:blochsphere}, is represented
by the matrix
\begin{equation}\label{eg:shiftop2d} S^{(q)} = \begin{pmatrix} 0 & 1 \\ 1 & 0 \end{pmatrix}, \end{equation}
\noindent which produces an exchange between the probabilities $\rho_q(0)$ and $\rho_q(1)$
(i.e., a shift/rotation in $q$) and between the phases $\phi_1^{q}$ and $\phi_0^{q}$.
By definition \eqref{eq:phi2expr}, the permutation of the phases corresponds to a change in the sign
of the parameter $\alpha_q$, which in turn produces an exchange between the probabilities $\rho_r(0)$ and $\rho_r(1)$.

\mysubsubsection{Additional observables vs. control parameters}\label{sec:obsembed}
In the framework of a level-by-level reconstruction scheme, described in
section \ref{par:step2}, when we consider each $\chi$-node of the butterfly diagram,
the $l\,l'$-level component $\rho_{p|qp}$ depends, through the conditions $\chi$ , on the lower
level $\rho_{q|qp}$ and on all the other components 
of  $\rho_{q}$  of the diagram. This is  described in Step 2 of section \ref{par:step2} 
of the chain reconstruction scheme and, if we consider the form \eqref{eq:chielembin} 
of the $\chi$s, all dependencies other than $\rho_{q|qp}$ can be grouped into the
parameters $\boldsymbol{\phi}$.

For a given level $l$, the $\chi$ conditions can be seen as boolean-valued
functions, which confine the $l$ and $l+1$ level intermediate observables' probabilities 
on the state space manifold. The implicit form \eqref{eq:chielembin} of the $\chi$'s can
be put into an explicit form, as long as we express such intermediate probabilities in
terms of multi-valued functions like $\rho_{p|qp} = \chi^{l,k}(\rho_{q|qp}; \boldsymbol{\phi})$.
Actually, these functions must depend (also through the parameters $\boldsymbol{\phi}$)
on the probabilities of a set of independent observables, which can be defined only
locally. Consequently, this explicit form is valid only within
a neighborhood of a point of the state space.

Let us consider the contour lines of the function
$\chi^{l,k}(\rho_{q|qp}; \boldsymbol{\phi})$ in the space of the parameters 
$\boldsymbol{\phi}$, for fixed $\rho_{q|qp}$, and let us define the variable
$\alpha$, which identifies the contour line. Since the function $\chi^{l,k}$
is constant along a contour line, we can say that it only depends on the variable
$\alpha$. This implies that the dependence on $\boldsymbol{\phi}$ 
can be put in the form $\chi^{l,k}\left(\rho_{q|qp}; \alpha^k(\boldsymbol{\phi})\right)$,
where the superscript $k$ in $\alpha$ indicates that there may be a different $\alpha$
parameter associated with each node of the butterfly diagram.

In the next subsection, we will see that, for a node of the butterfly diagram, the condition $\chi$
is equivalent to the condition that determines the state space of an elementary three-observable
one-bit system.
Specifically, for fixed $\alpha$, the condition
$\rho_{q|qp} = \chi^{l,k}\left(\rho_{q|qp};\alpha^k\right)$ corresponds to the requirement that
parameters $S_p$ and $S_q$ belong to the subspace $S_r = {\textnormal Const.}$.
This allows us to state that the parameter $\alpha$ can be considered as a  parametrization for the
probability of the observable $r$.

\mysubsubsection{Counting the independent $\chi$-controlling parameters}\label{sec:paramcount}
If we count the $\chi$ nodes in the butterfly diagram (see, for instance,
Fig.\ref{butterfly8}), we discover that they are $N/2 \log N$, which implies
that we have $N/2 \log N = n 2^{n-1}$ conditions to be controlled by parameters $\alpha$. 

By starting from the lowest level of the butterfly diagram, 
we are free to choose $N-1$ components of the distribution 
$\rho(q)$, and we need $N/2$ parameters  $\alpha$ to control
the first level of $\chi$ conditions (corresponding to $N/2$  nodes of the diagram).
Thus, we can choose the values of these parameters $\alpha$.

It must be pointed out that our aim is to find the {\it{largest}} state space that 
satisfies the axioms introduced in section \ref{sec:principlesoftheory}.
Consequently, each parameter $\alpha$ can range within all the values compatible with those axioms.
If the values of other parameters that, for example, belong to the same level
$l$, have no effect on the compliance with such axioms, we can assert that each
parameter can vary independently.

When we go to the next level, again, we need $N/2$ parameters $\alpha$ to control 
the second level of  $\chi$ conditions. However, we will demonstrate later in
section \ref{sec:fourptsys} that
$N/4$ degrees of freedom of these parameters are bound by the choice
of the first-level parameters $\alpha$.
Going further along the chain, at each step, the number of parameters we can 
assign halves and, by summing the series $N/2 + N/4 + ...$, we get exactly $N-1$.
Knowing the number of parameters $\alpha$ and  how they are organized
along the butterfly diagram will be crucial in deriving an explicit form of the chain
reconstruction scheme.

\subsection{Separable subsystems }\label{sec:sepsubs}
To make the chain reconstruction scheme defined by the formula \eqref{eq:partboundsxp} and represented
by the diagram in Fig.\ref{butterfly8} actually usable, we should know how the conditions $\chi$ are
made. 

In what follows, we will show that a large system, represented by a butterfly diagram
of the form shown in Fig.\ref{butterfly8}, can be  decomposed into 
elementary one-bit subsystems like those studied in \ref{sec:threeobs};
this will lead us to the knowledge of the form of the $\chi$ conditions.
To justify this decomposition, we will identify some quantities that correspond to the probabilities
of the one-bit systems; such quantities are estimable through a repeated measurement scheme and obey
the same axioms \ref{sec:principlesoftheory}
as the one-bit system.

\mysubsubsection{Identifying an estimable subsystem.}
If we follow the idea of expanding the range of observables, we can think of
building observables whose knowledge partially determines $p$ and $q$.
In other words, if we  consider a system that carries $n$ bits of
information, the state in which such a kind of observables is determined 
corresponds to the knowledge of some bits of the value of $p$ and some bits
of the value of $q$, making a total of  $n$ bits. 

If we take advantage of the simplifications in the system parameters introduced in subsection 
\ref{sec:modindimpl}, i.e. $l_{\mathit{max}}^p = l_{\mathit{max}}^q = n_p = n_q = n = \log_2 N$, 
$l_{\mathit{min}}^p = l_{\mathit{min}}^q = 1$, 
we can define the intermediate observable 
$o_l = [y_{n} y_{n-1} ... y_{n-l+2} \; x_{l}x_{l+1}...x_{n}]$, 
where $q = [ x_{1} ... x_{n}]$ and $p = [ y_{1} ... y_{n}]$ are the
binary representations of $q$ and $p$, respectively. We can build
a theory that involves this kind of observables and is consistent with the axioms
introduced in \ref{sec:principlesoftheory}.

Let us now consider an ideal experiment that involves this kind of observables.
We define the observables: 
\begin{subequations}\label{eq:partoper}
\begin{align}
o_a & = [y_{n} y_{n-1} ... y_{n-l+2} x_{l} x_{l+1}...x_{n}] \label{eq:partopera}\\
o_b & = [y_{n} y_{n-1} ... y_{n-l+1} x_{l+1} x_{l+2}...x_{n}] .\label{eq:partoperb}
\end{align}
\end{subequations}
\noindent Note that $o_a$ and $o_b$ share the common subset of $n-1$ bits
\begin{equation}\nonumber
y_{n} y_{n-1} ... y_{n-l+2} x_{l+1} x_{l+2}...x_{n}.
\end{equation}
\noindent In our thought experiment we have a large number of replicas of this $n$-bit system with
canonically conjugate variables $p$ and $q$, prepared in the same state.
Suppose that we perform two sets of measurements, one for $o_a$ and one for $o_b$.
Then we consider only two special subsets of these measurements, where
the common $n-1$ bits of $o_a$ and $o_b$ take a certain pair of values, say 
$o^0_a = y^0_{n} y^0_{n-1} ... y^0_{n-l+2}  x^0_{l} x^0_{l+1} x^0_{l+2}...x^0_{n}$ and
$o_b^0 = y^0_{n} y^0_{n-1} ... y^0_{n-l+2} y^0_{n-l+1} x^0_{l+1} x^0_{l+2}...x^0_{n}$ \allowbreak 
(from now on we will indicate a specific value of an observable by the superscript ``$^0$'').
It is apparent that these subsets of measurements can be used to estimate
the conditional probabilities
\begin{subequations}\label{eq:partoperrho}
\begin{multline}
   \rho_{q|qp}^{l} = \rho(q \stackrel{l}{=} x^0_{l} | q \stackrel{...}{=} [x^0_{l+1} ...x^0_{n}] \\ \land
p \stackrel{...}{=} [y^0_{n-l+2} ... y_n]) \label{eq:partoperrhoa}  
\end{multline}
\vspace{-1cm}
\begin{multline}
\rho_{p|qp}^{l} = \rho(p \stackrel{l}{=} y^0_{n-l+1} | q \stackrel{...}{=}
[x^0_{l+1} ...x^0_{n}] \\ \land
p \stackrel{...}{=} [y^0_{n-l+2} ... y_n] ) \label{eq:partoperrhob}
\end{multline}
\end{subequations}
\noindent by counting the relative frequencies of the measurements whose results are $o^0_a$ and
$o^0_b$. 
The above conditional probabilities are exactly the ones that occur in the chain reconstruction scheme
stated by the formulas \eqref{eq:rhocondllp} and are represented by the incoming lines of each $\chi$
node in the butterfly diagram. 

We have therefore achieved our first goal, which is to show that the $\chi$-node of the butterfly
diagram corresponds to an elementary one-bit system, whose  (conditional)  probabilities are estimable 
through a repeated measurement scheme.

According to what discussed in section \ref{sec:modindimpl}, the only family of partitions that
preserves its structure under translations and satisfies the invariance of the model required by axiom
\ref{ax:scalability} is the one stated by theorem \ref{th:lsbgen}, whose sets have the form
$A_{\mu} = \{ x \stackrel{l...n}{=} \mu \}$.
It must be pointed out that 
axiom \ref{ax:scalability}  must also be considered in relation to the pair  physical system-measurement apparatus: 
the set of measurements on the observables $o_a$ and $o_b$, which allowed us to isolate the subsystems
discussed above, is based on a choice of a set of  partitions that, according to axiom \ref{ax:scalability},
must be invariant under translations. 
The definitions \eqref{eq:partoper} of $o_a$ and $o_b$, which are based on the tree of
binary partitions, are the only one compatible with what stated by theorem \ref{th:lsbgen}.

This outcome represents a further refinement in our approach, which incrementally constrains a general
model through successive conditions.

\mysubsubsection{Equivalence to the three-observable elementary system.} \mbox{} \newline
To demonstrate the equivalence between the one-bit subsystem we have just described and the system
analyzed in section \ref{sec:threeobs}, 
we can assume that the conditional probabilities (or any equivalent statistical parameter) of the
subsystem are a subset of the coordinates of the larger system's state space that has observables
$o_a$ and $o_b$. Specifically, after identifying the system with observables $o_a$ and $o_b$, and
other additional observables, we can focus on the subspace of the state space that is 
defined by fixing all the conditional probabilities of the level $l$, $\rho_{q|qp}^{l}$
and $\rho_{p|qp}^{l}$, except for those with $o_a = o^0_a$ and $o_b = o^0_b$,  see
Eq.\eqref{eq:partoperrho}, and for the parameters of any
additional observables.

As discussed in section \ref{sec:estprecsubsp}, if the Precision Invariance axiom is valid for the
whole system formed by the observables, it must also be valid in the subspace whose coordinates are
represented by the conditional probabilities \eqref{eq:partoperrho}.

Thus, we have an elementary one-bit system, like those we discussed in sections \ref{sec:twoobsbasic}
and \ref{sec:threeobs}. In our case, we will primarily need two conditions to treat the problem:
the Precision Invariance axiom and the symmetry $\rho(0) \leftrightarrow \rho(1)$, which can be
considered a reasonable hypothesis.
We will adopt the parametrization  $\rho_{p,q}(^0_1) = \frac{1 \pm \mu\,cos(\alpha_{p,q})}{2}$,
which resembles the definitions \eqref{eq:spqalphadef} given in appendix \ref{apx:statman}. 
This choice does not assume the intersection of the state manifold with the cardinal points, but
allows us to follow the same steps we discussed in \ref{sec:twoobsbasic} and \ref{sec:twoobs} to apply
the Precision Invariance to the two-observable case.
Thus, we obtain a condition in the form \eqref{eq:fishermetrpres2d} or \eqref{eq:solutionfor2d},
namely, $ \alpha_p =  \pm \alpha_q + {\textnormal Const.}$.
The only way to make this equation compatible with the $\rho(0) \leftrightarrow \rho(1)$ exchange
symmetry is to define the constant term as $\pi/2$ and to choose the $-$ sign in the $\pm$ term,
which implies the condition $\alpha_p = \pi/2 - \alpha_q$.
This allows us to represent the exchange symmetry as $\alpha_q \rightarrow - \alpha_q$ and
$\alpha_p \rightarrow - \alpha_p$ for the $p$ and $q$ axis respectively. 
If we compare this condition for $\alpha_q$ and $\alpha_p$ with
the condition \eqref{eq:thetapvsthetaq3d} and consider the 
definitions of the parametrizations \eqref{eq:spqalphadef},
we can state that the two-observable system we have just described is fully equivalent to the
three-observable system we presented in section \ref{sec:threeobs} and appendix \ref{apx:statman},
with the restriction that we are now considering the subspace 
$S_r =  {\textnormal Const.}$  (note that we use the same symbol $r$ for both the 
$n$-bit the observable $r$ and the one-bit one; of course, in this context, $r$ is one bit).
With this assumption, the  $\mu$ parameter that occurs in \eqref{eq:spqalphadef} can be defined as a
function of the statistical parameters of $r$, as described by \eqref{eq:murvsthetar}.

This scenario is consistent with what we have remarked in the previous section,
when we put the $\chi$ condition in the form  $\rho_{q|qp} = \chi^{l,k}\left(\rho_{q|qp};
\alpha^k\right)$. In that case, the dependencies from the other nodes of the butterfly 
diagram and the additional observable's parameters can be embedded in a single
vector parameter $ \alpha^k$.

The above analysis has been applied to the case of a binary butterfly diagram, but
can be extended to cases in which the basic element of the diagram is a subsystem 
with $N$ larger than 2. This would lead us to larger separable subsystems.

\subsection{The extended Fisher metric}\label{sec:ndimxtdfishmetr}
In this subsection, we will derive an expression for the extended Fisher metric for an $N$-dimensional system. Our derivation will be based on the hypotheses 
that a large system can be divided into smaller subsystems and that the Fisher metric 
is independent of any particular ordering we choose for the domains of 
the system's observables.

\mysubsubsection{General form of the Fisher metric}
In what follows, we will introduce a parametrization for the probability distributions and the
orthogonal extension of such a parametrization. 
Let us consider the factorization of the probability distribution in the form
\eqref{eq:partitionedrhofull}, which is based on dividing the $q$ domain into a family of nested
binary partitions.
We can rewrite that expression for a $2^n$-dimensional system by using a shorthand notation:
\begin{equation}\label{eq:rhofactoriz}
\rho_{[j_1...j_n]} =  \prod_{l=1}^{n} \rho_{j_l|[j_{l+1}...j_n]},
\end{equation}
\noindent where the probabilities are referred to the 
domain of the observable $q$: $\rho_{[j_1...j_n]} = \rho(q = [j_1...j_n])$ and
$\rho_{j_l|[j_{l+1}...j_n]} = \rho(q \stackrel{l}{=} j_l \big| q \lsbeq [j_{l+1}...j_n])$.
Following the notation used for the $N=2$ case, we also introduce the parametrization $\theta$
through the definition
\begin{equation} \label{eq:rhovsthetan}
\begin{array} {llc}
\rho_{0|[j_{l+1}...j_n]} & = & \cos^2 \frac {\theta_{j_{l+1}...j_n}}{2}  \\
\rho_{1|[j_{l+1}...j_n]} & = & \sin^2 \frac {\theta_{j_{l+1}...j_n}}{2}.  
\end{array}
\end{equation}
\noindent With this notation, the last term of the factorization \eqref{eq:rhofactoriz} is
$\rho_0$ or $\rho_1$, and $\theta$ is its parametrization.
For each parameter $\theta_{j_{l+1}...j_n}$, we also introduce the parameter extension
$\alpha_{j_{l+1}...j_n}$.
Let us now write the Fisher information matrix for our system:
\begin{equation}\label{eq:ndimfishmtrx}
\mathcal{I}_{jj'}=
\textnormal{E}  
\left[
\frac{\partial \log \rho}
{\partial \theta_{j_{l+1}...j_n }} 
\frac{\partial \log \rho}
{\partial \theta_{j'_{l+1}...j'_n }}\bigg|\boldsymbol{\theta} 
\right].
\end{equation}
\noindent We apply this definition to write the Fisher metric. By considering the factorization  \eqref{eq:rhofactoriz} for the probability $\rho$ and expanding into sums the logarithm of products, only the terms that actually depend on the differentiation variable are left. After some cumbersome but straightforward calculations, which also take into account the normalization $\sum_{j_l} \rho_{j_l|[j_{l+1}...j_n]} = 1$ of the conditional
probabilities, we can write the Fisher metric for the parametrization $\theta_{j_{l+1}...j_n}$ as:
\begin{multline}\label{eq:ndimfishmetr1}
d {s}^2_{F} =  \sum_{l=1}^{n} \sum_{j_l...j_n} \left(
\frac{\partial \log \rho_{j_l|[j_{l+1}...j_n] }}
{\partial \theta_{j_{l+1}...j_n }}
 \right)^2 \twocolbreak \cdot  
  d \theta_{j_{l+1}...j_n }^2  \prod_{l=1}^{n}\rho_{j_l|[j_{l+1}...j_n]}.
\end{multline}
\noindent 
Now, suppose that we divide our $ 2^n$-dimensional system into two 
 $2^{n-1}$ subsystems, one for even values $q$ and the other for odd values.
 The Fisher metric of these subsystems can be written as:
 \begin{multline}\label{eq:ndimfishmetrsub}
d {s}^2_{F}\big|_{k} =  \sum_{l=1}^{n-1} \sum_{j_l...j_{n-1}}   \left(
\frac{\partial \log \rho_{j_l|[j_{l+1}...j_{n-1} k] }}
{\partial \theta_{j_{l+1}...j_{n-1} k }}
 \right)^2 \twocolbreak \cdot
  d \theta_{j_{l+1}...j_{n-1} k}^2  
  \prod_{l=1}^{n}\rho_{j_l|[j_{l+1}...j_{n-1} k]}, 
\end{multline}
\noindent where, by $k=0,1$, we indicate the even/odd subsystem. Note that, if we compare this
expression with \eqref{eq:ndimfishmetr1}, we can see that the index $j_n$, which indicates the least
significant bit, has been renamed as $k$. 
If we consider equation \eqref{eq:ndimfishmetr1}, all the terms in the summation, except for the last 
one, contain the factor $\rho_{j_n}$. If we insulate this term, rename the index $j_n$ as $k$, and 
collect the common factor $\rho_{k}$, Eq.~\eqref{eq:ndimfishmetr1} can be written in terms of the even
$q$ and odd $q$ contribution to the Fisher metric:
\begin{equation}\label{eq:ndimfishmetr2}
d {s}^2_{F} =  \sum_{k}d s^2_{F}\big|_{k} \rho_{k} +
d \theta^2.
\end{equation}
\noindent The last term of the above equation comes from the last ($l=n$) term in the summation 
in Eq.~\eqref{eq:ndimfishmetr1} and has been obtained by directly calculating the
logarithmic derivative and the derivatives of the probabilities $\rho_{k}$  given by
Eq.~\eqref{eq:rhovsthetan}.

The total Fisher metric is given by an extensionless part  $ds^2_F$, given by 
\eqref{eq:ndimfishmetr1}, and an $\alpha$-dependent part $ds^2_A$:
\begin{equation} 
{ds}^2 = ds^2_F + ds^2_A.
\end{equation}
\noindent We will suppose that the system can be separated into an even $q$ and odd $q$ subsystem,
which can be handled independently of the larger system; each branch has a ``weight'' $\rho_k$.
If we observe  Fig.~\ref{butterfly8}, it is evident that the butterfly diagram
can be divided into even $q$ and odd $q$ branches. These branches are interconnected only at $l=3$.
We can assume that each subsystem obeys a Fisher metric preservation property independently of
the larger system. Therefore, for each subsystem, we can define an extended Fisher metric that
does not depend on the branch weight $\rho_k$.
In the case of the extensionless Fisher metric defined in Eq.~\eqref{eq:ndimfishmetr2}, we identified
the contribution of the even $q$ and odd $q$ subsystems and assigned the weight $\rho_k$ to each 
subsystem.
The only choice that makes the extended Fisher metric compatible with the expression
\eqref{eq:ndimfishmetr2} is to use the same coefficients $\rho_k$ to combine the 
even $q$ and odd $q$ contribution:
\begin{equation} \label{eq:FSm4gen}
{ds}^2 = \sum_k \rho_k {ds\big|_k}^2  + {d{\bar{s}}}^2,
\end{equation}
\noindent where the ${d{\bar{s}}}^2$ is the cross-subsystem contribution to the Fisher metric,
which in \eqref{eq:ndimfishmetr2} was given by the $d \theta^2$ term.
This contribution must also depend on the parameter extension $\alpha$, and we suppose that the matrix
element of this dependence has the form of a function of $\theta$:
\begin{equation} \label{eq:FSbar4}
  {d{\bar{s}}}^2 = d {\theta}^2 + h(\theta)^2 d {\alpha}^2.
\end{equation}

\mysubsubsection{Permutation invariance in a binary partition tree} 
Theorem \ref{th:fmprese} imposes conditions on the structure of the state space by establishing an
equality between the Fisher metrics of parametrizations corresponding to two different independent
observables.
Suppose that we compare the Fisher metric of two different parametrizations, with their extensions,
which refer to the probability distribution of the same observable. In that case, the theorem still
applies, and its proof becomes simpler: we merely perform a change of variables in both the Fisher metric
tensor and the differentials. This leads to a degenerate form of Theorem \ref{th:fmprese}, as it does not
contain any actual condition on the structure of the state space, it is just a change of
variables.  Nonetheless, any parametrization of the same observable must preserve the Fisher metric's
invariance under reparametrization.

Consider a family of (extended) parametrizations of a given observable $q$, each defined using a distinct
binary partition tree. 
A first class of transformation under which we expect the Fisher metric to be invariant involves
permutations within each binary partition.
To be more specific, we consider the permutations that swap the sets 
$\{q \lsbeq \left[0 j_{l+1} \ldots j_n\right]\}$
and $\{q \lsbeq \left[1 j_{l+1} \ldots j_n\right]\}$, for every $\left[j_{l+1} \ldots j_n\right]$ at a given level $l$.
These transformations correspond, in group-theoretic terms,
to Sylow 2-subgroups of the symmetric group of order $2^n$\cite{Kurzweil}.  

Let us now explore how such transformations impact the parameter extensions; in the framework of our
binary tree partitions, if we start from the parameters $\theta_{j_{l+1}...j_n}$, their extension is
$\alpha_{j_{l+1}...j_n}$.
As noted in subsection \ref{sec:addobssym}, a transformation can involve both base and additional
observables, and we require the Fisher metric to be invariant under such coupled transformations.
Since the parameter extensions are a consequence of the increase in state space dimensionality resulting
from the introduction of additional observables, we expect a transformation of such  
observables be represented in terms of a transformation of the parameter extension---such as the
transformation \eqref{eg:shiftop2d} in the $N=2$ case.

We can simplify our derivations by introducing the ``phase'' parametrization $\phi$, assuming the mapping
from the parametrization $\alpha$ to $\phi$ to be differentiable. Thus, we can establish
a linear dependence on the differentials of these parametrizations:
\begin{equation} \label{eq:alpha1w}
d \alpha_{j_{l+1}...j_n} = \sum_{j_1... j_{l}} R_{j_1 ... j_n} d \phi_{j_1...j_n}.
\end{equation}
\noindent As in the $N=2$ case, there are more $\phi$ parameters than $\alpha$ parameters.
This redundancy introduces some freedom in the  choice of parameters $\phi$ and permits us to impose the condition
$\sum_{j_1 \ldots j_{l-1}} R_{j_1 \ldots j_{l-1}; 1 j_{l+1} \ldots j_n} = - \sum_{j_1 \ldots j_{l-1}} R_{j_1 \ldots j_{l-1}; 0 j_{l+1} \ldots j_n}$,
which ensures that $\alpha$ remains invariant under a global phase shift
$\phi_{j_1 \ldots j_n} \rightarrow \phi_{j_1 \ldots j_n} + \delta \phi$.
It also implies that changing the sign of $\alpha_{j_{l+1} \ldots j_n}$ corresponds to interchanging
$\phi_{j_1 \ldots j_{l-1} 0 j_{l+1} \ldots j_n}$ and $\phi_{j_1 \ldots j_{l-1} 1 j_{l+1} \ldots j_n}$.

At this point, the idea of defining permutations that simultaneously involve the
base observables and the additional ones comes into play: a permutation of the sets of the $(l+1)$-th
level binary partition effectively swaps the probabilities $\rho_{0|\left[j_{l+1} \ldots j_n\right]}$
and $\rho_{1|\left[j_{l+1} \ldots j_n\right]}$, which by \eqref{eq:rhovsthetan}, corresponds to the
transformation $\theta_{j_{l+1} \ldots j_n} \rightarrow \pi - \theta_{j_{l+1} \ldots j_n}$.
If we assume that this occurs along with a sign change in $\alpha_{j_{l+1} \ldots j_n}$
(which is supposed to represent a transformation on an extra observable), we obtain a permutation of the
phase parameters $\phi_{j_1 \ldots j_{l-1} 0 j_{l+1} \ldots j_n}$ and
$\phi_{j_1 \ldots j_{l-1} 1 j_{l+1} \ldots j_n}$.
We have therefore demonstrated that,  for
transformations that involve both base and additional observables, the $\rho$'s and the $\phi$'s permute in identical ways, provided that
the permutations occur within the same partition of a given level
of the binary tree (or, in other words, if the permutations belong to the
Sylow 2-subgroup of the family of binary partition).

\mysubsubsection{Symmetric group invariance and the positional phase parameter}  
The properties we have just discussed refer to a special family of  binary partitions, whose sets are
$A_\mu = \{ x \stackrel{l \ldots n}{=} \mu \}$.   We can generalize our results by considering all
possible binary partition families, each inducing a distinct parameter extension
$\alpha_{j_{l+1} \ldots j_n}^{\mathcal{A}}$, with $\mathcal{A}$ labeling the partition family.
We assume that these parameter extensions map to a common parametrization $\phi_{j_1 \ldots j_n}$ via the
same form as \eqref{eq:alpha1w}, now with partition-dependent coefficients
$R_{j_1 \ldots j_n}^{\mathcal{A}}$.
Note that, if we consider a binary partition family different from that of Eq.~\eqref{eq:alpha1w},
the index $j$ no longer denotes an observable's domain position but rather a location within a binary
tree. We can therefore rename that index as $j^{(\mathcal{A})}$.
We define an integer functions $J(j^{(\mathcal{A})})$ that retrives the actual position of the
$q$ from the index $j^{(\mathcal{A})}$.

These clarifications allow us to write in a consistent way the parameter dependence occurring in
equation \eqref{eq:alpha1w}, namely, $d \phi_{J(j^{(\mathcal{A})})}$, being $j^{(\mathcal{A})}$
the index in the summation on the RHS of Eq.~\eqref{eq:alpha1w}. 

Indexing the phase parameter $\phi$ in this way helps us understand how it transforms under
permutations within sets belonging to any binary partition family: it follows the positional
permutation of the base observable $q$, just like the probability $\rho_j$, since the function
$J(j^{(\mathcal{A})})$ recovers the correct value of $j$.

Since Eq.~\eqref{eq:alpha1w} holds for all binary partition families, we can repeat the derivation from the previous paragraph and conclude that, for permutations within any family of partition (i.e., permutations belonging to any Sylow 2-subgroup of the symmetric group), the $\phi$'s transform in the same way as the $\rho$'s.
Furthermore, since transpositions (i.e., permutations that swap two elements) generate the entire symmetric group, and every transposition is a permutation within a set partition of at least one
possible binary tree (i.e., belongs to some Sylow 2-subgroup), we can extend the covariance property of the $\phi$'s and the $\rho$'s to the entire symmetric group.

The above arguments allow us to require the invariance of the extended Fisher metric under the full symmetric
 group acting on the probabilities $\rho$ and the phases $\phi$.

\mysubsubsection{Derivation of the extended Fisher metric}
We can directly impose the permutation invariance to the Fisher metric as defined by Eqs.
\eqref{eq:FSm4gen} and \eqref{eq:FSbar4} expressed in terms our base family binary partition with sets 
$A_{\mu} = \{ x \stackrel{l...n}{=} \mu \}$, and drop the partition superscript $\mathcal{A}$.
An alternative form of the identity \eqref{eq:alpha1w} that can make our calculations easier is
\begin{equation} \label{eq:alpha1n}
\begin{split}
d \alpha_{j_{l+1}...j_n} =
& \hspace{-0.2cm} \sum_{j_1... j_{l-1}} \hspace{-0.2cm} r_{j_1... j_{l-1}; 1  j_{l+1}...j_n} d \phi_{j_1... j_{l-1} 1  j_{l+1}...j_n} \\ 
 - & \hspace{-0.2cm} \sum_{j_1... j_{l-1}} \hspace{-0.2cm} r_{j_1... j_{l-1};  0 j_{l+1}...j_n} d \phi_{j_1... j_{l-1} 0  j_{l+1}...j_n},
\end{split}
\end{equation}

\noindent where we have used the coefficients $r_{j_1 ... j_n}$ instead of $R_{j_1 ... j_n}$,
which obey the condition $\sum_{j_1... j_{l-1}} r_{j_1... j_{l-1}; 1 j_{l+1}...j_n} =
\sum_{j_1... j_{l-1}} r_{j_1... j_{l-1}; 0 j_{l+1}...j_n}$. In the above definition  the indices
$j_{l+1}...j_n$ identify a subsystem that involves all the points of the $q$
axis, whose most significant bits are $j_{l+1}...j_n$. Therefore, $d \alpha_{j_{l+1}...j_n}$
refers to that subsystem.
Furthermore, each of these subsystems is divided into two smaller subsystems, with indices
$1 j_{l+1}...j_n$ and $0 j_{l+1}...j_n$ respectively. 

By following the two-point system case, we can pack the $\phi_{j_1... j_n}$
parameters and the $\rho_{[j_1... j_n]}$'s into a unique complex vector: \newline
$\psi_{[j_1... j_n]}^1 = \sqrt{\rho_{[j_1... j_n]}} e^{i \phi_{j_1... j_n}}$.

The invariance of the Fisher metric under the symmetric group discussed in this section
is the basis for the following theorem:

\begin{theorem} \label{th:hissin}
The extended Fisher metric in the form 
\eqref{eq:FSm4gen} is invariant under position permutations, only if:
(i) $r_{j_1...j_l;j_{l+1}j_n} = \rho_{[j_1...j_l]|[j_{l+1}j_n]}$ in the definition 
\eqref{eq:alpha1n}  and 
(ii) $h^2(\theta) = \sin^2(\theta)$ in the Fisher metric given by the equations
\eqref{eq:FSm4gen} and \eqref{eq:FSbar4}. This leads to the following expression of the 
extended  Fisher metric in terms of probabilities and phases:
\begin{multline} \label{eq:FSm4std}
{ds}^2 = \hspace{-0.2cm} \sum_{j_1...j_n}\hspace{-0.1cm}(d \log \rho_{[j_1...j_n]})^2 +
4 \hspace{-0.2cm} \sum_{j_1...j_n} \hspace{-0.1cm} d \phi_{j_1...j_n}^2 \rho_{[j_1...j_n]} \twocolbreak -
4 \left( \sum_{j_1...j_n} d \phi_{j_1...j_n} \rho_{[j_1...j_n]} \right)^2. 
\end{multline}
\end{theorem}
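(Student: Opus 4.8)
The plan is to prove the necessity (``only if'') direction by fully expanding the recursively defined metric \eqref{eq:FSm4gen}--\eqref{eq:FSbar4} in terms of the radial differentials $d\log\rho_{[j_1\dots j_n]}$ and the phase differentials $d\phi_{j_1\dots j_n}$, and then matching it against the manifestly permutation-invariant Fubini--Study form. I would work by induction on the number of bits $n$, peeling off the least significant bit exactly as the recursion \eqref{eq:FSm4gen} does, and packaging the amplitudes into the complex vector $\psi_{[j_1\dots j_n]}=\sqrt{\rho_{[j_1\dots j_n]}}\,e^{i\phi_{j_1\dots j_n}}$, so that the target \eqref{eq:FSm4std} is recognized as $ds^2=\sum_j|d\psi_j|^2-\big|\sum_j\psi_j^{*}\,d\psi_j\big|^2$. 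Throughout I assume Theorem \ref{th:fmprese} and the preceding result that, for permutations within any Sylow $2$-subgroup (hence, after composing transpositions, the full symmetric group), the $\rho$'s and the $\phi$'s transform identically.

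The radial (extensionless) part requires no new input: the computation \eqref{eq:ndimfishmetr1}--\eqref{eq:ndimfishmetr2}, together with $g_\theta=1$ from \eqref{eq:gderivation}, already shows that $ds_F^2$ reproduces, up to the overall normalization, the $\sum_j(d\log\rho_{[j_1\dots j_n]})^2$ term of \eqref{eq:FSm4std}. So the content lies entirely in the phase part carried by the extensions $\alpha$. Here I would substitute \eqref{eq:alpha1n} to write each $d\alpha_{j_{l+1}\dots j_n}$ as the difference of two $r$-weighted sums of $d\phi$ over the ``$0$'' and ``$1$'' sub-branches, and read off its contribution $h(\theta)^2\,d\alpha^2$ weighted by the branch probability. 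The key organizing identity is the law of total variance: the $\rho$-weighted variance of $d\phi$ over all leaves decomposes as the average of the within-branch variances plus the variance of the conditional mean phases across the least-significant-bit split. By the inductive hypothesis the first piece is exactly $\sum_k\rho_k\,ds|_k^2$, so permutation invariance forces the cross term $h(\theta)^2\,d\alpha^2$ to equal the variance of the conditional means.

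Matching the two pieces then pins down the free data. For the cross term to be the variance of conditional means, the quantity $\sum_{j_1\dots j_{l-1}}r_{\dots;k\dots}\,d\phi_{\dots k\dots}$ must be the conditional mean $\langle d\phi\rangle_{|k}$, which forces $r_{j_1\dots j_l;j_{l+1}\dots j_n}=\rho_{[j_1\dots j_l]\mid[j_{l+1}\dots j_n]}$, i.e.\ claim (i). With $r=\rho_{\mathrm{cond}}$ one gets $d\alpha=\langle d\phi\rangle_{|1}-\langle d\phi\rangle_{|0}$, and for a binary split the variance of conditional means is $\rho_0\rho_1\,d\alpha^2$. Equating with $h(\theta)^2\,d\alpha^2$ and using $\rho_0\rho_1=\cos^2\tfrac\theta2\sin^2\tfrac\theta2=\tfrac14\sin^2\theta$ from \eqref{eq:rhovsthetan} gives $h(\theta)^2\propto\sin^2\theta$, which is claim (ii), the constant being absorbed into the overall factor $4$ of \eqref{eq:FSm4std} and consistent with the one-bit result \eqref{eq:gmudef}/\eqref{eq:finalfishermetric}.

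The subtle point --- and the step I expect to be the main obstacle --- is the necessity direction itself: showing that joint permutation invariance, rather than merely admitting the above solution, \emph{excludes} every other choice of $r$ and $h$. For this I would not treat invariance as a single constraint but match the coefficient of each monomial $d\phi_i\,d\phi_j$ separately. The off-diagonal coefficient between two leaves that differ only in their least significant bit isolates the top-level $h$ and the top-level $r$, fixing them; descending the tree and repeating the argument on each sub-branch (where the relevant symmetries are exactly the Sylow $2$-subgroup transpositions established before the theorem) fixes $r$ and $h$ at every node. The care needed is to verify that these level-by-level constraints are mutually consistent and collectively exhaustive --- that once all diagonal and off-diagonal coefficients are matched to the variance form $\sum_j\rho_j\,d\phi_j^2-\big(\sum_j\rho_j\,d\phi_j\big)^2$, no freedom remains --- which is precisely what closes the induction and yields \eqref{eq:FSm4std}.
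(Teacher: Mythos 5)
Your proposal is correct and follows essentially the same route as the paper's proof in Appendix~\ref{apx:alphadepfishinftheorem}: induction on the binary tree via the recursion \eqref{eq:FSmrec}, with your ``law of total variance'' organizing identity being exactly the paper's key identity \eqref{eq:phisqridentity2}, the same use of $\rho_0\rho_1=\tfrac14\sin^2\theta$ to pin down $h^2(\theta)=\sin^2\theta$, and the same necessity argument (the paper phrases your monomial-by-monomial coefficient matching as the requirement that the mixed terms $d\phi_{\mathbf{j}k}\,d\phi_{\mathbf{j}k'}$ in Eq.~\eqref{eq:angfisher5} cancel, which forces $r_{\mathbf{j}k}=\rho_{\mathbf{j}|k}$ and $\overline{h}=1$). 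No gaps; your treatment of the ``only if'' direction is, if anything, slightly more explicit than the paper's.
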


\noindent The proof of this theorem is presented in appendix \ref{apx:alphadepfishinftheorem}.

\mysubsubsection{The Fubini-Study metric}\label{sec:fsmetric}

If we use an integer index $j$, which ranges in the interval from $0 ... N-1$,
instead of a tuple of one-bit indices, Eq.~\eqref{eq:FSm4std} can be rewritten as: 
\begin{equation} \label{eq:FSmnstd}
{ds}^2 = \sum_{j} (d \log \rho_{j})^2 + 4 \sum_{j} d \phi_{j}^2 \rho_{j}
-  4 \left( \sum_{j} d \phi_{j} \rho_{j} \right)^2 \hspace{-0.2cm}, 
\end{equation}
\noindent where the symbols $\rho_{j}$ and $\phi_{j}$ now identify  the squared amplitude and the
phase of a $N$-points wave function, and $j$ ranges in the interval from $0$ to $N-1$.
An important remark must be made about this expression: let us recall the Fubini-Study 
metric~\cite{FacchiAndVentriglia} in the form 
\begin{equation} \label{eq:fubstudybas}
d s_{FS}^2 = 
\frac{\langle \delta \psi \vert \delta \psi \rangle}
{\langle \psi \vert \psi \rangle} - 
\frac {\langle \delta \psi \vert \psi \rangle \; 
\langle \psi \vert \delta \psi \rangle}
{{\langle \psi \vert \psi \rangle}^2},
\end{equation}
\noindent  where $\psi$ is a generic complex wave function and  $\delta \psi$ is its differential.
If, in this expression, we consider $\psi$ as a discrete wave function of a pure state,
where $\psi_i = \sqrt{\rho_i} e^{i \phi_i}$, we can derive an expression of the form
\eqref{eq:FSmnstd} (for the derivation refer to~\cite{FacchiAndVentriglia}) multiplied by a factor of $1/4$. We can therefore conclude that
the extended Fisher metric corresponds, up to a factor of $1/4$, to the Fubini-Study metric or to the quantum Fisher information metric in the case of pure states (again, see~\cite{FacchiAndVentriglia}).

\section{Linearity}\label{sec:lingen}
\paragraph{Linearity from precision invariance}
In order to complete our reconstruction scheme of quantum mechanics, we need to consider
another element: roughly speaking, standard quantum mechanics, in its basic formulation, provides a systematic way to modify classical mechanics by replacing observable quantities with corresponding linear operators---most notably, substituting canonical coordinate pairs with pairs of corresponding operators.
Hence, a canonical transformation, in standard quantum mechanics, takes the form of a linear unitary
transformation: the transformed operator $O'$ of an observable $o$ 
can be put in the form $T^{\dagger} O' T$, where $T$ is the linear operator representing the canonical
transformation. 
A linear transformation can also be regarded from a different
perspective, namely, as a change of basis in a linear vector space. This is sometimes 
referred to as the ``passive'' (or ``alias'') transformation, in contrast with the ``active'' (or ``alibi'')
transformation, which changes the state itself. 

This is just one aspect of the linear structure of quantum mechanics. Linearity in quantum theory is intimately connected to its probabilistic interpretation.

Suppose we have a generic state $\vert \psi \rangle$ and an observable $o$. We can decompose $\psi$ by projecting it onto the eigenstates of $o$:
$\vert \psi \rangle = \sum_i c_i \vert i \rangle_o$.
In this representation, the squared modulus of the coefficient $c_i$ must be interpreted as the probability that a measurement of the observable $o$ yields the eigenvalue $o_i$, corresponding to the eigenstate $\vert i \rangle_o$.

In our formulation, this condition holds by construction. We define the wavefunction $\psi$ as a parametrization of the state such that, for example, $|\psi_q|^2$ represents the probability distribution of an observable $q$. Note that, if we define $\vert i \rangle_o$ as the states with definite values of $o$, an alternative way to write $\psi_q$  is
$\sum_i c_i \vert i \rangle_q$.

Nevertheless, to ensure linearity in our formulation, we must require that any state $\psi$, regardless of its original representation, can be re-expressed as a linear combination of the eigenstates of a different observable and that the squared moduli of the expansion coefficients in this new representation continue to represent measurement probabilities. That is, any change of representation must occur via a linear transformation.

In \ref{sec:fsmetric} we demonstrated that the extended Fisher metric for a system with two 
conjugate variables (Eq.~\eqref{eq:FSm4std}) corresponds to the Fubini-Study metric in the form
\eqref{eq:fubstudybas}.
It can be shown that, if the Fubini-Study metric is preserved, then the transformation between representations must be either a linear unitary or an antilinear antiunitary operator.

To demonstrate this, we can express the preservation condition of the Fubini-Study metric in an alternative form.
By integrating the differential $d s_{FS}$ defined by Eq.~\eqref{eq:fubstudybas}  along a
geodesic, we get the Fubini-Study distance between two states $\psi_1$ and $\psi_2$ (see, for example,~\cite{FacchiAndVentriglia}):
\begin{equation}\label{eq:fsfinite}
D_{FS}(\psi_1, \psi_2) = \arccos \sqrt{
\frac{\langle \psi_1 \vert \psi_2 \rangle}{\langle \psi_1 \vert \psi_1 \rangle}
\frac{\langle \psi_2 \vert \psi_1 \rangle}{\langle \psi_2 \vert \psi_2 \rangle}
}.
\end{equation}
\noindent The squared modulus of the scalar product between the states $\psi_1$ and $\psi_2$ appearing in this expression represents the transition probability between the states. It is evident that if, as stated by theorem~\eqref{th:fmprese}, the Fubini-Study metric is invariant under change of parametrization, then the geodesic distance \eqref{eq:fsfinite} derived from this metric, and the transition probability it involves, are invariant as well.

The preservation of the transition probability places us within the hypotheses of Wigner's theorem~\cite{Wigner1,Bargmann_1964}.
Strictly speaking, in Wigner's theorem, the transformation under which the transition probability is invariant is a \textit{symmetry}, i.e., an active (``alibi'') transformation, while, in our case, we are dealing with a change of representation, i.e., a passive (``alias'') transformation. Nevertheless, the formulation of Wigner's theorem is sufficiently abstract that its mathematical conditions are satisfied by passive transformations as well.

For the sake of completeness, in appendix~\ref{apx:wignertheorem} a proof of Wigner's theorem, formulated in terms of invariance under changes of representation (``alias'' transformations) is presented. The proof is based on the direct application of the preservation of geodesic distance.

It is worth noting that in our formulation, we can discard the antiunitary case implied by Wigner's theorem---not because it is incorrect, but because it corresponds to a mirrored yet physically equivalent formulation of quantum mechanics.

The linearity condition introduced above can be reformulated as follows: we define the linear operator associated with an observable by assuming it is diagonal in the basis of the observable's eigenstates:
$O = \sum_i \vert i \rangle_o\, o_i\, \langle i \vert_o$.
If linearity holds, the operator corresponding to the observable in a different representation takes the form $O' = U^{\dagger} O U$, where $U$ is the unitary matrix of the transformation.
Therefore, we can assert that any observable corresponds, in any representation, to a linear Hermitian operator.

\paragraph{Linearity and reconstruction of quantum mechanics}

The demonstration of linearity we have just presented can be crucial in the development of our model. It can be shown that we only need a few more ingredients to complete our program of quantum reconstruction.

There is a well-known argument, presented for example by Dirac in his celebrated book on the principles of quantum mechanics~\cite{diracprinciples}, whereby the commutator between two quantum operators, having to satisfy a series of conditions (linearity, symmetry, Jacobi identity) can only be equal, up to a factor, to the Poisson brackets.
Once the canonical commutation relations have been established, our reconstruction of quantum mechanics can be regarded as almost structurally complete.

The above argument, however, implicitly relies on two assumptions: (i) linearity, and (ii) operator correspondence---namely, that for every classical observable, there exists a corresponding quantum operator, and that classical equations of motion can be transformed into their quantum counterparts by replacing classical quantities with the corresponding operators (as expressed by the Ehrenfest correspondence principle).

In this work, we do not follow this path to complete the reconstruction of quantum theory, as the operator correspondence hypothesis would introduce a strong additional assumption, and involving the dynamics (via Poisson brackets) would weaken the generality of the theory.

Furthermore, the finite-dimensional representation adopted until now cannot support the canonical commutation relations, since the trace of a commutator of finite-dimensional operators is zero. Moreover, there is no basis to assert that the continuum limit of our model is well-defined.
When, at the end of our derivations, we will prove that the discrete Fourier transform connects the 
$p$ wave function with the $q$ wave function, the continuum limit can be applied to our model, since we know that such a limit is applicable, under some reasonable conditions, to the discrete Fourier transform. 

In the sections that follow, we continue with our divide-and-conquer approach, consolidating the model in which the set of predicates on the system takes the form of a butterfly diagram.
This also offers insight into how the limitation of information affects each subsystem of the model.
In this framework, the model's intrinsic symmetries, along with the linearity of the parameter transformations just established, will play a central role.

\section{$N=4$ - three-observable system}\label{sec:fourptsys}
\mysubsubsection{Why is the $N=4$ case relevant?}\label{sec:fourptrelevance}

As remarked in the section \ref{sec:phstatevscondprob}, each node ($\chi$ condition) of the butterfly
diagram can be controlled by a single parameter, which we denote by $\alpha$.
This encloses both a possible dependence of each node on the state of the other nodes and the dependence on
the extra observable $r$.  When we introduce an $N$-dimensional observable $r$, in addition to $p$ and
$q$, we also introduce $N-1$ extra degrees of freedom to the state space.
As stated in section \ref{sec:paramcount}, this extra observable implies that, if we have $N/2 \log N$
knots in a butterfly diagram level, only $N-1$ parameters can be independent, as many as the extra degrees of freedom
introduced by $r$. 

To apply the chain reconstruction scheme introduced in the last section, we must understand how the
$\alpha$'s depend on each other, starting from the simplest system whose butterfly diagram 
has more than one node: the four-point system. The signature of the four-point system is 
$D =3$, $d=2$ (three variables $p$, $q$, $r$, two of which are independent) and $N_{\theta}=3$.
We have four $\alpha$ parameters, 3 of which are independent.

Suppose that we choose, as independent parameters, the $\alpha$'s for $l=1$. 
By following the same notation we used in \eqref{eq:chiindexing} for identifying the $\chi$
conditions, these parameters can be identified by $\alpha^{1\,2}_{[0][\,]}$, $\alpha^{1\,2}_{[1][\,]}$.
The higher level parameters, $\alpha^{2\,1}_{[\,][0]}$ and $\alpha^{2\,1}_{[\,][1]}$, will depend on
$\alpha^{1\,2}_{[0][\,]}$, $\alpha^{1\,2}_{[1][\,]}$ and on the third remaining degree of freedom.
In this section, we show how the form of this dependence can be obtained
from the axioms of the theory introduced in \ref{sec:principlesoftheory}, especially from 
the Precision Invariance axiom and from the symmetries of the model.

\begin{figure}[!ht]
\centering
\includegraphics[scale=0.65]{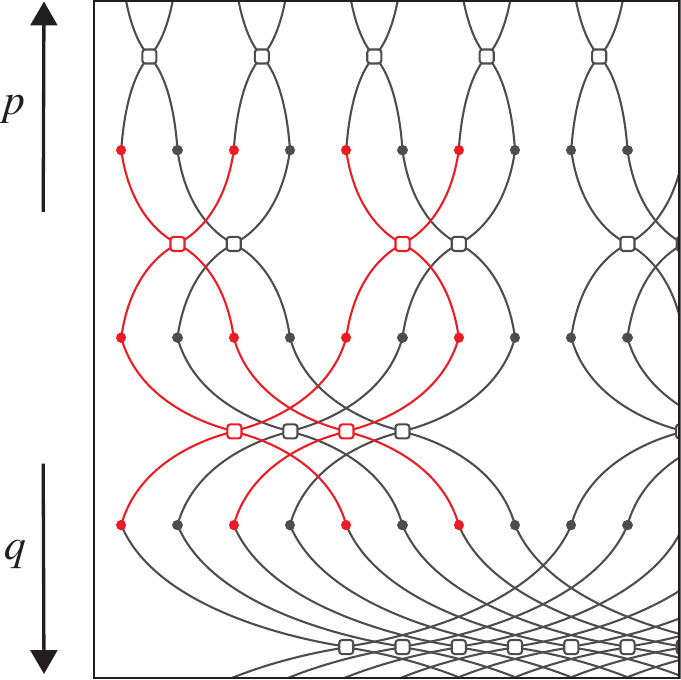}
\caption{A region of a butterfly diagram, with a four-point subdiagram highlighted in red. }
\label{subdiagcut}
\end{figure}

The butterfly diagram shown in Fig.~\ref{butterfly8} has a modular structure: we can identify smaller 
subparts that are linked together to form a larger diagram. In Fig.
\ref{subdiagcut} we can observe that, at any point of the diagram, we can identify a 
4-point subdiagram. 
This is consistent with the idea, mentioned in the previous section, that 
we can identify subsystems of the butterfly diagram with a size greater than 2 (in our case
$N=4$) and show that these subsystems are independent. We can therefore 
accept the hypothesis that a four-point subdiagram of a larger butterfly diagram obeys 
the same axioms as an insulated four-point one.

Furthermore, since a 4-point sub-diagram (of a larger butterfly diagram) is made by four 2-point
subsystems, once we understand the dependence of  the parameters among these subsystems,
we can apply the same dependence scheme to connect all the 2-point subsystems in the larger
$N$-point diagram.
This is a crucial step in our program of reconstructing the connection between the distributions of
$p$ and $q$ by passing through a set of intermediate stages, which correspond to
the levels of the butterfly diagram.

\mysubsubsection{Notation and  parametrization of the probabilities}\label{sec:fourptnotation}

In this section we treat the four-point system by following the scheme introduced in section \ref{sec:threeobs},
namely, we introduce a parametrization for the probability distributions and an orthogonal 
extension. In the framework of our butterfly diagram-based approach, the first level of the
probabilities is the distribution on the $q$ axis. We can use shorthand notations for
the probabilities and the conditional probabilities introduced in the previous section
(see, for example, equation \eqref{eq:rhofactoriz}).
When the notation is unequivocal, the subscripts $p$ or $q$ are  omitted and, when the binary
representation is only one digit long, we can write, for example, $j$ instead of [j].
By following these conventions, in the $n=2$ case, the first level $\rho^q$ distribution can be written
as $\rho^1_{[j k]}$ and the first two equations in the chain-building relation \eqref{eq:partboundsxpa}
take the form
\begin{figure}[!ht]
\centering
\includegraphics[scale=0.85]{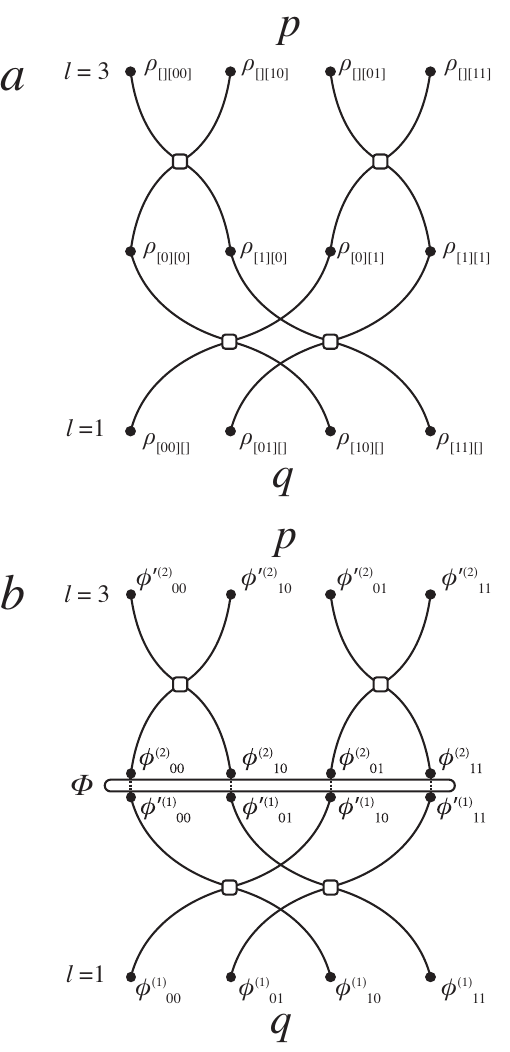}
\begin{subcaptiongroup}
\captionlistentry{a}
\label{butterfly4ab_a}
\captionlistentry{b}
\label{butterfly4ab_b}
\end{subcaptiongroup}
\caption{The butterfly diagram for the four-point system. \subref*{butterfly4ab_a}
The probabilities  $\rho_{[\,][\,]}$ of the set partitions with $l=1,2,3$ are the nodes of the
diagram.  
\subref*{butterfly4ab_b} The ``phase'' parameters ${\phi^{(l)}}_{j k}$  and ${\phi^{(l)}}'_{j k}$
are placed on the lines' endpoints of each elementary (two-point) graph; the function $\Phi_{j k}$
is represented as a transformation layer between the two levels of the diagram.
 }
\label{butterfly4ab}
\end{figure}
\begin{subequations}\label{eq:rho4full}
\begin{alignat}{1}
\rho^1_{[j k][\,]} & = {\rho^q}^1_{j | [k][\,]} \rho^1_{[k][\,]}
 \label{eq:rho4fulla} \\
\rho^2_{[k][j]} & = {\rho^p}^1_{j | [k][\,]} \rho^1_{[k][\,]}. 
\label{eq:rho4fullb}
 \end{alignat}
\end{subequations}
\noindent In Fig.~\ref{butterfly4ab_a} we can see how the above probabilities are
placed in the butterfly diagram. 
In the four-point system, we use the same parametrization \eqref{eq:rhovstheta} for the
probabilities $\rho$ that we introduced for the two-point case. Thus, we  define $\theta^1$ and 
$\theta^1_{k}$ as the parametrizations of the probability functions  $\rho^1_k\,(k=0,1)$ and
$\rho^1_{j | k}\, (j=0,1)$ respectively.
The first-level orthogonal extensions of the parameters $\theta^1$ and $\theta^1_k$ are indicated
as $\alpha^1$ and $\alpha^1_k$. 

\subsection{First-level transformation on the extended parameters}

Now we are going to deal with the first-level transformation of the $N=4$ system, namely,
the transformation that makes us obtain the first-level conditional probabilities
${\rho^p}^1_{j | [k][\,]}$ from the ${\rho^q}^1_{j | [k][\,]}$ (see Fig.~\ref{butterfly4ab}). In terms of extended parameters,
we are going from ${\theta^1}, {\theta^1_k}, {\alpha^1}, {\alpha^1_k}$ to the transformed ones,
which we denote by ${\theta^1}', {\theta^1_k}', {\alpha^1}', {\alpha^1_k}'$. 
We can also use the complex vector form for these parameters.

In this subsection, we will state how the transformed first-level parameters are defined, and how the
Fisher metric can be expressed in terms of them. It should be emphasized that we have some freedom in 
defining such parameters; what has a real effect on the model is how the $\chi$ conditions (or the 
transformations to the higher level of the butterfly diagram) depend on them.

\mysubsubsection{The first-level transformation.}
As anticipated in the previous sections, we will deal with conditions $\chi$ in explicit form, which
act as a transformation from a set of parameters to another. In general, we will identify these 
transformations with the symbol $F$, with some subscript or superscript that specifies the context of
the transformation.

We can consider the first level of the $N=4$ system as a pair of $N=2$ elementary subsystems:
we define the parameters ${\theta^1_k}', {\alpha^1_k}'$ as the transformed $\theta_p$ and $\alpha_p$ 
parameter of each subsystem.
Up to a change of parameters ($\boldsymbol{\psi}$ instead of $\theta$ and $\alpha$), the
transformation of each subsystem is governed by a rank 2 matrix (see Eq.~\eqref{eq:linpsi}),
which we identify as $F^0$:
\begin{equation} \label{eq:Finpsi}
F^0 = \frac{1}{\sqrt{2}} \begin{pmatrix}1&1\\1&-1\end{pmatrix}.
\end{equation}
\noindent In the first-level case, we defined ${\theta^1}$ as the parametrization 
for the probability $\rho^1_{[k][\,]}$ of each subsystem. 
Since the probability $\rho^1_{[k][\,]}$ is left unchanged by the first level of the butterfly
diagram, we define  ${\theta^1}'$ to be equal to $ \theta^1$.
 
At this stage we do not have a guideline
to state how the fourth parameter ${\alpha^1}'$ is related to the first-level parameter  ${\alpha^1}$.
The only requirement it must satisfy is to play the role of the extension of ${\theta^1}'$.
In the case of the first-level parameters (before the transformation), we defined the phase parameters
$\phi$ with the requirement of being related to $\alpha^1$ through the equation \eqref{eq:alpha1n}.
Now we will follow the inverse process: we firstly define the transformed phase parameters ${\phi^1}'$ 
from the corresponding complex vectors ${\psi^1}'$ of each two-point subsystem; then we define 
${\alpha^1}'$ trough the four-point version of equation \eqref{eq:alpha1n}.
This scheme is just a set of definitions of parameters, it does not imply any 
physical hypothesis on the system. 

Once we have these definitions, we can derive how 
the transformed parameter ${\alpha^1}'$ depends on ${\alpha^1}$
and find an expression for the
Fisher metric of the transformed parameters.

\mysubsubsection{The complex vector form of the transformed parameters.}

As a first step, we can identify four transformed complex parameters  ${\psi_{j k}^1}'$,
which have the same properties as the corresponding ones in the $N=2$ case.
This parametrization will allow us to represent
the transformations of the chain reconstruction scheme in the form of friendly
linear operators.  
By following the scheme adopted for the $\psi$'s before the transformation, 
we can start from the phases of the $\psi'$s, which correspond to the phases
of the even and odd $q$ subsystem, namely, $\phi_{j 0}' = {\phi_j}_e'$ and
$\phi_{j 1}' = {\phi_j}_o'$ (the level superscript ``1'' will be omitted
in this section). The phase parameters are related to the parameter extensions
$\alpha$ through a specialized version of formula
\eqref{eq:alpha1n} for the $N=4$ case, and through the relations relative of each subsystem, corresponding to Eq.~\eqref{eq:phi2expr}, which read $\alpha_k = \phi_{0 k} - \phi_{1 k}$ (for the first level)
and $\alpha'_k = \phi'_{0 k} - \phi'_{1 k}$ (for the transformed parameters).

By following the notation of the previous paragraphs, we identify ${\psi_j}_e$ and
${\psi_j}_o$ as the complex state vectors of the even and odd $q$ subsystems, which obey the
normalization conditions $\sum_{j}|\psi_{j e}'|^2 = \sum_{j}|\psi_{j o}'|^2 = 1$.
If we require that $\psi_{j 0}' = \sqrt{\rho_0} {\psi_j}_e'$ and
$\psi_{j 1}' = \sqrt{\rho_1} {\psi_j}_o'$, these conditions are compatible with the normalization
$\sum_{jk}|\psi_{jk}'|^2 = 1$. 
At this stage, the advantage of this type of parametrization appears clearer, because it can be easily
demonstrated that we have obtained a linear form of the first-level transformation: in the two-point
system, we have the equation $\psi_j' = \sum_i F^0_{j i} \psi_i$, which can be taken as the
transformation formula for the complex parameters ${\psi_j}_e$ and ${\psi_j}_o$ of the subsystems;
by multiplying both sides of these formulas by $\sqrt{\rho_k}$, we obtain
$\psi_{j 0}' = \sum_i F^0_{i j} \psi_{j 0}$ and $\psi_{j 1}' = \sum_i F^0_{i j} \psi_{j 1}$,
which are the linear relations we were looking for.

We must keep in mind that the  transformed parametrization $\psi_{jk}'$, or the equivalent 
pair of variables $\rho_{[j k]}'$  $\phi_{jk}'$, is expected to contain an arbitrary extra degree of
freedom, represented by a common additive phase term. 

\mysubsubsection{Derivation of the transformed cross term ${\alpha^1}'$} 
Let $\psi_k$ be the components of the two-dimensional 
complex state vector for an $N=2$ elementary system and $\bar{\psi}_k  e^{i \phi_k}$
their phase-absolute value representation. By applying the linear transformation $F^0$ defined in
Eq.~\eqref{eq:Finpsi}, we obtain the  transformed components $\psi_k'$:

\begin{equation} \label{eq:psipvspsi}
\psi_k' = \frac{1}{\sqrt{2}} \left(  \bar{\psi}_0  e^{i \phi_0} +
{(-1)}^k  \bar{\psi}_1  e^{i \phi_1}\right) .
\end{equation}

\noindent Since equation  \eqref{eq:alpha1n}, which relates the parameter $\alpha$ with the phase
parameters, is in terms of  differentials, we are interested in the differentials $d\psi$ and in how they
transform under $F^0$; this implies that we need the partial derivatives 
$\frac{\partial \phi_k'}{\partial \phi_j}$.
From the identity $\phi'_k = \mathrm{Im} \, \log \psi_k'$  we can write
\begin{equation} 
\frac{\partial \phi_k'}{\partial \phi_j} = \mathrm{Im} \frac{\partial \log \psi_k'}{\partial \phi_j} 
  = \mathrm{Im} \frac{1}{\psi_k'}\frac{\partial \psi_k'}{\partial \phi_j}.
\end{equation}
\noindent By applying equation \eqref{eq:psipvspsi} to calculate the term
$\frac{1}{\psi_k'}\frac{\partial \psi_k}{\partial \phi_j}$ of this equation, we obtain the following
expression for the differential of the transformed phase variable $\phi_k'$:
\begin{align}
d \phi_k' & = \sum_j \frac{\partial \phi_k'}{\partial \phi_j} d \phi_j \nonumber \\ & =
 \sum_j  \mathrm{Im} \left( i
 \frac{1}{\psi_0 + {(-1)}^k \psi_1} \dot {(-1)}^j \psi_j
 \right) d \phi_j,
 \end{align}
\noindent  which, if we multiply and divide the 
terms of the summation by $(\psi_0 + {(-1)}^k \psi_1)^*$, can be rewritten as
\begin{equation} \label{eq:dphiprime}
d \phi_k' = 
 \sum_j {(-1)}^j \mathrm{Im}
 \left( i \frac{(\psi_0 + {(-1)}^k \psi_1)^* \psi_j}{|\psi_0 + {(-1)}^k \psi_1|^2} \right) d \phi_j.
\end{equation}
\noindent The next step is to understand how the linear combination 
$|\psi_0|^2 d \phi_0 + |\psi_1|^2 d \phi_1$ is transformed.
By substituting the above expression and Eq.~\eqref{eq:psipvspsi}  
 in the expression $|\psi_0'|^2 d \phi_0' + |\psi_1'|^2 d \phi_1'$, we get:
\begin{equation} \label{eq:diphinvariance}
\begin{split} 
& |\psi_0'|^2 d \phi_0' + |\psi_1'|^2 d \phi_1'  \\
& = \frac{1}{2} \mathrm{Im} \big(  
i (\psi_0 +\psi_1)^*\psi_0 d \phi_0 + i (\psi_0 +\psi_1)^*\psi_1 d \phi_1  \\
& + i (\psi_0 - \psi_1)^*\psi_0 d \phi_0  - i (\psi_0 -\psi_1)^*\psi_1 d \phi_1 
  \big)  \\
 &  = |\psi_0|^2 d \phi_0 + |\psi_1|^2 d \phi_1.
\end{split}
\end{equation}
\noindent This can be rewritten in the form
\begin{equation} \label{eq:diphinvariancerho}
 \rho_{} d \phi_0 + \rho_{1} d \phi_1 = \rho_{0}' d \phi_0' + \rho_{1}' d \phi_1'.
\end{equation}

\noindent 
Let us consider the expression for the differential parameter extension obtained, by specializing
\eqref{eq:alpha1n} for an $N=4$ system and substituting $r_{ij}$ with $\rho_{i|j}$ (see theorem  \ref{th:hissin}):
\begin{equation} \label{eq:alpha1expl}
d \alpha^1 = (\rho_{0|1} d {\phi^1_{0 1}} + \rho_{1|1} d \phi^1_{1 1} ) 
- (\rho_{0|0} d {\phi^1_{0 0}}  + \rho_{1|0} d {\phi^1_{1 0}} ) . 
\end{equation}
\noindent As stated before, the transformed parameter ${\alpha^1}'$ must be defined in terms of the
transformed phase parameters $\phi'$:
\begin{multline} \label{eq:alpha1expltransf}
d {\alpha^1}' = (\rho_{0|1}' d {\phi^1_{0 1}}'  + 
\rho_{1|1}' d \phi^1_{1 1} )  \\ \,\,
- (\rho_{0|0}' d {\phi^1_{0 0}}'  +
\rho_{1|0}' d {\phi^1_{1 0}}' ) . 
\end{multline}
\noindent 
The conditional probabilities $\rho_{j|k}$, occurring in this equation and in Eq.\eqref{eq:alpha1expl}, are nothing but the probabilities of the two-point systems:
$\rho_{j|0} = {\rho_{j}}_e$, $\rho_{j|1} = {\rho_{j}}_o$ 
and $\rho_{j|0}' = {\rho_{j}}_e'$, $\rho_{j|1}' = {\rho_{j}}_o'$. Furthermore,  
the phases of the $N=4$ system are, by definition, the phases of the two-point subsystems, namely
$\phi_{j 0} = {\phi_j}_e$, $\phi_{j 1} = {\phi_j}_o$ and $\phi_{j 0} = {\phi_j}_e$,
$\phi_{j 1} = {\phi_j}_o$. These conditions allow us to rewrite the RHS of Eqs.~\eqref{eq:alpha1expl}
and \eqref{eq:alpha1expltransf} in terms of the parameters of the  two-point systems.
After these substitutions, we can recognize, in the sums enclosed in parentheses of the equations
\eqref{eq:alpha1expl} and \eqref{eq:alpha1expltransf}, the LHS and the RHS of the identity 
\eqref{eq:diphinvariancerho} for a two-point systems.
If apply this identity, equations \eqref{eq:alpha1expl} and \eqref{eq:alpha1expltransf}
lead us to the equation:
\begin{equation} \label{eq:alphaeqalphaprime}
d {\alpha^1} = d {\alpha^1}' . 
\end{equation}

\mysubsubsection{The Fisher metric in terms of transformed parameters.} 
We can specialize equations 
\eqref{eq:FSm4gen} and \eqref{eq:FSbar4} for the $N=4$ case
and obtain the following expression for the extended Fisher metric:
\begin{equation} \label{eq:FSm4}
{ds}^2 = \sum_k \rho_k (d {\theta_k}^2 + \sin^2\theta_k d {\alpha_k}^2) +
d {\theta}^2 + \sin^2\theta d {\alpha}^2.
\end{equation}
\noindent We can apply the arguments that led us to the above expression to all the levels of
the butterfly diagram, and obtain equivalent expressions for the Fisher metric.
For this reason, we have intentionally omitted the level index ``1'' in it. 

In order to rewrite the Fisher metric for the transformed parameters, we must
consider that the first term in \eqref{eq:FSm4}, which contains the summation $\sum_k$,
maintains the same form for the transformed parameters because the 
Fisher metric of each two-point subsystem takes the form 
$d {\theta^1_k}'^2 + \sin^2{\theta^1_k}' d {{\alpha^1_k}'}^2$; this allows us to apply 
the concepts introduced in subsection \ref{sec:ndimxtdfishmetr} and write the transformed Fisher metric in the same form as equation \eqref{eq:FSm4gen}.
The second term of equation \eqref{eq:FSm4} 
maintains the same form as well, 
because of the invariance $\theta^1 = {\theta_1}'$.
As we have shown above, if we define the transformed parameter ${\alpha^1}'$ in terms of the phases
${\phi_{jk}^1}'$ (Eq.~\eqref{eq:alpha1expltransf}), we obtain the identity
$d {\alpha^1} = d {\alpha^1}'$. 
From this condition, we can conclude that the only way to fulfill the metric-preserving condition
through the first stage of the butterfly diagram is to choose 
$ \sin^2{\theta^1}' d {{\alpha^1}'}^2$
as the third term of the Fisher metric in the form \eqref{eq:FSm4}.

\subsection{Linearity and metric-preserving conditions }\label{sec:linearitynpertpres}

Now we will move to the second level of the butterfly diagram \ref{butterfly4ab}.
According to the parametrization we choose, the input of this second stage of the
transformation can be either in the form $\boldsymbol{\rho}, \boldsymbol{\phi}$
or $\boldsymbol{\theta}, \boldsymbol{\alpha}$.
The unextended part of the second stage input parameters, which can take the form of the probabilities
$\boldsymbol{\rho}$ or of the  parameters $\boldsymbol{\theta}$, is expected to depend directly on
the transformed $\boldsymbol{\rho}$ or $\boldsymbol{\theta}$ components of the first stage, because of the
identities \eqref{eq:rho4full}.
On the other hand, the extension part ($\boldsymbol{\alpha}$ or $\boldsymbol{\phi}$) of the 
parametrization has been introduced with the only condition of being orthogonal with respect
to the Fisher metric, which implies that there is no direct expression for them.
To go further in the chain reconstruction scheme represented by the butterfly diagram  \ref{butterfly4ab}, we must understand  which value these components must take or, in other words,
how they depend on the first-level ``output'' parameters.
In Fig.~\ref{butterfly4ab_b} this dependence is represented in the context of a butterfly diagram, as
the processing stage $\Phi$. 
Note that, in that figure, we have used for the level 2 the same phase/amplitude parametrization scheme
introduced for the level 1, in which the parameters $\phi_{00}, \phi_{01}, \phi_{10}, \phi_{11}$
are related to the parameter extensions $\alpha, \alpha_k$
through the equations
\begin{equation} \label{eq:phi2iiangles}
\begin{split} 
\alpha^{(2)}_k & = \phi^{(2)}_{k 0} - \phi^{(2)}_{k 1}\\
d \alpha^{(2)} & = (\rho_{0|0} d \phi^{(2)}_{0 0} +
\rho_{1|1} d \phi^{(2)}_{1 1}) \\ & \phantom{=} - (\rho_{0|0} d \phi^{(2)}_{0 0} + \rho_{1|0} d \phi^{(2)}_{1 0} ), 
\end{split}
\end{equation}
\noindent which correspond to equations \eqref{eq:phi2expr} and  \eqref{eq:alpha1expltransf}.
In order to avoid confusion with the power elevation, only in this section, we will use
the notation $ ^{(l)}$ to indicate the level $l$ to which the parameters refer.

It should be noted that the actual degrees of freedom of the parametrization are six, 
because we have six parameters $\theta, \theta_k, \alpha, \alpha_k$.
If we use the intermediate probabilities $\rho^{(2)}_{[j] [k]}$ (the LHS of equation
\eqref{eq:rho4fullb})  as parametrization and the four phases ${\phi^{(1)}}'_{j k}$, we still have six
effective degrees of freedom, because the normalization 
fixes one degree of freedom of the $\rho$'s, and a global, non-relevant, phase factor 
is involved by the $\phi$'s. Note that, according to this notation, the
transformed probabilities ${\rho^{(1)}}'_{[j] [k]}$ and the intermediate
probabilities $\rho^{(2)}_{[j] [k]}$ are the same mathematical object.
We have shown in the previous subsection that the full extended Fisher metric for the
transformed first-level parameters can be put in the same form as
equation \eqref{eq:FSm4std}, up to a substitution of $\rho_{[j k]}$ and $\phi_{j k}$
with $\rho^{(2)}_{[j] [k]}$ and ${\phi^{(1)}}'_{j k}$.

On the other hand, we can make a similar choice for the second-level 
parameters, namely, we can use again the parameters $\rho^{(2)}_{[j] [k]}$ and
$\phi^{(2)}_{j k}$ introduced in the last equation. Again,
the  Fisher metric for this parametrization has the form \eqref{eq:FSm4std}, with
$\rho^{(2)}_{[j] [k]}$ and ${\phi^{(2)}}_{j k}$ instead of $\rho_{[j k]}$ and 
$\phi_{j k}$. Note that in the expression of the Fisher metric we have obtained, the first terms,
which contain $\rho^{(2)}_{[j] [k]}$, are the same as the transformed first-level Fisher metric
discussed above.

In order to satisfy the Fisher metric-preserving property of theorem 
\eqref{th:fmprese} for the complete transformation from the $q$ representation to the $p$ representation, we require the Fisher metric after the
first level and before the second level to be equal. 
By following the formalism introduced in that theorem, we can write the identity 
\begin{equation} \label{eq:N4l2fishpreserv}
d {\omega^{(1)}}'_{i j} {g^{(1)}}^{ijkl} d {\omega^{(1)}}'_{kl} =
d {\omega^{(2)}}_{i j} {g^{(2)}}^{ijkl} d {\omega^{(2)}}_{kl}, 
\end{equation}
\noindent where ${\omega^{(1)}}'_{i j}$ and  ${\omega^{(2)}}_{i j}$ are the vectors formed 
by the sets of parameters $\rho^{(2)}_{[i][j]}, {\phi^{(1)}_{i j}}'$ and 
$\rho^{(2)}_{[i][j]}, {\phi^{(2)}_{i j}}$.
We can apply the arguments presented in subsection \ref{sec:ndimxtdfishmetr}, especially the invariance under permutations of the parameters,
to show that the extended Fisher metric in both the LHS and the RHS of the above equation takes the form \eqref{eq:FSm4std}.
If we re-express the parametrization $\rho \, \phi$ in terms complex vectors $ {\psi^{(1)}_{i}}'$ and $ {\psi^{(2)}_{i}}$, we can also use the form \eqref{eq:fubstudybas} for such a metric, which we have recognized as the Fubini-Study metric.

Once we have this form of the metric preserving conditions, we can apply the customized version of the Wigner's theorem presented in section~\ref{sec:lingen} and assert that the parameters ${\psi^{(1)}_{i}}'$ and $ {\psi^{(2)}_{i}}$ are related by a linear or antilinear dependence. As stated above, the squared modulus $\rho$ is left unchanged by this stage of transformation, i.e. $\vert{\psi^{(1)}_{i}}'\vert^2 = \vert {\psi^{(2)}_{i}}\vert^2$. If we neglect the antiunitary case of the transformation, this implies that ${\psi^{(1)}_{i}}'$ and $ {\psi^{(2)}_{i}}$ differ only by a constant phase factor, i.e. ${\psi^{(2)}_{i}} = t_i {\psi^{(1)}_{i}}'$, where  $t_i = e^{i \phi^0_i}$. 
This scheme applies to all the other levels of the butterfly diagram as well.
In section \ref{sec:traslinv} we will derive the form of the constant phase factor $t_i$. 

\section{Translational invariance and twiddle factor}\label{sec:traslinv}

In this section, we will carry out the full transformation between the spaces of the probability  
functions $\rho_p$ and $\rho_q$. This result will be achieved by finding an expression 
for the phase factor $t$, defined in section \ref{sec:linearitynpertpres}, which maps the 
extension parameters of each level of the butterfly diagram into the next level ones.
The factor $t$, for historical reasons that will be justified at the end of this section,
is referred to as ``twiddle factor''~\cite{GentlemanSande}. The expression of $t$ that we are going
to find is a direct consequence of the invariance under translations/rotations.

Roughly speaking, our theory is invariant under a given transformation
if, when we apply that transformation, we obtain a physically valid state from another.
The concept of invariant partition, introduced in section~\ref{sec:statespacesymm},
will allow us to express the shift/rotation transformation in a way that acts separately 
on simpler system subparts. We have identified the translation-invariant partitions through
theorem \ref{th:lsbgen}, and they are defined by the formula \eqref{eq:decomppq}.
The butterfly structure depicted in figures \ref{butterfly8} and \ref{butterfly4ab} is a consequence
of the decomposition of the state space into a binary tree of partitions.
The translation invariance of such a partition makes the structure of the butterfly diagram invariant
under translations.
This will help us to express the translation transformations in a simple form and introduce the
translation invariance of the theory in a straightforward way.

In this section, we express the physical state in terms of the complex vectors 
${\psi}_p$ ${\psi}_q$.  The
transformation from the state ${\psi}_q$ to the state ${\psi}_p$ can
be decomposed into a sequence of single-level transformations and phase factors $t$:
\begin{equation} \label{eq:fftexpln}
\psi_p = \mathcal{F}_{n}  \psi_q = 
F_{n}  \cdot  t_{n-1} \cdot ... \cdot F_2 \cdot t_1 \cdot F_{1} \cdot \psi_q.
\end{equation}
\noindent According to what stated in the previous sections, each step of the above
transformation is linear. The above expression is the translation of the 
butterfly diagram into formulas. As shown in Fig.~\ref{butterfly8}, each term $F_l$ of equation  
\eqref{eq:fftexpln}  is a set of $N/2$ linear transformation of the form $\eqref{eq:Finpsi}$, which 
operates on each $N=2$ elementary subsystem of the butterfly diagram. The phase factors
$t$ can be expressed in the form of an $N\times N$ diagonal matrix:
\begin{equation} \label{eq:twiddleinphase}
t_l = \Diag{(e^{i \phi_l^0},  ... , e^{\phi_l^{N-1}})},
\end{equation}
\noindent where the notation $\Diag({\mathbf{v}})$ indicates the diagonal matrix obtained from the
vector $\mathbf{v}$. 
In this section, we will consider translations on the $p$ axis; the same results
could be achieved by considering translations on the $q$ axis instead.

Let us define $r_{e \leftrightarrow o}$ as the operator that 
exchanges the even index components of the states $\psi_p(i)$ with the odd index ones.
As already mentioned in section \ref{sec:translsymm}, by translation we actually
mean a circular shift of the form  \eqref{th:circshift}. Let  $S$ be the operator
that carries out such a shift.
We can also define shift operators that act only on a subset of $\psi_p(i)$
components. For example, $S^o$ is the left circular shift
operator that involves only the odd-$i$ elements of $\psi_p(i)$.

It is easy to show that the circular shift can be represented in terms of 
a chain of even-odd inversions like $r_{e \leftrightarrow o}$, operating 
on some subdiagrams of the butterfly diagram. For example, the left circular shift
can be represented as an even-odd component exchange like $r_{e \leftrightarrow o}$
and a left circular shift acting on the odd components:
\begin{equation} \label{eg:sigmavssigmaor}
S = S^o {r_1}_{e \leftrightarrow o}.
\end{equation}

\noindent This identity can be applied recursively, by expressing the sub-shift operator $S^o$ in
terms of even-odd component exchange and sub-sub-shift operator, and so on. After some iterations, we
will reach the single-element shift operator, which is nothing but the identity operator. 

In the case of the basic $N=2$ elementary system, the transformation $\psi_q \rightarrow \psi_p$ is
given by the matrix \eqref{eq:Finpsi}. The translation operator is simply the permutation operator
that exchanges $\psi_p(0)$ with $\psi_p(1)$ or, in other words, the operator that swaps even $p$
and odd $p$. It is easy to show that, in the $\psi_q$ space, this operator takes the form:
\begin{equation} \label{eq:invoptwod}
S_0 =\begin{pmatrix}
1 & 0 \\ 
0 & -1
\end{pmatrix} = \Diag(1, e^{i \pi}).
\end{equation}
\subsection{The transformation $\mathcal{F}_l$}\label{sec:fltransf}
Let us rewrite Eq.~\eqref{eq:fftexpln}, which describes the transformation from the space of $\psi_q$
to the space of $\psi_p$, for a subdiagram with $l$ levels:
\begin{equation} \label{eq:fftexpll}
\psi_p = \mathcal{F}_l \psi_q =
F_{l}^{(l)}  \cdot t_{l-1} \cdot ... \cdot F_{2}^{(l)} \cdot t_{1} \cdot F_{1}^{(l)} \cdot \psi_q.
\end{equation}
\noindent The superscript $(l)$ (which will be omitted in the derivations that follow) indicates 
that we are dealing with a $L$-dimensional subdiagram, where $L = 2^l$. 
Let us now introduce the notation
\begin{equation} \label{eq:fftpartiall}
F_{m \rightarrow m'}= F_{m'}  \cdot t_{m'-1} \cdot ... \cdot F_{m + 2} \cdot t_{m+1} \cdot F_{m+1},
\end{equation}
\noindent which allows us to rewrite equation \eqref{eq:fftexpll} as
\begin{equation} \label{eq:fftgrpl}
\psi_p = F_{1 \rightarrow l}  \cdot t_{1} \cdot F_{1} \cdot \psi_q.
\end{equation}
\noindent
In Fig.~\ref{butterflyblocks} we can identify the partial transformation $F_{1 \rightarrow l}$.
If we observe its position with respect to the larger butterfly diagram, it is apparent that
$F_{1 \rightarrow l}$ acts on the even $p$ (left subdiagram) and odd $p$ (right subdiagram) components
separately.
If we identify the blocks of the matrix $F_{1 \rightarrow l}$ that operate on the even components
and the blocks that operate on the odd components, we can write $F_{1 \rightarrow l}$
by blocks:
\begin{equation} \label{eq:fftgrpblockl}
F_{1 \rightarrow l}^{(l)} = 
\begin{pmatrix}
F_{1 \rightarrow l}^e & 0 \\
0 & F_{1 \rightarrow l}^o \\
\end{pmatrix}
\end{equation}
\noindent From Fig.~\ref{butterflyblocks}, we can also observe that
$F_{1 \rightarrow l}^e = F_{1 \rightarrow l}^o = \mathcal{F}_{l-1}$ is  the transformation of an
$l-1$-levels subdiagrams, defined in a $2^{l-1}$-dimensional space.
The first (rightmost) stage of the transformation \eqref{eq:fftpartiall}
corresponds to the lowest level of the butterfly diagram and can be regarded as a combination of
$L/2$ transformations, each acting on a pair $\psi_q(i)\, \psi_q(i + 2^{l-1})$ separately.
Since the elementary transformation on one of such pairs is $1/\sqrt{2}\begin{pmatrix}
1 & 1 \\ 1 & -1
\end{pmatrix}$, we can write this first stage in the following 
block matrix form: 
\begin{equation} \label{eq:fftpartblockl}
F_{1} = \frac{1}{\sqrt{2}}
\begin{pmatrix}
I & I \\
I & -I \\
\end{pmatrix}.
\end{equation}
\begin{figure}[!ht]
\centering
\includegraphics[scale=0.4]{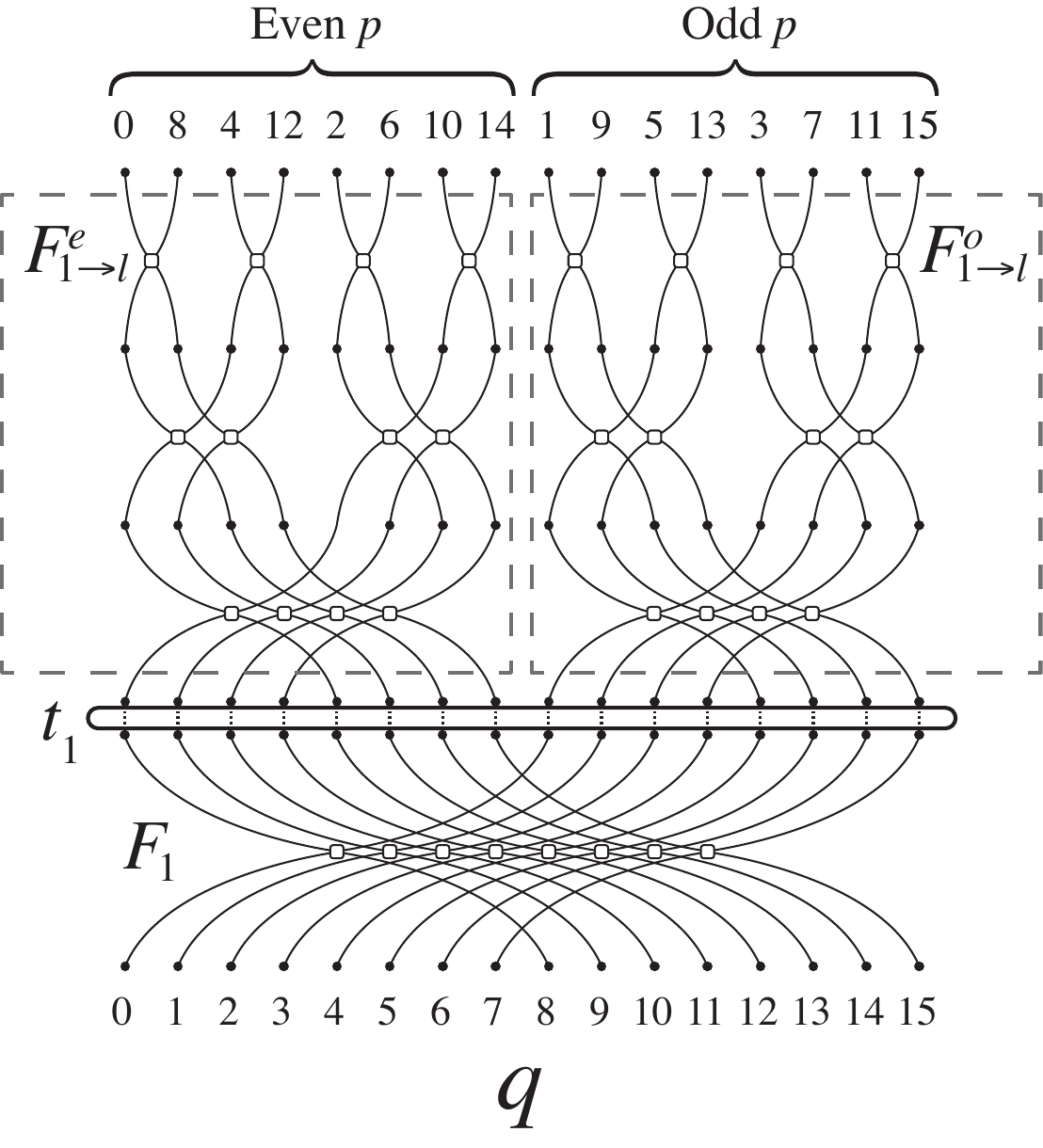}
\caption
{
The decomposition of a large transformation into two smaller partial transformations
$F_{1 \rightarrow l}^e$ and $F_{1 \rightarrow l}^o$, relative to the even $p$ and odd $p$
branch of a larger butterfly diagram.
The twiddle factor $t_1$ is applied between the first-level transformation $F_1$ and 
the blocks $F_{1 \rightarrow l}^e$ and $F_{1 \rightarrow l}^o$.
}
\label{butterflyblocks}
\end{figure}

\noindent Note that the above operator combines each $i$-th component
of $\psi_q$ with the corresponding $i + 2^{l-1}$-th one and produces two component, one for the even $p$ branch and the other or the
odd $p$ one. 

\subsection{Derivation of the twiddle factor}\label{sec:translop}
The shift/rotation operator $S$, which actually is a permutation operator on the components of the vector $\psi_p$, can be represented on the basis of the linear 
space of the $\psi_q$s. The change of representation from the wavefunctions $\psi_q$ to the wavefunctions $\psi_p$ is carried out by the transformation $\mathcal{F}$; thus, 
by identifying the shift/rotation operator in the spaces $\psi_p$ and $\psi_q$ 
as $S_l$ and $S_0$ respectively, we can write the following relation:
\begin{equation}  \label{eq:shifttrasf}
S_0 = \mathcal{F}_{l}^{-1} S_l \mathcal{F}_{l}.
\end{equation}
 
\noindent The index convention for the operators $S$ reflects the parametrization level: $S_m$ means that it is
represented in the vector space relative to the  level $m$, with $S_l$ corresponding to the
highest level $\psi_p$, and $S_0$ to the lowest level $\psi_q$.
This convention allows us to generalize  \eqref{eq:shifttrasf} for
any level $l'$:  we can write the expression of the shift/rotation operator for an ``intermediate''
state vector
$\psi^{(l)} = F_{0 \rightarrow l}\psi_q$ (where the partial transform $F_{l' \rightarrow l}$
is given by \eqref{eq:fftpartiall}): 
\begin{equation}  \label{eq:shifttrasflg}
S_{l'} = F_{l' \rightarrow l}^{-1} S_l F_{l' \rightarrow l}.
\end{equation}
\noindent Equation \eqref{eq:shifttrasf} also implies that 
$S_l  \psi_p = S_l \mathcal{F}_{l} \psi_q  =
 \mathcal{F}_{l}  {\mathcal{F}_{l}}^{-1} S_l \mathcal{F}_{l} \psi_q
= \mathcal{F}_{l} S_0 \psi_q$. 
If we use the expression \eqref{eq:fftgrpl} of the  transformation $\mathcal{F}_{l}$,
this can also be written as:
\begin{equation}  \label{eq:shifttrasfl}
S_l  \psi_p  =  F_{1 \rightarrow l}   t_{1} F_{1} S_{0}  \psi_q.
\end{equation}

\noindent Now the translation invariance hypothesis comes into play:
we suppose that the shift/rotation transformation on the $p$ axis
does not change the distribution $\rho_q$.
This corresponds to the condition that a translation transforms the physical
state of the system into another valid physical state; if a state with distributions
$\rho_q$ and $\rho_p$ is a valid one, then the state with the same $\rho_q$
and a translated $\rho_p$ must be valid too.

It should be noted that, while in section \ref{sec:nstatesyst} we dealt with generic conditions
$\chi$ on the state space, given in implicit form, we now have an object $\psi$ that embeds a set of
independent coordinates on the state space manifold and, by definition, represents a \textit{valid} system state.
The transformation $\mathcal{F}$ just transforms the representation from $q$ to $p$.
While, in the implicit $\chi$ condition case, we should require 
the $\chi$ to transform according to Eq.~\eqref{eq:chiinvariance}, now we have just to verify that 
a transformation (namely, $S_0$) actually behaves like a shift/rotation. 
The subdivision of the phase space into invariant partitions introduced in section
\ref{sec:statespacesymm}, which we used to represent the probabilities in our model, allows us to
express the shift/rotation transformation in a simple way, based on binary permutations.
Since the shift/rotation transformation operator on the $p$ must leave the $q$ distribution unchanged,
it is expected to take a diagonal form in the space of the $\psi_q$ functions:
\begin{equation} \label{eq:sdiag}
S_0 = \Diag{(e^{i s^0}, ... ,e^{i s^{L}})},    
\end{equation}
\noindent where $s^k$ are some phase coefficients. Now we have all the necessary elements to derive an expression for the twiddle factor $t$. In appendix \ref{apx:shrotoperformula} we
demonstrate the following formula:
\begin{equation} \label{eq:twiddlblockexpr}
t_{l'} = \Diag (\underbrace{1,...1}_{L/2},
\underbrace{e^{-2\pi i\frac{1}{L}},e^{-2\pi i\frac{2}{L}},...,e^{-2\pi i\,\frac{L/2-1}{L}}}_{L/2},
...),
\end{equation}
\noindent where $l = n-l'+1$, with $l'$ being the level to which the twiddle factor refers. Note that, since the vector space on which the matrix is defined  is $N$-dimensional, in the above expression the sequence\par\noindent
$1,1,...,1,e^{-2\pi i\,1/L},e^{-2\pi i\, 2/L},...,e^{-2\pi i\,(L/2-1)/L}$ is repeated $\frac{N}{L}$ times.

\subsection{Recognizing the discrete Fourier transform}
We are very close to our goal, which is to recognize the factor 
$t_l^k$ as the twiddle factors of the Danielson-Lanczos lemma and $\mathcal{F}$ as 
the discrete Fourier transform.
To do this, we need a bit more algebra. The expression \eqref{eq:fftexpln}
of $\mathcal{F}_{n}$ can be inverted as: 
\begin{equation} \label{eq:fftexplninv }
\psi_q = \mathcal{F}_{n}^{-1}  \psi_p =  F_{1}^{-1} 
\cdot  t_1^{-1} \cdot ... \cdot F_{n-1}^{-1} \cdot t_{n-1}^{-1} \cdot F_{n}^{-1} \cdot \psi_p.
\end{equation}
\noindent Actually, all the $F_l$ transforms are a composition of $N/2$ elementary systems' 
transformations \eqref{eq:Finpsi}. It is easy to see that the $F_l$ transforms are self-inverse
because  \eqref{eq:Finpsi} is self-inverse.
What differs among different level transforms is how the space components are arranged. 
For example, the first level $F_{1}$ differs from $F_{n}^{-1}$ just for the inversion of the bit order
of $\psi_p$ with respect to $\psi_q$; this is apparent from the butterfly 
diagram in Fig.~\ref{butterflyblocks}.

Furthermore, the twiddle factors $t_l$ obey the relation $t_l^{-1} = t_l^*$, because they are simple 
phase factors. Thus, we can write:
\begin{equation} \label{eq:fftexplninv2}
\psi_q = \mathcal{F}_{n}^{-1}  \psi_p =  F_{1}  \cdot  t_1^{*} \cdot
\cdot F_{2} ... \cdot t_{n-1}^{*} \cdot F_{n}\cdot \psi_p.
\end{equation}
\noindent
Let us now recall the Danielson-Lanczos lemma: the $N$-dimensional discrete Fourier transform 
on a discrete function can be decomposed into two $N/2$ dimensional DFTs as
\begin{equation}
    \begin{split} \label{eq:dft1}
\psi_q(j)    = & \sum^{N-1}_{k=0} \, e^{2 \pi i j k/N} \psi_p(k)  \\
 = &  \sum^{N/2-1}_{K=0} \, e^{2 \pi i j k/(N/2)} \psi_p(2 k) \\ 
 & \,\,\,\, + W^j \sum^{N/2-1}_{K=0}  \, e^{2 \pi i j k/(N/2)} \psi_p(2 k + 1) ,
\end{split}
\end{equation}
\noindent where $W^j = e^{2 \pi i j/N}$. Strictly speaking, the above equation refers to the inverse
DFT, while the direct DFT would have a ``-'' sign in the exponents. This is the inverse version of the
Danielson-Lanczos lemma, which is, of course, conceptually equivalent to the standard one; we chose
the plus sign to match the standard quantum mechanics conventions. 
It is easy to show that $W^{j + N/2} = - W^j$. This allows us to write: 
\begin{multline}
\label{eq:dft2}
\psi_q(j+N/2)  = 
\sum^{N/2-1}_{K=0} \, e^{2 \pi i j k/(N/2)} \psi_p(2 k) 
\\  - \, W^j \sum^{N/2-1}_{K=0} \, e^{2 \pi i j k/(N/2)} \psi_p(2 k + 1)
\end{multline}
\noindent The last two equations can be synthesized into the following system:
\begin{widetext}
\begin{equation} \label{eq:dftsyst}
\begin{array}{rl}
\psi_q(j) & = f_{(N/2)}^{j k} \psi_p(2 k) +  W^j f_{(N/2)}^{j k} \psi_p(2 k+1) \\
\psi_q(j+N/2) & =  f_{(N/2)}^{j k} \psi_p(2 k) -  W^j f_{(N/2)}^{j k} \psi_p(2 k+1) ,
\end{array}
\end{equation}
\end{widetext}
\noindent where $f_{(N/2)}^{j k}$ are the $N/2$-dimensional DFT matrix elements and the summation over
$k$ is implicit. Let us consider the first two terms of factorization
\eqref{eq:fftexplninv2}, $F_{1}  \cdot  t_1^{*}$. We can apply the same
arguments we used in appendix \ref{apx:shrotoperformula}  for equation \eqref{eq:shifteq2d}, to show that this matrix
does not mix all the $N$ components of its vector space:
each $j$-th component will be mixed with the corresponding $j+N/2$ component. 
This can also be understood from Fig.~\ref{butterflyblocks}.
We can use the expression \eqref{eq:twiddlblockexpr} for the twiddle factor 
(in the case $l'=1 \rightarrow L= N$) and write only the components $j$ and $j+N/2$
of the product  $F_{1}  \cdot  t_1^{*}$. This leads us to a relation in a two-dimensional vector space:
\begin{multline} \label{eq:firsfactors}
F_{1}  \cdot  t_1^{*} \Big|_{j, j+N/2} = 
\begin{pmatrix}
1 & 1 \\
1 & -1
\end{pmatrix}
\begin{pmatrix}
1 & 0 \\
0 & e^{\frac{2 \pi i j}{N}}
\end{pmatrix} \\
=
\begin{pmatrix}
1 &  e^{\frac{2 \pi i j}{N}} \\
1 & -e^{\frac{2 \pi i j}{N}}
\end{pmatrix}
\end{multline}
\noindent 
The factorization on the RHS of \eqref{eq:fftexplninv2} can be put  
in the form $\mathcal{F}^{-1}_{N} = F_1  t_1^{*} F^{-1}_{1 \rightarrow n}$, where 
$F^{-1}_{1 \rightarrow n} = F_{2}  \cdot  t_2^{*} \cdot ... \cdot t_{n-1}^{*} \cdot F_{n}$
is the inverse of the partial transform defined in \eqref{eq:fftpartiall};  this form can also be
obtained directly from \eqref{eq:fftgrpl}.
Therefore, we can multiply on the right both sides of \eqref{eq:firsfactors} by
$F^{-1}_{1 \rightarrow n} \psi_p$ (by considering the $j$- and $j+N/2$-th components only) and rewrite the term on the LHS as $\mathcal{F}_{N}^{-1} \psi_p = \psi_q$. This leads us to the equation
\begin{equation} \label{eq:allfactors}
 \psi_q \Big|_{j, j+N/2} =
\begin{pmatrix}
1 &  e^{\frac{2 \pi i j}{N}} \\
1 & -e^{\frac{2 \pi i j}{N}} 
\end{pmatrix} (F^{-1}_{1 \rightarrow n} \psi_p)
\Big|_{j, j+N/2},
\end{equation}
\noindent in which we can recognize the system  \eqref{eq:dftsyst}, by  assuming that
$f_{(N/2)}^{j k}$ is the order $n-1$ DFT and that $W^j = e^{2 \pi i j/N}$ corresponds to our twiddle factor
\eqref{eq:twiddlblockexpr}. 
In conclusion, we have demonstrated that the last step $F_{1}  \cdot  t_1^{*}$ of the transformation of our model
corresponds to the last step of the DFT in the decomposition prescribed 
by the Danielson-Lanczos lemma, and that our factor $t_l^j$ corresponds to the twiddle factor $W_l^j$.
Since the Danielson-Lanczos lemma can be applied recursively, if this statement is true for the last
level, then we can apply the same scheme to all subsystems and for discrete Fourier transforms with a
dimensionality smaller than $N$. It must be pointed out that the choice we made in \ref{sec:indformterms} for the sign of the $\pi$ term in the twiddle factor phase's recursion formula has led 
us to the standard---positive spatial frequency---convention of quantum mechanics, but a different
choice would have led us to an equivalent formulation.

This completes the proof of the correspondence between our model's transformation operator
$\mathcal{F}$ and  the discrete Fourier transform.

\section{Conclusions}\label{sec:conclusions}

This study focuses on a purely probabilistic representation of nature and on the measurement of physical
quantities, assuming that only a limited amount of information is available to characterize physical
phenomena. This, together with a set of reasonable axioms,  led us to a theory that is equivalent to
standard quantum mechanics in its basic formulation.

The Precision Invariance axiom, which is arguably the strongest among the theory's assumptions, has been
expressed in terms of metric properties of the manifold that represents the space of physical states.
We have shown that the standard linear-space representation of quantum mechanics arises naturally as a
consequence of these metric properties.

One of the strengths of this work is that it establishes a direct connection between the intuitive conception physicists have about quantum mechanics---namely, the theory that reflects the limitation of information in describing physical systems---and its conventional formulation.

Another key contribution is the attempt to change the perspective in multiple aspects of quantum mechanics. In the standard formulation of quantum mechanics, it is implicitly accepted that its probabilistic/random nature is something different from that of probability applied to classical physics. While, in classical mechanics, randomness is just a consequence of our lack of knowledge about the details of a system, in quantum mechanics, randomness and---consequently---the statistical properties of a system, are intrinsic features of the physics of the system itself.

In contrast, the theory proposed here adopts an agnostic viewpoint regarding the origin of quantum randomness. It may be intrinsic or it may result from incomplete knowledge.
Nonetheless, this randomness is still expected to satisfy certain properties, such as Precision Invariance.
This perspective can broaden the scope of application and interpretation of quantum mechanics.

An even more intriguing point that could change our conception of quantum mechanics is that its linear structure emerges not as a postulate, but as a consequence of the metric preserving property of the state space, which in turn is a consequence of the Precision Invariance hypothesis. When these conditions are not met, one can encounter scenarios where the state space is no longer linearizable. Linearity in quantum mechanics is a fundamental feature that affects, for example, the form of the time evolution operator.

The Precision Invariance axiom is based on the hypothesis that we can {\it{prepare}} a system in a well-defined state. Moreover, the \allowbreak\hspace{0pt} metric-preserving property, derived from this axiom, assumes the existence of a differentiable mapping between different parameterizations.
This is merely an assumption, based on the general idea that transformations between different parameterizations can be expressed through well-behaved functions. Nonetheless, we must consider that there are cases where even the behavior of the classical observables, related to these parameterizations, cannot be described by well-behaved functions.

This point makes the theory presented in this article generally equivalent to quantum mechanics, yet distinct in its essence. 
By removing the linear structure as a foundational requirement of quantum theory~\cite{Pearle,Weinberg}, this model opens the door to alternative formulations that may address long-standing challenges, such as the measurement or wavefunction collapse problem.

This aspect might represent a starting point for future developments and new perspectives on the interpretation of quantum mechanics.

There are other possible directions for further development of the theory presented in this work. We can extend these results to more complex systems, such as those involving infinitely many pairs of conjugate variables, and examine whether the axioms of the theory are compatible with quantum field theory. One will likely face the same problems that arise, for instance, when dealing with gauge invariance in quantum field theory.

If we accept the idea that the state space $\Phi$ of our theory corresponds to the standard Fock space of quantum fields, we must ask ourselves how negative norm states, occurring in the covariant theory of photons~\cite{Gupta}, can be interpreted in terms of state space points. A similar issue appears in Yang-Mills theories, where auxiliary fields (ghosts)~\cite{FaddeevPopov} are introduced to ensure gauge invariance of the theory.

Another aspect that can be investigated is how all the concepts introduced by this formulation of quantum mechanics can be adapted to the case of fermion systems. We can consider applying all of the machinery coming from the axioms presented in this study to Grassmann variables describing fermion fields.  

The author hopes that this study offers an intuitive perspective on quantum theory and encourages further exploration of new areas of discussion on its foundational principles.

\bibliography{refs}

\begin{thebibliography}{10}

\bibitem{Bargmann_1964}
V.~Bargmann.
\newblock Note on wigner's theorem on symmetry operations.
\newblock {\em Journal of Mathematical Physics}, 5(7):862--868, July 1964.

\bibitem{Bohr1}
Niels Bohr.
\newblock Discussion with einstein on epistemological problems in atomic physics.
\newblock In Paul~Arthur Schilpp, editor, {\em The Library of Living Philosophers, Volume 7. Albert Einstein: Philosopher-Scientist}, pages 199--241. Open Court, 1949.

\bibitem{Braunstein_1994}
Samuel~L. Braunstein and Carlton~M. Caves.
\newblock Statistical distance and the geometry of quantum states.
\newblock {\em Physical Review Letters}, 72(22):3439--3443, May 1994.

\bibitem{BruknerZeilinger2}
Caslav Brukner and Anton Zeilinger.
\newblock Information invariance and quantum probabilities.
\newblock {\em Foundations of Physics}, 39(7):677--689, May 2009.

\bibitem{BruknerZeilinger1}
A.~Zeilinger C.~Brukner.
\newblock {\em Quantum Physics as a Science of Information}, pages 47--61.
\newblock Springer Berlin Heidelberg, 2005.

\bibitem{Chiribella1}
Giulio Chiribella, Giacomo~Mauro D'Ariano, and Paolo Perinotti.
\newblock Informational derivation of quantum theory.
\newblock {\em Physical Review A}, 84(1), July 2011.

\bibitem{Bub3}
Rob Clifton, Jeffrey Bub, and Hans Halvorson.
\newblock Characterizing quantum theory in terms of information-theoretic constraints.
\newblock {\em Foundations of Physics}, 33:1561--1591, 2003.

\bibitem{CooleyTukey}
James~W. Cooley and John~W. Tukey.
\newblock An algorithm for the machine calculation of complex fourier series.
\newblock {\em Mathematics of Computation}, 19(90):297--301, 1965.

\bibitem{BorivojeBrukner}
Borivoje Dakic and Caslav Brukner.
\newblock Quantum theory and beyond: Is entanglement special?, 2009.

\bibitem{diracprinciples}
Paul Adrien~Maurice Dirac.
\newblock {\em The principles of quantum mechanics}.
\newblock Oxford university press, 1930.

\bibitem{FacchiAndVentriglia}
Paolo Facchi, Ravi Kulkarni, V.I. Man'ko, Giuseppe Marmo, E.C.G. Sudarshan, and Franco Ventriglia.
\newblock Classical and quantum {Fisher} information in the geometrical formulation of quantum mechanics.
\newblock {\em Physics Letters A}, 374(48):4801--4803, November 2010.

\bibitem{FaddeevPopov}
L.D. Faddeev and V.N. Popov.
\newblock Feynman diagrams for the yang-mills field.
\newblock {\em Physics Letters B}, 25(1):29--30, July 1967.

\bibitem{FanoQuantTom}
U.~Fano.
\newblock Description of states in quantum mechanics by density matrix and operator techniques.
\newblock {\em Reviews of Modern Physics}, 29(1):74--93, January 1957.

\bibitem{FisherClassic}
R.~A. Fisher.
\newblock Theory of statistical estimation.
\newblock {\em Mathematical Proceedings of the Cambridge Philosophical Society}, 22(5):700--725, July 1925.

\bibitem{GentlemanSande}
W.~M. Gentleman and G.~Sande.
\newblock Fast fourier transforms: for fun and profit.
\newblock In {\em Proceedings of the November 7-10, 1966, fall joint computer conference on XX - AFIPS '66 (Fall)}, AFIPS '66 (Fall), page 563. ACM Press, 1966.

\bibitem{Gupta}
Suraj~N Gupta.
\newblock Theory of longitudinal photons in quantum electrodynamics.
\newblock {\em Proceedings of the Physical Society. Section A}, 63(7):681--691, July 1950.

\bibitem{Hardy1}
Lucien Hardy.
\newblock Quantum theory from five reasonable axioms, 2001.

\bibitem{Jaeger}
Gregg Jaeger.
\newblock {\em Clockwork Rebooted: Is the Universe a Computer?}, pages 71--91.
\newblock Springer International Publishing, 2018.

\bibitem{Kurzweil}
Hans Kurzweil and Bernd Stellmacher.
\newblock {\em The Theory of Finite Groups}.
\newblock Springer New York, 2004.

\bibitem{Lehmann}
Erich~L Lehmann and George Casella.
\newblock {\em Theory of Point Estimation}.
\newblock Springer-Verlag, 1998.

\bibitem{MasanesMuller}
Lluís Masanes and Markus~P M\"uller.
\newblock A derivation of quantum theory from physical requirements.
\newblock {\em New Journal of Physics}, 13(6):063001, June 2011.

\bibitem{DArianoParisSacchi1}
G.~Mauro~D'Ariano, Matteo~G.A. Paris, and Massimiliano~F. Sacchi.
\newblock {\em Quantum Tomography}, pages 205--308.
\newblock Elsevier, 2003.

\bibitem{Pearle}
Philip Pearle.
\newblock Reduction of the state vector by a nonlinear schr\"odinger equation.
\newblock {\em Phys. Rev. D}, 13:857--868, Feb 1976.

\bibitem{Rovelli1}
Carlo Rovelli.
\newblock Relational quantum mechanics.
\newblock {\em International Journal of Theoretical Physics}, 35(8):1637--1678, August 1996.

\bibitem{Qbit1}
Benjamin Schumacher.
\newblock Quantum coding.
\newblock {\em Physical Review A}, 51(4):2738--2747, April 1995.

\bibitem{VogelRisken1}
K.~Vogel and H.~Risken.
\newblock Determination of quasiprobability distributions in terms of probability distributions for the rotated quadrature phase.
\newblock {\em Physical Review A}, 40(5):2847--2849, September 1989.

\bibitem{Weinberg}
Steven Weinberg.
\newblock Testing quantum mechanics.
\newblock {\em Annals of Physics}, 194(2):336--386, September 1989.

\bibitem{Wheeler1}
J.~Wheeler.
\newblock Information, physics, quantum: The search for links.
\newblock In Wojciech~H Zurek, editor, {\em Complexity, entropy and the physics of information}. CRC Press, 2018.

\bibitem{Wigner1}
E.~P. Wigner.
\newblock {\em Group Theory - And its Application to the Quantum Mechanics of Atomic Spectra}.
\newblock Elsevier, 1959.

\bibitem{Wootters_1981}
W.~K. Wootters.
\newblock Statistical distance and hilbert space.
\newblock {\em Physical Review D}, 23(2):357--362, January 1981.

\bibitem{Zeilinger}
Anton Zeilinger.
\newblock A foundational principle for quantum mechanics.
\newblock {\em Foundations of Physics}, 29(4):631--643, 1999.

\end{thebibliography}

\appendix
\section{Proof of the Precision Invariance formula \eqref{eq:constfishermatrixcond}} 
\label{apx:fisherinformationidmtx}
We can assert that equation \eqref{eq:constfishermatrixcond} is true if and only if it holds in both
cotangent subspaces $T^*_x \Phi_{\mathbf{x}}^{\boldsymbol{\theta}_{o}}$ and 
$T^*_x \Phi_{\mathbf{x}}^{\bot \boldsymbol{\theta}_{o}}$. 
We can also verify that equation \eqref{eq:constfishermatrixcond} is equivalent to
\eqref{eq:fisheriterm} in the subspace $T_x \Phi_{\mathbf{x}}^{\boldsymbol{\theta}_{o}}$:
according to the definition of  orthogonal extension, for $k,l \le N_{\theta}$, we have that and
$\boldsymbol{\omega}_o \equiv \boldsymbol{\theta}_o$.  Then, if we make the summation explicit in 
equation \eqref{eq:fisheriterm}, we can write:
\begin{equation}    
\begin{split}
 \label{eq:rebasedfisher}
\mathcal{I}_{i j}^{(o)} (\mathbf{x})  = 
     \sum_{k,l \le N_{\theta}}  \frac{\partial \omega_o^{k}}{\partial x^i} 
    \mathcal{I}_{kl}^{(o)} (\boldsymbol{\theta}_o)
    \frac{\partial \omega_o^{l}}{\partial x^j} \twocolbreak = 
     \sum_{k,l}  
     \frac{\partial \omega_o^{k}}{\partial x^i} 
    \delta^{k \le N_{\theta}} g^{(o)}_{kl} \delta^{l \le N_{\theta}}
    \frac{\partial \omega_o^{l}}{\partial x^j}.
\end{split} 
\end{equation}
\noindent In this equation, if we want to extract the components of the last term that belong to the
subspace $T^*_x \Phi_{\mathbf{x}}^{\boldsymbol{\theta}_{o}}$, we need to multiply left and right by
the vectors ${\mathbf{b}_{\boldsymbol{\theta}_{o}}^k}$ of the basis in
$T^*_x \Phi_{\mathbf{x}}^{\boldsymbol{\theta}_{o}}$, which, for $k \le N_{\theta}$, take the form
$\iota_{\theta_o}^k\frac{\partial \omega_o^{k}}{\partial x^i}$.
Thus, by renaming the indices $k,l$ in equation \eqref{eq:rebasedfisher} as $k', l'$, the
components of $\mathcal{I}_{i j}^{(o)} (\mathbf{x}) $ in  the subspace
$T^*_x \Phi_{\mathbf{x}}^{\boldsymbol{\theta}_{o}}$ are given by
\begin{multline} \nonumber
 \iota_{\theta_o}^k \frac{\partial \omega_o^{k}}{\partial x_i} 
     \mathcal{I}_{i j}^{(o)} (\mathbf{x}) 
    \frac{\partial \omega_o^{l}}{\partial x_j} \iota_{\theta_o}^{l} \twocolbreak = 
\sum_{k',l'} \iota_{\theta_o}^{k} 
    \frac{\partial \omega_o^{k}}{\partial x_i} 
     \frac{\partial \omega_o^{k'}}{\partial x^i} 
    \delta^{k' \le N_{\theta}} g^{(o)}_{k' l'} \delta^{l' \le N_{\theta}}
    \frac{\partial \omega_o^{l'}}{\partial x^j}
    \frac{\partial \omega_o^{l}}{\partial x_i} 
    \iota_{\theta_o}^{l},
\end{multline}

\noindent in which $k,l \le N_{\theta}$. The orthogonality condition of the basis ${\mathbf{b}_{\boldsymbol{\omega}_{o}}^k}$ and its
normalization $\iota_{\theta_o}^{k}$ imply that, for $k \le N_{\theta}$, we can write 
$\frac{\partial \omega_o^{k}}{\partial x_i} \frac{\partial \omega_o^{k'}}{\partial x^i}
\delta^{k' \le N_{\theta}} = 
(\iota_{\theta_o}^{k})^{-1} \delta^{k k'} \delta^{k' \le N_{\theta}} (\iota_{\theta_o}^{k'})^{-1}  =
(\iota_{\theta_o}^{k})^{-2} \delta^{k k'}  = 
\frac{\partial \omega_o^{k}}{\partial x_i} \frac{\partial \omega_o^{k'}}{\partial x^i}$.
This allows us to discard, in the last equation, the indicator functions $\delta^{k' \le N_{\theta}}$
and $\delta^{l' \le N_{\theta}}$, for $k \le N_{\theta}$ and $l \le N_{\theta}$. We can therefore write
\begin{multline}
 \nonumber
\shoveleft  \iota_{\theta_o}^k \frac{\partial \omega_o^{k}}{\partial x_i} 
     \mathcal{I}_{i j}^{(o)} (\mathbf{x}) 
    \frac{\partial \omega_o^{l}}{\partial x_j} \iota_{\theta_o}^{l} \twocolbreak = 
\sum_{k',l'} \iota_{\theta_o}^{k} 
    \frac{\partial \omega_o^{k}}{\partial x_i} 
     \frac{\partial \omega_o^{k'}}{\partial x^i} 
   {{g^{(o)}}}_{k' l'} 
    \frac{\partial \omega_o^{l'}}{\partial x^j}
    \frac{\partial \omega_o^{l}}{\partial x_i} 
    \iota_{\theta_o}^{l} \,\, \, k,l \le N_{\theta},
\end{multline}
\noindent and, by substituting the definition \eqref{eq:extfishinfonormx}, we obtain 
\begin{equation} \nonumber
 \iota_{\theta_o}^k \frac{\partial \omega_o^{k}}{\partial x_i} 
     \mathcal{I}_{i j}^{(o)} (\mathbf{x}) 
    \frac{\partial \omega_o^{k'}}{\partial x_j} \iota_{\theta_o}^{k'} = 
     \iota_{\theta_o}^k \frac{\partial \omega_o^{k}}{\partial x_i} 
     \mathcal{I}_{i j}^{\omega_o} (\mathbf{x}) 
    \frac{\partial \omega_o^{k'}}{\partial x_j} \iota_{\theta_o}^{k'},
\end{equation}
 
\noindent which proves that $\mathcal{I}_{i j}^{(o)} (\mathbf{x}) $ and 
$\mathcal{I}_{i j}^{\omega_o} (\mathbf{x}) $ have the same components in the subspace 
 $T^*_x \Phi_{\mathbf{x}}^{\boldsymbol{\theta}_{o}}$ and, consequently, 
 \eqref{eq:precindax1} holds if and only if \eqref{eq:constfishermatrixcond} holds in the  subspace 
 $T^*_x \Phi_{\mathbf{x}}^{\boldsymbol{\theta}_{o}}$.
 
It can be shown that the condition \eqref{eq:constfishermatrixcond} holds in the subspace  
$T^*_x \Phi_{\mathbf{x}}^{\bot \boldsymbol{\theta}_{o}}$ as well.
In order to pick the components of $\mathcal{I}_{i j}^{\omega_o} (\mathbf{x})$ in that subspace, we 
have to   left- and right-multiply the RHS of \eqref{eq:extfishinfonormx} by the $k$-th vector $\mathbf{b}_{\bot \boldsymbol{\theta}_{o}}^k$. The components of this vector  on the basis in
$T_x \Phi_{\mathbf{x}}^{\bot \boldsymbol{\theta}_{o}}$ are $({b_{\bot \boldsymbol{\theta}_{o}}^k})_i =
\iota_{\alpha_o}^k \frac{\partial \omega_o^{k}}{\partial x^{i}}$
(where $k > N_{\theta}$). Thus, we have:
\begin{equation} \label{eq:extfishinfonormz}
\iota_{\alpha_o}^k \frac{\partial \omega_o^{k}}{\partial x_{i}} 
\frac{\partial \omega_o^{k'}}{\partial x^i}
g^{(o)}_{k' l'} 
\frac{\partial \omega_o^{l'}}{\partial x^j}
\frac{\partial \omega_o^{k}}{\partial x_{j}} \iota_{\alpha_o}^k, \, \, k,l > N_{\theta}.
\end{equation}
\noindent By the orthogonality condition 
$\frac{\partial \omega_o^{k}}{\partial x_{i}} \frac{\partial \omega_o^{k'}}{\partial x^{i}} =
\delta^{k k'}  (\iota_{\alpha_o}^k)^{-2}$  
and by definition \eqref{eq:extfishinfo} of ${g^{(o)}}$ for $k,l > N_{\theta}$, it is easy to show that
the above expression takes the value $\delta^{k l}$.
This proves equation \eqref{eq:constfishermatrixcond} for the components of 
$\mathcal{I}_{i j}^{\omega_o} (\mathbf{x})$ on the subspace 
$T^*_x \Phi_{\mathbf{x}}^{\bot \boldsymbol{\theta}_{o}}$.

In conclusion, we have demonstrated that, with the extended Fisher information defined in
\eqref{eq:extfishinfo}, condition \eqref{eq:constfishermatrixcond} is true if and only if the axiom of
Precision Invariance in the form  \eqref{eq:precindax1} holds.

\section{The state manifold of the three-observable elementary system} \label{apx:statman}
\subsection{Cardinal Points}
The first step is to consider the cases where, in turn, one of the observables  $q$, $p$ or $r$ is
completely determined and the other two are completely undetermined. By applying the same scheme
as in the two-observable case, the limitation of the amount of information and axiom 
\ref{ax:infostatequivalence} leads us to the table
\begin{equation}\label{eq:knownpqrvals}
\def\arraystretch{1.13}
\begin{array} {llll}
\bf{p}_1: \, & \rho_q = (1, 0),    &    \rho_p = (\frac{1}{2}, \frac{1}{2}), &   \rho_r = (\frac{1}{2}, \frac{1}{2}) \\
\bf{p}_2: \, & \rho_q = (0, 1),    &    \rho_p = (\frac{1}{2}, \frac{1}{2}), &   \rho_r = (\frac{1}{2}, \frac{1}{2})  \\
\bf{p}_3: \, & \rho_q = (\frac{1}{2}, \frac{1}{2}), &   \rho_p = (1, 0), &       \rho_r = (\frac{1}{2}, \frac{1}{2})     \\
\bf{p}_4: \, & \rho_q = (\frac{1}{2}, \frac{1}{2}), &   \rho_p = (0, 1), &       \rho_r = (\frac{1}{2}, \frac{1}{2})     \\
\bf{p}_5: \, & \rho_q = (\frac{1}{2}, \frac{1}{2}), &   \rho_p = (\frac{1}{2}, \frac{1}{2}), &   \rho_r = (1, 0)     \\
\bf{p}_6: \, & \rho_q =  (\frac{1}{2}, \frac{1}{2}), &  \rho_p = (\frac{1}{2}, \frac{1}{2}), &   \rho_r = (0, 1)   , 
\end{array}
\end{equation}

\noindent which is the three-observable version of \eqref{eq:knownpqvals}. See figure \ref{3Dsystem1a}
for a graphical representation of the above points. In order to find the structure
of the state space, we will require (i) the system state to be a manifold passing
by the points defined in \eqref{eq:knownpqrvals} and (ii) that the metric-preserving condition 
\eqref{eq:metricpreservationtheorem} holds.

\subsection{Special cases}

\mysubsubsection{Two-Observable subspace case}
To simplify the problem, we consider some special subset of the state space that behaves in a way
we already know.
The first special case we are going to consider is when we know that one of the observables 
is completely undetermined. For example $ \rho_q(0) = \rho_q(1) = 1/2$
(or, by \eqref{eq:nonisoparm}, $\theta_q = \pi/2$ and $S_q = 0$) (see Fig.~\ref{3Dsystem1a} and
\subref{3Dsystem1b}).
This choice implies that the total amount of information available (one bit) can be entirely
dedicated to the other two observables. The metric-preserving condition 
\eqref{eq:metricpreservationtheorem} is still valid, and it acts on the other two parameters.
Thus, the system simplifies into the two-observable problem studied in  \ref{sec:twoobs}.
The solution of this system is given by \eqref{eq:solutionfor2d} and  \eqref{eq:rhovstheta}.

In this case, the two-observable system that we obtain 
is described in terms of the parameters $\theta_p$ and $\theta_r$.
These parameters obey condition \eqref{eq:solutionfor2d}, which now reads
$\theta_r = \pi/2 - \theta_p$. 

The same scheme can be applied to the cases $\theta_p = \pi/2$ ($S_p = 0$) and $\theta_r = \pi/2$ ($S_r = 0$),
leading to two-observable systems with parameters $\theta_q, \theta_r$ and $\theta_q, \theta_p$,
respectively, which obey conditions analogous to those of the $S_q = 0$ case.

Let us write these conditions for the cases $\theta_q = \pi/2$ and $\theta_p = \pi/2$ only,
which will help us in the derivations presented in the following paragraph:
\begin{equation}\label{eq:spqrplanes}
\begin{array} {lllll}
S_q = 0 \,\, \textnormal{plane} & \Rightarrow & \theta_p = \pi - \theta_r & \Rightarrow & S_p =  \sin \theta_r \\
S_p = 0 \,\, \textnormal{plane} & \Rightarrow & \theta_q = \pi - \theta_r & \Rightarrow & S_q =  \sin \theta_r.
\end{array}
\end{equation}

\mysubsubsection{Symmetries}
As prescribed by axioms \ref{ax:symmetries}, the system is assumed to exhibit certain symmetries, which, in our framework, take the form of symmetries of the state space.

A first type of symmetry arises from the assumption that, from a probabilistic point of view, no particular
value of a given observable is privileged. For example, the states $q = 0$ and $q = 1$ must be treated
equivalently. Consequently, the system must be symmetric under transformations such as
$\rho_{\nu}(0) \leftrightarrow \rho_{\nu}(1)$, where $\nu = p, q, r$,
which correspond to the inversion of any of the axes $S_q$, $S_p$, or $S_r$.

A second symmetry stems from the assumption that the three observables are expected to have identical
physical properties. This implies that the system must be symmetric under the exchange of observables.

\mysubsubsection{Other conditions}
Besides being mutually orthogonal, there are no special conditions the parametrization 
\eqref{eq:threeparamtzn} must obey. 
Taking advantage of this arbitrariness, we can require the parameter $\alpha$ to obey some
additional conditions.
One of them is that the points $\alpha = -\pi/2, 0, \pi/2, \pi$
correspond to some planes of the  $(S_q, S_p, S_r)$ space.
For example, as shown in Fig.~\ref{3Dsystem1a},  we suppose that $\alpha_q = 0,\pi$ identifies the
plane $S_r = 0$ and $\alpha_q = \pm \pi/2$ identifies the plane $S_p = 0$.

Furthermore, we will also assume the functions $\Theta_{\mu \rightarrow \nu}(\theta_{\mu}, \alpha_{\mu})$, defined in \eqref{eq:threeparamfun},
to be periodic in $\alpha$ with a period of $2 \pi$. This implies that $\alpha = -\pi$ and $\alpha = \pi$ represent
the same state.

\subsection{The $S_{\mu} = \textnormal{Const}$ subspace}\label{sec:sconstsubspace}
\mysubsubsection{Defining the subspace and its coordinate system}
Let us now consider the three-observable problem, under the hypotheses introduced in section
\ref{sec:estprecsubsp}, namely that a large number of measurements are performed on an observable.
If the observable to be measured is $r$, we focus only on the one-dimensional sub-manifold
$\theta_r = {\textnormal{Const}}$, shown in Fig.~\ref{3Dsystem2a}. 
On this manifold, there is only one locally independent parameter, which can be either $\theta_p$ or 
$\theta_q$. Instead of considering these parameters directly, we introduce the parameters 
$\alpha_{r q} = \alpha_{r q}(\theta_q, \theta_r)$ and
$\alpha_{r p} = \alpha_{r p}(\theta_p, \theta_r)$, which, at this stage, do not necessarily correspond
to any of the $\alpha_{\nu}$s. 
Note that a pair of parameters $\theta$ can identify a state space point only locally, which implies that
they cannot uniquely identify the third parameter $\theta$.
If we want the parameters $\alpha_{r q}$ and $\alpha_{r p}$ to represent every point of the state space, 
we must require the functions $\alpha_{r q}()$ and $\alpha_{r p}()$ to be multi-valued.

We  define how $\alpha_{r q}$ and $\alpha_{r p}$ are related to
$\theta_p$ and $\theta_q$ by using an implicit condition, namely, we require their relation with $S_p, S_q$ and
$\theta_p$ to be of the form
\begin{equation}\label{eq:alphareparams}
\begin{array} {l}
S_q = \mu_q(\theta_r)\cos \alpha_{r q} \\
S_p= \mu_p(\theta_r) \cos \alpha_{r p},
\end{array}
\end{equation}
\noindent which implies that
\begin{equation}\label{eq:spqalphadef}
\begin{array} {l}
\rho_{\nu}(0; \alpha_{\nu}, \theta_{\nu}) = \frac{1+\mu_{\nu}(\theta_{\nu}) \cos \alpha_{r \nu}}{2}
 \\
\rho_{\nu}(1; \alpha_{\nu}, \theta_{\nu}) = \frac{1-\mu_{\nu}(\theta_{\nu}) \cos \alpha_{r \nu}}{2}  
\end{array}
\,\, \nu=p,q,
\end{equation}
\noindent where $\mu_{\nu}(\theta_{\nu}) $ are some arbitrary functions. 
Note that these definitions correspond to equations \eqref{eq:rhovstheta} of
the two-observable case. In order to reach all values of $\rho_p$ and $\rho_q$, we suppose the domain
of  $\alpha_{r \nu}$ to be at least $(0, \pi)$. 
 
It should be noted that the metric-preserving condition \eqref{eq:metricpreservationtheorem} is 
expected to hold for any point of the state space and, as a consequence,
for the neighborhoods defined around any point of the state manifold.
Therefore, if we represent the state space in terms of a set of parameters that include the  
$\alpha_{\nu}$'s,  we must also consider the neighborhoods of 
the points $\alpha_{\nu} = 0$ and  $\alpha_{\nu} = \pi$, and extend the definitions 
\eqref{eq:spqalphadef} to those neighborhoods. 

The definitions \eqref{eq:spqalphadef} express the probabilities
$\rho_p\, \rho_q$ in terms of the parameters $\alpha_{r p}\, \alpha_{r q}$.
These expressions differ between the $\rho_p$ and $\rho_q$ cases only in the factors $\mu_p(\theta_r)$ and $\mu_q(\theta_r)$.
The definitions \eqref{eq:spqalphadef} also imply that the probabilities $\rho_p$ and
$\rho_q$ range in the intervals $(1/2 - \mu_p/2, 1/2 + \mu_p/2)$ and
$(1/2 - \mu_q/2, 1/2 + \mu_q/2)$. 
If we want our theory to be invariant under observable exchange symmetries
like $p \leftrightarrow q$, we must require these intervals to be equal and, 
as a consequence, the functions $\mu_p(\theta_r)$ and $\mu_q(\theta_r)$ to be the same. This allows us to rename 
$\mu_p(\theta_r)$ and $\mu_q(\theta_r)$ simply as $\mu(\theta_r)$

\mysubsubsection{Applying the Precision Invariance axiom}
According to what we explained in section \ref{sec:estprecsubsp}, for large $M_r$, the
estimate of $\theta_r$ is independent of the results of the measurements of $p$ and $q$. This means that 
we can treat $\theta_r$ as a fixed parameter and the measurements on $p$ and $q$ only impact
the estimation of the parameters we use in the submanifold $\theta_r = \textnormal{Const.}$. 
We can apply the Precision Invariance axiom to the estimation of the parameters of $p$ and $q$,
namely,  $\alpha_{r p}$ and $\alpha_{r q}$ on this one-dimensional manifold. 
In other words, we have to handle a two-observable elementary system like the one we described in
section \ref{sec:twoobs}.
The metric-preserving condition, corresponding to \eqref{eq:gpreservationtwodim}, now reads
$g_{p} d \alpha_{r p}^2 = g_{q} d \alpha_{r q}^2$.

We can apply the same steps that led to equation \eqref{eq:gderivation} from 
\eqref{eq:fishermetrictensor} 
in the two-observable elementary system case and calculate, through the definitions \eqref{eq:spqalphadef}, the Fisher metric tensors $g_{p}$ and $g_{q}$
with respect to the parameters $\alpha_{r q}$ and 
$\alpha_{r p}$.  Thus, we obtain: 

\begin{equation}\label{eq:gpgqexpr}
g_q = \mu(\theta_r)^2, \,\, g_q = \mu(\theta_r)^2.
\end{equation}

\noindent The condition on the Fisher metric now reads
 \begin{equation}\label{eq:fishmprespq}
\mu(\theta_r)^2 d \alpha_{r p}^2 = \mu(\theta_r)^2 d \alpha_{r q}^2,
\end{equation}
\noindent which leads to a solution that corresponds to \eqref{eq:thetapvsthetaq}:
\begin{equation} \label{eq:thetapvsthetaqmu}
\alpha_{r p} =  \pm \alpha_{r q} + K.
\end{equation}
\noindent The value of the constant $K$ can be derived from symmetry considerations.
According to \eqref{eq:spqrplanes}, the plane $S_q = 0$ is equivalent to the condition
$S_p = \sin \theta_r$. The definitions \eqref{eq:spqalphadef} for $\alpha_{r q}$ tell us that 
$S_q = 0$ if $\alpha_{r q} = \pm \pi/2$. It should be noted that $S_q = 0$ is also the plane of 
symmetry for the transformation $S_q \rightarrow - S_q$ ($\rho_{q}(0) \leftrightarrow \rho_{q}(1)$).
In our framework, symmetry under a given transformation means that the state manifold is invariant 
under that transformation. In the case of the $S_q \rightarrow - S_q$ symmetry, 
we have that for any point $(S_q, S_p, S_r)$ there must be a point of the state manifold
that takes the value  $(-S_q, S_p, S_r)$.

Let us consider the term   $\mu(\theta_r)\cos \alpha_{r q}$, which occurs in \eqref{eq:spqalphadef}, 
as a function of $\alpha_{r q}$:  clearly, in the point of symmetry $\alpha_{r q} = \pi/2$, this
function  is anti-symmetric, which means that it can be inverted in the neighborhood of 
$\alpha_{r q} = \pi/2$. Thus, in this neighborhood, we can say that
$\alpha_{r q} = \cos^{-1} S_q/\mu(\theta_r) $.
If we substitute this expression in equation \eqref{eq:thetapvsthetaqmu}, from the second line of 
\eqref{eq:alphareparams} we get:
 \begin{equation} \label{eq:spsqmap}
S_p =  \mu(\theta_r)\cos\left(\pm \cos^{-1} (S_q/\mu(\theta_r)) + K\right).
\end{equation}
\noindent Requiring that the model has the symmetry $S_q \rightarrow - S_q$ implies that the state
space is invariant under this transformation.
Thus, for each valid point on the state space, there is a corresponding valid point
whose $S_q$ is changed in sign, while  $S_p$ and  $S_r$ are do not change. 
Consequently, for fixed $\theta_r$, the mapping  \eqref{eq:spsqmap}
from $S_q$ to $S_p$ in neighborhood of the point $S_q = 0\,(\alpha_{r q} = \pi/2)$ must be a symmetric
function with axis of symmetry $S_q = 0$. Since only a symmetric function applied to an anti-symmetric
one generates a symmetric function, the function $  \mu(\theta_r)\cos(\pm \alpha_{r q} + K)$ in 
\eqref{eq:spsqmap} must be symmetric, and its axis of symmetry is $\alpha_{r q} = \pi/2$.
This implies that $K = -\pi/2$. The sign $\pm$ in \eqref{eq:thetapvsthetaqmu} is not  relevant 
because, by changing it, we can redefine $\alpha_{r q} \rightarrow - \alpha_{r q}$ and leave the 
definition $S_q =  \mu(\theta_r)\cos(\alpha_{r q})$ unchanged. If we choose that sign to be a minus, we 
obtain an equation that corresponds to \eqref{eq:solutionfor2d}:
\begin{equation} \label{eq:thetapvsthetaq3d}
\alpha_{r p} =  \pi/2 - \alpha_{r q}
\end{equation}
\noindent The definition \eqref{eq:alphareparams}  of $\alpha_{r p}$ tells us that, in the plane 
$S_q = 0$ ($\alpha_{r p} = 0$), the condition $S_p = \mu(\theta_r)$ must hold.
But, according to \eqref{eq:spqrplanes}, and to the condition  $S_q = 0$,  this also implies that
$S_p = \sin \theta_r$ (as shown in Fig.~\ref{3Dsystem1b}). Therefore we can write
\begin{equation} \label{eq:murvsthetar}
\mu(\theta_r) = \sin \theta_r.
\end{equation}

\mysubsubsection{The state space manifold in parametric form}
Let us consider the extended parameters $\boldsymbol{\omega}_r$  defined in \eqref{eq:threeparamtzn}.
We rename $\alpha_{r p}$ as the  $\alpha_r$ component occurring in that definition.
We can now write the full parametric form of the state manifold coordinates $S_p, S_q, S_r$
in terms of the parametrization $\boldsymbol{\omega}_r$. To do this, we just have to put
together the third equation of \eqref{eq:nonisoparm}, which defines $S_r$,
and the expressions \eqref{eq:alphareparams} of $S_p$ and $S_q$ in terms of
$\alpha_{r p}$ and  $\alpha_{r q}$; 
then, by substituting the expressions of $\alpha_{r q}$ and $\mu(\theta_{r})$ given by
\eqref{eq:thetapvsthetaq3d} and \eqref{eq:murvsthetar}, we obtain:
\begin{eqnarray} \label{eq:parstatemanifold}
\begin{array}{l}
S_{r} = \cos \theta_{r}  \\
S_{q} = \sin \theta_{r} \cos \alpha_{r}  \\
S_{p} = \sin \theta_{r} \sin \alpha_{r}
\end{array}
\end{eqnarray}

\noindent The above expression represents a unit sphere in parametric form.
The steps that led us from the solution in the $S_{\mu} = \textnormal{Const}$ 
subspace (equations \eqref{eq:fishmprespq} and \eqref{eq:thetapvsthetaqmu}) to 
a spherical-shaped state space are illustrated in Fig.~\ref{3Dsystem2}. 
Note that if in the parametrization $\boldsymbol{\omega}_r$ we had chosen 
the component $\alpha_r$ as $\alpha_{r q}$ 
instead  of $\alpha_{r p}$, the expressions for $S_p$ and $S_q$ in \eqref{eq:parstatemanifold}
would have been swapped, but the system would have the same form.  Namely, we have a freedom of choice on the $\alpha_r$ parameter, which can be redefined
according to the transformation:
\begin{equation} \label{eq:alphafredofch}
\alpha_r  \rightarrow \pi/2 - \alpha_r.
\end{equation}

\noindent If in section \ref{sec:sconstsubspace} we had chosen to start from the measurements of
the observable $p$ or $q$ instead of $r$, we would have considered the subspaces
$\theta_p = {\textnormal{Const}}$ or $\theta_q = {\textnormal{Const}}$. This would have led us to
a system of equations analogous to \eqref{eq:parstatemanifold}, in terms of parametrizations
$\omega_p$ and $\omega_q$.
Furthermore, the freedom of choice represented by \eqref{eq:alphafredofch} would
apply to the parameter extensions $\alpha_q$ and $\alpha_p$.

\section{Proof of the Translation Invariance theorem \ref{th:lsbgen}}\label{sec:trasltheor}
Let us demonstrate that a partition whose sets are  $A_{\mu} = \{ x \stackrel{l...n}{=} \mu \}$
is $S$-invariant.
Suppose we have two numbers $x$ and $y$ that belong to the same set $A_{\mu}$. By the definition of
$A_{\mu}$, $x$ and $y$ share the same bits from position $l$ to $n$, that is: 
$x \stackrel{l...n}{=} y \stackrel{l...n}{=} \mu $. When we apply the increment $S$ to $x$ and $y$, 
their least significant bits are incremented by one, and thus:
$x \stackrel{l...n}{=} y \stackrel{l...n}{=} \mu + 1$. 

This implies that both $x$ and $y$ now belong to the set $A_{\mu+1}$, which is still in the partition
$\mathcal{A}$.
Since this reasoning holds for any pair of elements of the same set, we have
demonstrated that any set of the partition $\mathcal{A}$ is mapped by $S$ into another set of the same
partition, confirming that the partition is $S$-invariant.

Different values of the lower bound $l$ index identify different partitions; thus, from now on, we use the notation
$\mathcal{A}^l$ to indicate the partitions with cardinality $2^c$, being $c = n - l + 1$. 

Let us now demonstrate that, if a partition $\mathcal{A}^l$ is $S$-invariant, then its sets must be of
the form  $A_{\mu} = \{ x \stackrel{l...n}{=} \mu \}$.
Since there are $2^n$ elements in $X$, the cardinality of the sets $A_{\mu}$  of a partition
$\mathcal{A}^{l}$ is $2^n/2^c$. The proof proceeds in four steps:

\begin{itemize}
\item assuming that $\mathcal{A}^{l}$ is invariant under $S$, we can show that
$S A \ne A$ for any set $A$ of $\mathcal{A}^{l}$. 
If we repeatedly apply the transformation $S$, we can reach any element along the set $X$.
In other words, the transformation $S^m$, generates all $X$: $\bigcup_m S^m A = X$.
If we had that $S A = A$, then we could also say that $S^m A = A$ for any integer $m$.
Consequently, we would have that $A=X$, which is impossible, since $A$ is only a subset of $X$. 

\item We can also demonstrate that, if we start from a set $A$ of a given partition
$\mathcal{A}^{l}$, the successive applications of $S$ generate the whole partition.
The reason is that, by hypothesis, $S^m A$ is always a set of  $\mathcal{A}^{l}$, and, as mentioned
before, repeated applications of $S$ make us reach all the points of $X$.

\item However, this does not mean that for every value of $m$ we obtain a different set $S^m A$,
because there are no infinite sets in the partition $\mathcal{A}^{l}$.
Therefore, as $m$ increases in the expression  $S^m A$, there must exist a smallest $m_0$, such that
we obtain a set we have already encountered, which we denote as $A'$. If $A'$ occurred for a certain value of $m$,
which we denote by $m'$ (with $m' < m_0$), we could assert that $S^{m_0} A = S^{m'} A$.
On the other hand, by applying $S^{-m'}$ to both sides of this 
equation, we get $S^{m_0-m'} A = A$, which contradicts the assumption that $m_0$ is the
smallest value of $m$ such that we obtain a set we have already encountered, unless $m' = 0$.
We can therefore conclude that $S^{m_0} A = A$. The minimal value $m_0$
must be equal to the cardinality of $\mathcal{A}^{l}$, i.e. $2^c$, because we stated that $S^m A$
generates all the sets of partition $\mathcal{A}^{l}$.

\item The relation $S^{(2^c)} A = A$ can  be regarded as the condition that the set $A$ is
periodic with period $2^c$, under repeated applications of $S$. 
Thus, if we start from an element $x$ of $A$ and repeatedly apply the transformation $S^{(2^c)}$,
which corresponds to a shift by $2^c$, we obtain other elements of $A$. We can repeat the operation
until our total shift is $N$ (which means that we have applied the transformations $N/2^c$ times)
and we go back to our starting element $x$.
Since  $N/2^c = 2^{n-c}$ is also the cardinality of a set of the
partition $\mathcal{A}^{l}$, we can assert that we have reached all the elements of the set, namely
$S^{(2^c)}$ generates a set of the partition $\mathcal{A}^{l}$. 

Finally, we can observe that the application of $S^{(2^c)}$ leaves unchanged the $2^c$ least significant
bits of a number, namely the bits from position $l$ to $n$. This means that the sets generated by
$S^{(2^c)}$, which are all the sets of an invariant partition, must obey the condition
$A_{\mu} = \{ x \stackrel{l...n}{=} \mu \}$, for a given $n-l+1$-bit number $\mu$.
This completes the proof.
\end{itemize}

\section{The extended Fisher metric in higher-dimensional systems (theorem 
\ref{th:hissin}). } \label{apx:alphadepfishinftheorem}
In this section, we demonstrate the theorem \ref{th:hissin} and, specifically, we demonstrate
the formula \eqref{eq:FSm4std}. Our derivations mainly refer to a subsystem of a larger
$N$-point system. 
This may lead us to overly complex index notation, like in the case of equation \eqref{eq:alpha1n};
therefore, to keep the notation simpler, we will omit the indices that specify the subsystem. 
For example, we can rewrite formula  \eqref{eq:FSm4std} for a $2^l$-point subsystem by specifying only
the index ${j_1...j_l}$ that identifies the position within a subsystem, and omitting the index
${j_{l+1}...j_n}$ that identifies the subsystem. If we introduce a further simplification by using a
single integer index $\bf{j}$ instead of a tuple of binary indices ${j_1...j_l}$, we can write
\begin{multline} \label{eq:FSmnstdjj}
{ds^{(l)}}^2 = \sum_{\bf{j}}   (d \log \rho_{[{\bf{j}}]})^2 
\rho_{[{\bf{j}}]}  \\ +4 \sum_{\bf{j}} d \phi_{\bf{j}}^2 \rho_{[{\bf{j}}]}
-  4 \Bigl( \sum_{\bf{j}} d \phi_{\bf{j}} \rho_{[{\bf{j}}]} \Bigr)^2
\end{multline}

\noindent 
To obtain the first (extensionless) term of this expression, we just have to write the generic Fisher
matrix element for a generic set of parameters $\theta_i$: 
\begin{equation} \nonumber
\frac{\partial \log \rho_{[{\bf{j}}]} }{\partial \theta_{i}}
\frac{\partial \log \rho_{[{\bf{j}}]} }{\partial \theta_{i'}}  \rho_{[{\bf{j}}]},
\end{equation}
\noindent and multiply by the differentials $ d \theta_{i} d  \theta_{i'}$.
The full demonstration of theorem \ref{th:hissin} and, specifically, the derivation of the extension 
($d \phi$-dependent) part of equation \eqref{eq:FSmnstdjj}, is based on the assumption that, starting 
from that equation,  we can apply Eq.~\eqref{eq:FSm4gen} as a recursion formula to obtain the 
Fisher metric for a larger $l+1$-level system.
If we consider only the extension part and explicitly specify (in terms of level $l$) the size of the 
subsystem to which each term refers, Eq.~\eqref{eq:FSm4gen} can be rewritten as
\begin{equation} \label{eq:FSmrec}
{ds^{(l+1)}_A}^2 = \sum_k \rho_k {\left.{ds_A^{(l)}} \right|_k}^2  + h^2(\theta) d \alpha^2.
\end{equation}
\noindent We can rewrite equation \eqref{eq:FSmnstdjj} for one of the branches (even or odd) of a larger $l+1$-level system by indicating the phases and the probabilities as $\phi_{{\bf{j}}k}$ and $\rho_{[{\bf{j}}]|k}$, where $k$, as in equation \eqref{eq:FSmrec}, labels the branch. The second and third terms of this equation represent the extension part. If we substitute these terms in place of  $\left.ds_A^{(l)}\right|_k$ in \eqref{eq:FSmrec}, we obtain
\begin{widetext}
\begin{equation} \label{eq:angfisherjj}
{ds^{(l+1)}_A}^2= 4 \sum_k \rho_{k} 
\left[ 
\sum_{\bf{j}} d \phi_{{\bf{j}}k}^2 \rho_{[{\bf{j}}]|k}
-  \left( \sum_{\bf{j}} d \phi_{{\bf{j}}k} \rho_{[{\bf{j}}]|k} \right)^2
\right] + h^2(\theta) d \alpha^2.
\end{equation}
\noindent  
Let us introduce the notation $<x_{{\bf{j}} k}>_{\bf{j}} = \sum_{\bf{j}} r_{{\bf{j}} k}
x_{{\bf{j}} k}$, where, again, we use the single index convention ${\bf{j}} = {j_1...j_l}$ and $r$ is the coefficient used in Eq.~\eqref{eq:alpha1n}.
If we omit the subsystem index $j_{l+1}... j_n$ (see the discussion at the beginning of this 
demonstration),  this notation allows us to rewrite the expression  \eqref{eq:alpha1n} as 
\begin{equation} \label{eq:dalphaprm}
d \alpha = <d \phi_{{\bf{j}} 1}>_{\bf{j}} -  <d \phi_{{\bf{j}} 0}>_{\bf{j}}.
\end{equation}
\noindent \noindent Now we will demonstrate the following identity:
\begin{equation} \label{eq:phisqridentity2}
\sum_k  \rho_k <d \phi_{{\bf{j}} k}>_{\bf{j}}^2 - 
\left(\sum_k  \rho_k <d \phi_{{\bf{j}} k}>_{\bf{j}} \right)^2 =
 \rho_0  \rho_1 (<d \phi_{{\bf{j}} 1}>_{\bf{j}} -  <d \phi_{{\bf{j}} 0}>_{\bf{j}})^2.
\end{equation}
\noindent If, in the above equation, we expand all the sums and their squares and 
collect together the terms involving the same dependence on $d \phi_{{\bf{j}} k}^2 $,
its LHS can be written as
\begin{equation} \nonumber
 \rho_{0} (1  -\rho_{0} ) <d \phi_{{\bf{j}} 0}>_{\bf{j}}^2 +
 \rho_{1} (1  -\rho_{1} ) <d \phi_{{\bf{j}} 1}>_{\bf{j}}^2
- 2 \rho_{0} \rho_{1} <d \phi_{{\bf{j}} 0}>_{\bf{j}} <d \phi_{{\bf{j}} 1}>_{\bf{j}},
\end{equation}
\noindent where the terms in parentheses can be simplified with the help of the identity
$\rho_{0} + \rho_{1} = 1$, allowing us to rewrite  \eqref{eq:phisqridentity2} as:
\begin{equation}
\label{eq:phasesidentity}
\rho_{0} \rho_{1} (<d \phi_{{\bf{j}} 0}>_{\bf{j}}^2 + <d \phi_{{\bf{j}} 1}>_{\bf{j}}^2)  -
2 \rho_{0} \rho_{1} <d \phi_{{\bf{j}} 0}>_{\bf{j}} <d \phi_{{\bf{j}} 1}>_{\bf{j}} 
 = \rho_{0} \rho_{1} (<d \phi_{{\bf{j}} 1}>_{\bf{j}} - <d \phi_{{\bf{j}} 0}>_{\bf{j}})^2.\hfill
\end{equation}
\noindent By expanding  the square of
$<d \phi_{{\bf{j}} 1}>_{\bf{j}} - <d \phi_{{\bf{j}} 0}>_{\bf{j}}$, we can easily prove the above equation.
Since \eqref{eq:phasesidentity} is just an alternative way to write \eqref{eq:phisqridentity2}, by demonstrating it we have also proven \eqref{eq:phisqridentity2}. 
Let us now consider again the expression \eqref{eq:angfisherjj} of the Extension part of the
Fisher metric.
If we apply the substitution $ h^2(\theta) \rightarrow \overline{h}^2(\theta)\sin^2 \theta$ 
and use the identity $\rho_{0} \rho_{1} = \cos^2 \frac{\theta}{2}  \sin^2 \frac{\theta}{2} = 
\frac{1}{4} \sin^2 \theta$, the last term of Eq.\eqref{eq:angfisherjj}, which contains the function
$h(\theta)$, can be put in the form $4\,\overline{h}^2(\theta)  \rho_{0} \,\rho_{1} d \alpha^2$, 
which, by substituting the expression \eqref{eq:dalphaprm} of the parameter $\alpha$,
can be rewritten as: 
\begin{equation} \label{eq:angfisher3}
4 \overline{h}^2(\theta)  \rho_0  \rho_1
(<d \phi_{{\bf{j}} 1}>_{\bf{j}} -  <d \phi_{{\bf{j}} 0}>_{\bf{j}})^2.
\end{equation}
\noindent In this expression, we recognize the RHS of the identity \eqref{eq:phisqridentity2};
by applying the substitution suggested by that identity, equation \eqref{eq:angfisherjj} can be put
in the form
\begin{multline} \label{eq:angfisher4}
{ds^{(l+1)}_A}^2  = 4 \sum_{ k}\sum_{\bf{j}} d \phi_{{\bf{j}} k}^2 \rho_k \,  \rho_{{\bf{j}}|k} -
4 \sum_k \rho_k \,  \left(\sum_{\bf{j}} d \phi_{{\bf{j}} k} \rho_{{\bf{j}}|k} \right)^2   \\
 + 4 \overline{h}^2(\theta)   \sum_k \rho_k <d \phi_{{\bf{j}} k}>_j ^2
 - 4 \overline{h}^2(\theta)   \left(\sum_k \rho_k <d \phi_{{\bf{j}} k}>_{\bf{j}}   \right)^2
\end{multline}
\noindent Then, by expanding the terms in the form $<x>_{\bf{j}}$, and collecting the terms that have
a common factor, we have:
\begin{multline} \label{eq:angfisher5}
{ds^{(l+1)}_A}^2   = 4 \sum_{k} \sum_{\bf{j}} d \phi_{{\bf{j}} k}^2 \rho_k \,  \rho_{{\bf{j}}|k} -
4 \sum_k \rho_k \,  \left(\sum_{\bf{j}} d \phi_{{\bf{j}} k} \rho_{{\bf{j}}|k} \right)^2  \\
 + 4 \overline{h}^2(\theta)\sum_k \rho_k \,
 \left(\sum_{\bf{j}} d \phi_{{\bf{j}} k} r_{{\bf{j}} k} \right)^2 
 - 4 \overline{h}^2(\theta)   \left(\sum_{k} \sum_{\bf{j}} 
 d \phi_{{\bf{j}} k} r_{{\bf{j}} k}   \rho_k \right)^2.
\end{multline}
\end{widetext}
\noindent In the hypothesis that the parameters $\phi$ are covariant under permutation---namely that there is a transformation that exchanges the place of the probabilities 
$\rho_{\bf{j}}$ and of the parameters $\phi$ in the same way---in the above equation, only the first 
term is explicitly invariant under permutations of the positions $[{\bf{j}} k]$, because
$\rho_k \rho_{{\bf{j}}|k} =  \rho_{[{\bf{j}} k]}$. 

The only way to make Eq.~\eqref{eq:angfisher5} invariant under these permutations is to require that the second and third terms vanish, because they contain mixed terms like
$d \phi_{{\bf{j}} k} \, d \phi_{{\bf{j}} k'}$,  which break the invariance. This occurs only if we
impose that $r_{{\bf{j}} k} = \rho_{{\bf{j}}|k}$ and $\overline{h}^2(\theta) = 1$, that is, $h^2(\theta) = \sin^2(\theta)$, as 
stated in the theorem \ref{th:hissin}.
With the substitution   $r_{{\bf{j}} k} \rightarrow \rho_{{\bf{j}}|k}$,
the fourth term becomes invariant too. If we rename the index $k$ in Eq.~\eqref{eq:angfisher5} as $j_{l+1}$,
we obtain a form for ${ds^{(l+1)}_A}^2$ that is consistent with the extension part of ${ds^{(l)}}^2$, as identified by
Eq.~\eqref{eq:FSmnstdjj}, since
 $\rho_{{\bf{j}}|{j_{l+1}}} \rho_{j_{l+1}} = \rho_{[j_1 ... {j_{l+1}}]}$.
Thus,  we have built the recursion formula that makes us go from an $l$-level system
to an $l+1$-level one.

In order to complete our demonstration, we need to build the first step ($l=0$) of
the recursion. If we put $l=0$ in the identity \eqref{eq:phisqridentity2}, we obtain
the equation $\sum_k \rho_k d \phi_k^2 - (\sum_k \rho_k d \phi_k)^2 = \rho_0\rho_1 d \alpha^2$.
By using this equation to replace the parameter $\alpha$ with $\phi_0 \phi_1$ in 
the Fisher metric \eqref{eq:finalfishermetric} of the one-bit three-observable
system, we obtain the first level  ($l=0$) of the recursion formula \eqref{eq:phisqridentity2}.
This completes the proof of the theorem.

\section{Proof of the change-of-representation version of Wigner's theorem} \label{apx:wignertheorem}

Let us demonstrate the Wigner's theorem for changes of representation of the statistical parameters

\begin{theorem} \label{th:linearity}
Let $\psi$ be a generic state, i.e., a point in the state space
$\Phi$. The state $\psi$ can be represented as a wave function depending either on the coordinate
$q$ or on the coordinate $u$. 
In terms of the coefficients of these representations,
we can write $ \vert \psi \rangle^{(q)} = \sum_i a_i \vert i \rangle_q$ and
$\vert \psi \rangle^{(u)} = \sum_i b_i \vert i \rangle_u$,  where
$ \vert i \rangle_q$ and $\vert i \rangle_u$ are the states with definite values of
$q$ and $u$, respectively, and the superscripts $(q)$ and $(u)$ specify the coordinate of the representation. If these conditions hold and if the Fubini-Study distance \eqref{eq:fsfinite} is preserved for changes of representations,
then  this change of representation is a linear or antilinear transformation, namely, 
the coefficients $a_i$ and $ b_i$ are related by a linear---or antilinear---operator
$a_i = U^{i j} b_j$. The matrix $U^{i j}$ is also unitary---or antiunitary---because it must preserve 
the normalization of the states.
\end{theorem}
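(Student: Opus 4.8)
The plan is to reduce the statement to the classical Wigner theorem and then carry out the standard ray-representation argument, adapted to the ``alias'' (change-of-representation) setting. First I would observe, as already noted in the main text, that the quantity preserved in \eqref{eq:fsfinite} is a strictly monotonic function of the normalized transition probability
\begin{equation} \nonumber
P(\psi_1,\psi_2) = \frac{\vert \langle \psi_1 \vert \psi_2 \rangle \vert^2}{\langle \psi_1 \vert \psi_1 \rangle \, \langle \psi_2 \vert \psi_2 \rangle},
\end{equation}
so that invariance of $D_{FS}$ under the change-of-representation map $T:\{b_i\}\mapsto\{a_i\}$ is exactly equivalent to invariance of $P$. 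Since both representations describe the same physical states, $T$ is a bijection between the (rays of the) two coefficient spaces that preserves every transition probability, which places us precisely within the hypotheses of Wigner's theorem.

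Next I would run the standard constructive argument. Starting from the orthonormal basis $\vert i \rangle_u$, the preservation of $P$ together with the fact that $P = 0$ characterizes orthogonality shows that $T$ carries this basis to an orthonormal family; using the phase freedom of ray representatives I would fix representatives so that $T$ sends $\vert i \rangle_u$ to $\vert i \rangle_q$. For a general state with coefficients $b_i$, comparing $P$ against each individual basis vector gives $\vert a_i \vert = \vert b_i \vert$, i.e. the moduli of all coefficients are preserved. The relative phases are then pinned down by evaluating $P$ on the two-term combinations $\vert 1 \rangle_u + \vert i \rangle_u$, which fixes each relative phase either to its original value or to its complex conjugate.

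The hard part — the genuine content of Wigner's theorem — will be the \emph{dichotomy} step: showing that the choice between ``preserve phase'' and ``conjugate phase'' cannot be made independently coordinate by coordinate, but must be uniform across the whole state. I would establish this by imposing the transition-probability constraint on the three-term superpositions $\vert 1 \rangle_u + \vert i \rangle_u + \vert j \rangle_u$, whose invariance couples the phase choices of the $i$-th and $j$-th coordinates and forces them to agree. Once uniformity is secured, $T$ is either complex-linear on all coefficients (the unitary case) or complex-antilinear (the antiunitary case), yielding $a_i = U^{ij} b_j$ with $U$ acting linearly or antilinearly. Finally, since $T$ maps normalized states to normalized states (because $P(\psi,\psi)=1$ is preserved), $U$ preserves the inner-product norm and is therefore unitary or antiunitary, completing the proof; as remarked in the main text, the antiunitary branch may then be discarded as a physically equivalent mirror formulation.
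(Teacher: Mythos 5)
Your proof is correct, and it reaches the conclusion by a genuinely different route than the paper's appendix. You convert preservation of the finite distance \eqref{eq:fsfinite} into preservation of the normalized transition probability (legitimate, since $\arccos\sqrt{\,\cdot\,}$ is strictly monotonic) and then run the classical Wigner--Bargmann constructive argument: orthonormality of the image basis from the case $P=0$, equality of moduli by testing against basis states, relative phases fixed up to conjugation by two-term superpositions, and uniformity of the conjugation choice enforced by three-term superpositions. The paper instead never leaves the metric picture: it expands the Fubini--Study line element \eqref{eq:fubstudybas} in amplitude--phase coordinates, explicitly identifies the geodesics joining two $u$-eigenstates, equates the moduli of coefficients by matching the integrated geodesic distance (the arc-length parameter $\beta$), pins down relative phases up to sign via monotonicity of the distance integrated along constant-modulus curves, extends the result recursively through nested two-dimensional subspaces, and settles the unitary/antiunitary dichotomy with an additive consistency identity on the signs $s_{ij}$ relating phase differences in the two representations. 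Your route is shorter and more elementary, since it exploits the finite-distance formulation and the standard test-state machinery; the paper's route is self-contained within its own metric-preservation framework and operates directly on the infinitesimal object delivered by Theorem \ref{th:fmprese}, at the cost of the geodesic analysis. Note also that your three-term-superposition step and the paper's $s_{ij}$-additivity argument do the same job at the same level of rigor: both enforce that the phase-conjugation choice is uniform over the coefficients of a given state, which is exactly as far as the paper's own proof goes.
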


\begin{proof}
At this stage, we have two separate spaces: the space of the wave functions $\psi^{(q)}$ 
and the space of the wave functions $\psi^{(u)}$. Both of them allow a mapping on the state space and,
vice versa, the same state can be represented in both spaces.
Let us rewrite the above-introduced expression of a generic state $\psi$
in terms of eigenvectors of $u$, by specifying the space to represent 
the vectors through  the superscripts $(q)$ and $(u)$:
\begin{subequations}\label{eq:psibyegens}
\begin{align}
\vert \psi \rangle^{(q)} & = \sum_j b^{(q)}_j \vert j \rangle^{(q)}_u   \label{eq:psibyegensq} \\
\vert \psi \rangle^{(u)} & = \sum_j b^{(u)}_j \vert j \rangle^{(u)}_u.  \label{eq:psibyegensu} 
\end{align}
\end{subequations}
\noindent Here, the vector $ \vert j \rangle^{(q)}_u$ indicates the eigenvector of 
the observable $u$ represented in the space of wave functions $\psi^{(q)}$ and, again,
$\vert \psi \rangle^{(q)}$ and $\vert \psi \rangle^{(u)}$ are the same state, in two different
representations. By multiplying left both sides of \eqref{eq:psibyegensq} by 
${}_{\;\;q}^{(q)}\!\langle i \vert$ (the $i$-th eigenvector of $q$ in the space of the wave
functions $\psi_q$), we get:
\begin{equation} \label{eq:acoeffinu} 
 {}_{\;\;q}^{(q)}\!\langle i \vert \psi \rangle^{(q)}  =  \sum_j 
 {}_{\;\;q}^{(q)}\!\langle i \vert j \rangle^{(q)}_u b^{(q)}_j.
\end{equation}
\noindent By using the definition $ \vert \psi \rangle^{(q)} = \sum_i a_i \vert i \rangle_q$, it is evident that the LHS of the above equation corresponds to the coefficient $a_i$
(or, for consistency of notation,  $a^{(q)}_i$) of the representation of the state in the space of
the $\psi^{(q)}$.  If we demonstrate that: 
\begin{enumerate}[label=(\roman*)]
 \item the matrix ${}_{\;\;q}^{(q)}\!\langle i \vert j \rangle^{(q)}_u$, where $i,j$
are the row and column indices, is unitary;
 \item in different spaces, the coefficients of the representation of the state are equal or conjugate to each other
namely, $b^{(q)}_j = b^{(u)}_j$ or ${b^{(q)}_j}^* = b^{(u)}_j$,
\end{enumerate}
\noindent then, from   \eqref{eq:acoeffinu}, we get an equation of the form  
$a^{(q)}_i =  \sum_j U^{i j}  b^{(u)}_j$ or $a^{(q)}_i =  \sum_j A^{i j}  {b^{(u)}_j}^*$, which expresses the fact that a change in the representation
of the state is a linear or antilinear transformation.

Let us demonstrate the point (i).  The unitarity of the matrix
${}_{\;\;q}^{(q)}\!\langle i \vert j \rangle^{(q)}_u$ is equivalent to the requirement that the
eigenvectors $\vert j \rangle^{(q)}_u$ are orthogonal and normalized. 
Of course, they are orthogonal in the space $\Phi_u$ because we have chosen them as the basis of that
space. 
To demonstrate that they preserve the orthogonality when represented in the $\Phi_q$ space, 
we can apply the metric preservation theorem to the pair of observables $q$ and $u$. 
As stated before, if the statement of the preservation theorem holds, then the Fubini-Study geodesic distance defined in \eqref{eq:fubstudybas} is preserved between two parametrizations (in the complex wavefunction form) based on two different observables.

The Fubini-Study geodesic distance \eqref{eq:fubstudybas} reaches its maximum, i.e $\pi/2$,
when the operand of $\arccos$ is zero. It is easy to show that this happens if and only 
if $\psi_1$ and $\psi_2$ are orthogonal.
By applying this property to the case of the eigenvectors of $u$, we can see that, since they are 
orthogonal in the space $\Phi_u$, the Fubini-Study distance between two different eigenvectors is
$\pi/2$. 
But this distance must be preserved when we represent the vectors in the space $\Phi_q$. Consequently, 
two different eigenvectors of $u$ are orthogonal also when they are represented in the space 
$\Phi_q$. This proves the assertion (i).

As for point (ii), it will be proved first in the subspace generated by a pair of eigenvectors
of the observable $u$, then we will extend the demonstration to the whole space.
Once again, the demonstration is based on the theorem of metric preservation \ref{th:fmprese}, applied to the
Fubini-Study metric in the form \eqref{eq:fubstudybas}.

Let us express the state as a linear combination of eigenstates of $u$, by separating the
absolute value of the coefficients from the phase factor:
\begin{equation}\label{eq:stateuamplphase}
\vert \psi \rangle  = \sum_j c_j  e^{i \phi_j}\vert j \rangle_u.  
\end{equation}
\noindent By differentiating the above equation, we can write \allowbreak\hspace{0pt} 
$d \vert \psi \rangle  = \sum_j (d c_j  + i\, c_j d \phi_j)e^{i \phi_j} \vert j \rangle_u$.
If we substitute the expression of the state \eqref{eq:stateuamplphase} and 
its differential into the Fubini-Study metric in the form \eqref{eq:fubstudybas}, and and take into account that, 
from the normalization condition, $2 \sum_j c_j  d c_j = d \sum_j c_j^2 = 0 $, we obtain: 
\begin{equation}\label{eq:fsmetramplphase}
d s_{FS}^2 = \sum_j d c_j^2 +  \sum_j c_j^2 d \phi_j ^2 - \left(\sum_j c_j^2 d \phi_j \right)^2.
\end{equation}
The second and third terms on the RHS of the above equation can be interpreted as the mean
square position of a set of point masses (where $d \phi_j$ represents the position of the
$j$-th point and $c_j^2$ its mass) and the square of the mean position, respectively.
This analogy naturally suggests applying the parallel axis theorem (also known as the
Huygens-Steiner theorem), allowing the difference between these terms to be expressed in the form
$\sum_j c_j^2 (d \phi_j - d \overline{\phi})^2$, where $d \overline{\phi} = \sum_j c_j^2 d \phi_j$
denotes the mean position (i.e., the center-of-mass position).

This can be expressed in another way: the last term in \eqref{eq:fsmetramplphase} ensures that the
Fubini-Study metric is independent of variations of the state by an overall phase factor.
This is consistent with the fact that a real physical state on the manifold $\Phi_x$
must not depend on such a phase factor. Therefore, we can neglect the third term 
by redefining any variation of the state with a constant factor $e^{-i d \overline{\phi}}$.

Let us consider the points along the
curve, on the state manifold, that go from an eigenstate $\vert 1 \rangle_u$
to an eigenstate  $\vert 2 \rangle_u$. We can  rewrite the coefficients 
$c_j$, in the expression \eqref{eq:stateuamplphase} of the state, as functions $c_j(\beta)$ of the
parameter $\beta$, which identifies a point along the curve.
The first two terms of the sum in \eqref{eq:stateuamplphase} can be separated and put in the
parametric forms $\tilde{c}(\beta)\cos\beta \,\vert 1 \rangle_u$ and 
$\tilde{c}(\beta) \sin \beta \, \vert 2 \rangle_u$.
Thus, equation \eqref{eq:stateuamplphase} can be written as:
\begin{multline}
\label{eq:stateuamplphasesubsp}
\vert \psi \rangle  = 
\tilde{c}(\beta)\left( \cos \beta\, e^{i \phi_1} \vert 1 \rangle_u +
\sin \beta\,  e^{i \phi_2} \vert 2 \rangle_u  \right) \twocolbreak +
\sum_{j=3}^N \, c_j(\beta)  e^{i \phi_j}\vert j \rangle_u.  \;\;\;\; 
\end{multline}
\noindent If we assume that the parameter $\beta$ runs from 0 to $\pi/2$ and that
$c_j(0) = c_j(\pi/2) = 0 \,\forall j>2$, we are guaranteed that the above parametric form 
represents a curve on the state manifold that goes from $\vert 1 \rangle_u$
to $\vert 2 \rangle_u$. 
The coefficient $\tilde{c}$ must obey the normalization condition
$\tilde{c}^2 + \sum_{j=3}^N \, c_j^2 = 1$.

The next step is to express, in the Fubini-Study metric  \eqref{eq:fsmetramplphase}, 
the eigenstates' coefficients $c$ in the parametric form given in 
\eqref{eq:stateuamplphasesubsp}. According to what stated before, we can neglect the third term in \eqref{eq:fsmetramplphase}, and this substitution yields:
\begin{equation}\label{eq:fsmetramplphasetwo}
d s_{FS}^2 = d \beta^2 + \left(\frac{\partial \tilde{c}(\beta)}{\partial \beta}\right)^2 d \beta^2 +
\sum_{j=3}^N d c_j^2 + \sum_{j=3}^N c_j^2 d \phi_j^2.
\end{equation}
\noindent As expected, if we integrate in $d \beta$ the first term from 0 to $\pi/2$ of the above
expression, we get $\pi/2$ (see the demonstration of point (i)).  
In this framework, the geodesic is the curve that goes from $\vert 1 \rangle_u$
to $\vert 2 \rangle_u$, such that the integral of $d s_{FS}$ is minimal. We can reach this
minimum only if all the terms in \eqref{eq:fsmetramplphasetwo}, except the first, are zero.
This occurs only if $\tilde{c}$, the phases $\phi_j$ and the coefficients $c_j$ 
(for $j>2$) are constant (not necessarily zero).
However, the curve obtained by minimizing the above metric has the form
of a linear combination of $\vert 1 \rangle_u$ and  $\vert 2 \rangle_u$ only, because we start
from a state that has $c_j = 0$ for $j>2$ and  contains phase factors
depending on the $\phi_j$s. 

In this linear combination, only the phase difference $\phi_{1 2} = \phi_1 - \phi_2$ is
relevant and, consequently, there is a minimal curve for each (constant) 
value of $\phi_{1 2}$. This is not surprising because we know that, in general, 
given two points on a manifold, there can be more than one geodesic. Vice versa,
since the coefficients $c_j$ for $j>2$ are zero in the linear combination
and $\tilde{c}$ impacts only the
normalization (i.e. does not represent another parameter), we can also assert that in the subspace generated by
$\vert 1 \rangle_u$ and  $\vert 2 \rangle_u$, \textit{all} minimal curves can
be obtained by choosing the right values of $\phi_{1 2}$.

The demonstration of \eqref{eq:fsmetramplphasetwo} is independent of the choice of space, $\Phi_u$
or $\Phi_q$, used to represent the vectors. 
On the other hand, by the metric preservation theorem, the distance between two
points (given by the Fisher/Fubini-Study metric), is independent of the space 
used to represent the vectors. Thus, if a curve is a geodesic in $\Phi_u$, it will
be a geodesic in $\Phi_q$ as well. 
This also implies that each value of $\phi_{1 2}$  identifies a geodesic in $\Phi_q$, but also in
$\Phi_u$, which corresponds to a single value of $\phi_{1 2} $ for $\Phi_q$ because, as stated before,  each geodesic corresponds to a value of $\phi_{1 2}$. We can conclude that
there are pairs of values of $\phi_{1 2}$, say $\phi_{1 2}^{(q)}$ and $\phi_{1 2}^{(u)}$, 
which corresponds to a curve, which is a geodesic in both $\Phi_q$ and $\Phi_u$.

If we integrate the differential defined by \eqref{eq:fsmetramplphasetwo} along a geodesic,
from the starting point $\vert 1 \rangle_u$, corresponding to $\beta = 0$, 
to another point $\vert \psi' \rangle$, corresponding to $\beta = \beta'$, 
we obtain $D_{FS} = \beta'$.
If we apply once more the metric preservation theorem, the distance between two points 
(given by the Fisher/Fubini-Study metric), is independent of the space  we used to 
represent the vectors.
The only point of the space $\Phi_q$, along a geodesic curve, which corresponds to some state
$\vert \psi' \rangle^{(q)}$  of $\Phi_q$ is the point that has the same distance
$D_{FS} = \beta'$ and, as a consequence, the same value of $\beta'$ (we assume that 
$\beta \ge 0$). Thus, the coefficients $c_1$ and $c_2$ of one space must be equal to those in the other space.
This demonstrates the point (ii) in the subspace generated by the vectors 
$\vert 1 \rangle_u$ and $\vert 2 \rangle_u$, under the condition that we
chose the right pair $\phi_{1 2}^{(q)} \, \phi_{1 2}^{(u)}$.

The next step is to extend the demonstration for complex coefficients of the eigenstates 
$\vert 1 \rangle_u$ and $\vert 2 \rangle_u$.
This means that we have to demonstrate that a state corresponds, in the spaces
$\Phi_q$ and $\Phi_u$, to equal values of the phase factors of the eigenstates
$\vert 1 \rangle_u$ and $\vert 2 \rangle_u$ as well as to the absolute values of their coefficients.
To be more specific, we have to demonstrate that in both spaces the difference $\phi_{1 2}$ has the 
same value.

We will follow a scheme similar to that adopted for the coefficients $c_1$ and $c_2$.
In order to simplify the demonstration, we will assume that we can choose the 
pair $\phi_{1 2}^{(q)}$  $\phi_{1 2}^{(u)}$, which identifies the same geodesic in both
$\Phi_q$ and $\Phi_u$, and this pair will be  $\phi_{1 2}^{(q)} = \phi_{1 2}^{(u)} = 0$. This  
is equivalent to the multiplication by a constant phase factor of any of the 
states $\vert 1 \rangle_u$ and $\vert 2 \rangle_u$.

Let $\gamma$ be a curve, in any of the spaces $\Phi_q$ or $\Phi_u$, which has constant
$\phi_i$  for $i > 2$ and constant $c_i$ for any $i$, and let us express it in parametric form with
parameter $\phi_{1 2}$. 
Suppose that we start from a point, for instance  $\phi_{1 2} = 0$ and calculate,
along this curve,  the definite integral of the Fubini-Study metric in differential form: 
$D_{\gamma}(\phi_{1 2}') = \int_0^{\phi_{1 2}'} d s_{FS}$; note that the parameter $\beta$ of the geodesic
curve introduced above, is constant.
If we assume that $\phi_{1 2}$ can take either non-negative or non-positive values only, this integral is 
a monotonic function of $\phi_{1 2}'$ and, as a consequence, $D_{\gamma}(\phi_{1 2}')$ 
is also invertible (we will justify this assumption at the end of the section).
Furthermore, according to the metric-preserving property \eqref{eq:metricpreservationtheorem},
this integral must take the same value for both spaces $\Phi_q$ and $\Phi_u$. 
In other words, if the value of the parameter
$\phi_{1 2}$ for a given point on $\gamma$ is equal in the spaces $\Phi_q$ and
$\Phi_u$, namely if $\phi_{1 2}'^{(q)} \equiv  \phi_{1 2}'^{(u)}$,
then the equation  $D_{\gamma}(\phi_{1 2}'^{(q)}) = D_{\gamma}(\phi_{1 2}'^{(u)})$ 
must hold.

If we consider either positive or negative values of its argument,
and if $D_{\gamma}$ is invertible in a semi-axis of its argument, we have that
$\phi_{1 2}'^{(u)} = \pm \phi_{1 2}'^{(q)}$, where the sign $\pm$ is due to the fact that
actually $D_{\gamma}$ depend on the absolute value of $\phi_{1 2}'$. 
This proves the assertion (ii) in the subspace generated by
$\vert 1 \rangle_u$ and $\vert 2 \rangle_u$, where the ``+'' sign corresponds to equal coefficients and the ``-'' sign corresponds to coefficients that are conjugate to each other.

It should be noted that the demonstration of \eqref{eq:fsmetramplphasetwo} could be applied
to any set of orthonormal vectors.
This allows us to repeat the above scheme to the spaces generated by the orthogonal vectors 
$\vert \psi_{1 2} \rangle_q = c_1 \vert 1 \rangle_q + c_2\vert 2 \rangle_q$ 
and $\vert 3 \rangle_u$ and the corresponding ones in the space $\Phi_u$ (assuming the plus sign in the condition $\phi_{1 2}'^{(u)} = \pm \phi_{1 2}'^{(q)}$),
which will lead us to demonstrate the statement (ii) in these spaces. Going 
further, we can recursively demonstrate (ii) in the full spaces $\Phi_q$ and  $\Phi_u$.

We assumed the plus sign in the condition $\phi_{1 2}'^{(u)} = \pm \phi_{1 2}'^{(q)}$, but it can be shown that it is also possible to choose the minus sign. This leads us to the antiunitary case of the theorem's statement, but we must ensure that this choice is
applied consistently across all terms $\phi_{i j}^{(q)}$. To demonstrate this, we can rewrite the last identity
in the form
\begin{equation}\label{eq:s12us12q}
\phi_1^{(u)} - \phi_2^{(u)} = s_{1 2}(\phi_1^{(q)} - \phi_2^{(q)}),
\end{equation}
\noindent where 
$s_{1 2} = \pm 1$. We can write a similar identity for the second and third coefficients:
$\phi_2^{(u)} - \phi_3^{(u)} = s_{2 3}(\phi_2^{(q)} - \phi_3^{(q)})$.
By adding the two identities, we get:
\begin{equation}\label{eq:s12s23s31}
\phi_1^{(u)} - \phi_3^{(u)} =  s_{1 2}\phi_1^{(q)} -  s_{2 3} \phi_3^{(q)} 
+ ( s_{2 3} - s_{1 2})\phi_2^{(q)}.
\end{equation}
\noindent
We expect this equation to be compatible with the first and the third coefficient identity:
$\phi_1^{(u)} - \phi_3^{(u)} =  s_{1 3}(\phi_1^{(q)} - \phi_3^{(q)})$. 
This is possible only if the last term in \eqref{eq:s12s23s31} vanishes, which 
implies that $s_{2 3} = s_{1 2}$ and, as a consequence  $s_{1 3} = s_{1 2}$.
We can go further with this scheme and extend it to all the terms  
$\phi_{i j}^{(q)}$. We can conclude that all the identities of the
form \eqref{eq:s12us12q} have the same sign $s_{i j}$.

The consequence of this condition is that there are two representations in 
the space $\Phi_u$ that are compatible with the metric-preserving condition,
whose phases $\phi_1$ have opposite signs, corresponding to the unitary and antiunitary case of the theorem.
\end{proof}

\section{Derivation of the twiddle factor from the shift/rotation operator} \label{apx:shrotoperformula}
\subsection{The Recursive formula for the shift/rotation operator}\label{sec:indform}
From  equation \eqref{eq:shifttrasflg}, in the case $l'= 1 $, we get the identity
$S_{l} = F_{1 \rightarrow l} S_{1} F_{1 \rightarrow l}^{-1}$.
If we multiply both sides of Eq.~\eqref{eq:fftgrpl} by $S_l$ and, in the RHS, we apply the above 
expression of $S_l$, we get:

\begin{equation}  \label{eq:shifttrasflgt}
S_{l} \psi_p =  F_{1 \rightarrow l} S_{2}  t_{1} F_{1} \psi_q.
\end{equation}

\noindent  If we apply the identity \eqref{eq:shifttrasfl}, this equation yields

\begin{equation}  \label{eq:shifteqext}
 F_{1 \rightarrow l}    t_{1} F_{1} S_{0}  \psi_q =  F_{1 \rightarrow l}  S_{1}  t_{1} F_{1} \psi_q.
\end{equation}

\noindent We can left-multiply both sides of the above equation by $F_{1 \rightarrow l}^{-1}$,
and, since it must hold for any $\psi_q$, we can also remove  $\psi_q$.
Thus, we obtain the operator expression:
\begin{equation}  \label{eq:shifteqextx}
t_{1} F_{1} S_{0} = S_{1}  t_{1} F_{1}.
\end{equation}

\noindent The above equation can play the role of a recursion formula because it expresses the $S_1$ 
rotation/shift operator in terms of the lower-level one. Our plan is now to substitute, in that 
formula, the twiddle and the translation operator
in block-matrix form. Then, thanks to condition \eqref{eg:sigmavssigmaor}, we can derive an even
more explicit form of the shift/rotation operator, which leads us to
the expression of the twiddle factor.

It is worth noting that there is a subtle and, in a way, hidden connection between the above formula 
and the \eqref{eq:chiinvariance}.
In \eqref{eq:shifteqextx} we have the transformation operator $ t_{1} F_{1}$ instead of 
the condition $\chi$, which connects two consecutive levels of the butterfly diagram.
Equation \eqref{eq:chiinvariance} synthesizes the idea that, if we divide our space into 
partitions invariant with respect to a transformation, this transformation does not mix different 
conditions $\chi$. In the following subsection, we will show that the rotation/shift operator $S_1$
actually involves $L/2$ pair of components separately, leading us to a set of $L/2$ decoupled
equations, which contain a condition on how the transformation operator is made.
Since each pair of components corresponds to a set of an invariant partition,
this scenario is very close to what expressed by equation \eqref{eq:chiinvariance}.

\subsection{The Recursive Formula in block matrix form}\label{sec:indformterms}

Let us consider the expression \eqref{eq:invoptwod} of the translation
operator in the $\psi_q$ space for the elementary $N=2$ systems.  
We can extend this operator to all the elementary systems at the 
first level of the butterfly diagram of an $L$-dimensional system.
We simply need to apply $L/2 = 2^{l-1}$ elementary system shift/rotation operators simultaneously: 
this operation, in the  $\psi_q$ space, takes the form of the matrix
\begin{equation} \nonumber
r_l = \Diag(1,... 1, e^{i \pi}, ... , e^{i \pi}),
\end{equation}
\noindent where the phase factors $1$ apply to the first $L/2$  (even $p$) elements, and 
the phase factors $e^{i \pi}$ apply to the other $L/2$  (odd $p$) elements.
Let us now rewrite Eq.~\eqref{eg:sigmavssigmaor} in block-matrix form.
We need first to write the operator $S^o$, namely, the odd $p$-branch  shift/rotation operator
\begin{equation} \nonumber
S^o = 
\begin{pmatrix}
I & 0 \\
0 & S_{l-1}^{(l-1)}
\end{pmatrix},
\end{equation}

\noindent where the superscript $(l-1)$ on $S$ indicates that it operates on a $2^{l-1}$-dimensional space of a
$2^{l-1}$-dimensional subsystem.
The subscript ${l-1}$ specifies that we are acting on the highest level of the 
odd $p$ branch. We can rewrite Eq.~\eqref{eg:sigmavssigmaor}, which involves matrices that act on the $\psi_p$ space, in the explicit form
\begin{equation} \nonumber
S_l = 
\begin{pmatrix}
I & 0 \\
0 & S_{l-1}^{(l-1)}
\end{pmatrix}
\begin{pmatrix}
0 & I \\
I & 0
\end{pmatrix}.
\end{equation}

\noindent 
By applying $F_{1 \rightarrow l}^{-1}$ to the left and $F_{1 \rightarrow l}$ to the right of 
both sides of the above equation, and then applying  \eqref{eq:shifttrasflg} (with $l' = 1$)
on the LHS we get:

\begin{equation} \label{eq:shifttrasflm1}
S_{1} = 
F_{1 \rightarrow l}^{-1}
\begin{pmatrix}
I & 0 \\
0 & S_{l-1}^{(l-1)}
\end{pmatrix}
F_{1 \rightarrow l}
F_{1 \rightarrow l}^{-1}
\begin{pmatrix}
0 & I \\
I & 0
\end{pmatrix}
F_{1 \rightarrow l}
\end{equation}

\noindent Let us consider again equation \eqref{eq:fftgrpblockl}, which expresses the condition
that, from the first level on, the butterfly diagram branches into two subdiagrams, and that the
transformation $F_{1 \rightarrow l}$ acts on each branch independently. 
In that equation, we have the condition
$F_{1 \rightarrow l}^e = F_{1 \rightarrow l}^o = \mathcal{F}_{l-1}$; hence, we can rewrite the product 
in the first half of the RHS of  \eqref{eq:shifttrasflm1} as:
\begin{equation} \nonumber
F_{1 \rightarrow l}^{-1} 
\begin{pmatrix}
I & 0 \\
0 & S_{l-1}^{(l-1)}
\end{pmatrix}
F_{1 \rightarrow l} =
\begin{pmatrix}
I & 0 \\
0 & \mathcal{F}_{l-1}^{-1} S_{l-1}^{(l-1)} \mathcal{F}_{l-1}
\end{pmatrix}.
\end{equation}
\noindent If we use the above equation and simplify the second half of the RHS  of
\eqref{eq:shifttrasflm1} through the identity
$\mathcal{F}_{l-1}^{-1} I \mathcal{F}_{l-1} = I$,  we get:
\begin{equation} \label{eq:shifttrasflm1simp1}
S_{1} = 
\begin{pmatrix}
I & 0 \\
0 & \mathcal{F}_{l-1}^{-1} S_{l-1}^{(l-1)} \mathcal{F}_{l-1}
\end{pmatrix}
\begin{pmatrix}
0 & I \\
I & 0
\end{pmatrix}
\end{equation}
\noindent We can recognise the term $\mathcal{F}_{l-1}^{-1} S_{l-1}^{(l-1)} \mathcal{F}_{l-1}$ as the
rotation/shift operator on the $p$ variable in the space of $N/2$-dimensional $\psi_q$ functions, 
which, according to our index conventions, is represented by $S_{0}^{(l-1)}$.
We have obtained an expression of the rotation/shift operator, represented in the space of intermediate
($l=2$ on the butterfly diagram) functions, in terms of a rotation/shift operator of a smaller
($L/2$-dimensions) system.

The considerations that allowed us to write $S_l$ in the diagonal form \eqref{eq:sdiag}
can be applied to the operator $S_{1}^{(l-1)}$ as well. Thus, we can write:
\begin{equation} \label{eq:s1explexpr}
S_{0}^{(l-1)} = \Diag{(e^{i {s}_{(l-1)}^0}, ... , e^{i {s}_{(l-1)}^{N/2-1}})}.
\end{equation} 
\noindent Note that, in the above equation, and in the equations that follow, the subscript in the form
$(m)$, of the phases $s_{(m)}^k$, indicates the size of the subsystem to which we are applying the
translation operator $S_{0}^{(m)}$, while $k$  indicates the component of the corresponding space. 

Now we have all the necessary elements to express the recursion formula \eqref{eq:shifteqextx} in a more explicit block matrix form: we have the expressions of $S_1$
\eqref{eq:shifttrasflm1simp1}, with the explicit expression of $S_{0}^{(l-1)}$ \eqref{eq:s1explexpr}; 
we also have $S_0$ (equation \eqref{eq:sdiag} with $s^k = s^k_{(l)}$),
$F_1$   (equation \eqref{eq:fftpartblockl} ) and we can also use the explicit matrix element 
form  \eqref{eq:twiddleinphase} for the twiddle factor $t_1$. Note that, since all of these terms
are either diagonal or $N/2$ block matrices, equation \eqref{eq:shifteqextx} is expected to connect
only index $k$ with index $k+L/2$ components (where $k < L/2$). This implies that, when we apply all
of the above substitutions to equation \eqref{eq:shifteqextx}, we get $N/2$ decoupled equations
that involve $2 \times 2$ matrices: 
\begin{widetext}
\begin{equation}\label{eq:shifteq2d}
\begin{pmatrix}
e^{i \phi_{1}^k} & 0 \\
0 & e^{i \phi_{1}^{k+L/2}}
\end{pmatrix}
\begin{pmatrix}
1 & 1 \\
1 & -1
\end{pmatrix}
\begin{pmatrix}
e^{i {s}_{(l)}^k} & 0 \\
0 & e^{i {s}_{(l)}^{k+L/2}}  
\end{pmatrix} =
\begin{pmatrix}
1 & 0 \\
0  & e^{i {s}_{(l-1)}^{}}
\end{pmatrix}
\begin{pmatrix}
0 & 1 \\
1 & 0
\end{pmatrix}
\begin{pmatrix}
e^{i \phi_{1}^k} & 0 \\
0 & e^{i \phi_{1}^{k+L/2}}
\end{pmatrix}
\begin{pmatrix}
1 & 1 \\
1 & -1
\end{pmatrix},
\end{equation}
\end{widetext}
\noindent where $k < L/2$. By expanding all products and writing an equation for
each matrix element we obtain:
\begin{equation} \label{eq:shifteq2dbyterm}
\begin{array}{ll}
e^{i {s}_{(l)}^{k}} e^{i \phi_{1}^{k}}  & =  e^{i \phi_{1}^{k+L/2}} \\
e^{i {s}_{(l)}^{k+l/2}} e^{i \phi_{1}^{k}} & =  - e^{i \phi_{1}^{k+L/2}} \\
e^{i {s}_{(l)}^{k}} e^{i \phi_{1}^{k+L/2}}  & =  e^{i {s}_{(l-1)}^{k}} e^{i \phi_{1}^{k}} \\
- e^{i {s}_{(l)}^{k+l/2}} e^{i \phi_{1}^{k+L/2}} & =   e^{i {s}_{(l-1)}^{k}}  e^{i \phi_{1}^{k}} ,
\end{array}
\end{equation}

\noindent which, expressed in terms of phases, yields

\begin{equation} \label{eq:shifteq2dbyphase}
\begin{array}{ll}
{s}_{(l)}^{k} + \phi_{1}^{k}  & =  \phi_{1}^{k+L/2} \\
{s}_{(l)}^{k+L/2} + \phi_{1}^{k} & =  \phi_{1}^{k+L/2} \pm \pi \\
{s}_{(l)}^{k} + \phi_{1}^{k+L/2}  & =   \phi_{1}^{k} + {s}_{(l-1)}^{k}\\
{s}_{(l)}^{k+L/2} + \phi_{1}^{k+L/2} & =  \phi_{1}^{k} + {s}_{(l-1)}^{k}  \pm \pi.\\
\end{array}
\end{equation}
\noindent Note that the $\pm$ sign preceding the $\pi$ terms arises because the exponential function is not injective on the complex plane. As it will be evident at the end of our derivation, the two choices of sign lead to equivalent formulations of quantum mechanics. We adopt the minus sign; with this choice, the above system can be rewritten in an even more compact form:
\begin{subequations} \label{eq:shifteqrelations}
\begin{align}
  {s}_{(l)}^{k} & =  \phi_{1}^{k+L/2}-  \phi_{1}^{k}  \label{eq:shifteqrelations1}\\
  {s}_{(l)}^{k+L/2} - {s}_{(l)}^{k} & = - \pi \label{eq:shifteqrelations2}\\
 2 {s}_{(l)}^{k} & =  {s}_{(l-1)}^{k}. \label{eq:shifteqrelations3}
\end{align}
\end{subequations}
\subsection{The twiddle factor}\label{sec:twiddlefactder}
\noindent System \ref{eq:shifteqrelations} connects, for the level $l$,  the twiddle phase terms
$\phi^k$ with the shift/rotation phases $s^k$. 
The third equation \eqref{eq:shifteqrelations3}
can be used as a recursion formula for deriving the rotation/shift operator in the
$\psi_q$ space, for a set of increasing size systems. The subscripts in parentheses denote both the level and the size of the corresponding subsystem (for example,  $L= 2^l$). 
Note that equations \eqref{eq:shifteqrelations}
have been derived for $k < L/2$; this is consistent with the occurrence of the term ${s}_{(l-1)}^{k}$, 
which applies to a $L/2$-dimensional space. On the other hand, in order to make
\eqref{eq:shifteqrelations} applicable to all the components of the $L$-dimensional space,
the term ${s}_{(l)}^{k}$ should be determined for $ k \ge L/2$ and $k < L$ as well; 
this is possible through the second equation  \eqref{eq:shifteqrelations2}.

In the $l=2$ case, (the elementary $N = 2$ system) we know that the operator $S_1^{(2)}$ is a $\psi_p$
component exchange, which means that in the $\psi_q$ space $S_0^{(2)}$ takes the form
\eqref{eq:invoptwod}.
Thus, the phase term occurring in \eqref{eq:shifteqrelations} takes the value $s_1 = (0, - \pi)$.
By applying Eq.~\eqref{eq:shifteqrelations3}, for the $k=0,1$  terms, and 
Eq.~\eqref{eq:shifteqrelations2}, for $k=2,3$  terms, we obtain
$s_2 = (0, - \pi/2, - \pi, - 3/2 \pi)$. 
By repeating this procedure recursively we get, for larger values of $l$:
\begin{equation} \label{eq:sigmalexpr}
{s}_{(l)}^k = - \frac{2 \pi k}{L} \,\,\,  \forall k \: 0 \le k < L.
\end{equation}
\noindent From this equation, and from the first equation \eqref{eq:shifteqrelations1}
of the system \eqref{eq:shifteqrelations},  we obtain the following condition for the twiddle:

\begin{equation} \label{eq:twiddlelexpr}
\phi_{1}^{k+L/2}-  \phi_{1}^{k} = - \frac{2 \pi k}{L} \,\,\,  \forall k : 0 \le k < L/2.
\end{equation}

\noindent This still does not fully define the twiddle factor, but fixes the phase difference 
between an even $p$ twiddle factor $t_l^{k}$ and its corresponding odd $p$ factor, $t_l^{k+L/2}$.
At this stage, if we do not introduce further hypotheses, we are free to choose the overall phase
factor of each  $t_l^{k}, t_l^{k+L/2}$ pair.

The question is whether this overall phase factor is physically relevant or not. It can be easily shown
that the overall phase factor of the twiddle can be absorbed by the phase of the state $\psi_q$:
suppose that we redefine each pair of components of $\psi_q$ with indices $k$ and
$k+L/2$ as $e^{i \phi_{1}^{k}} \psi_q$.
Since this factor does not interfere with the first-level transformation $F_1$, it can be applied to
the twiddle factor, which can now be redefined as $\phi_{1}^{k} = 0$ and
$\phi_{1}^{k+L/2} = -\frac{2 \pi k}{L}$.
This factor does not impact the Fisher metric and  does not change the probability distributions,
it is just a redefinition of the parametrization extension $\phi$ by a constant, which is
not physically relevant.
We can therefore conclude that the twiddle factor's phase can be defined as follows:
\begin{equation} \label{eq:twiddlelexpr2}
\begin{array}{lll}
\phi_{1}^{k} & = 0 \,\, & \forall k : 0 \le k <L/2 \\
\phi_{1}^{k} & = - \frac{2 \pi (k - L/2)}{L}  &  \forall k : L/2 \le k < L.
\end{array}
\end{equation}

\noindent Note that the above conditions apply to the first-level twiddle factor of an
$L$-dimensional system.
On the other hand, each subdiagram of a larger butterfly diagram must behave as the
diagram of an independent subsystem, which obeys the same hypotheses as the larger system.

Now we are going to generalize the above conditions for the twiddle factor to all levels of the system.
If we denote by $N = 2^n$ the size of a larger system,
its level index is expected to run from 1 to $n+1$ (see Fig.~\ref{butterfly8}). 
In the discussion that led us to expression \eqref{eq:twiddlelexpr2}, we 
considered only the even $p$ and the odd $p$ subsystems, shown in Fig.~\ref{butterflyblocks}.
Now we are going to consider more general subsystems, namely, those
that run from a level $m$ of the larger system, to the highest level, namely, $n+1$.
The size of these subsystems, $L=2^l$, now depends on
the difference between the highest and the lowest level index, i.e., we have that $l=n+1-m$. 

We know that each of these subsystems behaves in the same way. 
The index $k$, which occurs in the twiddle factor expression  \eqref{eq:twiddlelexpr2},
is specific for each $L$-dimensional subsystem. But it can be expressed 
in terms of the larger $N$-dimensional system's index, which we denote by $k'$. 
We organize the indices $k$ and $k'$ as follows: the values of $k'$ from 0 to $L-1$ refer to the first 
subsystem, the values from $L$ to $2L-1$ to the second, and so on. Since the range of the 
subsystem-specific index $k$ must be $0 ... L-1$, we can write  $k = k' \mod L$.

If we want to consider the twiddle factors for all the levels occurring in the
factorization \eqref{eq:fftexpln}, we must state which level of the 
$N$-dimensional system's twiddle factor corresponds to the first level of the 
subsystems' twiddle factors.
To do this, we can count all the twiddle factor levels from the lowest
to the highest. Since, for an $L$-dimensional subsystem, we have $l-1$ levels of 
twiddle factors, the position of the lowest factor $t_1$ is $l-1$ from the top. 
If we map this term onto the sequence of twiddle factors $t_1 ... t_{n-1}$ of an $N$ dimensional 
system, we can see that the lowest (within a subsystem) term of an $N$ dimensional system is
$t_{n-l+1}$; we identify this level of the twiddle factor as $l' = n-l+1$.
We can also say that given an $l'$, the factor $t_{l'}$ is the first level twiddle factor
of a subsystem with $l = n-l'+1$ levels.

Now we can generalise \eqref{eq:twiddlelexpr2} as follows:
the twiddle factor phase $\phi_{l'}^{k'}$ of the $N$ dimensional system corresponds
to the phase of the an $L$-dimensional subsystem for $k = k' \mod L$ and $l = n-l'+1$. 
If we make these substitutions in \eqref{eq:twiddlelexpr2} and rename $k'$ as $k$, we obtain the 
following expressions for the phase $\phi_{l'}^{k'}$: 
\begin{equation} \label{eq:twiddlelexprgenp}
\begin{split}
\phi_{l'}^{k}  = \, & 0 \,\, \,\, \forall k : 0 \le (k \mod 2^{n-l'+1}) < 2^{n-l'} \\
\phi_{l'}^{k}  = \, & - 2 \pi\frac{ (k \mod 2^{n-l'+1}) - 2^{n-l'}   }{2^{n-l'+1}}   \\  & \vphantom{ = = =  } \,\,
\forall k : 2^{n-l'} \le (k \mod 2^{n-l'+1}).
\end{split}
\end{equation}
\noindent It is easy to show that the above expressions  allow us to express the twiddle factor in the matrix form given in equation \eqref{eq:twiddlblockexpr}.
\end{document}